\documentclass[pra,twocolumn,aps,letterpaper, preprintnumbers,superscriptaddress,10pt]{revtex4-2}

\usepackage{amsthm,amsmath,amssymb}

\usepackage{mathrsfs}
\usepackage{amsthm}
\usepackage{amsmath}
\usepackage{amssymb}

\usepackage{graphicx}
\usepackage{epstopdf}
\usepackage{url}
\usepackage{subfigure}
\usepackage{float}
\usepackage{bm}
\usepackage{ifthen}
\usepackage[usenames,dvipsnames]{color}
\usepackage{mathrsfs}
\usepackage[bookmarks=true,colorlinks, citecolor=blue, linkcolor=blue, urlcolor=blue]{hyperref}
\usepackage[dvipsnames, svgnames, x11names]{xcolor} 
\usepackage{braket}
\usepackage{color}
\usepackage{indentfirst}
\usepackage{verbatim}

\usepackage{booktabs}
\usepackage{tabularx}
\usepackage[T1]{fontenc}

\usepackage[most]{tcolorbox}

\newtcolorbox{protocolbox}{
    colback=white,
    colframe=black,
    boxrule=0.45pt,
    arc=0pt,
    boxsep=0pt,
    width=0.9\linewidth,
    left=5pt,
    right=5pt,
    top=4pt,
    bottom=4pt,
    before=\par\centering,
    after=\par,
    before skip=0pt,
    after skip=0pt
}

\theoremstyle{plain}
\newtheorem{theorem}{Theorem}[section]
\newtheorem{lemma}[theorem]{Lemma}

\theoremstyle{definition}
\newtheorem{definition}{Definition}[section]

\theoremstyle{remark}
\newtheorem{remark}{Remark}[section]

\begin{document}

\date{\today}

\title{Composable Device-Independent Quantum Digital Signatures}

\author{Jing-Wei Bian}
\author{Bing-Hong Li}
\author{Xu-Yang Lu}
\author{Shi-Gen Li}
\affiliation{National Laboratory of Solid State Microstructures and School of Physics, Collaborative Innovation Center of Advanced Microstructures, Nanjing University, Nanjing 210093, China}
\affiliation{School of Physics and Key Laboratory of Quantum State Construction and Manipulation (Ministry of Education), Renmin University of China, Beijing 100872, China}
\author{Hua-Lei Yin}\email{hlyin@ruc.edu.cn}
\affiliation{School of Physics and Key Laboratory of Quantum State Construction and Manipulation (Ministry of Education), Renmin University of China, Beijing 100872, China}
\affiliation{National Laboratory of Solid State Microstructures and School of Physics, Collaborative Innovation Center of Advanced Microstructures, Nanjing University, Nanjing 210093, China}
\author{Zeng-Bing Chen}\email{zbchen@nju.edu.cn}
\affiliation{National Laboratory of Solid State Microstructures and School of Physics, Collaborative Innovation Center of Advanced Microstructures, Nanjing University, Nanjing 210093, China}	

\date{\today}

\begin{abstract}
Extending device-independent (DI) quantum cryptography to composably support all fundamental information security objectives is crucial for secure quantum networks.
Achieving this goal requires composable quantum digital signatures (QDSs), which support integrity, authenticity, and nonrepudiation when integrated into quantum networks. 
Yet existing QDS protocols generally rely on assumptions about the underlying quantum devices, while their security analyses remain largely restricted to stand-alone settings, leaving related-key attacks on the underlying linear feedback shift register (LFSR)-based Toeplitz authentication unaddressed.
Here, we establish the first composable security framework for QDS within constructive cryptography, and realize the corresponding ideal QDS resource by developing a DI-QDS protocol employing one-time universal hashing and multi-channel multiplexing.
We further derive a related-key security bound for the underlying  LFSR-based Toeplitz authentication, thereby establishing unforgeability while extending the forgery boundary considered in previous QDS security analyses.
We also demonstrate the practical feasibility of our DI-QDS protocol under realistic quantum network conditions in the finite-size regime, with numerical simulations showing that it enables efficient signing of multibit documents over metropolitan-scale distances.
Our results establish QDS as a composable cryptographic primitive and extend DI quantum cryptography to support the full range of information security objectives with composable security guarantees in future quantum networks.
\end{abstract}

\maketitle

\section{Introduction}
Quantum cryptography seeks to establish security from physical principles rather than computational hardness assumptions~\cite{Portmann2022Security}. 
Yet the security of conventional quantum cryptographic protocols relies on a detailed characterization of the underlying devices~\cite{Masanes2011secure,Zhang2022DeviceIndependent}.
Device-independent (DI) quantum cryptography advances this goal by removing the need for such detailed device characterization, with security based on correlations that violate a Bell inequality~\cite{Acin2007Device-Independent,Masanes2011secure,Arnon2018practical,Zapatero2023advances}.
As DI cryptography is extended toward quantum networks~\cite{Lee2018Towards,Grasselli2021Entropy}, it is essential to address the full range of fundamental information security objectives, including confidentiality, integrity, authenticity, and nonrepudiation~\cite{Roberts2017experimental,Yin2023Experimental,Richter2021Agile,He2025Fully,Cao2025ChipIntegrated,Bozzio2025Quantum}.

Substantial progress in DI cryptography has been made across several fundamental tasks, including quantum key distribution (QKD)~\cite{Masanes2011secure,Schwonnek2021DeviceIndependent,Zhang2022DeviceIndependent}, quantum random-number generation~\cite{Liu2018DeviceIndependent}, randomness expansion~\cite{Liu2021DeviceIndependent,Shalm2021DeviceIndependent}, and multiparty cryptography such as DI quantum conference-key agreement (QCKA)~\cite{Grasselli2021Entropy,Wooltorton2025Genuine}. 
Among these prominent DI cryptographic protocols, DI-QKD and DI-QCKA establish secret keys that support confidentiality~\cite{Grasselli2021Entropy,Wooltorton2025Genuine,Yin2023Experimental}, whereas DI quantum random-number generation and randomness expansion generate private random bits that are fundamental to cryptographic security~\cite{Bierhorst2018Experimentally,Liu2018DeviceIndependent,Liu2021DeviceIndependent,Shalm2021DeviceIndependent}.
However, these DI cryptographic primitives do not by themselves address the remaining security objectives~\cite{Roberts2017experimental,Yin2023Experimental,Bozzio2025Quantum,Cao2024Experimental,Jing2024Experimental}.
Digital signatures can address these objectives, and quantum digital signatures (QDSs) provide information-theoretic security based on quantum principles~\cite{Roberts2017experimental,Collins2014Realization}. 
Recent advances, particularly one-time universal hashing (OTUH), have enabled efficient signing of multibit documents of arbitrary length~\cite{Yin2023Experimental,Cao2025ChipIntegrated}. 
Yet existing QDS protocols generally rely on assumptions about the underlying quantum devices, and their security analyses remain largely restricted to stand-alone settings~\cite{Roberts2017experimental,Cao2024Experimental}.

QDS must also satisfy composable security, which requires that the security guarantees of a protocol remain valid when it is used as a component of a larger cryptographic system~\cite{Canetti2001Universally,BenOr2005Universal,MullerQuade2009Composability,Dunjko2014Composable,Portmann2022Security}. 
In modern cryptography, composable security is an essential notion for maintaining the security of cryptographic protocols in complex and unpredictable environments and has been established across a broad range of quantum cryptographic protocols, including QKD~\cite{Portmann2014Cryptographic,BenOr2005Universal,Tomamichel2012Tight,Lim2014Concise,Furrer2012Continuous,Leverrier2015Composable,Gehring2015Implementation,Leverrier2017Security,Lupo2022Composable,Jain2022Practical,Kanitschar2023Finite,Hajomer2025Experimental,Wu2026HighRate,Curty2014Finite,Yin2019Finite}, quantum random-number generation~\cite{Cao2016SourceIndependent,Drahi2020Certified}, delegated quantum computation~\cite{Dunjko2014Composable}, and quantum secret sharing (QSS)~\cite{Walk2021Sharing}. 
For DI information processing, composable security has also been established for QKD~\cite{Masanes2009Universally,Masanes2011secure,Masanes2014full,Arnon2018practical,Tan2022improveddiqkd}, randomness expansion~\cite{Miller2017Universal,Brown2020Framework,Liu2021DeviceIndependent}, randomness amplification~\cite{Brandao2016Realistic,Ramanathan2016Randomness}, encryption with certified deletion~\cite{Kundu2023Composable}, and quantum-state certification~\cite{Bhavsar2025Composable}.

The security of these protocols relies not only on quantum resources but also on the classical cryptographic mechanisms they employ, including authentication~\cite{Portmann2014Cryptographic,BenOr2005Universal}.
As quantum networks advance toward real-world deployment, classical cryptographic mechanisms may be incorporated into protocols involving more complex key management and adversarial capabilities, requiring explicit security analyses within the complete protocol~\cite{Solomons2022ScalableAuthentication}. 
In QDS, key management involves the exchange of key shares, which may be tampered with by the adversarial side, introducing a class of related-key attacks not covered by conventional security analyses of linear feedback shift register (LFSR)-based Toeplitz authentication~\cite{Krawczyk1994LFSR,Yin2023Experimental,Li2023One-time,Cao2024Experimental}, which is also overlooked in the authentication of QKD.
Thus, these analyses do not directly provide the unforgeability guarantee required for the composable security of QDS.

In this work, we establish the first composable security framework for QDS and construct a DI-QDS protocol that realizes the corresponding ideal QDS resource, thereby providing both composable and DI security guarantees within a unified analysis.
We formulate the composable security of QDS within constructive cryptography~\cite{Maurer2012Constructive,Badertscher2018Composable}.
The ideal QDS resource is not restricted to the DI realization considered here and can serve as a common composable security target for other QDS implementations.
We further derive a related-key security bound for the underlying LFSR-based Toeplitz authentication, generalizing its conventional security analysis to cover related-key attacks and thereby providing the unforgeability guarantee required for composable QDS security while extending the forgery boundary considered in previous QDS security analyses.

We instantiate the composable construction with OTUH-QDS~\cite{Yin2023Experimental,Li2023One-time}, enabling our protocol to efficiently sign multibit documents of arbitrary length with fixed-length signatures. 
We generate the required correlated keys via a DI key generation protocol (KGP)~\cite{Arnon2018practical,Tan2022improveddiqkd}.
We employ the entropy accumulation theorem (EAT) to establish its finite-size security against general attacks without assuming that the states and measurements are independent and identically distributed across protocol rounds~\cite{Pironio2009Deviceindependent,Arnon2018practical,Dupuis2020Entropy,Tan2022improveddiqkd,Tan2020Advantage,Dupuis2019Entropy,nadlinger2022experimental}.
Our protocol further employs multi-channel multiplexing with adaptive entanglement swapping to improve the efficiency of remote entanglement distribution, leading to a square-root improvement in this efficiency over the single-channel scheme~\cite{azuma2015all,azuma2015allqkd}.
Finally, we present an explicit three-party implementation of composable DI-QDS and numerically analyze its performance with finite resources over quantum networks with lossy channels.
We show that our protocol enables efficient signing of multibit documents over metropolitan-scale distances, demonstrating the practical feasibility of composable DI-QDS.
Our work enables QDS to serve as a fundamental cryptographic primitive within larger quantum cryptographic systems and extends DI quantum cryptography to support the full range of information security objectives with composable security guarantees in future quantum networks.

The article is organized as follows.
In Sec. \ref{Sec2}, we present the protocol, including the distribution and messaging stages.
In Sec. \ref{Sec3}, we present the physical picture and modeling of the protocol and analyze its main components.
In Sec. \ref{security and application}, we establish the composable security of the protocol and discuss its applications.
In Sec. \ref{Sec4}, we present the numerical simulations and evaluate the signature rate, achievable distance, and protocol performance under different scenarios.
Finally, Sec. \ref{Sec5} concludes the article.

\section{PROTOCOL CONTENT}\label{Sec2}
Our proposed protocol consists of two stages, the distribution stage and the messaging stage. In the distribution stage, we need to generate three-party correlated keys among Alice, Bob, and Charlie, which is realized by independently sharing keys between Alice and Bob and between Alice and Charlie, followed by a key-mapping protocol. To avoid opening memory loopholes, Alice should set up independent sublaboratories and devices for Bob and Charlie, isolating them to prevent any information leakage~\cite{Wooltorton2025Genuine,Barrett2013Memory}. To prevent memory attacks across successive DI-KGP instances, once the key pool generated in an instance is exhausted, the devices used in that instance are securely isolated or destroyed, and fresh devices are used for subsequent instances~\cite{Barrett2013Memory}. 
Then we use those keys distributed to sign and verify in the messaging stage. These keys are denoted as $\{X_a, Y_a, Z_a\}$, $\{X_b, Y_b, Z_b\}$, and $\{X_c, Y_c, Z_c\}$, which satisfy 
\begin{equation}
\label{eq1}
    \begin{aligned}
        X_a&=X_b \oplus X_c,\\
        Y_a&=Y_b \oplus Y_c,\\
        Z_a&=Z_b \oplus Z_c.
    \end{aligned}
\end{equation}
The lengths of these keys are the same, denoted as $n$. The subscripts $a,b,c$ represent that these keys belong to Alice, Bob, and Charlie, respectively.

\subsection{Distribution stage}
In the distribution stage, we perform an all-photon DI-KGP~\cite{nadlinger2022experimental, azuma2015all, Tan2022improveddiqkd} to distribute those keys in Eq. (\ref{eq1}) between Alice, Bob and Charlie. This is a three-party key distribution protocol designed to generate correlated keys among the three parties satisfying Eq.~(\ref{eq1}). It consists of two parallel bipartite DI distribution protocols followed by a key-mapping protocol. The two bipartite DI-KGPs establish remote entanglement through adaptive entanglement swapping over multiple parallel channels.
For simplicity, we only describe the distribution protocol between Alice and Bob, while the distribution steps between Alice and Charlie remain consistent with it. 
The subsequent key generation procedure is performed over the $n_r$ effective rounds retained after successful entanglement swapping. We denote the retained effective rounds by $\{N_1,N_2,\ldots,N_{n_r}\}$.

\begin{figure*}[ht]
    \centering \includegraphics[page=1,width=\textwidth,keepaspectratio]{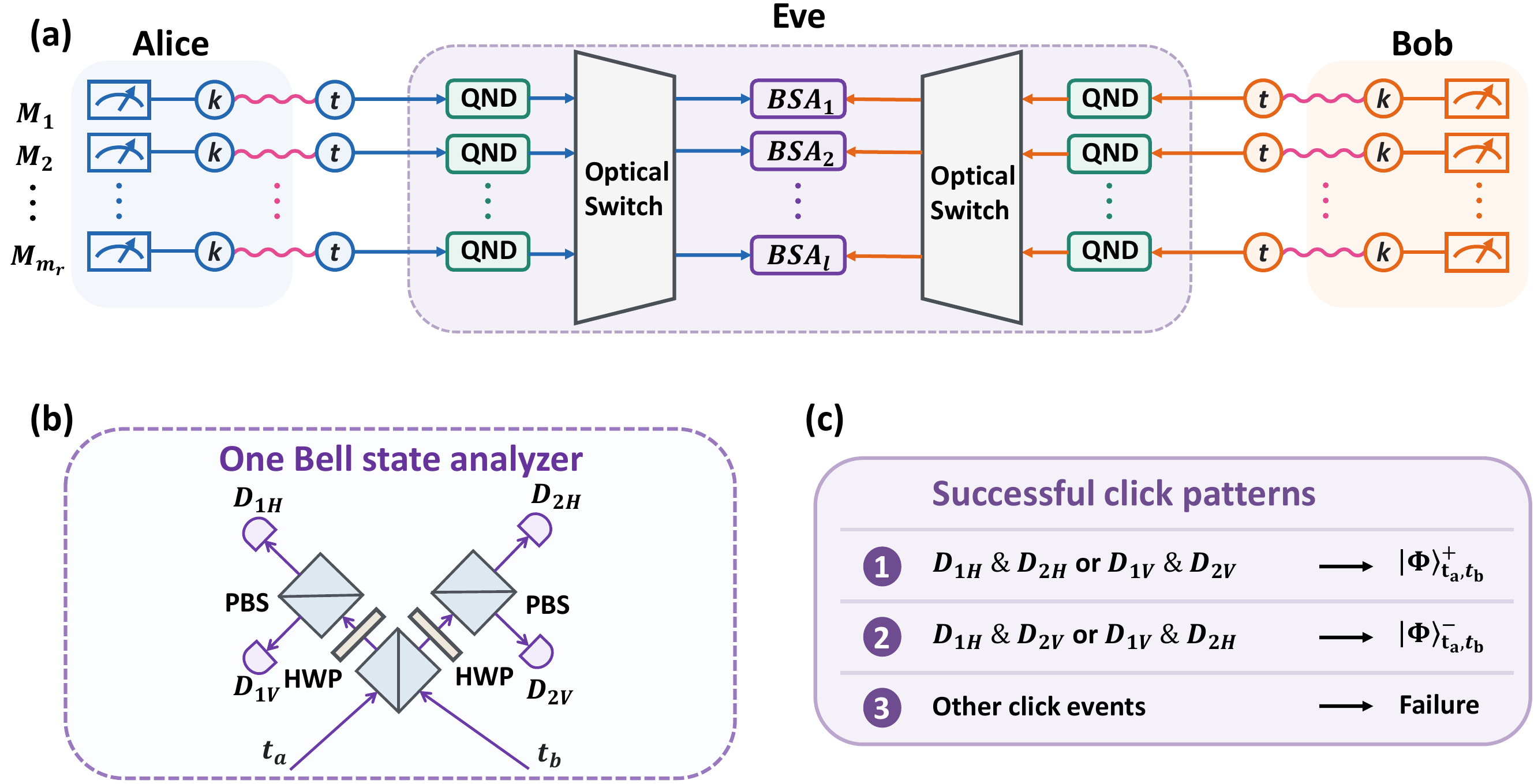}
    \caption{Schematic of the distribution stage of the proposed QDS protocol. (a) Alice and Bob each generate $m_r$ entangled photon pairs. For each pair, the local $k$ photon is measured immediately, while the corresponding $t$ photon is transmitted to the intermediate node Eve through an optical channel. At Eve's node, quantum nondemolition (QND) measurements identify the successfully arrived photons. Based on the QND outcomes, SWs adaptively pair the arrived photons from Alice and Bob and route each formed photon pair to a separate BSA, where the Bell measurements are performed in parallel. (b) Optical implementation of one BSA, consisting of polarizing beam splitters (PBSs), half-wave plates (HWPs), and single-photon detectors. (c) Successful click patterns. Coincident clicks of $D_{1\mathrm{H}}$ and $D_{2\mathrm{H}}$, or of $D_{1\mathrm{V}}$ and $D_{2\mathrm{V}}$, correspond to the outcome $\ket{\Phi}^+_{t_a,t_b}=\frac{1}{\sqrt{2}} (\ket{H}_{t_a}\ket{H}_{t_b}+\ket{V}_{t_a}\ket{V}_{t_b})$, whereas coincident clicks of $D_{1\mathrm{H}}$ and $D_{2\mathrm{V}}$, or of $D_{1\mathrm{V}}$ and $D_{2\mathrm{H}}$, correspond to $\ket{\Phi}^-_{t_a,t_b}=\frac{1}{\sqrt{2}} (\ket{H}_{t_a}\ket{H}_{t_b}-\ket{V}_{t_a}\ket{V}_{t_b})$. Any other click events are regarded as a failure of the Bell measurement. A Bell click event corresponds to an effective round of distribution.}
    \label{f1:qkd}
\end{figure*}

\textit{1. Randomness preparation.} 
Each party sets up $m_r$ channels, collectively referred to as the set $\{M_1, M_2, \ldots, M_{m_r}\}$, as illustrated in Fig. \ref{f1:qkd}. For each channel, Alice possesses a secret string generated by local random number generators. These numbers are organized into a matrix:
\begin{equation}
X=
\begin{bmatrix}
X_{1,1} & X_{1,2} & ... &X_{1,n_r} \\
X_{2,1} & X_{2,2} & ... &X_{2,n_r} \\
...     & ...     & ... &...       \\
X_{m_r,1} & X_{m_r,2} & ... &X_{m_r,n_r}
\end{bmatrix}.
\end{equation}

The rows of $X$ are denoted as $X_{M_i} = \{X_{i,1}, X_{i,2}, \ldots, X_{i,n_r}\}$ for $i = 1, 2, \ldots, m_r$, representing the string of random numbers utilized in channel $M_i$. The columns of $X$ are denoted as $X_{N_i}=\{X_{1,i},X_{2,i},\ldots,X_{m_r,i}\}$ for $i = 1, 2, \ldots, n_r$, representing the sequence utilized in a single round of protocol execution.

For the matrix $X$, each element $X_{i,j} \in \{0,1\}$ signifies the uniformly chosen measurement setting, and
$P(X_{i,j}=0)=P(X_{i,j}=1)=\frac{1}{2}$.
Bob holds a similar matrix $Y$ generated by local random number generators with $Y_{i,j} \in\{0,1,2\}$ and
$P(Y_{i,j}=0)=P(Y_{i,j}=1)=\gamma/2$,
$P(Y_{i,j}=2)=1-\gamma$,
in which $\gamma \in (0,1)$ represents the testing probability.

\textit{2. State preparation and local readout.} 
For channel $M_i$, in round $N_j$, both Alice and Bob prepare an entangled photon pair in a Bell state. We denote this pair as
\begin{equation}
    \ket{\varPhi}^+_{k,t}=\frac{1}{\sqrt{2}} (\ket{H}_k\ket{H}_t+\ket{V}_k\ket{V}_t),
\end{equation}
where the subscript $k$ represents the photon that will be retained locally and measured immediately, whereas the subscript $t$ signifies the photon that will be transferred to the intermediary node, Eve.
As depicted in Fig. \ref{f1:qkd}, after the generation of the Bell pairs, they promptly measure the local $k$ photons and transmit the $t$ photons to the intermediary node, Eve.

For channel $M_i$, in round $N_j$, Alice measures the local $k$ photons immediately by $\hat{A}$ following the generation step, based on $X_{i,j}$, and 
$\hat{A}(X_{i,j}=0)=\hat{A}_0=\hat\sigma_z$, 
$\hat{A}(X_{i,j}=1)=\hat{A}_1=\hat\sigma_x$. 
Then, Alice extracts a bit $A_{i,j} \in \{0,1\}$. Consequently, for round $N_j$, we have the array $A_{N_j} = (A_{1,j}, A_{2,j}, \ldots, A_{m_r,j})$. Bob performs the same procedure, resulting in the array $B_{N_j} = (B_{1,j}, B_{2,j}, \ldots, B_{m_r,j})$, and 
$\hat{B}(Y_{i,j}=0)=\hat{B}_0=\frac{1}{\sqrt{2}}(\hat\sigma_z+\hat\sigma_x)$, 
$\hat{B}(Y_{i,j}=1)=\hat{B}_1=\frac{1}{\sqrt{2}}(\hat\sigma_z-\hat\sigma_x)$, 
$\hat{B}(Y_{i,j}=2)=\hat\sigma_z$. 

\textit{3. Entanglement swapping and Sifting.} 
In each round, the protocol realizes heralded remote entanglement between Alice and Bob through adaptive entanglement swapping over $m_r$ parallel channels. At the intermediate node Eve, the photons transmitted through the $m_r$ channels from Alice and Bob are first subjected to quantum nondemolition (QND) measurements, which can be implemented via quantum teleportation~\cite{Kok2002Single-photon, Bennett1993Teleporting}, to confirm the arrival of the photons. As illustrated in Fig.~\ref{f1:qkd}, the successfully arrived photons from Alice and Bob are adaptively paired via optical switches (SWs), forming $l$ photon pairs, where $0\leq l\leq m_r$. Each formed pair is routed to a separate Bell state analyzer (BSA), and all $l$ Bell measurements are performed in parallel.

If at least one Bell measurement succeeds, Eve retains one successful measurement outcome together with the corresponding channels
$M_{k_a}$ and $M_{k_b}$, while all other outcomes are discarded.
Eve then announces the retained Bell measurement outcome and the
corresponding channel indices $k_a$ and $k_b$ to Alice and Bob through
classical channels. If no photon pair can be formed or no Bell
measurement succeeds, the current attempt is discarded and the
procedure is repeated until a successful outcome is obtained.
We assume that any information of Alice's and Bob's measurement settings and outcomes cannot leak to Eve.
In round $N_j$, upon receiving the signals, Alice and Bob extract the bits $A_{k_a,j}$ and $B_{k_b,j}$, storing them in $\textbf{A}_j$ and $\textbf{B}_j$, respectively. After $n_r$ rounds of execution, two key strings will have been generated: $\textbf{A} = (\textbf{A}_1, \textbf{A}_2, \ldots, \textbf{A}_{n_r})$ and $\textbf{B} = (\textbf{B}_1, \textbf{B}_2, \ldots, \textbf{B}_{n_r})$.
After $n_r$ rounds, Alice and Bob publicly announce the inputs used for the measurement of $\textbf{A}$ and $\textbf{B}$. Denote them as $\textbf{X}$ and $\textbf{Y}$. 

As illustrated in Fig.~\ref{f1:qkd}, if the detectors $D_{1\text{H}}$ and $D_{2\text{V}}$ or $D_{1\text{V}}$ and $D_{2\text{H}}$ click
together and $\textbf{Y}_j \neq 2$, Alice and Bob flip the corresponding bit in $\textbf{X}$ and $\textbf{Y}$.
From the string $\textbf{Y}$, we derive an indicator function $\chi(\textbf{Y}_j \neq 2)$. We then define $T$ with $T_j = \chi(\textbf{Y}_j \neq 2)$. $T_j=1$ represents a testing round and $T_j=0$ represents a generation round.
For all rounds such that $\textbf{Y}_j=2$ and $\textbf{X}_j=1$, Alice and Bob overwrite their outputs with $\textbf{A}_j=\textbf{B}_j=0$.

\textit{4. Noisy preprocessing.} 
For rounds where $\textbf{Y}_j=2$ and $\textbf{X}_j=0$, Alice generates a biased random bit $f_j$, such that it satisfies $P(f_j=1)=p$. She then updates $\textbf{A}_j$ by overwriting it with $\textbf{A}_j \oplus f_j$.

\textit{5. Error correction.} 
Alice communicates some bits $\textbf{L}_\text{EC}$ with Bob to allow Bob to reconstruct a guess $\tilde{\textbf{A}}$ of $\textbf{A}$. Once the number of communicated bits reaches $\text{EC}_\text{max}$, communication ceases and the protocol proceeds to the next step. Bob then uses all information received up to that point to construct $\tilde{\textbf{A}}$.
Next, Alice computes a 2-universal hash $\textbf{L}_\text{h}= \text{hash}(\textbf{A})$ of length $\lceil\log (\frac{1}{\varepsilon_\text{h}}) \rceil$. Then she sends $\textbf{L}_\text{h}$ and the choice of hash function to Bob.
During the error correction process, the bits $\textbf{L}=(\textbf{L}_\text{EC},\textbf{L}_\text{h})$ will be publicly communicated between Alice and Bob.

\textit{6. Parameter estimation.} 
For each round $N_j$, where $j=1,2,\ldots,n_r$, Bob computes the score as follows:
\begin{equation}
    U_j=
     \begin{cases}
     \chi(\tilde{\textbf{A}}_j \oplus \textbf{B}_j = \textbf{X}_j \cdot \textbf{Y}_j) &\text{if} \ \ T_j=1,\\
     \perp &\text{if} \ \ T_j=0.
    \end{cases}
\end{equation}

\textit{7. Validation of error correction and Bell violation.} 
Bob checks whether $\textbf{L}_\text{h}= \text{hash}(\tilde{\textbf{A}})$. If not, the protocol aborts.
Upon completing error correction validation, Bob checks the condition
\begin{equation}
    \sum_j^{n_r}\chi(U_j=0) \le n_r \gamma (1-\omega_{\text{thr}}),
\end{equation}
where $\omega_{\text{thr}}$ denotes the threshold Clauser–Horne–Shimony–Holt (CHSH) winning probability. If this condition is not met, the protocol is aborted.

\textit{8. Privacy amplification and key generation.} 
Alice and Bob then apply privacy amplification on $\textbf{A}$ and $\tilde{\textbf{A}}$ respectively and extract the final keys.

Finally, if the protocol succeeds, the keys $\textbf{K}_\text{A}$, $\textbf{K}_\text{B}$ and $\textbf{K}_\text{C}$ will be distributed between Alice, Bob and Charlie, which satisfy $\textbf{K}_\text{A}=\textbf{K}_\text{B}\oplus\textbf{K}_\text{C}$. Denote their common length by $l_\text{key}$. 
Alice, Bob, and Charlie then divide the final keys into several $n-$bit strings, and these strings will be used to perform the messaging stage.

\begin{figure}[t]
    \includegraphics[width=\linewidth]{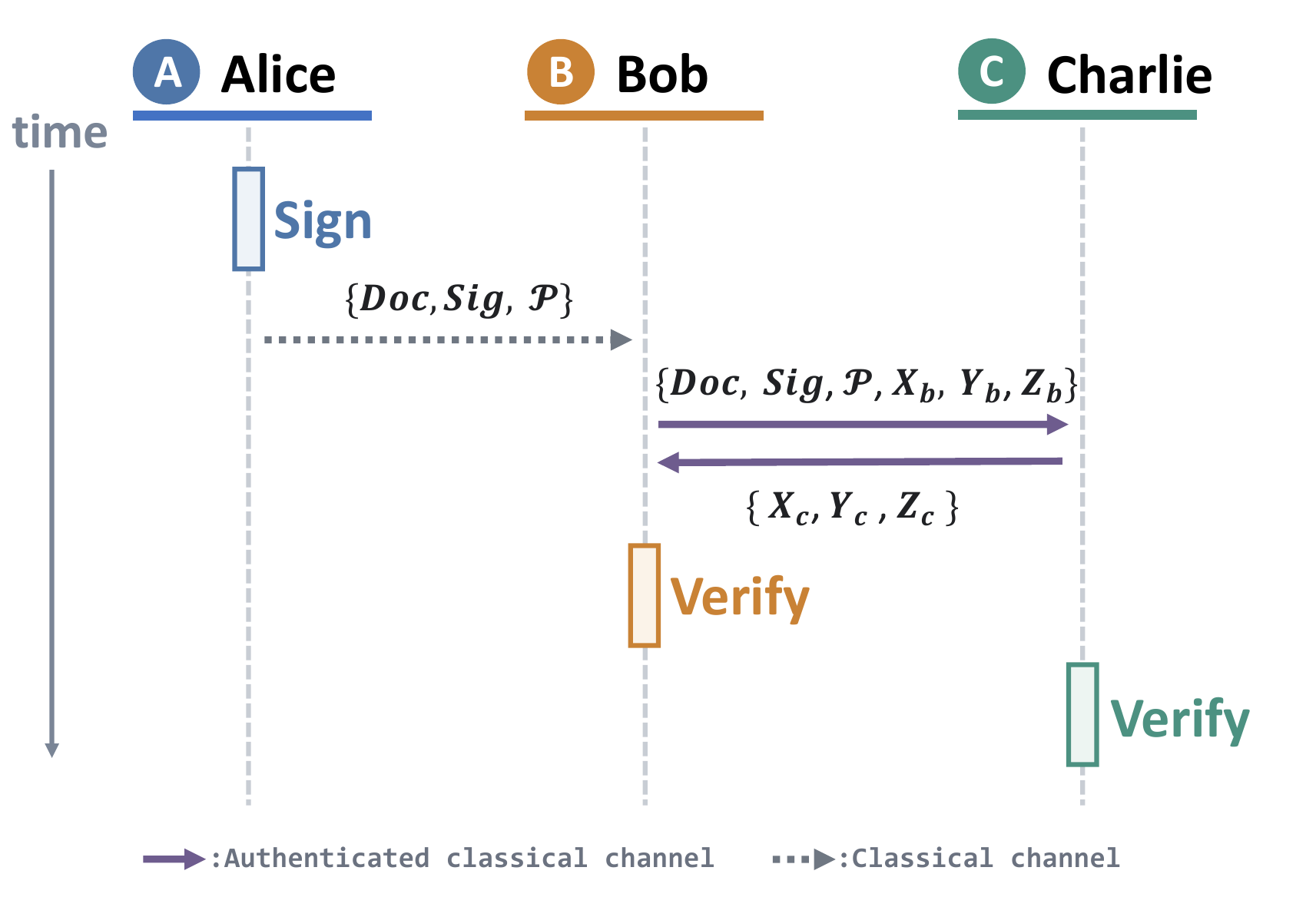}
    \caption{Schematic of the messaging stage of the proposed QDS protocol following successful slot preparation. The process involves three stages: the signing of Alice, the verification of Bob, and the verificationof Charlie. Information is exchanged among three parties via classical and authenticated classical communication channels.}
    \label{f2:qds}
\end{figure}

\subsection{Messaging stage}\label{msgstage}
In the messaging stage, we implement the OTUH-QDS scheme~\cite{Li2023One-time,Yin2023Experimental}, which can sign a multibit document with a fixed-length signature, thereby achieving a high signature rate. Following successful slot preparation, the messaging stage proceeds through the three steps shown in Fig. \ref{f2:qds}.

The keys generated on the Alice--Bob and Alice--Charlie links are partitioned into matched candidate signing slots. Each slot contains three $n$-bit strings from each link, denoted by $(X_b,Y_b,Z_b)$ and $(X_c,Y_c,Z_c)$, respectively. For a key length $l_{\mathrm{key}}$ on each link, this gives $\lfloor l_{\mathrm{key}}/(3n)\rfloor$ candidate signing slots.

Before generating a signed object, Alice examines unused candidate slots in a fixed order and selects the first slot satisfying $Y_a=Y_b\oplus Y_c\neq 0^n$. All key material associated with skipped slots is discarded and never reused. Alice announces the selected slot index to Bob and Charlie over authenticated classical channels. No verification shares are exchanged during this preparation. If no valid candidate slot remains, preparation aborts and no signed object is generated. 

A uniformly distributed $n$-bit candidate seed equals zero with probability $2^{-n}$, so zero-seed screening retains a fraction $1-2^{-n}$ of the candidate slots on average. The expected number of usable signing slots is therefore
\begin{equation}
R_{\mathrm{sig}}
=
(1-2^{-n})
\left\lfloor\frac{l_{\mathrm{key}}}{3n}\right\rfloor .
\end{equation}

The following steps use the selected slot.

\textit{1. Signing of Alice.} 
Initially, Alice uses a quantum random number generator to sample a monic irreducible polynomial $p(x)\in\mathbb F_2[x]$ of degree $n$ uniformly at random. Let $p_a$ denote its $n$-bit lower-coefficient vector, with the coefficient ordering specified in Appendix~\ref{RI Security Bound}. She masks $p_a$ with $X_a$ to obtain $\mathcal{P}= p_a \oplus X_a$. 
Using the selected nonzero seed $Y_a$ and the polynomial $p(x)$, she constructs an $n\times m$ LFSR-based Toeplitz matrix $H_{nm}$. This matrix is applied to the $m$-bit document $\text{Doc}$, producing an $n$-bit hash value $\text{Dig} = H_{nm} \cdot \text{Doc}$.
Subsequently, Alice encrypts this hash value $\text{Dig}$ using $Z_a$ to derive the final signature $\text{Sig} = \text{Dig} \oplus Z_a$. Finally, Alice sends the collection $\{\text{Sig}, \mathcal{P}, \text{Doc}\}$ to Bob via a classical channel.

\textit{2. Verification of Bob. }
Upon receiving Alice's signal, Bob forwards the set $\{\text{Sig},\mathcal{P},\text{Doc}\}$ along with $\{X_b,Y_b,Z_b\}$ to Charlie. Subsequently, once Bob receives a response from Charlie, which includes $\{X_c,Y_c,Z_c\}$, he then consolidates this information. Bob now has the complete set of strings needed for the verification stage: $\{\text{Sig},\mathcal{P},\text{Doc},X_b,Y_b,Z_b,X_c,Y_c,Z_c\}$. These exchanges occur over an authenticated channel.
The verification process begins with Bob generating new strings through XOR operations: $\{K_{X_b}=X_b \oplus X_c, K_{Y_b}=Y_b \oplus Y_c, K_{Z_b}=Z_b \oplus Z_c\}$.
Next, Bob recovers $p_b=\mathcal{P}\oplus K_{X_b}$. Bob rejects immediately if $K_{Y_b}=0^n$ or if $p_b$ does not encode a monic irreducible polynomial of degree $n$. He then derives the expected digest by performing an XOR operation using $\text{Sig}$ and $K_{Z_b}$. Bob constructs an LFSR-based Toeplitz matrix with $p_b$ and $K_{Y_b}$, allowing him to compute the actual digest through a hashing process.
Subsequently, Bob will accept the signature if the actual digest matches the expected digest, notifying Charlie of the outcome. Conversely, should the digests not align, the signature is rejected, and Bob will declare the protocol's termination.

\textit{3. Verification of Charlie. }
Upon receiving a successful confirmation from Bob, Charlie, acting as the final verifying recipient, or validating recipient, proceeds with the verificationstage by following the same steps as Bob. At this juncture, Charlie has access to the same set of strings as Bob, namely $\{\text{Sig},\mathcal{P},\text{Doc},X_b,Y_b,Z_b,X_c,Y_c,Z_c\}$.
Initially, Charlie creates new strings by performing an XOR operation: $\{K_{X_c}=X_b \oplus X_c, K_{Y_c}=Y_b \oplus Y_c, K_{Z_c}=Z_b \oplus Z_c\}$.
Next, he utilizes $\mathcal{P}$, $K_{X_c}$, and $K_{Z_c}$ to retrieve the expected digest and the string $p_c$ through XOR operations. Charlie rejects immediately if $K_{Y_c}=0^n$ or if $p_c$ does not encode a monic irreducible polynomial of degree $n$. Then, leveraging $K_{Y_c}$, he calculates the actual digest using a hash operation as Bob did.
Finally, if the two digests match, Charlie accepts the protocol; if not, he rejects it.

\begin{figure}[t]
    \centering
    \includegraphics[width=\linewidth]{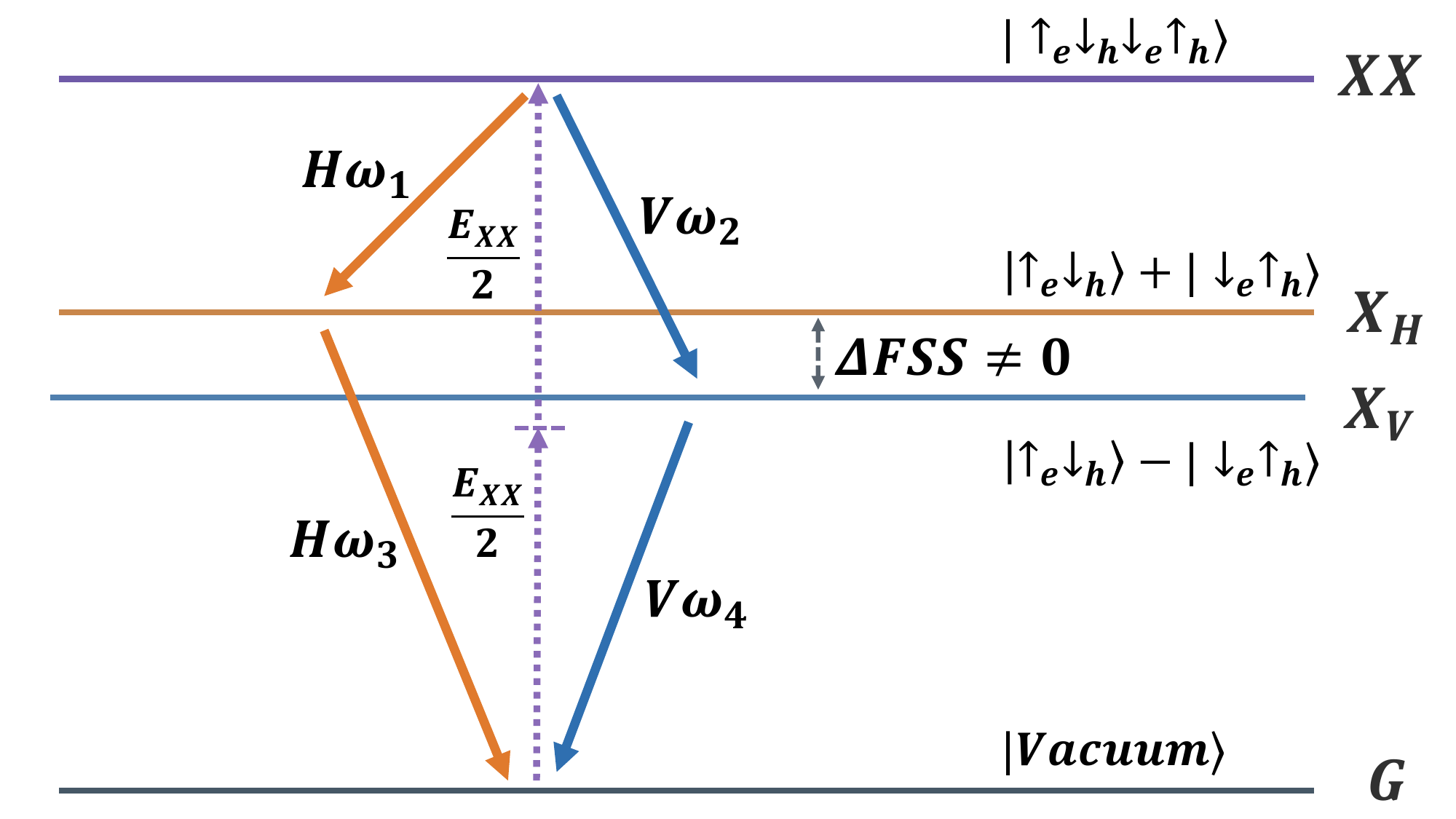}
    \caption{The cascade energy level structure diagram of neutral biexcitons generating entangled photon pairs in QDs. $\ket\uparrow$ and $\ket\downarrow$ represent the spin-up and spin-down states, and the subscripts $e$ and $h$ indicate these states of electrons or holes. $H$ and $V$ represent horizontal polarization and vertical polarization of photons. $\Delta FSS$ represents the energy difference between the two eigenstates of the neutral exciton. $\omega_1$, $\omega_2$, $\omega_3$ and $\omega_4$ denote the angular frequencies of the corresponding emitted photons. $E_{XX}$ represents the eigenenergy of the biexciton state.}
    \label{f3:qdot}
\end{figure}

\begin{figure*}[ht]
    \centering \includegraphics[page=1,width=\textwidth,keepaspectratio]{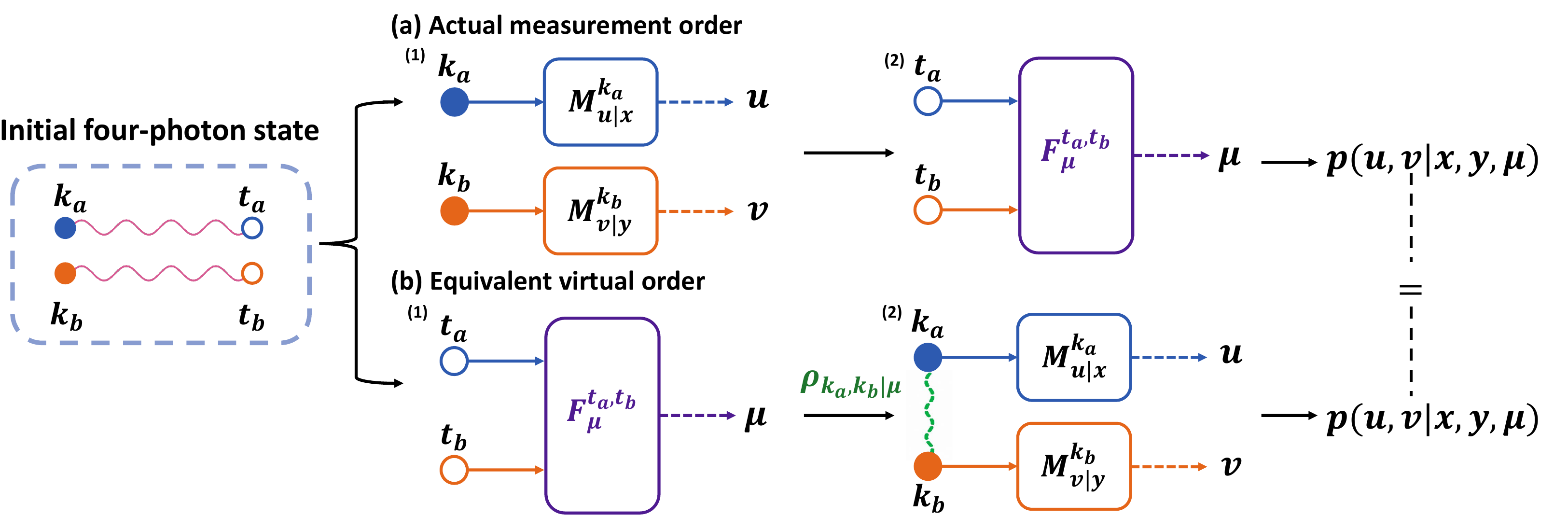}
    \caption{Actual and equivalent virtual descriptions of the retained entanglement swapping event. (a) Alice and Bob first measure the local photons $k_a$ and $k_b$, followed by Eve's Bell measurement $F_{\mu}^{t_a,t_b}$ on $t_a$ and $t_b$. (b) Equivalently, the Bell measurement can be considered first, preparing $\rho_{k_a,k_b|\mu}$ before the local measurements. The two orders give the same conditional outcome statistics because the corresponding measurements act on different subsystems.}
    \label{swapping}
\end{figure*}

\section{PHYSICAL PICTURE AND MODELING}\label{Sec3}
The physical picture underlying our three-party DI-QDS protocol consists of a quantum network for establishing correlated keys among Alice, Bob, and Charlie, followed by the OTUH-based messaging stage for signing and verification. This quantum network structure is based on the efficient establishment of bipartite entanglement through adaptive entanglement swapping. By applying this mechanism independently to the Alice–Bob and Alice–Charlie links over multiple channels, the protocol enables efficient virtual entanglement distribution and the generation of three-party correlated keys. To make this physical picture explicit, we focus on the Alice–Bob link and model its multi-channel adaptive entanglement swapping process. The Alice–Charlie link is modeled in the same manner.

\textit{1. Entangled-photon sources and local measurements.} 
The multi-channel adaptive entanglement swapping process begins with the independent generation of entangled photon pairs by Alice and Bob. In our DI protocol, we use quantum dot (QD)~\cite{Schimpf2021Quantum,Wang2019On-Demand,Liu2019solid} sources to generate polarization-entangled photon pairs. As shown in Fig.~\ref{f3:qdot}, the entangled state generated by QD sources could be expressed as:
\begin{equation}
    \ket{\varPhi(t)}_{\text{X,XX}}=\frac{1}{\sqrt{2}}(\ket{H}_\text{X}\ket{H}_\text{XX}+e^{-\frac{i \Delta_{\text{FSS}}}{\hbar}t}\ket{V}_\text{X}\ket{V}_\text{XX}),
\end{equation}
in which $t$ is the time the QD dwells in the exciton state before decaying, $\ket{H}$ and $\ket{V}$ are two orthogonal linear polarization states, which are defined by the in-plane symmetry of the QD~\cite{Bayer2002Fine}, and $\Delta_{\text{FSS}}$ is the fine structure splitting (FSS) magnitude. We set $\Delta_{\text{FSS}}=0.396 \ \mu \text{eV}$~\cite{Schimpf2021Quantum}, and a photon-pair generation probability $\eta_\text{gen}=0.875$~\cite{Schimpf2021Quantum}.

When there is no time filtering, the density matrix could be expressed as:
\begin{equation}
    \rho_s=\int_0^\infty \frac{1}{T_{1,\text{X}}}e^{-\frac{t}{T_{1,\text{X}}}}\ket{\Phi(t)}\bra{\Phi(t)} dt,
\end{equation}
where $T_{1,\text{X}}$ represents the X radiative lifetime.
Considering the second-order correlation function $g^{(2)}_\text{X}(0)$ and $g^{(2)}_\text{XX}(0)$, we have the Wigner state
\begin{equation}
    \rho_w=(1-g)\rho_s+\frac{g}{4}I_4,
\end{equation}
where $g=\frac{1}{2}(g^{(2)}_\text{X}(0)+g^{(2)}_\text{XX}(0))=0.0135$, and $I_4$ is the $4 \times 4$ identity matrix. 
When considering a single InGaAs QD coupled to a bull’s-eye microcavity with broadband Purcell enhancement~\cite{Wang2019On-Demand}, we have $T_{1,\text{X}}=126.7~ \text{ps}$, $g^{(2)}_\text{X}(0)=0.013$, and $g^{(2)}_\text{XX}(0)=0.014$, which gives $g=0.0135$.

We use this state in the protocol by retaining the X-photons as $k-$photons for local measurements and transmitting the XX-photons as $t-$photons for the subsequent Bell measurement. For each channel, Alice and Bob independently generate the states $\rho_w^{k_a,t_a}$ and $\rho_w^{k_b,t_b}$, respectively. Here, $k_a$ and $k_b$ denote the photons measured locally by Alice and Bob, while $t_a$ and $t_b$ denote the corresponding photons transmitted to Eve. Then we have $\rho_{AB}^{k_a,t_a,k_b,t_b}=\rho_w^{k_a,t_a} \otimes \rho_w^{k_b,t_b}$. 
To separate the locally measured photons from the transmitted photons, we rewrite the joint state as $\rho_{AB}^{k_a,k_b,t_a,t_b}=P \rho_{AB}^{k_a,t_a,k_b,t_b} P^\dagger$, where $P$ exchanges the positions of $k_a$ and $t_a$ in the tensor product ordering. The explicit form of $P$ is given in Appendix~\ref{Appendix3}. This reordered form groups the local subsystem $(k_a, k_b)$ and the transmitted subsystem $(t_a,t_b)$, which will be used to describe the subsequent entanglement swapping process.

For a single InGaAs QD coupled to a bull’s-eye microcavity with broadband Purcell enhancement~\cite{Wang2019On-Demand}, the collection efficiency of the QD source can be taken as $\eta_c=0.62$. In the actual implementation, the generated $k-$photons are collected and measured according to the previously determined local measurement settings before the transmitted $t-$photons undergo the central Bell measurement, as illustrated in Fig.~\ref{swapping}.  Meanwhile, the $t-$photons are collected and transmitted to Eve, where they are subsequently identified, adaptively selected, and subjected to a Bell measurement. Under the standard DI assumptions, no information about the local measurement settings or outcomes is available to Eve when determining the channel selection and Bell measurement results.

\textit{2. Multi-channel transmission and adaptive entanglement swapping.}
The $t-$photons transmitted through the $m_r$ channels from Alice and Bob provide multiple possible inputs for adaptive entanglement swapping. Through single-mode (SM) fibers, these generated $t-$photons are transmitted to the intermediate node containing the Bell state analyzers (BSAs). The attenuation coefficient $\alpha_\text{f}$ of these SM fibers is $0.16~\text{dB/km}$ and the fiber attenuation distance is $l_\text{att}=27.17~\text{km}$, thus having a transmission efficiency of $\eta_\text{fiber}=10^{-\frac{\alpha_\text{f}L}{10}}=\exp(-\frac{L}{l_\text{att}})$~\cite{Liu2023Experimental}, where $L$ denotes the fiber transmission distance from Alice or Bob to the intermediate node, with the two links assumed to have the same length.

At the intermediate node, QND operations are performed to identify the successfully arrived photons. The QND operations for single photons can be implemented through several all-photonic schemes~\cite{Kok2007Linear}, while here we focus on a simple example based on quantum teleportation~\cite{Kok2002Single-photon,Bennett1993Teleporting}. For simplicity, we assume the ideal parameter $p_\text{QND}=\frac{1}{2}$, which represents the probability of successfully confirming the arrival of a photon~\cite{Li2023Breaking}.
Then, based on the QND outcomes, active feedforward directs the successfully arrived photons from Alice and Bob through SWs, adaptively pairing them and routing each formed photon pair to a separate BSA. Bell measurements are then performed on all formed photon pairs in parallel. 
The active feedforward takes $\tau_a=67~ \text{ns}$~\cite{Ma2011Experimental}, and the speed of light in the optical fiber is $c=2.0 \times 10^8 \text{m/s}$. So, the active feedforward is equivalent to a lossy channel with transmittance $\eta_\text{af}=\exp(-\frac{\tau_a c}{l_\text{att}})$.

The implementation requires large-scale optical routing. An $m_r\times1$ SW can be constructed by cascading $2\times1$ switches with a depth of $\left\lceil\log_2m_r\right\rceil$~\cite{azuma2015all}. To reduce the number of BSAs, one may select only one of the formed photon pairs and route it to a single BSA, while discarding the remaining pairs~\cite{Li2026Multiplexing}. Although this scheme uses fewer BSAs, it does not reduce the required switching depth, which remains of the order of $\left\lceil\log_2m_r\right\rceil$. Moreover, when $l$ photon pairs are available, it reduces the Bell measurement success probability from $1-(1-\text{Pr}_\text{BM})^l$ to $\text{Pr}_\text{BM}$, leading to a lower key rate. We therefore adopt the parallel implementation.

\begin{table}[t]
    \centering
    \renewcommand{\arraystretch}{1.4}
    \begin{tabular*}{\columnwidth}{@{\extracolsep{\fill}}cccccc@{}}
    \hline
    \hline
    $\Delta_{\text{FSS}}$ 
    & $T_{1,\text{X}}$  
    & $g^{(2)}_\text{X}(0)$ 
    & $g^{(2)}_\text{XX}(0)$ 
    & $\eta_c$  
    & $\eta_\text{gen}$ \\
    \hline
    $0.396\ \,\mu\text{eV}$ 
    & $126.7~\,\text{ps}$ 
    & 0.013 
    & 0.014 
    & 0.62 
    & 0.875 \\
    \hline
    \hline
    \end{tabular*}
    \caption{This table contains the parameters of the QD source we set, in which $\Delta_{\text{FSS}}$ is the fine structure splitting (FSS) magnitude, $T_{1,\text{X}}$ represents the X radiative lifetime, $g^{(2)}_\text{X}(0)$ and $g^{(2)}_\text{XX}(0)$ are the second-order correlation functions. $\eta_c$ is the collection efficiency, and $\eta_\text{gen}$ is the photon-pair generation probability.}
    \label{t1}
\end{table}
\begin{table}[t]
    \centering
    \renewcommand{\arraystretch}{1.4}
    \begin{tabular*}{\columnwidth}{@{\extracolsep{\fill}}cccccc@{}}
    \hline
    \hline
    $\varepsilon_\text{EA}$ 
    & $\varepsilon_\text{PA}$ 
    & $\varepsilon_\text{h}$ 
    & $\varepsilon_\text{auth}$ 
    & $\varepsilon_\text{QDS}$ 
    & $L_\text{Doc}$ \\
    \hline
    $10^{-11}$ 
    & $10^{-11}$ 
    & $2^{-61}$ 
    & $10^{-15}$ 
    & $10^{-10}$ 
    & $10^{3}$ \\
    \hline
    \hline
    \end{tabular*}
    \caption{Values of security parameters $\varepsilon_\text{EA}$, $\varepsilon_\text{PA}$, $\varepsilon_\text{h}$, $\varepsilon_\text{auth}$, the security bound of the signatures $\varepsilon_\text{QDS}$, and the document volume $L_\text{Doc}$. $\varepsilon_\text{EA}$ is the entropy accumulation parameter. $\varepsilon_\text{PA}$ is the privacy amplification parameter. $\varepsilon_\text{h}$ is the hashing collision probability. $\varepsilon_\text{auth}$ is the security bound of the authenticated classical channels. $\varepsilon_\text{QDS}$ is the security bound of the signatures. $L_\text{Doc}$ is the document volume.}
    \label{t2}
\end{table}

For the retained successful Bell measurement, let $t_a$ and $t_b$ denote the corresponding transmitted photons from Alice and Bob, respectively, and let $k_a$ and $k_b$ denote their local partner photons. These photons belong to the two source pairs $(k_a,t_a)$ and $(k_b,t_b)$. In the ideal case, a successful Bell measurement on $t_a$ and $t_b$ swaps the original entanglement of these two pairs into bipartite entanglement between the corresponding local photons $k_a$ and $k_b$. In particular, using the notation for the Bell states defined in Appendix~\ref{Appendix3}, a successful click corresponding to $\ket{\Phi_t^\mu}$, where $\mu\in\{+,-\}$, is associated with the local Bell state $\ket{\Phi_k^\mu}$ in the ideal entanglement swapping picture.
Each BSA is modeled by the practical setup shown in Fig.~\ref{f1:qkd}. We assume a single-photon detector efficiency $\eta_s=0.93$ and a dark count probability $p_d=10^{-9}$~\cite{Minder2019experimental}. Let $F_\mu^{t_a,t_b}$ denote the effective positive operator-valued measure (POVM) element associated with the successful Bell measurement outcome $\mu\in\{+,-\}$. The explicit form of $F_\mu^{t_a,t_b}$ and other details are given in Appendix~\ref{Appendix3}.

For the practical source state $\rho_{AB}^{k_a,k_b,t_a,t_b}$, the equivalent virtual state conditioned on the retained Bell measurement outcome $\mu$ is
\begin{equation}
    \rho_{k_a,k_b|\mu}
    =
    \frac{
    \operatorname{Tr}_{t_a,t_b}
    \left[
    \left(
    I_{k_a,k_b}
    \otimes
    F_\mu^{t_a,t_b}
    \right)
    \rho_{AB}^{k_a,k_b,t_a,t_b}
    \right]
    }{
    \operatorname{Tr}
    \left[
    \left(
    I_{k_a,k_b}
    \otimes
    F_\mu^{t_a,t_b}
    \right)
    \rho_{AB}^{k_a,k_b,t_a,t_b}
    \right]
    }.
\end{equation}
As illustrated in Fig.~\ref{swapping}, $\rho_{k_a,k_b|\mu}$ represents the equivalent virtual bipartite state associated with the selected channel pair and the announced Bell measurement outcome. It is obtained by considering the Bell measurement before the local measurements. In the ideal limit, $F_\mu^{t_a,t_b}$ reduces to the Bell-state projector associated with $\ket{\Phi_t^\mu}$, and $\rho_{k_a,k_b|\mu}$ reduces to $\ket{\Phi_k^\mu}\bra{\Phi_k^\mu}$. For the practical analyzer, $\rho_{k_a,k_b|\mu}$ is generally a mixed state due to detector imperfections.
In the actual implementation, however, the local photons $k_a$ and $k_b$ are measured before the Bell measurement on $t_a$ and $t_b$. Therefore, $\rho_{k_a,k_b|\mu}$ does not represent a physical entangled state stored after Eve's Bell measurement. Instead, it represents the equivalent virtual state obtained by considering the Bell measurement first and the local measurements afterwards.

Although the local photons $k_a$ and $k_b$ are measured before the Bell measurement in the actual implementation, let $M_{u|x}^{k_a}$ and $M_{v|y}^{k_b}$ denote the local POVM elements of Alice and Bob, respectively, where $x$ and $y$ are their measurement settings and $u$ and $v$ are the corresponding outcomes. Since the local measurements act on $k_a$ and $k_b$, whereas the Bell  measurement acts on $t_a$ and $t_b$, the corresponding POVM elements commute:
\begin{equation}
    \left[
    M_{u|x}^{k_a}
    \otimes
    M_{v|y}^{k_b}
    \otimes
    I_{t_a,t_b},
    \,
    I_{k_a,k_b}
    \otimes
    F_\mu^{t_a,t_b}
    \right]
    =
    0.
\end{equation}
Therefore, exchanging the order of the local measurements and the Bell measurement does not change the outcome statistics. In particular, the local statistics conditioned on the retained Bell measurement outcome
$\mu$ satisfy
\begin{equation}
    p(u,v|x,y,\mu)
    =
    \operatorname{Tr}
    \left[
    \left(
    M_{u|x}^{k_a}
    \otimes
    M_{v|y}^{k_b}
    \right)
    \rho_{k_a,k_b|\mu}
    \right].
\end{equation}
Hence, as illustrated in Fig. \ref{swapping}, the actual procedure is statistically equivalent to the virtual picture in which the Bell measurement is performed first, followed by the local measurements on $\rho_{k_a,k_b|\mu}$. More details about the calculation could be found in Appendix \ref{Appendix3}.

\textit{3. Effective rounds and correlated-key generation.} 
In each distribution round, Alice and Bob each use $m_r$ parallel channels. The successfully arrived $t$ photons are identified by the QND measurements, adaptively paired, and routed to parallel BSAs. A round is regarded as effective if at least one Bell measurement succeeds. If several Bell measurements succeed, one successful outcome and the corresponding local outcomes are retained as one effective round of the bipartite key generation link, while other outcomes are discarded.

Since finding the channel number that maximizes the number of effective rounds is difficult, for simplicity, we take
\begin{equation}\label{m1}
    m_r
    =
    \left\lceil
    \frac{1}{\text{Pr}_{\mathrm{BM}} \cdot p_a}
    \right\rceil,
\end{equation}
where $\Pr_{\mathrm{BM}}$ is the Bell measurement success probability, and
\begin{equation}
    p_a
    =
    \eta_{\mathrm{gen}}\cdot
    \eta_c\cdot
    \eta_{\mathrm{fiber}}\cdot
    p_{\mathrm{QND}}\cdot
    \eta_{\mathrm{af}}
\end{equation}
is the probability that a transmitted photon reaches the BSA in one channel. For this choice of $m_r$, the expected number of effective rounds is
\begin{equation}\label{nr}
    n_r
    =
    N\cdot p_a \cdot \text{Pr}_{\mathrm{BM}} \cdot p_{\mathrm{click}}.
\end{equation}
The choice of $m_r$ and the calculation of $n_r$ are detailed in Appendix~\ref{Appendix2}.
The $n_r$ effective rounds are then used for bipartite key generation. Based on the corresponding quantum bit error rate (QBER) and CHSH score, the finite-size EAT analysis yields a key of length $l$. The complete expression for $l$ and the definitions of the relevant parameters are given in Appendix~\ref{Appendix1}.

The same bipartite key-generation procedure is independently performed on the $AB$ and $AC$ links, yielding the pairwise keys $K^{AB}$ and $K^{AC}$, respectively. The deterministic local post-processing converter $\pi_{\oplus}$ then maps these pairwise keys to the final KGP keys according to
\begin{equation}
\begin{aligned}
    \mathbf K_A &:= K^{AB}\oplus K^{AC},\\
    \mathbf K_B &:= K^{AB},\\
    \mathbf K_C &:= K^{AC}.
\end{aligned}
\end{equation}
The resulting three-party correlated keys satisfy
\begin{equation}
    \mathbf K_A=\mathbf K_B\oplus\mathbf K_C.
\end{equation}
Each final KGP key $K_i$, $i\in\{A,B,C\}$, is divided into groups of three $n$-bit strings $(X_i,Y_i,Z_i)$. Corresponding groups form candidate signing slots, which are selected and used according to the procedure in Sec.~\ref{msgstage}.

\section{COMPOSABLE SECURITY ANALYSIS}\label{security and application}

Here, we analyze the complete DI-QDS implementation within the real-world/ideal-world formulation of composable security~\cite{Badertscher2018Composable}. This analysis is grounded in the physical realization and picture of the protocol, and extends the description to its complete signature behavior. The real system represents the complete implementation and its externally observable behavior, while the ideal QDS resource formalizes the intended authentication and transfer guarantees. The security of the implementation is then quantified by the distinguishing advantage between these two descriptions.

Existing QDS security analyses typically establish security against specified adversarial scenarios~\cite{Li2023One-time,Yin2017Experimental}. A unified composable characterization of the full range of protocol behaviors under the allowed interactions remains to be established.

In the composable formulation, these security requirements are instead incorporated into the operational capability boundary of an ideal QDS resource. We then show that the complete physical DI-QDS implementation, while retaining the protocol actions available to the adversarial side, realizes this resource within a bounded distinguishing advantage. This provides a unified characterization of the QDS security guarantees within the stated model and enables the resulting resource to be invoked modularly in higher-level protocols.

\subsection{Ideal QDS resource and security model}
As formally specified in Definition~\ref{FQDS}, the ideal QDS resource \(\mathcal F_{\mathrm{QDS}}\) is a three-party transferable authenticated repository with the prescribed transfer direction \(A\!\rightarrow\!B\!\rightarrow\!C\), providing authenticated writing and transferable verification. Its security guarantees are captured by Eqs.~(\ref{Unforgeability}) and~(\ref{transferability}),
\[
\begin{gathered}
    \mathrm{Verify}_{X}(sid)=1
    \Rightarrow R_X[sid]=W_A[sid]\neq\perp,\\
    \mathrm{Verify}_{B}(sid)=1
    \Rightarrow \mathrm{Verify}_{C}(sid)=1.
\end{gathered}
\]
Here, \(X\in\{B,C\}\), and \(sid\) distinguishes different instances of the signing procedure, while \(W_A[sid]\) and \(R_X[sid]\) denote the document registered by Alice and the document available to verifier \(X\), respectively. The first relation expresses unforgeability, while the second captures both transferability and nonrepudiation. The detailed interface structure of \(\mathcal F_{\mathrm{QDS}}\) is given in Appendix~\ref{sec:composable-security}.

In essence, \(\mathcal F_{\mathrm{QDS}}\) maintains an authenticated document for each signature instance and governs its prescribed transfer from Alice to Bob and subsequently to Charlie. As the document is forwarded, its availability changes from one verifier to the next, while its validity remains tied to the document registered by Alice.

We then consider whether the complete DI-QDS implementation realizes this ideal resource under the allowed adversarial behavior. The security model includes the all-honest case and the cases in which either Alice or Bob is corrupted, while Charlie remains honest. These cases are collected in the corruption structure
$$
    \mathfrak C_{\mathrm{QDS}}
    =
    \left\{
        \emptyset,\{A\},\{B\}
    \right\}.
$$
For each allowed corruption pattern, the real system retains the protocol actions available to the adversarial side, while the corresponding ideal interaction is reproduced through a simulator attached to the adversarial interface \(E\). The composable security condition is then
$$
    \forall D:\quad
    \Delta_D^{\mathfrak C_{\mathrm{QDS}}}
    \left(
        \mathsf{Real}_{\mathrm{QDS}},
        \mathbf{Sim}^{E}\mathcal F_{\mathrm{QDS}}
    \right)
    \leq
    \varepsilon_{\mathrm{QDS}} .
$$
Here, \(D\) denotes a distinguisher, and \(\Delta_D^{\mathfrak C_{\mathrm{QDS}}}\) is its largest distinguishing advantage over the allowed corruption patterns. Thus, within the stated security model, the complete real DI-QDS implementation is indistinguishable from the corresponding ideal system up to the security bound \(\varepsilon_{\mathrm{QDS}}\).

\subsection{Real to ideal reduction and forgery boundary}
To establish the overall indistinguishability relation \(\mathsf{Real}_{\mathrm{QDS}} \approx_{\varepsilon_{\mathrm{QDS}}} \mathbf{Sim}^{E}\mathcal F_{\mathrm{QDS}}\), we progressively replace the lower-level resources of the physical DI-QDS implementation by their ideal counterparts. Here, \(X\approx_{\varepsilon}Y\) means that the distinguishing advantage between systems \(X\) and \(Y\) is bounded by \(\varepsilon\), with the subscript \(\varepsilon\) on \(\approx\) denoting the corresponding distinguishing error. 

We first replace the real authenticated communication by \(\mathcal F_{\mathrm{auth}}\), and then replace the physical DI-KGP resources by the ideal resource \(\mathcal F_{\mathrm{KGP}}\), with distinguishing errors \(\varepsilon_{\mathrm{auth}}\) and \(\varepsilon_{\mathrm{KGP}}\), respectively. 
Within each DI-KGP instance, the EAT establishes security against general attacks, while memory attacks across different instances are excluded by the device isolation requirement in Sec.~\ref{Sec2}. 
As defined in Definition~\ref{FKGP}, \(\mathcal F_{\mathrm{KGP}}\) provides a batch of secret and correct correlated \(n\)-bit key blocks to Alice, Bob, and Charlie, satisfying
$$
    K_A[j]
    =
    K_B[j]\oplus K_C[j],
    \qquad j\in J,
$$
where \(j\) labels the key blocks in the batch.

These replacements lead to the reduction
$$
\begin{aligned}
    \mathsf{Real}_{\mathrm{QDS}}
    &\approx_{\varepsilon_{\mathrm{auth}}}
    \mathsf{Hyb}_{\mathrm{auth}}
    \approx_{\varepsilon_{\mathrm{KGP}}}
    \mathsf{Hyb}_{\mathrm{QDS}}\\
    &\approx_{\varepsilon_{\mathrm{msg}}}
    \mathbf{Sim}^{E}\mathcal F_{\mathrm{QDS}} .
\end{aligned}
$$
By the triangle inequality, the overall real to ideal distinguishing bound is
$$
    \varepsilon_{\mathrm{QDS}}
    =
    \varepsilon_{\mathrm{auth}}
    +
    \varepsilon_{\mathrm{KGP}}
    +
    \varepsilon_{\mathrm{msg}} .
$$

The hybrid system is defined as
$$
\begin{aligned}
    &\mathsf{Hyb}_{\mathrm{QDS}}
    =
    (\mathbf{sim}_{\mathrm{hyb}})^{E_{\mathrm{hyb}}}
    \left[
        \pi_{\mathrm{msg}}^{\mathcal P_{\mathrm{msg}}}
        \left(
            \mathcal R_{\mathrm{QDS\text{-}hyb}}
        \right)
    \right],\\
    &\mathcal R_{\mathrm{QDS\text{-}hyb}}
    =
    \mathcal F_{\mathrm{KGP}}
    \parallel
    R_{\tau}
    \parallel
    R_{\kappa}.
\end{aligned}
$$
Here, \(\pi_{\mathrm{msg}}\) contains the remaining QDS operations, including setup, signing, key exchange, and verification. The resource \(R_{\tau}\) carries the signed object \(\tau=(\mathrm{Sig},P,\mathrm{Doc})\), while \(R_{\kappa}=R_{\kappa_B}\parallel R_{\kappa_C}\) carries the key shares exchanged between Bob and Charlie. The complete definitions of the hybrid resource, protocol converters, interfaces, and simulator are given in Appendix~\ref{sec:composable-security}.

The final hybrid to ideal relation is established for all three corruption patterns in \(\mathfrak C_{\mathrm{QDS}}\), as detailed in Appendix~\ref{sec:composable-security}. The dishonest Alice case is simulated with zero distinguishing error, whereas the all honest and dishonest Bob cases are bounded by the related-key security bound for the underlying LFSR-based Toeplitz authentication. 
The details of this bound are given in Appendix~\ref{RI Security Bound}.
Compared with previous QDS security analyses, where forgery is typically modeled by modifying the message and signature and testing whether the modified pair passes verification~\cite{Li2023One-time}, the composable formulation captures the complete forgery boundary available to a corrupted Bob within the stated model. In particular, Bob may jointly choose
$$
    \tau'=(\mathrm{Sig}',P',\mathrm{Doc}')
$$

and a key share \(\kappa_B'\), while Charlie verifies \(\tau'\) using
$$
    K_{\mathrm{ver}}^{C}
    =
    K_C\oplus\kappa_B' .
$$
The adversary may therefore influence both the signed object and the verification key in a correlated manner. The previous modification of the message and signature is contained as a restricted case of this enlarged attack space.

\subsection{Security bounds and modular applications}
As analyzed in Appendix~\ref{sec:composable-security}, the remaining message level distinguishing error is bounded by the related-key security analysis as
$$
    \varepsilon_{\mathrm{msg}}
    =
    \min\left\{
        1,
        \sqrt{
            \frac{
                7n\left\lceil L_{\mathrm{Doc}}/n\right\rceil
            }{2^n}
        }
    \right\},
    \qquad n\geq 12 .
$$
Here, \(L_{\mathrm{Doc}}\) denotes the maximum length in bits of the encoded document, while \(n\) denotes the signature length. The quantity \(n\lceil L_{\mathrm{Doc}}/n\rceil\) is the corresponding analytical document length used in Appendix~\ref{RI Security Bound}.

For the present construction, the KGP resource is realized by two pairwise KGP constructions,
$
    \varepsilon_{\mathrm{KGP}}
    =
    \varepsilon_{\mathrm{KGP}}^{AB}
    +
    \varepsilon_{\mathrm{KGP}}^{AC}.
$ 
For each pairwise KGP construction, the composable security bound is
$$
    \varepsilon_{\mathrm{snd}}
    =
    \max\left\{
        \varepsilon_{\mathrm{EA}},
        \varepsilon_{\mathrm{PA}}+2\varepsilon_s
    \right\}
    +
    2\varepsilon_{\mathrm h}.
$$
The definitions and values of all parameters appearing in this expression are given in Appendix~\ref{Appendix1}. When the two KGP links use the same security parameters,
$
    \varepsilon_{\mathrm{KGP}}^{AB}
    =
    \varepsilon_{\mathrm{KGP}}^{AC}
    =
    \varepsilon_{\mathrm{snd}},
$ 
and hence the overall distinguishing bound can be written explicitly as
\begin{equation}
\begin{aligned}
\varepsilon_{\mathrm{QDS}}
={}&
\varepsilon_{\mathrm{auth}}
+
2\max\left\{
\varepsilon_{\mathrm{EA}},
\varepsilon_{\mathrm{PA}}+2\varepsilon_s
\right\}
+
4\varepsilon_{\mathrm h}
\\
&+
\min\left\{
1,
\sqrt{
\frac{
7n\left\lceil L_{\mathrm{Doc}}/n\right\rceil
}{2^n}
}
\right\}.
\end{aligned}
\end{equation}

\begin{figure}[t]
    \includegraphics[width=\linewidth]{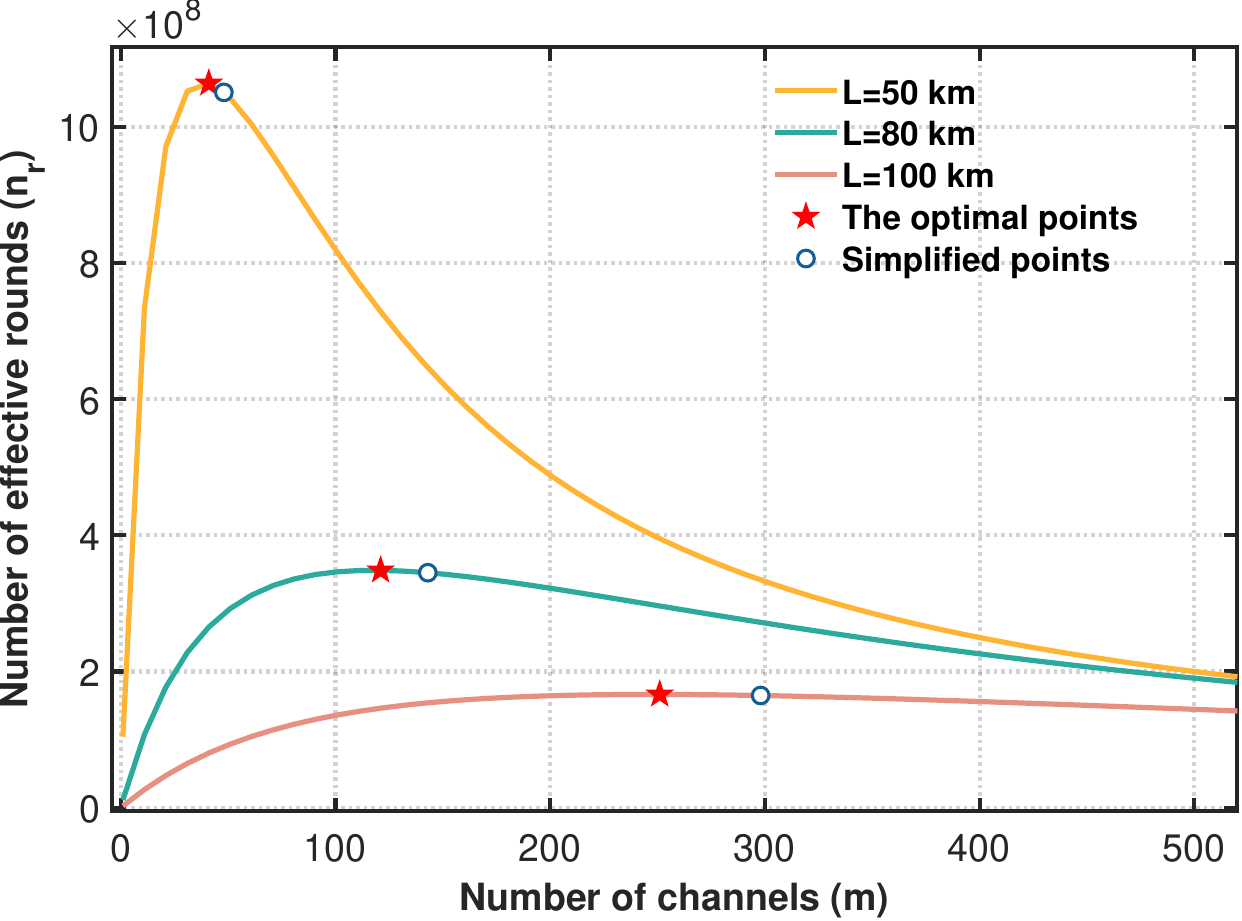}
    \caption{Demonstration of number of effective rounds versus number of channels under different fiber transmission distances $L=50$ km, $80$ km, and $100$ km. We set the number of entangled photon pairs generated by each party to $N=10^{11}$. The red stars represent the optimal points and the blue circles represent the simplified points we take.}
    \label{f4:m_nr_diffL}
\end{figure}
\begin{figure}[t]
    \includegraphics[width=\linewidth]{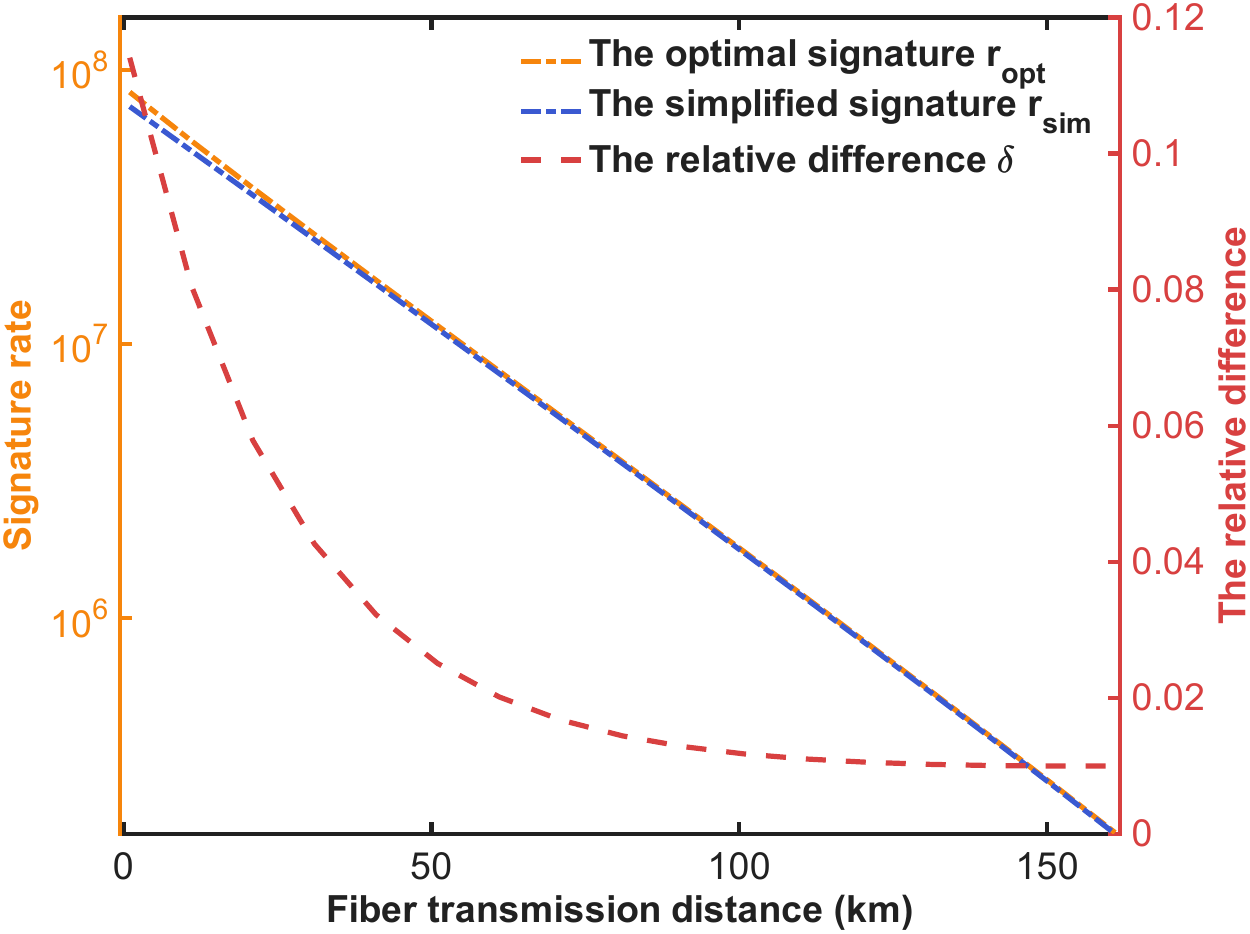}
    \caption{Comparison of signature rates with the optimal channel numbers and the simplified channel numbers and demonstration of the variation of relative difference with the fiber transmission distance $L$. For the simplified cases, we use the simplified equation described in Eq. (\ref{nr}). We set the total number of entangled photon pairs generated by each side to $N=10^{12}$ and the probability of noise preprocessing to $p=0$. The document volume $L_\text{Doc}$ is assumed to be $10^3$ bits. Security bound of the signatures $\varepsilon_\text{QDS}$ is $10^{-10}$.}
    \label{f5:relative_diff}
\end{figure}

For the security parameters used in our simulations, the target \(\varepsilon_{\mathrm{QDS}}=10^{-10}\) is achieved with a signature length of \(n=82\).
This expression gives the complete distinguishing budget for replacing the real DI-QDS implementation by the ideal resource \(\mathcal F_{\mathrm{QDS}}\).

The above construction provides an explicit composable interpretation of DI-QDS as a transferable authenticated repository. It shows that the real DI-QDS protocol realizes the ideal QDS resource within a bounded distinguishing advantage. This also clarifies how DI-QDS can be used as a building block in higher level applications, including assertion-based session authentication in Security Assertion Markup Language (SAML) single sign-on (SSO) profiles~\cite{SAMLProfiles2005} and three-party quantum Byzantine agreement (QBA)~\cite{Jing2024Experimental}.

In SAML SSO profiles, DI-QDS can be used for authenticated assertion transfer. Alice, Bob, and Charlie correspond to the identity provider, the client-side entity, and the service provider, respectively. Bob transfers the signed assertion to Charlie, who verifies it before accepting the session, while other SSO specific mechanisms remain in the higher level protocol. Similarly, in three-party QBA, DI-QDS can be used for authenticated order transfer. Alice acts as the commander, while Bob and Charlie act as lieutenants who verify the transferred order. Authentication and transferability ensure that honest lieutenants accept only orders issued through Alice's commander interface and obtain consistent verification outcomes. The details of these applications are given in Appendix~\ref{applications}.

\section{SIMULATIONS AND DISCUSSION}\label{Sec4}

In this section, we numerically evaluate the performance of the proposed protocol under different scenarios. In the simulations, $L$ denotes the fiber transmission distance from each user to the intermediate node of the corresponding bipartite KGP link. Under the symmetric configuration, the two fiber links in each bipartite KGP have the same length $L$, giving a corresponding pairwise signature distance of $D_{\mathrm{sig}}=2L$.

For a given fiber transmission distance $L$, the time complexity of the direct numerical optimization over the channel number scales as $O(10^{3\alpha_f L/10})$, making the optimization impractical for $L>100$ km. We therefore use a simplified channel number that yields performance close to the optimum while remaining computationally tractable. Figure~\ref{f4:m_nr_diffL} compares the number of effective rounds for different channel numbers at several values of $L$. The red stars represent the optimal points, while the blue circles represent the simplified choices. From Fig.~\ref{f4:m_nr_diffL}, it can be observed that as the distance increases, the number of effective rounds decreases, which is very intuitive. Meanwhile, for a fixed $L$, as the channel number increases, the number of effective rounds first increases because of the higher heralding probability. However, when the channel number becomes too large, the total number of distribution rounds, $N/m$, decreases, causing the number of effective rounds to decrease. The simplified choices are close to the optimal points, showing that they provide good approximations.

\begin{figure}[t]
    \includegraphics[width=\linewidth]{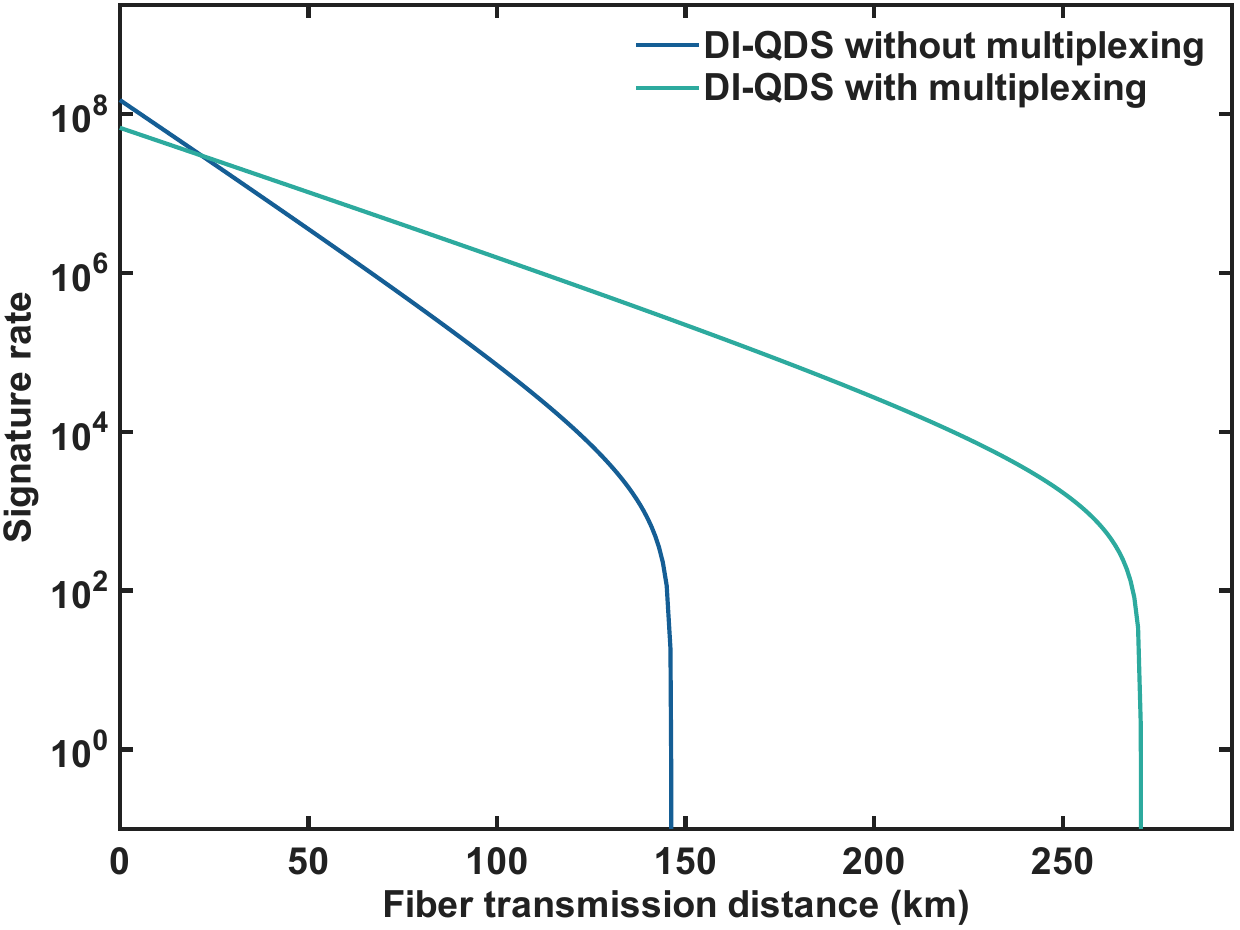}
    \caption{Comparison of signature rates with multi-channel multiplexing and without multiplexing. The green and blue lines represent DI-QDS with and without multiplexing, respectively. We set the probability of noise preprocessing to $p=0$ and the total number of entangled photon pairs generated by each side to $N=10^{12}$. The document volume $L_\text{Doc}$ is assumed to be $10^3$ bits. Security bound of the signatures $\varepsilon_\text{QDS}$ is $10^{-10}$.}
    \label{f6:SingleMulti_withoutNP}
\end{figure}
\begin{figure}[t]
    \includegraphics[width=\linewidth]{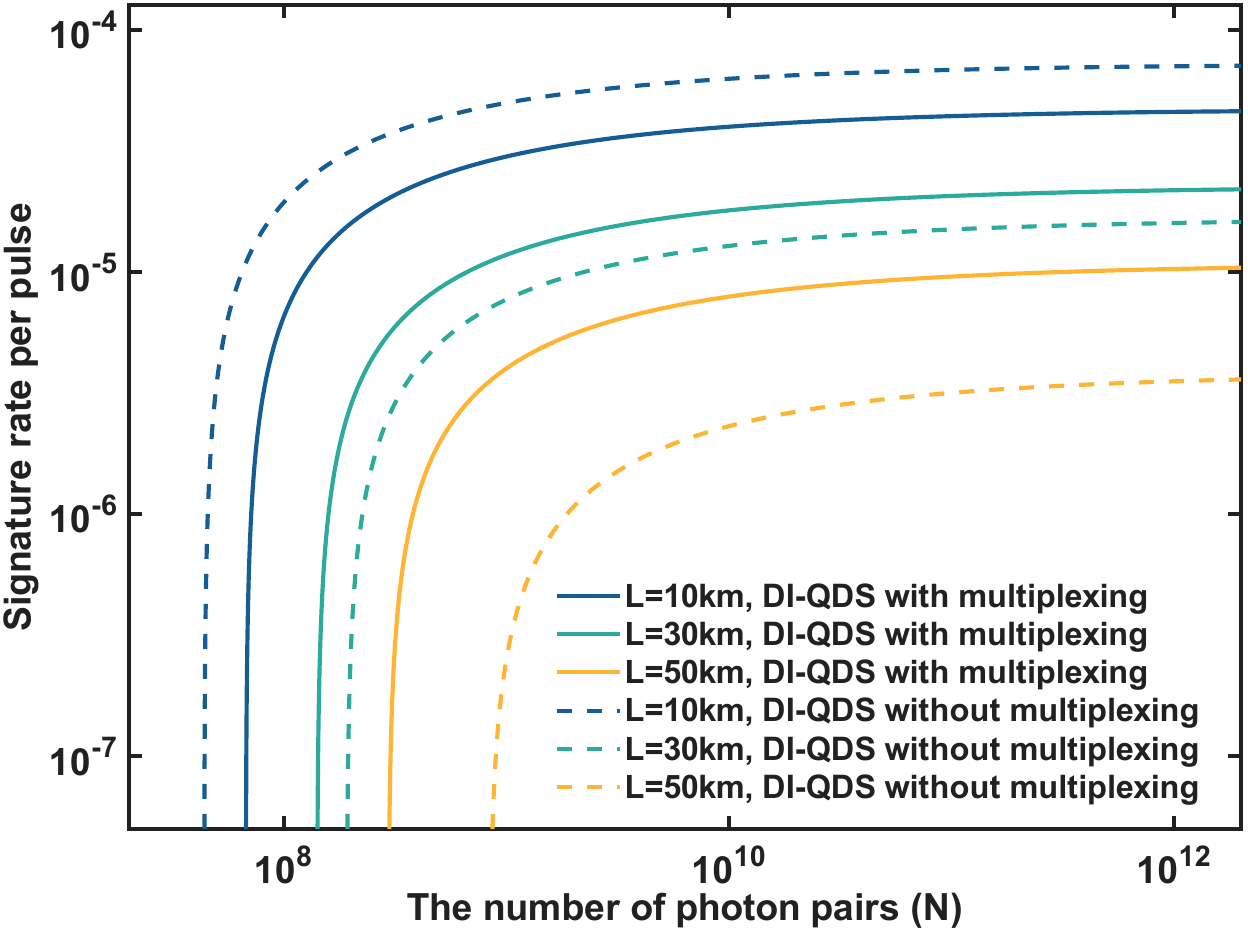}
    \caption{Demonstration of the signature rate per pulse versus the number of photon pairs each side $N$. The document volume $L_\text{Doc}$ is assumed to be $10^3$ bits. Security bound of the signatures $\varepsilon_\text{QDS}$ is $10^{-10}$. We fix the probability of bit noise preprocessing $p=0$. The solid lines represent the DI-QDS protocol with multi-channel multiplexing, and the dashed lines represent the single-channel DI-QDS without multiplexing. The colors blue, green, and yellow correspond to fiber transmission distances  of $L = $10 km, 30 km, and 50 km, respectively.}
    \label{f7:Sig_Multi_N}
\end{figure}

In Fig.~\ref{f5:relative_diff}, we compare the signature rate obtained using the optimal channel number, denoted by $r_{\mathrm{opt}}$, with that obtained using the simplified expressions in Eqs.~(\ref{m1}) and~(\ref{nr}), denoted by $r_{\mathrm{sim}}$. We set the parameters of our QD sources as described in Table~\ref{t1} and the noise preprocessing probability $p=0$. the total number of entangled photon pairs generated by each side is $N=10^{12}$. The document volume $L_\text{Doc}$ is assumed to be $10^3$ bits, and the security bound of the signatures $\varepsilon_\text{QDS}$ is $10^{-10}$. The left axis of Fig.~\ref{f5:relative_diff} represents the signature rate, while the right axis shows the relative error between $r_\text{opt}$ and $r_\text{sim}$, defined as 
\begin{equation}
    \delta=\frac{r_\text{opt}-r_\text{sim}}{r_\text{opt}}.
\end{equation} 

The relative error is approximately $12\%$ at short fiber transmission distances and decreases rapidly as $L$ increases. This is because the decrease of $p_a$ with $L$ makes the approximation in Eq.~(\ref{B17}) increasingly accurate. Consequently, $p_{\mathrm{click}}$ approaches its exact value and the relative error $\delta$ decreases. At $L\approx100$ km, the relative error is about $1\%$, demonstrating the accuracy of the approximation in the hundred-kilometer regime. Under the symmetric configuration, this corresponds to a pairwise signature distance of approximately $D_{\mathrm{sig}}=200$ km, indicating that the approximation becomes increasingly accurate as the signature distance approaches the metropolitan scale.

Multi-channel multiplexing significantly suppresses the distance dependence of both the CHSH score and the error rate on the fiber transmission distance $L$. In the ideal case with perfect QND measurements, it can even achieve a distance-independent CHSH score, thereby substantially extending the maximum achievable signature distance. Furthermore, multiplexing reduces the distance-dependent attenuation of the successful heralding probability to the square root of that in the single-channel scheme. This substantially increases the successful heralding probability per round and the resulting number of effective rounds, particularly at long fiber transmission distances.

In Fig.~\ref{f6:SingleMulti_withoutNP}, we compare the signature rates as functions of the fiber transmission distance $L$ for protocols with and without multiplexing. All simulations use the same parameter set, with $N=10^{12}$ and $p=0$.
At short fiber transmission distances, such as $L\approx10$ km, the single-channel protocol outperforms the multiplexed protocol because photon loss is still small and the single-channel scheme already maintains a high heralding probability. By contrast, for a fixed total number of photon-pair generation attempts $N$, the multiplexed scheme contains fewer distribution rounds because each round uses multiple channels.
As $L$ increases, however, the successful-heralding probability of the single-channel scheme decays approximately as the square of that of the multiplexed scheme. Consequently, the performance of the single-channel protocol decreases more rapidly, resulting in a substantially smaller maximum achievable signature distance. Over most of the simulated distance range, the multiplexed protocol therefore achieves a higher signature rate.

\begin{figure}[t] 
    \includegraphics[width=\linewidth]{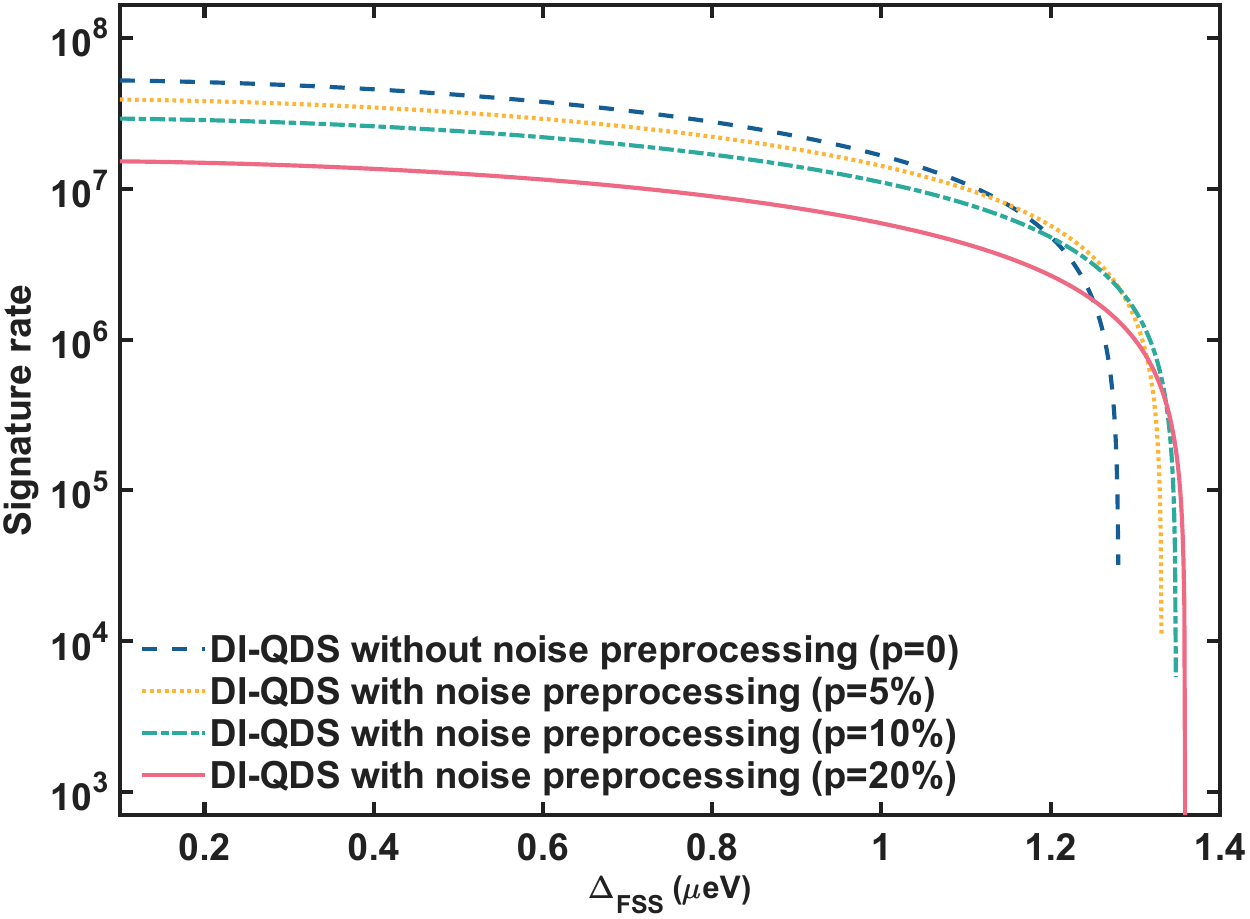}
    \caption{Comparison of the variation of the signature rates of our proposed DI-QDS protocol versus $\Delta_{\text{FSS}}$ under different noise preprocessing probabilities. We set $N=10^{12}$, and the fiber transmission distance $L=10$ km. The document is assumed to be $10^3$ bits. Security bound of the signatures $\varepsilon_\text{QDS}$ is $10^{-10}$. We set the noise preprocessing probability to $0$, $0.05$, $0.1$ and $0.2$.}
    \label{f8:Sig_FSS_p}
\end{figure}

Most importantly, under the symmetric configuration, positive signature rates at fiber transmission distances of hundreds of kilometers correspond to pairwise signature distances at the metropolitan scale. This demonstrates that the proposed composable DI-QDS protocol can operate over metropolitan-scale quantum networks while retaining DI security and composable security guarantees. Moreover, the OTUH framework allows an arbitrary length document to be compressed into a fixed length digest and authenticated with a single signature. As a result, the signature resource consumption does not increase linearly with the document length, enabling efficient authentication of practical documents while maintaining a high signature rate.

Figure~\ref{f7:Sig_Multi_N} shows the signature rate per pulse as a function of the number $N$ of photon pair generation attempts on each side for $p=0$. The document length is set to $10^3$ bits, the QDS security bound is $10^{-10}$, and the QD source parameters are listed in Table~\ref{t1}. For smaller $N$, the lower bound on the conditional entropy accumulated over the $n_r$ effective rounds is insufficient to yield a positive secret key length after accounting for information leakage and finite-size corrections. So, the signature rate becomes positive only when $N$ exceeds a certain threshold. 
At a fixed fiber transmission distance $L$, the probability of obtaining an effective round remains constant, so $n_r$ increases linearly with $N$, while $n_r/N$ remains constant. The error correction leakage scales as $O(n_r)+O(\sqrt{n_r})$. After normalization by $n_r$, the linear term gives a constant contribution per round, whereas the contribution of the square-root term decreases as $O(1/\sqrt{n_r})$. Consequently, the finite-size key rate, and hence the signature rate per pulse, increases and gradually approaches its asymptotic value as $N$ increases. 

\begin{figure}[t]
    \includegraphics[width=\linewidth]{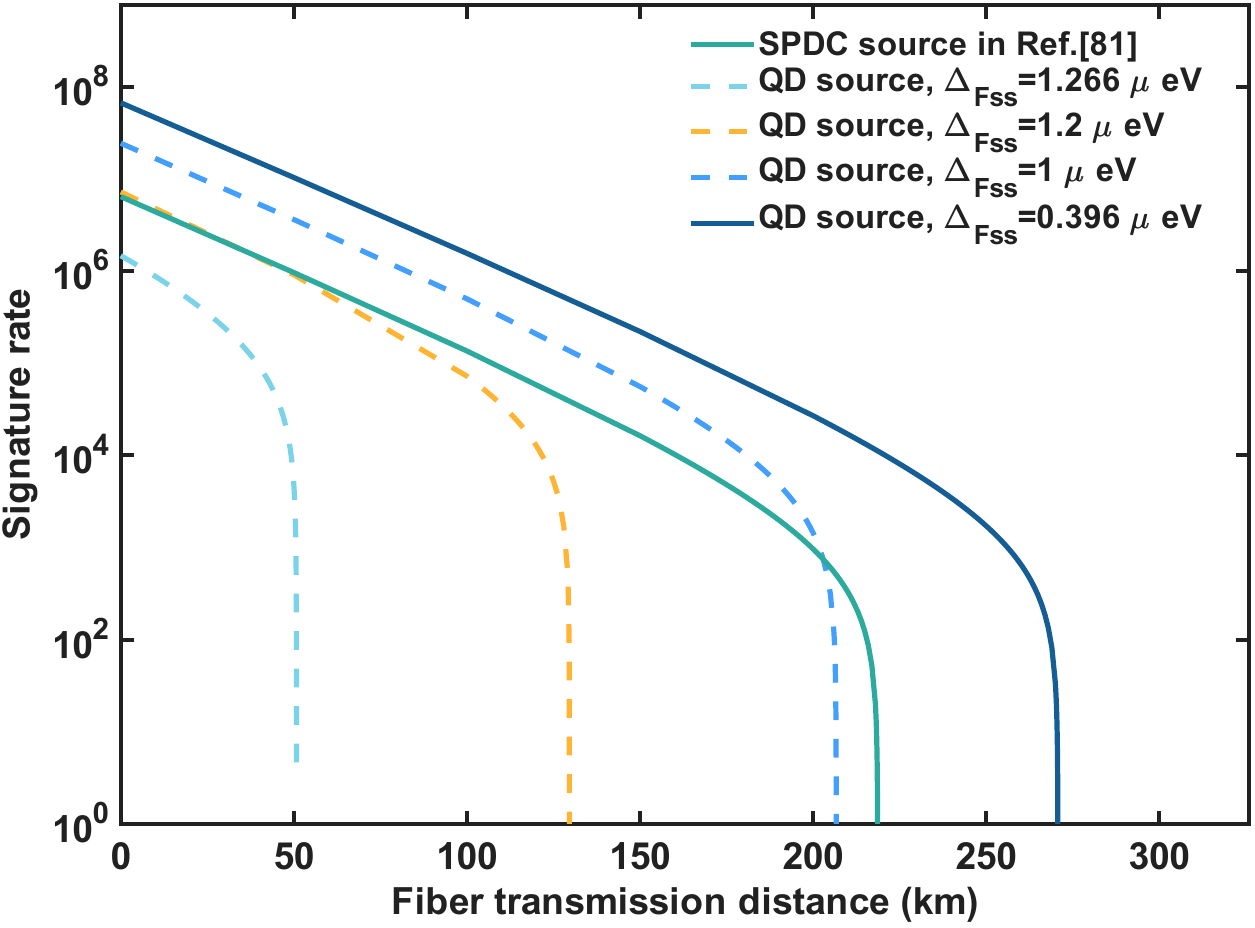}
    \caption{Comparison of signature rate versus fiber transmission distance for the proposed DI-QDS protocol implemented with QD sources at varying $\Delta_{\text{FSS}}$ values and with SPDC sources. We set $N=10^{12}$ and $p=0$. The document is assumed to be $10^3$ bits. Security bound of the signatures $\varepsilon_\text{QDS}$ is $10^{-10}$.}
    \label{f9:Sig_QDvsSPDC}
\end{figure}

Figure~\ref{f7:Sig_Multi_N} shows that, at short fiber transmission distances, the threshold value of $N$ is significantly lower for the single-channel scheme than for the multi-channel scheme. This is because photon loss remains low at short distances, so the single-channel scheme already has a high successful heralding probability. In contrast, each distribution round of the multi-channel scheme uses multiple photon pair generation attempts, resulting in fewer distribution rounds for the same total $N$. As the distance increases, however, the distance dependent decay of the successful heralding probability in the multi-channel scheme is reduced to approximately the square root of that in the single-channel scheme. Consequently, the multi-channel scheme has a markedly lower threshold value of $N$ at long distances. It also achieves a substantially higher maximum signature rate per pulse, and its performance advantage becomes increasingly pronounced as the fiber transmission distance increases.

With multi-channel multiplexing, the dependence of the bit error rate $Q$ and the CHSH score $S$ on the fiber transmission distance is strongly suppressed. In the ideal limit, both quantities can become independent of the distance. Therefore, when the other device parameters are fixed, the quality of the photon source becomes a major factor limiting the performance of the protocol. Noise preprocessing provides an effective way to increase the tolerance of the protocol to source imperfections and thereby relax the requirements on the photon source.

Fig.~\ref{f8:Sig_FSS_p} shows the signature rate as a function of $\Delta_{\mathrm{FSS}}$ for different noise preprocessing probabilities. $\Delta_{\mathrm{FSS}}$ is an important parameter characterizing the quality of a QD source. A larger $\Delta_{\mathrm{FSS}}$ reduces the fidelity of the entangled photon pairs emitted by the source and therefore corresponds to a greater source imperfection. Consequently, the signature rate decreases monotonically as $\Delta_{\mathrm{FSS}}$ increases and becomes zero when $\Delta_{\mathrm{FSS}}$ exceeds its maximum tolerable value.
In Fig.~\ref{f8:Sig_FSS_p}, we compare the protocol performance for noise preprocessing probabilities of $p=0$, $0.05$, $0.1$, and $0.2$. Noise preprocessing significantly increases the maximum tolerable value of $\Delta_{\mathrm{FSS}}$, demonstrating an enhanced tolerance to source imperfections. Moreover, the maximum tolerable value of $\Delta_{\mathrm{FSS}}$ increases as the noise preprocessing probability increases. However, this improvement is accompanied by a reduction in the signature rate. This tradeoff arises because noise preprocessing introduces additional random bit flips, which increase the effective bit error rate and reduce the finite-size key rate, thereby decreasing the resulting signature rate.

To illustrate the advantages of QD photon sources, Fig.~\ref{f9:Sig_QDvsSPDC} compares the performance of our DI-QDS protocol using QD sources and spontaneous parametric down conversion (SPDC) sources. For the QD sources, we consider $\Delta_{\mathrm{FSS}}=0.396~\mu\mathrm{eV}$, $1~\mu\mathrm{eV}$, $1.2~\mu\mathrm{eV}$, and $1.266~\mu\mathrm{eV}$, while the other parameters are listed in Table~\ref{t1}~\cite{Wang2019On-Demand, Schimpf2021Quantum}. For the SPDC source, we set the mean photon number to $\mu=0.04$ and the entanglement fidelity to $F=99.52\%$~\cite{Liu2022Toward}. The effect of multi-pair emission is neglected, which gives a favorable estimate for the SPDC source.

As shown in Fig.~\ref{f9:Sig_QDvsSPDC}, with the experimentally reported parameters listed in Table~\ref{t1} ~\cite{Wang2019On-Demand,Schimpf2021Quantum}, the QD source yields signature rates several orders of magnitude higher than those obtained with the SPDC source and supports a substantially larger maximum fiber transmission distance. This advantage mainly arises from the nearly deterministic generation of entangled photon pairs on demand, which substantially increases the effective supply of usable photon pairs compared with the probabilistic SPDC source considered here ~\cite{Schimpf2021Quantum,Wang2019On-Demand}. 
This advantage is retained over most of the considered range of $\Delta_{\mathrm{FSS}}$. The result at $\Delta_{\mathrm{FSS}}=1~\mu\mathrm{eV}$ shows that the QD source can still outperform the SPDC source over most transmission distances when $\Delta_{\mathrm{FSS}}$ is relatively close to its maximum tolerable value. As $\Delta_{\mathrm{FSS}}$ approaches this limit further, however, the reduction in entanglement fidelity causes the maximum fiber transmission distance to decrease sharply. At $\Delta_{\mathrm{FSS}}=1.2~\mu\mathrm{eV}$, the two sources yield comparable signature rates at $L=0$, while the maximum fiber transmission distance supported by the QD source is substantially smaller. Thus, the QD source loses its advantage only when $\Delta_{\mathrm{FSS}}$ becomes very close to its maximum tolerable value.
 
In summary, through numerical simulations, we evaluate the performance of the proposed DI-QDS protocol under different system configurations. The results show that multi-channel multiplexing substantially improves the signature rate and increases the maximum fiber transmission distance by improving the distance scaling of the successful heralding probability. The simplified channel choice closely approximates the optimal solution while greatly reducing the computational complexity. We also find that noise preprocessing increases the tolerance of the protocol to source imperfections, although a higher preprocessing probability leads to a lower signature rate. Moreover, QD sources provide a clear performance advantage over SPDC sources throughout most of the considered range of $\Delta_{\mathrm{FSS}}$. Under the symmetric configuration, the positive signature rates obtained at fiber transmission distances on the order of one hundred kilometers correspond to pairwise signature distances at the metropolitan scale, demonstrating the feasibility of composable DI-QDS between cities.

\section{CONCLUSION}\label{Sec5}
In conclusion, we establish the first composable security framework for QDS, defining an ideal QDS resource that captures its full functionality. By specifying QDS functionality independently of its physical implementation, the framework provides a common composable security target for different QDS realizations. Our work thus establishes QDS as a composable cryptographic primitive for future quantum networks.

We develop a DI-QDS protocol that constructs the ideal QDS resource, thereby extending the scope of DI quantum cryptography to include digital signatures. For this construction, we extend the security analysis of LFSR-based Toeplitz authentication by deriving a related-key security bound, thereby proving unforgeability against adversaries with capabilities beyond the previous QDS forgery boundary. 
We further demonstrate the practical feasibility of this DI-QDS realization under realistic quantum network conditions in the finite-size regime, with numerical simulations showing that our protocol enables efficient signing over metropolitan-scale distances. 
This protocol complements existing composable DI cryptographic protocols by providing the integrity, authenticity, and nonrepudiation required for secure quantum networks. Our results thereby extend DI quantum cryptography to support all the information security objectives of practical quantum networks.

Our work could enable further advances in establishing composable security for other QDS realizations. We also lay the groundwork for extending composable security analyses to higher-level cryptographic protocols in future work, such as QBA.
Such developments could facilitate the integration of QDS into larger cryptographic systems and contribute to the realization of a secure quantum internet.

\section*{ACKNOWLEDGMENTS}
This study was supported by the National Natural Science Foundation of China (No. U25D8016, No. 12522419, No. 12274223), the Fundamental Research Funds for the Central Universities and the Research Funds of Renmin University of China (Grant No. 24XNKJ14), and the Nanjing University-China Mobile Communications Group Co., Ltd. Joint Institute.

\appendix
\section{COMPOSABLE SECURITY PROOF}
\label{sec:composable-security}
\subsection{Composable security framework}
We formulate the security of our DI-QDS protocol in the constructive cryptography framework. First of all, we introduce some concepts:
\begin{definition}
The advantage of a distinguisher \(D\) in distinguishing the resources
\(R\) and \(S\), denoted by \(\Delta_D(R,S)\), is defined as
\begin{equation}
\Delta_D(R,S)
:=
\left|
\Pr[DR=1]-\Pr[DS=1]
\right|,
\end{equation}
where $\Pr[DR=1]$ denotes the probability that D outputs $1$ when connected to the system $R$ and $\Pr[DS=1]$ denotes the probability that D outputs $1$ when connected to the system $S$.

The advantage of a distinguisher class \(\mathcal D\) in distinguishing
\(R\) and \(S\), denoted by \(\Delta_{\mathcal D}(R,S)\), is defined as
\begin{equation}
\Delta_{\mathcal D}(R,S)
:=
\sup_{D\in\mathcal D}
\Delta_D(R,S).
\end{equation}
\end{definition}
It is worth noting that both \(\Delta_D\) and \(\Delta_{\mathcal D}\) are pseudo-metrics~\cite{Ruedlinger2011Restricted}. For any converter $\alpha$ and resources $R,S,T$, a pseudo-metric $d(\cdot,\cdot)$ satisfies
\begin{equation}
    d(\alpha R,\alpha S) \le d(R,S),
\end{equation}
\begin{equation}
    d(R\parallel T,S\parallel T) \le d(R,S),
\end{equation}
and triangle inequality
\begin{equation}\label{triangle inequality}
    d(R,S) \le d(R,T)+d(T,S).
\end{equation}
A converter is an abstract system used to transform one resource into another. Then we introduce the composable security in the constructive cryptography framework~\cite{Badertscher2018Composable}:

\begin{definition}\label{def:construction}
Let \(R\) and \(S\) be resources with interface set \(I\). Let
\(\varepsilon\) be a function that maps distinguishers to a value in
\([0,1]\), and let the interface label set be $I=\mathcal{P}\cup\{E,W\}$, with $\mathcal{P}\cap\{E,W\}=\emptyset$. $\mathcal{P}$ is the set of honest interfaces. $E$ corresponds to the interface with potentially dishonest behavior and no protocol is applied at this interface. $W$ is the free interface that models the influence of the environment on the resource.
A protocol, i.e., a vector of converters $\pi=(\pi^i)_{i\in\mathcal P}$, constructs \(S\) from \(R\) within \(\varepsilon\) and with respect to the simulator \(\mathbf{sim}\), if
\begin{equation}
\forall D:\quad
\Delta_D
\left(
\pi^\mathcal{P} R,
\mathbf{sim}^E S
\right)
\leq
\varepsilon(D).
\end{equation}
The superscripts indicate the interfaces to which converters are attached: \(\pi^\mathcal{P} R\) means that the protocol converters are attached to all honest interfaces in \(\mathcal{P}\), and interfaces refer to resource access capabilities rather than necessarily to physical parties. \(\mathbf{sim}^E S\) means that the simulator is attached to the adversarial interface \(E\).
\end{definition}

The above definition is a general resource construction statement. It becomes a cryptographic security definition once the target resource \(S\) is chosen to be the ideal resource for the cryptographic task under consideration. In other words, a protocol is secure if the real system obtained by applying the honest protocol converters to the real resources is indistinguishable from the ideal resource equipped with a suitable simulator.

For later use, we introduce the following shorthand notation. For two systems
or resources \(X\) and \(Y\), we write
\begin{equation}
X\approx_{\varepsilon}Y
\end{equation}
if they are indistinguishable within \(\varepsilon\), namely if
\begin{equation}
\forall D:\quad
\Delta_D(X,Y)\leq \varepsilon(D).
\end{equation}
Thus, the construction condition in Definition~\ref{def:construction} can be
equivalently written as
\begin{equation}
\pi^\mathcal{P} R
\approx_{\varepsilon}
\mathbf{sim}^E S.
\end{equation}

\subsection{Construction of the real systems}
The security goal is to establish that the real system constructs the ideal system within an error bound. We now construct the real systems of our DI-QDS protocol. 
For simplicity, interface labels are omitted when describing the internal composition of protocol converters. The real DI-QDS system is defined as
\begin{equation}
\mathsf{Real}_{\mathrm{QDS}}
:=
\Pi_{\mathrm{QDS}}^\mathcal{P}
R_{\mathrm{real}} .
\end{equation}
It consists of the real resource \(R_{\mathrm{real}}\) and the protocol \(\Pi_{\mathrm{QDS}}^\mathcal{P}\), where \(\mathcal P\) denotes the set of honest interfaces to which the protocol converters are attached.
At this stage, the interface allocation is left generic. Its corruption-dependent instantiation is specified later.

The real resources used by our DI-QDS protocol are
\begin{equation}
R_{\mathrm{real}}=R_{\mathrm{KGP}}\parallel R_{\mathrm{msg}}\parallel R_\text{auth}.
\end{equation}
$R_\text{auth}=R_\text{auth}^\text{KGP}\parallel R_\text{auth}^\text{msg}$ denotes the real authenticated classical communication resources used in the QDS protocol and the superscript indicates the protocol stage in which the corresponding resource is accessed. \(R_{\mathrm{msg}}\) denotes the message stage resources except authenticated classical communication resources and the real KGP resource used in the DI-QDS realization, excluding the authenticated classical communication resources, is defined as \(R_{\mathrm{KGP}}=R_{\mathrm{KGP}}^{AB}\parallel R_{\mathrm{KGP}}^{AC}\). \(R_{\mathrm{KGP}}^{AB}\) and \(R_{\mathrm{KGP}}^{AC}\) denote the real resources used in the Alice--Bob and Alice--Charlie distribution procedures, respectively.

Equivalently, grouping the real resources according to the protocol stage in which they are used, we have
\begin{equation}
R_{\mathrm{real}}
=
\left(
R_{\mathrm{KGP}}
\parallel
R_{\mathrm{auth}}^{\mathrm{KGP}}
\right)
\parallel
\left(
R_{\mathrm{msg}}
\parallel
R_{\mathrm{auth}}^{\mathrm{msg}}
\right).
\end{equation}

The KGP converter is defined as
\begin{equation}
\Pi_{\mathrm{KGP}}:=\pi_{\oplus}\circ\left(\pi_{\mathrm{KGP}}^{AB}\parallel\pi_{\mathrm{KGP}}^{AC}\right),
\end{equation}
where \(\pi_{\mathrm{KGP}}^{AB}\) and \(\pi_{\mathrm{KGP}}^{AC}\) are the two distribution converters. 
Here, \(\pi_{\oplus}\) is modeled as an internal deterministic local post-processing operation of the KGP converter. The operation only defines how Alice locally derives her key from the two pairwise keys generated by \(\pi_{\mathrm{KGP}}^{AB}\) and \(\pi_{\mathrm{KGP}}^{AC}\). Any deviation from this operation is considered as a higher-level protocol behavior rather than a property of the KGP resource.
More precisely, if the Alice--Bob procedure outputs a pairwise key \(K^{AB}\) and the Alice--Charlie procedure outputs a pairwise key \(K^{AC}\), then \(\pi_{\oplus}\) outputs:
\begin{equation}\label{piotimes}
\begin{aligned}
\mathbf K_A &:= K^{AB}\oplus K^{AC},\\
\mathbf K_B &:= K^{AB},\\
\mathbf K_C &:= K^{AC}.
\end{aligned}
\end{equation}
Thus the final KGP keys satisfy $\mathbf K_A=\mathbf K_B\oplus \mathbf K_C$. The output keys are partitioned into complete \(3n\)-bit blocks using a common deterministic rule, with any shorter trailing suffix discarded.

The converter of our DI-QDS protocol is then
\begin{equation}
\Pi_{\mathrm{QDS}}
:=
\pi_{\mathrm{msg}}
\circ
\left(
\Pi_{\mathrm{KGP}}
\parallel
\mathrm{id}_{\mathrm{msg}}
\right),
\end{equation}
where \(\pi_{\mathrm{msg}}\) denotes the protocol converter vector implementing the concrete DI-QDS protocol and \(\mathrm{id}_{\mathrm{msg}}\) denotes the identity converter on the whole message-stage resources \(R_{\mathrm{msg}}\parallel R_{\mathrm{auth}}^{\mathrm{msg}}\).
Hence, the real DI-QDS system $\Pi_{\mathrm{QDS}}^\mathcal{P}
R_{\mathrm{real}}$ could be constructed as:
\begin{widetext}
\begin{equation}
\mathsf{Real}_{\mathrm{QDS}}
=
\pi_{\mathrm{msg}}^{\mathcal{P}_\text{msg}}
\left(
\Pi_{\mathrm{KGP}}^{\mathcal{P}_\text{KGP}}
\left(
R_{\mathrm{KGP}}^{AB}
\parallel
R_{\mathrm{KGP}}^{AC}
\parallel
R_{\mathrm{auth}}^\text{KGP}
\right)
\parallel
R_{\mathrm{msg}}
\parallel
R_{\mathrm{auth}}^{\mathrm{msg}}
\right), 
\end{equation}
\end{widetext}
where ${\mathcal{P}_\text{msg}}$ and ${\mathcal{P}_\text{KGP}}$ denote the set of honest interfaces corresponding to message and KGP respectively.

\subsection{Construction of the ideal system.}
We now construct the ideal system of our DI-QDS protocol. The ideal DI-QDS system is defined as
\begin{equation}
    \mathsf{Ideal}_{\mathrm{QDS}}
    :=
    \mathbf{Sim}^{E}\mathcal F_{\mathrm{QDS}} .
\end{equation}
It consists of the ideal resource \(\mathcal F_{\mathrm{QDS}}\) and the simulator \(\mathbf{Sim}^{E}\), where \(E\) denotes the adversarial interface.

The simulator is attached to the adversarial interface \(E\) and reproduces the adversarial view of the corresponding real system. Its behavior is specified for each admissible corruption pattern in the construction below.

\subsubsection{ideal QDS resource specification}
The ideal QDS resource \(\mathcal F_{\mathrm{QDS}}\) is constructed based on the transferable authenticated repository abstraction. Following the formulation of digital signatures~\cite{Badertscher2018Composable}, we model the ideal QDS resource \(\mathcal F_{\mathrm{QDS}}\) as a transferable authenticated repository, where only the signer can introduce new documents and all verifiers can check their validity. The resource also supports transferability, where a value received by one reader can be forwarded to another reader who can also verify it.

A repository provides multiple write and read interfaces, where each interface is associated with a buffer and supports a single write or read operation~\cite{Badertscher2018Composable}. A write interface may allow the writer to input an arbitrary value from the message space, or, in a weaker form, it may allow the writer to only copy values from buffers at some read interfaces. A read interface may either allow to retrieve the contents of the corresponding buffer, or to input a value and check for equality with the one in the buffer. The repository additionally provides free transfer interfaces, which are directly accessible to the distinguisher rather than to the honest protocol or the adversarial interface, and control the transfer of values from write buffers to read buffers.

\begin{definition}[ideal QDS resource]\label{FQDS}
The ideal QDS resource \(\mathcal F_{\mathrm{QDS}}\) is a three-party transferable authenticated repository with a prescribed transfer direction \(A\rightarrow B\rightarrow C\). It consists of write, read, and transfer interfaces together with their associated buffers.

The resource satisfies the following conditions:
\begin{enumerate}
    \item \textbf{Authenticated writing.}
    Only the signer's unrestricted write interface can introduce new documents into the repository. Other write interfaces, if present, only allow values to be copied from designated read interfaces.
  
    \item \textbf{Transferable verification.}
    Each verifier is equipped with a read interface that allows checking whether an input document matches the value stored in the corresponding read buffer.
    The resource provides transfer interfaces that allow a value available to one verifier to be forwarded to another verifier according to the prescribed transfer direction. The forwarded value remains verifiable by the receiving verifier.
\end{enumerate}
\end{definition}

\subsubsection{Interface specification of the ideal QDS resource}

Following this definition, the ideal resource \(\mathcal F_{\mathrm{QDS}}\) provides an unrestricted writer interface for Alice, a restricted writer interface for Bob, and check-type reader interfaces for Bob and Charlie. In addition, the interface \(C_{\mathrm w}\) is retained to match Charlie's key-share response functionality in the concrete protocol, although this functionality has no independent counterpart in the ideal QDS resource. The free interface \(W\) provides the transfer operations corresponding to the prescribed delivery direction \(A\rightarrow B\rightarrow C\).

Accordingly, the party-associated non-free interface set of \(\mathcal F_{\mathrm{QDS}}\) is
\begin{equation}
    \mathcal I_{\mathrm{QDS}}
    :=
    \left\{
        A_{\mathrm w},
        B_{\mathrm w},
        C_{\mathrm w},
        B_{\mathrm r},
        C_{\mathrm r}
    \right\}.
\end{equation}
Here, \(A_{\mathrm w}\) and \(B_{\mathrm w}\) denote Alice's unrestricted writer interface and Bob's restricted writer interface, respectively, \(B_{\mathrm r}\) and \(C_{\mathrm r}\) denote Bob's and Charlie's check-type reader interfaces, and \(C_{\mathrm w}\) is an auxiliary interface corresponding to Charlie's key-share response step in the concrete protocol.

Together with the setup interface \(S\), the adversarial interface \(E\), and the free interface \(W\), the interface label set is
\begin{equation}
    \mathcal I_{\mathrm{ideal}}
    :=
    \mathcal I_{\mathrm{QDS}}
    \cup
    \{S,E,W\}.
\end{equation}

\paragraph{Setup interface \(S\).}
The interface \(S\) accepts the setup command
\[
    \mathrm{Setup}(sid).
\]
The command produces no output and initializes all session-local buffers of \(\mathcal F_{\mathrm{QDS}}\) to \(\perp\). For each session \(sid\), \(\mathrm{Setup}(sid)\) is required to be issued before any other operation associated with that session.

\paragraph{Writer interface \(A_{\mathrm w}\).}
The interface \(A_{\mathrm w}\) is an unrestricted writer interface. For a document \(\mathrm{Doc}\) and a session identifier \(sid\), its behavior is defined by
\begin{equation}
    \mathrm{Register}_{A}(sid,\mathrm{Doc}):
    W_A[sid]\leftarrow \mathrm{Doc},
\end{equation}
where \(W_A\) denotes the write buffer associated with \(A_{\mathrm w}\), and \(sid\) distinguishes different instances of the signing procedure.

\paragraph{Writer interface \(B_{\mathrm w}\).}
The interface \(B_{\mathrm w}\) is a restricted writer interface. To enable the prescribed transfer from Bob to Charlie, it only allows Bob to copy the value available in his read buffer. Its behavior is defined by
\begin{equation}
    \mathrm{Write}_{B}(sid):
    R_B[sid]=\mathrm{Doc}
    \Rightarrow
    W_B[sid]\leftarrow\mathrm{Doc},
\end{equation}
where \(W_B\) denotes the write buffer associated with \(B_{\mathrm w}\).

\paragraph{Interface \(C_{\mathrm w}\).}
The interface \(C_{\mathrm w}\) is retained to match the interface structure of the concrete QDS protocol. In the concrete protocol, this interface is used for Charlie's key-share response to Bob before Bob performs verification. Since this key-share exchange is an implementation detail and has no independent functionality in the ideal QDS resource, inputs at \(C_{\mathrm w}\) have no effect on \(\mathcal F_{\mathrm{QDS}}\).

\paragraph{Reader interfaces \(B_{\mathrm r}\) and \(C_{\mathrm r}\).}
The interfaces \(B_{\mathrm r}\) and \(C_{\mathrm r}\) are check-type reader interfaces. For each \(X\in\{B,C\}\), let \(R_X\) denote the read buffer associated with \(X_{\mathrm r}\). The behavior of the interface is defined by
\begin{equation}
    \mathrm{Verify}_{X}(sid)
    =
    \begin{cases}
        1, & R_X[sid]=W_A[sid]\neq\perp,\\
        0, & \text{otherwise}.
    \end{cases}
\end{equation}

\paragraph{Adversarial interface \(E\).}
The interface \(E\) denotes the adversarial interface of \(\mathcal F_{\mathrm{QDS}}\) and is the interface to which the simulator is attached in the ideal system. Its concrete capabilities, as well as the effects of transfer operations involving \(E\), are specified for each admissible corruption pattern together with the corresponding simulator construction.

\paragraph{Free transfer interface \(W\).}
The free interface \(W\) provides document-transfer and auxiliary transfer operations. For the document-transfer directions
\[
    (X,Y)\in\{(A,B),(B,C)\},
\]
the behavior is defined by
\begin{equation}
\begin{aligned}
    &\mathrm{Transfer}_{X\rightarrow Y}(sid):\\
    \quad
    &W_X[sid]=\mathrm{Doc}
    \;\Longrightarrow\;
    R_Y[sid]\leftarrow\mathrm{Doc}.
\end{aligned}
\end{equation}

The auxiliary transfer operations
\[
    \mathrm{Transfer}^{\mathrm{aux}}_{C\rightarrow B}(sid),
    \qquad
    \mathrm{Transfer}^{\mathrm{aux}}_{B\rightarrow C}(sid)
\]
do not modify the document state of \(\mathcal F_{\mathrm{QDS}}\). They represent the completion of the auxiliary transfers required for verification.
Transfer operations involving the adversarial interface \(E\) are specified for each admissible corruption pattern.

For each session \(sid\), the behavior of \(\mathcal F_{\mathrm{QDS}}\) is defined for invocations satisfying the following local precedence conditions:
\begin{equation}\label{ideal-order}
\begin{aligned}
    \mathrm{Register}_{A}
    \prec
    \mathrm{Transfer}_{A\rightarrow B}
    &\prec
    \mathrm{Write}_{B}
    \\
    \mathrm{Transfer}_{B\rightarrow C}
    &\prec
    \mathrm{Verify}_{C},
    \\
    \mathrm{Transfer}_{A\rightarrow B}
    &\prec
    \mathrm{Verify}_{B},
    \\
    \mathrm{Transfer}^{\mathrm{aux}}_{C\rightarrow B}
    &\prec
    \mathrm{Verify}_{B},
    \\
    \mathrm{Transfer}^{\mathrm{aux}}_{B\rightarrow C}
    &\prec
    \mathrm{Verify}_{C},
    \\
    \mathrm{Verify}_{B}
    &\prec
    \mathrm{Verify}_{C}.
\end{aligned}
\end{equation}
Here \(X\prec Y\) means that operation \(X\) is invoked before operation \(Y\). In addition, for each session \(sid\), the setup command \(\mathrm{Setup}(sid)\) is required to be invoked before any of the operations specified in Eq.~\eqref{ideal-order}.

Under the admissible invocation order in \eqref{ideal-order}, the interface semantics defined above imply the following relations.

After Alice registers a document, invocation of the corresponding transfer operation makes the value in Alice's write buffer available at Bob's read buffer. Hence,
\begin{equation}\label{AtransB}
\begin{gathered}
    \mathrm{Transfer}_{A\rightarrow B}(sid)
    \land W_A[sid]=\mathrm{Doc}
    \\
    \Rightarrow
    R_B[sid]=\mathrm{Doc}.
\end{gathered}
\end{equation}

Bob's restricted writer interface \(B_{\mathrm w}\) only allows the value available in his read buffer to be copied to his write buffer. Therefore,
\begin{equation}\label{Bwrite}
\begin{gathered}
    \mathrm{Write}_{B}(sid)
    \land R_B[sid]=\mathrm{Doc}
    \\
    \Rightarrow
    W_B[sid]=\mathrm{Doc}.
\end{gathered}
\end{equation}

Invocation of the Bob-to-Charlie transfer operation at the free interface \(W\) makes the value in Bob's write buffer available at Charlie's read buffer. Thus,
\begin{equation}\label{BtransC}
\begin{gathered}
    \mathrm{Transfer}_{B\rightarrow C}(sid)
    \land W_B[sid]=\mathrm{Doc}
    \\
    \Rightarrow
    R_C[sid]=\mathrm{Doc}.
\end{gathered}
\end{equation}

At Bob's check-type reader interface \(B_{\mathrm r}\), an input document is accepted if and only if it matches the value in Bob's read buffer:
\begin{equation}\label{Bver}
    R_B[sid]=W_A[sid]\neq\perp
    \Longleftrightarrow
    \mathrm{Verify}_{B}(sid)=1.
\end{equation}

Likewise, at Charlie's check-type reader interface
\(C_{\mathrm r}\),
\begin{equation}\label{Cver}
    R_C[sid]=W_A[sid]\neq\perp
    \Longleftrightarrow
    \mathrm{Verify}_{C}(sid)=1.
\end{equation}

\subsubsection{Security properties of the ideal QDS resource}
From the above resource specification and the prescribed order in Eq.~\eqref{ideal-order}, the security properties of \(\mathcal F_{\mathrm{QDS}}\) follow naturally. The ideal QDS resource should guarantee three fundamental security properties~\cite{Yin2017Experimental102}: nonrepudiation, unforgeability, and transferability.Nonrepudiation means that Alice cannot deny a valid signature. Unforgeability means that Bob cannot forge a message that is accepted as valid. Transferability means that if Bob accepts a message, Charlie also accepts the same message after the prescribed transfer.

\paragraph{Nonrepudiation.}
Nonrepudiation requires that, when Bob and Charlie follow the prescribed write, transfer, and verification procedures, a document accepted by Bob cannot be rejected by Charlie after the prescribed transfer. According to the restricted write and transfer procedures in Eqs.~\eqref{Bwrite} and \eqref{BtransC}, the document accepted by Bob is preserved when it is forwarded to Charlie. Together with the verification rules in Eqs.~\eqref{Bver} and \eqref{Cver}, this gives
\begin{equation}\label{transferability}
    \mathrm{Verify}_{B}(sid)=1
    \Rightarrow
    \mathrm{Verify}_{C}(sid)=1.
\end{equation}
Therefore, Alice cannot make Bob accept a document while causing Charlie to reject the same document after the prescribed transfer, and the ideal resource satisfies nonrepudiation.

\paragraph{Unforgeability.}
Unforgeability requires that a document not registered by Alice cannot be accepted as valid. According to the delivery and forwarding procedures in Eqs.~\eqref{AtransB}, \eqref{Bwrite}, and \eqref{BtransC}, together with the verification rules in Eqs.~\eqref{Bver} and \eqref{Cver}, a document can be accepted by an honest verifier only if it coincides with the document registered at Alice's writer interface. Hence,
\begin{equation}\label{Unforgeability}
    \mathrm{Verify}_{X}(sid)=1
    \Rightarrow
    R_X[sid]=W_A[sid]\neq\perp,
    \quad X\in\{B,C\}.
\end{equation}
Therefore, a document that was not registered by Alice cannot be accepted as valid, and the ideal resource satisfies unforgeability.

\paragraph{Transferability.}
Transferability requires that, when Bob and Charlie follow the prescribed protocol procedures, any document accepted by Bob is also accepted by Charlie after the prescribed transfer. In \(\mathcal F_{\mathrm{QDS}}\), the restricted write and transfer interfaces preserve the document when it is forwarded from Bob to Charlie. Therefore, Eq.~\eqref{transferability} directly guarantees transferability in the prescribed direction \(A\rightarrow B\rightarrow C\).

\subsection{Intermediate hybrid systems and reduction lemmas}
We first introduce the intermediate resources used in the reduction.

\subsubsection{Authenticated-communication hybrid system}
In Appendix~\ref{AppendixQKDsecurity}, the composable security of the KGP is analyzed under the assumption that the authenticated classical communication resources are idealized. Therefore, the corresponding error bound \(\varepsilon_{\mathrm{KGP}}\) characterizes only the security of the KGP resource and does not include the failure probability of authenticated communication. In the DI-QDS protocol, however, authenticated classical communication is an essential resource used throughout both the distribution stage and the message stage. Therefore, the real authenticated communication resources are explicitly modeled as
$R_{\mathrm{auth}}
=
R^{\mathrm{KGP}}_{\mathrm{auth}}
\parallel
R^{\mathrm{msg}}_{\mathrm{auth}} .$ 
Therefore, we introduce an ideal authenticated communication resource \(\mathcal F_{\mathrm{auth}}\) and construct an intermediate hybrid system by substituting the real authenticated communication resources with this ideal resource.

\begin{definition}[Ideal authenticated communication resource]
The ideal authenticated communication resource \(\mathcal F_{\mathrm{auth}}\) models all authenticated classical communication resources required throughout the DI-QDS protocol. It includes the authenticated resources used in the distribution stage and the message stage. The resource provides authenticated message transmission interfaces for the corresponding protocol procedures.
\end{definition}

Suppose that the real authenticated communication resources \(R_{\mathrm{auth}}\) construct the ideal authenticated communication resource \(\mathcal F_{\mathrm{auth}}\) with distinguishing error \(\varepsilon_{\mathrm{auth}}\), with a simulator \(\mathbf{sim}_{\mathrm{auth}}\) attached to its adversarial interface \(E_{\mathrm{auth}}\). Here \(\varepsilon_{\mathrm{auth}}\) denotes the accumulated distinguishing error of all authenticated classical communication resources used throughout the DI-QDS protocol.

\begin{definition}[Authenticated-communication hybrid system]
The authenticated-communication hybrid system
\(\mathsf{Hyb}_{\mathrm{auth}}\) is defined as
\begin{equation}
    \mathsf{Hyb}_{\mathrm{auth}}
    :=
    (\mathbf{sim}_{\mathrm{auth}})^{E_{\mathrm{auth}}}
    \left(
        \Pi_{\mathrm{QDS}}^{\mathcal P_{\mathrm{QDS}}}
        \left(
            \mathcal R_{\mathrm{auth\text{-}hyb}}
        \right)
    \right),
\end{equation}
where
\begin{equation}
    \mathcal R_{\mathrm{auth\text{-}hyb}}
    :=
    R_{\mathrm{KGP}}
    \parallel
    R_{\mathrm{msg}}
    \parallel
    \mathcal F_{\mathrm{auth}} .
\end{equation}
The ideal authenticated communication resource is grouped according to the two protocol stages as
\begin{equation}
    \mathcal F_{\mathrm{auth}}
    =
    \mathcal F_{\mathrm{auth}}^{\mathrm{KGP}}
    \parallel
    \mathcal F_{\mathrm{auth}}^{\mathrm{msg}}.
\end{equation}
Here \(E_{\mathrm{auth}}\) denotes the adversarial interface of the ideal authenticated communication resource \(\mathcal F_{\mathrm{auth}}\), and \(\mathbf{sim}_{\mathrm{auth}}\) is the simulator attached to this interface. The interface set \(\mathcal P_{\mathrm{QDS}}\) denotes the honest interfaces of the DI-QDS protocol when the authenticated communication resources are modeled as ideal resources. The resources \(\mathcal F_{\mathrm{auth}}^{\mathrm{KGP}}\) and \(\mathcal F_{\mathrm{auth}}^{\mathrm{msg}}\) denote the ideal authenticated communication resources used in the KGP stage and the messaging stage, respectively.
\end{definition}

\begin{lemma}[Reduction to the authenticated-communication hybrid system]\label{ReductionAuthenticated}
The real DI-QDS system is indistinguishable from the authenticated hybrid
system with distinguishing error
\(\varepsilon_{\mathrm{auth}}\), i.e.,
\begin{equation}
    \mathsf{Real}_{\mathrm{QDS}}
    \approx_{\varepsilon_{\mathrm{auth}}}
    \mathsf{Hyb}_{\mathrm{auth}} .
\end{equation}
Here, \(\varepsilon_{\mathrm{auth}}\) denotes the accumulated distinguishing error of all authenticated classical communication resources used in the DI-QDS protocol, including those accessed during the distribution and message stages.
\end{lemma}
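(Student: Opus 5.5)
The plan is to derive the lemma from the assumed construction statement for the authenticated channels, together with the pseudo-metric properties in Eqs.~(\ref{triangle inequality}) and the preceding inequalities. The hypothesis is that \(R_{\mathrm{auth}}\) constructs \(\mathcal F_{\mathrm{auth}}\) within \(\varepsilon_{\mathrm{auth}}\) with respect to \(\mathbf{sim}_{\mathrm{auth}}\). Write \(\pi_{\mathrm{auth}}\) for the (possibly trivial) honest converters on the authentication interfaces; the identity is used if the channels are regarded as already given. The hypothesis then reads
\begin{equation}
\pi_{\mathrm{auth}}R_{\mathrm{auth}}\approx_{\varepsilon_{\mathrm{auth}}}(\mathbf{sim}_{\mathrm{auth}})^{E_{\mathrm{auth}}}\mathcal F_{\mathrm{auth}} .
\end{equation}
If \(R_{\mathrm{auth}}=R_{\mathrm{auth}}^{\mathrm{KGP}}\parallel R_{\mathrm{auth}}^{\mathrm{msg}}\) is treated as two separate constructions, I first combine them. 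I would replace one component at a time, using the parallel-composition inequality and the triangle inequality, so that \(\varepsilon_{\mathrm{auth}}\) is the sum of the stage-wise errors. This matches the statement's phrase ``accumulated distinguishing error''.

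Next I extend the relation to the whole system. Composing in parallel with \(R_{\mathrm{KGP}}\parallel R_{\mathrm{msg}}\) does not increase the distance, by \(d(R\parallel T,S\parallel T)\le d(R,S)\). Applying the protocol converter \(\Pi_{\mathrm{QDS}}^{\mathcal P}\) does not increase it either, by \(d(\alpha R,\alpha S)\le d(R,S)\). Together these give
\begin{equation}
\Pi_{\mathrm{QDS}}^{\mathcal P}\bigl(R_{\mathrm{KGP}}\parallel R_{\mathrm{msg}}\parallel \pi_{\mathrm{auth}}R_{\mathrm{auth}}\bigr)\approx_{\varepsilon_{\mathrm{auth}}}\Pi_{\mathrm{QDS}}^{\mathcal P}\bigl(R_{\mathrm{KGP}}\parallel R_{\mathrm{msg}}\parallel \mathbf{sim}_{\mathrm{auth}}^{E_{\mathrm{auth}}}\mathcal F_{\mathrm{auth}}\bigr).
\end{equation}
The left-hand side is \(\mathsf{Real}_{\mathrm{QDS}}\) once \(\pi_{\mathrm{auth}}\) is absorbed into \(\Pi_{\mathrm{QDS}}\).

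The remaining step is to show that the right-hand side equals \(\mathsf{Hyb}_{\mathrm{auth}}\). The simulator \(\mathbf{sim}_{\mathrm{auth}}\) acts only on the interface \(E_{\mathrm{auth}}\), and \(\Pi_{\mathrm{QDS}}\) acts only on the honest interfaces \(\mathcal P_{\mathrm{QDS}}\). Converters attached to disjoint interfaces commute, so \(\mathbf{sim}_{\mathrm{auth}}\) can be moved outside \(\Pi_{\mathrm{QDS}}\). This yields exactly \((\mathbf{sim}_{\mathrm{auth}})^{E_{\mathrm{auth}}}\bigl(\Pi_{\mathrm{QDS}}^{\mathcal P_{\mathrm{QDS}}}(\mathcal R_{\mathrm{auth\text{-}hyb}})\bigr)\).

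I expect this interface bookkeeping to be the main obstacle. In each corruption pattern of \(\mathfrak C_{\mathrm{QDS}}\), a corrupted party's authentication interface merges into \(E\) and is no longer in \(\mathcal P\). The commutation argument therefore has to be checked separately for \(\emptyset\), \(\{A\}\) and \(\{B\}\), confirming that \(E_{\mathrm{auth}}\) is disjoint from \(\mathcal P_{\mathrm{QDS}}\) in every case. One also has to check that authentication keys consumed by \(R_{\mathrm{auth}}\) are not reused inside \(R_{\mathrm{KGP}}\). If they were, the parallel decomposition, and hence the parallel-composition step, would fail.
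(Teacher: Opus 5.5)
Your proposal is correct and takes the same route as the paper. The paper's proof only invokes the assumed construction of \(\mathcal F_{\mathrm{auth}}\) from \(R_{\mathrm{auth}}\) and swaps one for the other. You spell out the steps it leaves implicit: stage-wise accumulation via the triangle inequality, non-expansiveness under parallel composition and under \(\Pi_{\mathrm{QDS}}\), and commuting \(\mathbf{sim}_{\mathrm{auth}}\) past the honest converters on disjoint interfaces, which is exactly what the paper writes out for the analogous Lemma~\ref{ReductionToHybrid}.
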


\begin{proof}
By the composable security of the authenticated communication resources, the real authenticated communication resources \(R_{\mathrm{auth}}\) can be replaced by the corresponding ideal resource \(\mathcal F_{\mathrm{auth}}\). The resulting distinguishing error is the accumulated error \(\varepsilon_{\mathrm{auth}}\) over all authenticated communication resources used in the DI-QDS protocol. Therefore, $\mathsf{Real}_{\mathrm{QDS}} \approx_{\varepsilon_{\mathrm{auth}}} \mathsf{Hyb}_{\mathrm{auth}} .$
\end{proof}

\subsubsection{Ideal KGP resource and its realizations}

We next introduce the ideal KGP resource used in the hybrid construction.

\begin{definition}[Ideal KGP resource]\label{FKGP}
The ideal KGP resource \(\mathcal F_{\mathrm{KGP}}\) provides a batch of secret and correct correlated key blocks to Alice, Bob, and Charlie. Let \(\mathcal J\) be the fixed finite index set labeling the blocks in this batch. Its key-generation and key-read interfaces are collected in
\[
    \mathcal I_{\mathrm{KGP}}
    :=
    \{G\}
    \cup
    \bigcup_{j\in\mathcal J}
    \left\{
        \mathrm{ReadKey}_{A}^{(j)},
        \mathrm{ReadKey}_{B}^{(j)},
        \mathrm{ReadKey}_{C}^{(j)}
    \right\}.
\]
The adversarial interface \(E_{\mathrm{KGP}}\) is inherited from the underlying pairwise key resources, as described below.

Let
\[
    \mathcal X_{\mathrm{KGP}}
    :=
    (\{0,1\}^{n})^3
\]
be the space of a single key block. The generation interface \(G()\) may be invoked once in each instance of \(\mathcal F_{\mathrm{KGP}}\). Upon invocation, the resource generates and stores the batch
\[
    K^A[j],K^B[j],K^C[j]\in\mathcal X_{\mathrm{KGP}},
    \qquad j\in\mathcal J.
\]
For each \(j\in\mathcal J\), the resource samples \((K^A[j],K^B[j],K^C[j])\) uniformly from the triples in \(\mathcal X_{\mathrm{KGP}}^3\) satisfying
\begin{equation}\label{KGPkey}
    K^A[j]
    =
    K^B[j]\oplus K^C[j],
    \qquad j\in\mathcal J.
\end{equation}
The key triples for distinct \(j\) are independent.

For each \(j\in\mathcal J\), the corresponding key-read interfaces return the stored blocks:
\begin{equation}
\begin{aligned}
    \mathrm{ReadKey}_{A}^{(j)}() &= K^A[j],\\
    \mathrm{ReadKey}_{B}^{(j)}() &= K^B[j],\\
    \mathrm{ReadKey}_{C}^{(j)}() &= K^C[j].
\end{aligned}
\end{equation}
\end{definition}

The ideal KGP resource \(\mathcal F_{\mathrm{KGP}}\) can be realized by different concrete key-generation constructions. We first consider the construction adopted in our DI-QDS protocol. The corresponding real KGP resource is \begin{equation}
     R_{\mathrm{KGP}}
    =
    R_{\mathrm{KGP}}^{AB}
    \parallel
    R_{\mathrm{KGP}}^{AC},
\end{equation}
where \(R_{\mathrm{KGP}}^{AB}\) and \(R_{\mathrm{KGP}}^{AC}\) denote the resources used in the Alice--Bob and Alice--Charlie distribution procedures, respectively.

\begin{lemma}[Security of the KGP construction]\label{SecurityKGP}
The KGP construction securely realizes the ideal KGP resource \(\mathcal F_{\mathrm{KGP}}\) from the real KGP resource together with the ideal authenticated communication available to the KGP stage, with distinguishing error \(\varepsilon_{\mathrm{KGP}}\), i.e.,
\begin{equation}\label{KGPsecurity}
\Pi_{\mathrm{KGP}}^{\mathcal P_{\mathrm{KGP}}}
\left(
    R_{\mathrm{KGP}}
    \parallel
    \mathcal F_{\mathrm{auth}}^{\mathrm{KGP}}
\right)
\approx_{\varepsilon_{\mathrm{KGP}}}
(\mathbf{sim}_{\mathrm{KGP}})^{E_{\mathrm{KGP}}}
\mathcal F_{\mathrm{KGP}},
\end{equation}
where \(\mathcal P_{\mathrm{KGP}}\) denotes the set of honest interfaces to which the KGP converters are attached, \(E_{\mathrm{KGP}}\) denotes the adversarial interface of \(\mathcal F_{\mathrm{KGP}}\) to which the simulator \(\mathbf{sim}_{\mathrm{KGP}}\) is attached, \(\mathcal F_{\mathrm{auth}}^{\mathrm{KGP}}\) denotes the ideal authenticated communication resources used in the KGP stage, and \(\varepsilon_{\mathrm{KGP}}\) denotes the distinguishing error of the KGP construction.
\end{lemma}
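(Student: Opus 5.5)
The plan is to reduce Lemma~\ref{SecurityKGP} to the composable security of the two pairwise DI-KGP constructions, which is established in Appendix~\ref{AppendixQKDsecurity} via the EAT. Each of $\pi_{\mathrm{KGP}}^{AB}$ and $\pi_{\mathrm{KGP}}^{AC}$ is assumed to construct an ideal pairwise secret-key resource $\mathcal F_{\mathrm{key}}^{AB}$, respectively $\mathcal F_{\mathrm{key}}^{AC}$, from $R_{\mathrm{KGP}}^{XY}\parallel\mathcal F_{\mathrm{auth}}^{\mathrm{KGP},XY}$ within $\varepsilon_{\mathrm{KGP}}^{XY}=\varepsilon_{\mathrm{snd}}$. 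Each ideal pairwise resource outputs a uniform key $K^{XY}$ shared by both parties, or $\perp$ upon abort, independent of everything at $E$. Soundness here means correctness plus secrecy, with $\varepsilon_{\mathrm{snd}}=\max\{\varepsilon_{\mathrm{EA}},\varepsilon_{\mathrm{PA}}+2\varepsilon_s\}+2\varepsilon_{\mathrm h}$.

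The first key step is a two-step hybrid. I replace the $AB$ construction by $\mathbf{sim}^{AB}\mathcal F_{\mathrm{key}}^{AB}$ while keeping the $AC$ real system in parallel. By the parallel-composition property $d(R\parallel T,S\parallel T)\le d(R,S)$, this costs at most $\varepsilon_{\mathrm{KGP}}^{AB}$. I then replace the $AC$ construction in the same way, at cost $\varepsilon_{\mathrm{KGP}}^{AC}$. Because $\pi_\oplus$ is a deterministic local converter, the property $d(\alpha R,\alpha S)\le d(R,S)$ lets me carry $\pi_\oplus$ through both steps. The triangle inequality~\eqref{triangle inequality} then gives a total error of $\varepsilon_{\mathrm{KGP}}^{AB}+\varepsilon_{\mathrm{KGP}}^{AC}$.

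Parallel composition is legitimate only if the two links share no hidden correlation through Alice's devices. This is exactly where the sublaboratory isolation and device-destruction assumptions of Sec.~\ref{Sec2} enter. I would state these as modeling assumptions making $R_{\mathrm{KGP}}$ a genuine parallel composition $R_{\mathrm{KGP}}^{AB}\parallel R_{\mathrm{KGP}}^{AC}$.

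The second key step is an exact identification
\[
\pi_\oplus(\mathcal F_{\mathrm{key}}^{AB}\parallel\mathcal F_{\mathrm{key}}^{AC})=\mathcal F_{\mathrm{KGP}},
\]
up to a simulator that merges $\mathbf{sim}^{AB}$ and $\mathbf{sim}^{AC}$ into $\mathbf{sim}_{\mathrm{KGP}}$. When neither link aborts, $K^{AB}$ and $K^{AC}$ are independent and uniform. Applying Eq.~\eqref{piotimes} and partitioning into $3n$-bit blocks, the block triples $(K^B[j]\oplus K^C[j],K^B[j],K^C[j])$ are then uniform over the triples satisfying Eq.~\eqref{KGPkey} and independent across $j$. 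This is a bijection argument, since the map $(u,v)\mapsto(u\oplus v,u,v)$ pushes the uniform distribution onto the uniform distribution on the constraint set. Abort events and key lengths are exposed at $E$ in the pairwise resources. I would let $\mathcal F_{\mathrm{KGP}}$'s adversarial interface $E_{\mathrm{KGP}}$ inherit these, as the definition allows, so the merged simulator just forwards them.

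The main obstacle is this interface bookkeeping. The abort flags and lengths must determine $\mathcal J$ consistently, and the trailing-suffix truncation must be deterministic and public so it leaks nothing. Leakage of the hash and error-correction transcripts $\mathbf L$ over $\mathcal F_{\mathrm{auth}}^{\mathrm{KGP}}$ is already accounted for inside $\varepsilon_{\mathrm{snd}}$. I would handle the bookkeeping by fixing $\mathcal J$ as a function of the announced lengths and checking that the simulator reproduces the transcript distribution exactly. With these pieces in place, $\varepsilon_{\mathrm{KGP}}=\varepsilon_{\mathrm{KGP}}^{AB}+\varepsilon_{\mathrm{KGP}}^{AC}$, as claimed in Sec.~\ref{security and application}.
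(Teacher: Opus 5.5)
Your proposal is correct and follows essentially the same route as the paper. Both arguments invoke the pairwise composable security of the two DI-KGP links from Appendix~\ref{AppendixQKDsecurity}, combine them through parallel composition, the triangle inequality and non-expansiveness under the deterministic converter $\pi_\oplus$, and then identify $\pi_\oplus$ applied to the two ideal pairwise key resources, with the merged simulator $\mathbf{sim}^{AB}_{\mathrm{KGP}}\parallel\mathbf{sim}^{AC}_{\mathrm{KGP}}$, as $\mathcal F_{\mathrm{KGP}}$, giving $\varepsilon_{\mathrm{KGP}}=\varepsilon_{\mathrm{KGP}}^{AB}+\varepsilon_{\mathrm{KGP}}^{AC}$; your bijection argument for the uniformity of the block triples and your remarks on abort bookkeeping and link independence only make explicit steps that the paper asserts without detail.
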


\begin{proof}
According to Appendix~\ref{AppendixQKDsecurity}, the two pairwise distribution procedures satisfy~\cite{Portmann2014Cryptographic}
\[
(\pi_{\mathrm{KGP}}^{XY})^{\mathcal P_{XY}}
\left(
R_{\mathrm{KGP}}^{XY}
\parallel
\mathcal F_{\mathrm{auth}}^{XY}
\right)
\approx_{\varepsilon_{\mathrm{KGP}}^{XY}}
(\mathbf{sim}_{\mathrm{KGP}}^{XY})^{E_{XY}}
\mathcal K^{XY},
\]
where \(XY\in\{AB,AC\}\), \(\mathcal P_{XY}\) denotes the set of honest interfaces to which the corresponding pairwise KGP converter is attached, \(E_{XY}\) denotes the adversarial interface of the ideal pairwise key resource \(\mathcal K^{XY}\), \(\mathcal F_{\mathrm{auth}}^{XY}\) denotes the ideal authenticated communication resource used in the corresponding pairwise distribution procedure, and \(\varepsilon_{\mathrm{KGP}}^{XY}\) denotes the distinguishing error of the corresponding pairwise KGP construction.

According to the triangle inequality Eq.~\eqref{triangle inequality} and the non-expansiveness of the distinguishing distance under converters, attaching the same deterministic post-processing operation \(\pi_{\oplus}\) does not increase the distinguishing distance. Therefore,
\begin{widetext}
\begin{equation}
\begin{aligned}
\pi_{\oplus}
\left[
(\pi_{\mathrm{KGP}}^{AB})^{\mathcal P_{AB}}
\left(
R_{\mathrm{KGP}}^{AB}
\parallel
\mathcal F_{\mathrm{auth}}^{AB}
\right)
\parallel
(\pi_{\mathrm{KGP}}^{AC})^{\mathcal P_{AC}}
\left(
R_{\mathrm{KGP}}^{AC}
\parallel
\mathcal F_{\mathrm{auth}}^{AC}
\right)
\right]
\approx_{\varepsilon_{\mathrm{KGP}}^{AB}
+\varepsilon_{\mathrm{KGP}}^{AC}}
\pi_{\oplus}
\left[
(\mathbf{sim}_{\mathrm{KGP}}^{AB})^{E_{AB}}
\mathcal K^{AB}
\parallel
(\mathbf{sim}_{\mathrm{KGP}}^{AC})^{E_{AC}}
\mathcal K^{AC}
\right].
\end{aligned}
\end{equation}    
\end{widetext}
Here, \(\pi_{\oplus}\) is modeled as an internal deterministic local post-processing operation of the KGP converter and does not introduce its own adversarial interface.

Define the KGP simulator as
\[
\mathbf{sim}_{\mathrm{KGP}}
:=
\mathbf{sim}_{\mathrm{KGP}}^{AB}
\parallel
\mathbf{sim}_{\mathrm{KGP}}^{AC}.
\]

Since the post-processing operation \(\pi_{\oplus}\) and the simulator \(\mathbf{sim}_{\mathrm{KGP}}\) act on disjoint interfaces, we can equivalently write
\begin{equation}
\begin{aligned}
&
\pi_{\oplus}
\left[
(\mathbf{sim}_{\mathrm{KGP}}^{AB})^{E_{AB}}
\mathcal K^{AB}
\parallel
(\mathbf{sim}_{\mathrm{KGP}}^{AC})^{E_{AC}}
\mathcal K^{AC}
\right]
\\
=&
(\mathbf{sim}_{\mathrm{KGP}})^{E_{\mathrm{KGP}}}
\pi_{\oplus}
\left[
\mathcal K^{AB}
\parallel
\mathcal K^{AC}
\right],
\end{aligned}
\end{equation}
where applying the internal deterministic local post-processing operation \(\pi_{\oplus}\) to the two pairwise ideal key resources realizes the ideal KGP resource \(\mathcal F_{\mathrm{KGP}}\). The adversarial interface \(E_{\mathrm{KGP}}\) denotes the adversarial interface of the ideal KGP resource \(\mathcal F_{\mathrm{KGP}}\), which is inherited from the adversarial interfaces \(E_{AB}\) and \(E_{AC}\) of the two pairwise ideal key resources.

Therefore,
\[
\Pi_{\mathrm{KGP}}^{\mathcal P_{\mathrm{KGP}}}
\left(
R_{\mathrm{KGP}}
\parallel
\mathcal F_{\mathrm{auth}}^{\mathrm{KGP}}
\right)
\approx_{\varepsilon_{\mathrm{KGP}}}
(\mathbf{sim}_{\mathrm{KGP}})^{E_{\mathrm{KGP}}}
\mathcal F_{\mathrm{KGP}},
\]
where
$\varepsilon_{\mathrm{KGP}}
=
\varepsilon_{\mathrm{KGP}}^{AB}
+
\varepsilon_{\mathrm{KGP}}^{AC}.$
\end{proof}

\begin{remark}[Alternative QSS realization]
The ideal KGP resource \(\mathcal F_{\mathrm{KGP}}\) can also be realized by a three-party QSS construction. A QSS protocol distributes correlated keys \(K^A\), \(K^B\), and \(K^C\) satisfying Eq.~(\ref{KGPkey}). Let \(R_{\mathrm{QSS}}\) denote the corresponding real QSS resource and \(\pi_{\mathrm{QSS}}\) the associated protocol converter. According to the composable security of QSS~\cite{Walk2021Sharing},
\begin{equation}
(\pi_{\mathrm{QSS}})^{\mathcal P_{\mathrm{KGP}}}
\left(
R_{\mathrm{QSS}}
\parallel
\mathcal F_{\mathrm{auth}}^{\mathrm{KGP}}
\right)
\approx_{\varepsilon_{\mathrm{snd}}^{\mathrm{QSS}}}
(\mathbf{sim}_{\mathrm{QSS}})^{E_{\mathrm{KGP}}}
\mathcal F_{\mathrm{KGP}},
\end{equation}
where \(\mathbf{sim}_{\mathrm{QSS}}\) denotes the simulator associated with the QSS realization, and
\(\varepsilon_{\mathrm{snd}}^{\mathrm{QSS}}
=
\varepsilon_{\mathrm{cor}}^{\mathrm{QSS}}
+
\varepsilon_{\mathrm{sec}}^{\mathrm{QSS}}\).
Thus, the QSS construction provides an alternative realization of
\(\mathcal F_{\mathrm{KGP}}\) with
\(\varepsilon_{\mathrm{KGP}}
=
\varepsilon_{\mathrm{snd}}^{\mathrm{QSS}}\).
\end{remark}

\subsubsection{Hybrid system and overall reduction}
The following hybrid system is defined with the underlying KGP and authenticated communication resources modeled as ideal resources, providing the setting for subsequent analysis.

\begin{definition}[Hybrid system]
The hybrid system \(\mathsf{Hyb}_{\mathrm{QDS}}\) is defined as
\begin{equation}
    \mathsf{Hyb}_{\mathrm{QDS}}
    :=
    (\mathbf{sim}_{\mathrm{hyb}})^{E_{\mathrm{hyb}}}
    \left(
        \pi_{\mathrm{msg}}^{\mathcal P_{\mathrm{msg}}}
        \left(
            \mathcal R_{\mathrm{QDS\text{-}hyb}}
        \right)
    \right),
\end{equation}
where \(\mathcal R_{\mathrm{QDS\text{-}hyb}}\) denotes the composite resource available to \(\pi_{\mathrm{msg}}\) in the hybrid system. In this resource, the KGP and authenticated communication resources are modeled as ideal resources, while \(R_{\mathrm{msg}}\) remains real:
\begin{equation}
\mathcal R_{\mathrm{QDS\text{-}hyb}}
:=
\mathcal F_{\mathrm{KGP}}
\parallel
R_{\mathrm{msg}}
\parallel
\mathcal F_{\mathrm{auth}}^{\mathrm{msg}} .
\end{equation}
Here, \(\mathbf{sim}_{\mathrm{hyb}}\) denotes the simulator resulting from the authenticated-communication reduction and the subsequent KGP reduction, and \(E_{\mathrm{hyb}}\) denotes the adversarial interface of the resulting hybrid system. The interface set \(\mathcal P_{\mathrm{msg}}\) denotes the honest interfaces of the protocol $\pi_{\mathrm{msg}}$ when the KGP and authenticated communication resources are modeled as ideal resources.
\end{definition}

\begin{lemma}[Reduction to the hybrid system]\label{ReductionToHybrid}
The authenticated-communication hybrid system \(\mathsf{Hyb}_{\mathrm{auth}}\) is indistinguishable from \(\mathsf{Hyb}_{\mathrm{QDS}}\) with distinguishing error \(\varepsilon_{\mathrm{KGP}}\), i.e.,
\begin{equation}
    \mathsf{Hyb}_{\mathrm{auth}}
    \approx_{\varepsilon_{\mathrm{KGP}}}
    \mathsf{Hyb}_{\mathrm{QDS}} .
\end{equation}
\end{lemma}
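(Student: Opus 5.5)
The plan is to obtain the statement directly from Lemma~\ref{SecurityKGP} by viewing the message-stage part of the protocol, together with the authenticated-communication simulator, as a fixed context wrapped around the KGP subsystem. First, I would rewrite $\mathsf{Hyb}_{\mathrm{auth}}$ so the KGP subsystem appears explicitly. Using $\Pi_{\mathrm{QDS}}=\pi_{\mathrm{msg}}\circ(\Pi_{\mathrm{KGP}}\parallel\mathrm{id}_{\mathrm{msg}})$ and $\mathcal F_{\mathrm{auth}}=\mathcal F_{\mathrm{auth}}^{\mathrm{KGP}}\parallel\mathcal F_{\mathrm{auth}}^{\mathrm{msg}}$, I expect
\begin{equation*}
\mathsf{Hyb}_{\mathrm{auth}}
=
(\mathbf{sim}_{\mathrm{auth}})^{E_{\mathrm{auth}}}
\left[
\pi_{\mathrm{msg}}^{\mathcal P_{\mathrm{msg}}}
\left(
\Pi_{\mathrm{KGP}}^{\mathcal P_{\mathrm{KGP}}}
\left(R_{\mathrm{KGP}}\parallel\mathcal F_{\mathrm{auth}}^{\mathrm{KGP}}\right)
\parallel R_{\mathrm{msg}}\parallel\mathcal F_{\mathrm{auth}}^{\mathrm{msg}}
\right)
\right].
\end{equation*}
This rewriting is only a regrouping of parallel compositions and converter applications on disjoint interfaces. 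It relies on $\Pi_{\mathrm{KGP}}$ touching only the KGP resources and the KGP-stage authenticated channels. The simulator $\mathbf{sim}_{\mathrm{auth}}$ can then be split into $\mathbf{sim}_{\mathrm{auth}}^{\mathrm{KGP}}\parallel\mathbf{sim}_{\mathrm{auth}}^{\mathrm{msg}}$ acting on the two stage-specific adversarial interfaces.

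Second, I would define the context
\begin{equation*}
\alpha(\cdot):=(\mathbf{sim}_{\mathrm{auth}})^{E_{\mathrm{auth}}}\,\pi_{\mathrm{msg}}^{\mathcal P_{\mathrm{msg}}}\left(\,\cdot\,\parallel R_{\mathrm{msg}}\parallel\mathcal F_{\mathrm{auth}}^{\mathrm{msg}}\right).
\end{equation*}
Here $\alpha$ is a composition of a parallel extension by a fixed resource and attachment of converters. By the two pseudo-metric properties stated earlier, $d(\alpha R,\alpha S)\le d(R,S)$ and $d(R\parallel T,S\parallel T)\le d(R,S)$, so $\alpha$ cannot increase the distinguishing advantage. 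In the reduction form, for any distinguisher $D$, the composite $D\alpha$ is itself a distinguisher for the inner systems. I would then apply Eq.~\eqref{KGPsecurity} inside $\alpha$ and get
\begin{equation*}
\mathsf{Hyb}_{\mathrm{auth}}\approx_{\varepsilon_{\mathrm{KGP}}}\alpha\left((\mathbf{sim}_{\mathrm{KGP}})^{E_{\mathrm{KGP}}}\mathcal F_{\mathrm{KGP}}\right).
\end{equation*}

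Third, I would identify the right-hand side with $\mathsf{Hyb}_{\mathrm{QDS}}$. Since $\mathbf{sim}_{\mathrm{KGP}}$ acts only on $E_{\mathrm{KGP}}$, which is disjoint from the honest interfaces used by $\pi_{\mathrm{msg}}$, it commutes with $\pi_{\mathrm{msg}}$. It can therefore be pulled outward and merged with the remaining authenticated-communication simulator into $\mathbf{sim}_{\mathrm{hyb}}:=\mathbf{sim}_{\mathrm{auth}}\parallel\mathbf{sim}_{\mathrm{KGP}}$ on $E_{\mathrm{hyb}}=E_{\mathrm{auth}}\cup E_{\mathrm{KGP}}$. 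What remains inside is exactly $\pi_{\mathrm{msg}}^{\mathcal P_{\mathrm{msg}}}(\mathcal R_{\mathrm{QDS\text{-}hyb}})$, which matches the definition of $\mathsf{Hyb}_{\mathrm{QDS}}$.

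The main obstacle is the bookkeeping of interfaces, not any quantitative estimate. One must check that the KGP-stage part of $\mathbf{sim}_{\mathrm{auth}}$ is compatible with Lemma~\ref{SecurityKGP}, which already assumes ideal $\mathcal F_{\mathrm{auth}}^{\mathrm{KGP}}$. The clean way to handle this is to keep $\mathbf{sim}_{\mathrm{auth}}^{\mathrm{KGP}}$ as part of the outer context $\alpha$, acting on the adversarial interface of $\mathcal F_{\mathrm{auth}}^{\mathrm{KGP}}$. In the ideal side it then becomes part of $\mathbf{sim}_{\mathrm{hyb}}$. One must also check that the corruption-dependent choice of $\mathcal P_{\mathrm{KGP}}$ and $\mathcal P_{\mathrm{msg}}$ is consistent for each pattern in $\mathfrak C_{\mathrm{QDS}}$. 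I would verify this pattern by pattern, noting that the argument is uniform because only disjointness of interfaces is used.
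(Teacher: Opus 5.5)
Your proposal is correct and follows the paper's proof essentially step for step. You rewrite $\mathsf{Hyb}_{\mathrm{auth}}$ via $\Pi_{\mathrm{QDS}}=\pi_{\mathrm{msg}}\circ(\Pi_{\mathrm{KGP}}\parallel\mathrm{id}_{\mathrm{msg}})$, substitute Lemma~\ref{SecurityKGP} inside the fixed context $\alpha$ using non-expansiveness under converters and parallel composition, and then commute $\mathbf{sim}_{\mathrm{KGP}}$ past $\pi_{\mathrm{msg}}$ to absorb it into $\mathbf{sim}_{\mathrm{hyb}}$.

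One bookkeeping correction: on the ideal side, the KGP-stage part of $\mathbf{sim}_{\mathrm{auth}}$ sits on the outer interface of $\mathbf{sim}_{\mathrm{KGP}}$, which now emulates the adversarial interface of $\mathcal F_{\mathrm{auth}}^{\mathrm{KGP}}$. So $\mathbf{sim}_{\mathrm{hyb}}$ is a sequential composition there, not literally $\mathbf{sim}_{\mathrm{auth}}\parallel\mathbf{sim}_{\mathrm{KGP}}$ on $E_{\mathrm{auth}}\cup E_{\mathrm{KGP}}$. This does not affect the $\varepsilon_{\mathrm{KGP}}$ bound.
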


\begin{proof}
By the decomposition of the DI-QDS converter, the authenticated- communication hybrid system can be written as
\begin{equation}
\begin{aligned}
&\mathsf{Hyb}_{\mathrm{auth}}
=
(\mathbf{sim}_{\mathrm{auth}})^{E_{\mathrm{auth}}}
\Bigl(
\pi_{\mathrm{msg}}^{\mathcal P_{\mathrm{msg}}}
\Bigl(
\\
&\qquad
\Pi_{\mathrm{KGP}}^{\mathcal P_{\mathrm{KGP}}}
\left(
R_{\mathrm{KGP}}
\parallel
\mathcal F_{\mathrm{auth}}^{\mathrm{KGP}}
\right)
\parallel
R_{\mathrm{msg}}
\parallel
\mathcal F_{\mathrm{auth}}^{\mathrm{msg}}
\Bigr)
\Bigr).
\end{aligned}
\end{equation}
By Lemma~\ref{SecurityKGP}, the concrete KGP construction and the simulated ideal KGP resource \(\mathcal F_{\mathrm{KGP}}\) differ by at most \(\varepsilon_{\mathrm{KGP}}\) in distinguishing advantage, as expressed in Eq.~\eqref{KGPsecurity}.
Since the distinguishing distance is non-expansive under parallel composition and converters, composing both sides with the same \(R_{\mathrm{msg}}\), \(\mathcal F_{\mathrm{auth}}^{\mathrm{msg}}\), \(\pi_{\mathrm{msg}}^{\mathcal P_{\mathrm{msg}}}\), and \(\mathbf{sim}_{\mathrm{auth}}\) preserves the distinguishing bound. Hence,
\begin{equation}
\begin{aligned}
&\mathsf{Hyb}_{\mathrm{auth}}
\approx_{\varepsilon_{\mathrm{KGP}}}
(\mathbf{sim}_{\mathrm{auth}})^{E_{\mathrm{auth}}}
\Bigl(
\pi_{\mathrm{msg}}^{\mathcal P_{\mathrm{msg}}}
\Bigl(
\\
&\qquad
(\mathbf{sim}_{\mathrm{KGP}})^{E_{\mathrm{KGP}}}
\mathcal F_{\mathrm{KGP}}
\parallel
R_{\mathrm{msg}}
\parallel
\mathcal F_{\mathrm{auth}}^{\mathrm{msg}}
\Bigr)
\Bigr).
\end{aligned}
\end{equation}

Since \(\mathbf{sim}_{\mathrm{KGP}}\) acts on \(E_{\mathrm{KGP}}\), while the converter \(\pi_{\mathrm{msg}}\) is attached to \(\mathcal P_{\mathrm{msg}}\), these operations act on disjoint interfaces and can be reordered. The simulator \(\mathbf{sim}_{\mathrm{hyb}}\) denotes the overall simulator resulting from the authenticated-communication reduction together with the subsequent KGP reduction. Hence, the right-hand side is exactly \(\mathsf{Hyb}_{\mathrm{QDS}}\). Therefore,
\[
\mathsf{Hyb}_{\mathrm{auth}}
\approx_{\varepsilon_{\mathrm{KGP}}}
\mathsf{Hyb}_{\mathrm{QDS}} .
\]
\end{proof}

The preceding lemmas establish the following overall relation between the real DI-QDS system and the hybrid system.

\begin{theorem}[Overall reduction to the hybrid system]
The real DI-QDS system is indistinguishable from the hybrid system \(\mathsf{Hyb}_{\mathrm{QDS}}\) with distinguishing error \(\varepsilon_{\mathrm{auth}}+\varepsilon_{\mathrm{KGP}}\), i.e.,
\begin{equation}
    \mathsf{Real}_{\mathrm{QDS}}
    \approx_{\varepsilon_{\mathrm{auth}}+\varepsilon_{\mathrm{KGP}}}
    \mathsf{Hyb}_{\mathrm{QDS}} .
\end{equation}
\end{theorem}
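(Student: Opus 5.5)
The plan is to chain Lemma~\ref{ReductionAuthenticated} and Lemma~\ref{ReductionToHybrid} through the triangle inequality, Eq.~\eqref{triangle inequality}. Both lemmas are stated in the $\approx_\varepsilon$ notation, so I read them as follows: for every distinguisher $D$,
\[
\Delta_D(\mathsf{Real}_{\mathrm{QDS}},\mathsf{Hyb}_{\mathrm{auth}})\le\varepsilon_{\mathrm{auth}}
\quad\text{and}\quad
\Delta_D(\mathsf{Hyb}_{\mathrm{auth}},\mathsf{Hyb}_{\mathrm{QDS}})\le\varepsilon_{\mathrm{KGP}}.
\]

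The key step is to fix an arbitrary $D$ and apply the triangle inequality for the pseudo-metric $\Delta_D$, with $\mathsf{Hyb}_{\mathrm{auth}}$ as the intermediate system. This gives
\[
\Delta_D(\mathsf{Real}_{\mathrm{QDS}},\mathsf{Hyb}_{\mathrm{QDS}})\le \varepsilon_{\mathrm{auth}}+\varepsilon_{\mathrm{KGP}}.
\]
Since $D$ was arbitrary, this is exactly the claim $\mathsf{Real}_{\mathrm{QDS}}\approx_{\varepsilon_{\mathrm{auth}}+\varepsilon_{\mathrm{KGP}}}\mathsf{Hyb}_{\mathrm{QDS}}$. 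If the errors are allowed to depend on $D$, as in Definition~\ref{def:construction}, the same argument goes through pointwise in $D$.

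The only step needing care is checking that the three systems have compatible interface sets, so that the same $D$ can be connected to each of them. In particular, $\mathsf{Hyb}_{\mathrm{auth}}$ carries $\mathbf{sim}_{\mathrm{auth}}$ at $E_{\mathrm{auth}}$, and $\mathsf{Hyb}_{\mathrm{QDS}}$ carries $\mathbf{sim}_{\mathrm{hyb}}$ at $E_{\mathrm{hyb}}$. I would argue that each simulator exposes the outer adversarial interface of the real system, so all three systems present the same interface set to $D$. Consistency of the simulators then holds by construction, because $\mathbf{sim}_{\mathrm{hyb}}$ is defined as the composition of $\mathbf{sim}_{\mathrm{auth}}$ with $\mathbf{sim}_{\mathrm{KGP}}$.

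Once interface compatibility is settled, nothing further is required beyond the two lemmas.
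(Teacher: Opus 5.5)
Your proposal is correct and matches the paper's proof, which also obtains the bound by chaining Lemma~\ref{ReductionAuthenticated} and Lemma~\ref{ReductionToHybrid} through the triangle inequality, Eq.~\eqref{triangle inequality}, for the pseudo-metric $\Delta_D$. Your extra check that the three systems have compatible interface sets is something the paper leaves implicit, and it does not change the argument.
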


\begin{proof}
The result follows directly from Lemmas~\ref{ReductionAuthenticated} and~\ref{ReductionToHybrid}, together with the triangle inequality in Eq.~\eqref{triangle inequality}.
\end{proof}

\subsection{Specification of \(\mathcal R_{\mathrm{QDS\text{-}hyb}}\) and \(\pi_{\mathrm{msg}}\)}

We first specify the hybrid resource \(\mathcal R_{\mathrm{QDS\text{-}hyb}}\) and the protocol \(\pi_{\mathrm{msg}}\). We specify the component resources and their interfaces, associate the party-specific non-free interfaces with Alice, Bob, and Charlie, and group the component-resource capabilities at the corresponding interfaces according to their protocol functionality. We then specify the protocol converters associated with these party interfaces.

\subsubsection{Repository representation of \(\mathcal R_{\mathrm{QDS\text{-}hyb}}\)}
The hybrid resource used in this construction is
\[
\mathcal R_{\mathrm{QDS\text{-}hyb}}
:=
\mathcal F_{\mathrm{KGP}}
\parallel
R_{\mathrm{msg}}
\parallel
\mathcal F_{\mathrm{auth}}^{\mathrm{msg}} .
\]
The resource \(\mathcal F_{\mathrm{KGP}}\) has been specified above. We therefore focus on the message-stage component \(R_{\mathrm{msg}}\parallel\mathcal F_{\mathrm{auth}}^{\mathrm{msg}}\), which is modeled as a collection of repositories and their associated interfaces. The values stored in these repositories are denoted by
\begin{equation}
\begin{aligned}
    \tau &= (\mathrm{Sig},P,\mathrm{Doc}),\\
    \kappa_B &= (X_b,Y_b,Z_b),\\
    \kappa_C &= (X_c,Y_c,Z_c).
\end{aligned}
\end{equation}

Accordingly, we decompose the hybrid resource as
\begin{equation}
    \mathcal R_{\mathrm{QDS\text{-}hyb}}
    =
    \mathcal F_{\mathrm{KGP}}
    \parallel
    R_{\tau}
    \parallel
    R_{\kappa}.
\end{equation}
Here, \(R_{\tau}\) is used to store values of the form \(\tau\), while \(R_{\kappa}\) is used to store the key-share values \(\kappa_B\) and \(\kappa_C\).
We model \(R_{\tau}\) as an insecure repository, in which the adversary has both read and write access~\cite{Badertscher2018Composable}. We model \(R_{\kappa}\) as an authenticated repository, which disallows independent adversarial write operations of arbitrary values ~\cite{Badertscher2018Composable}.

\paragraph{Repository \(R_{\tau}\).}
We first specify the message space of the values carried by
\(R_{\tau}\). Let
\begin{equation}
    \mathcal M_{\tau}
    :=
    \{0,1\}^{n}
    \times
    \{0,1\}^{n}
    \times
    \{0,1\}^{m}.
\end{equation}
An element of \(\mathcal M_{\tau}\) is written as \(\tau=(\mathrm{Sig},P,\mathrm{Doc}),\) where \(\mathrm{Sig}\) and \(P\) are \(n\)-bit strings and \(\mathrm{Doc}\) is an \(m\)-bit document. Membership in \(\mathcal M_{\tau}\) specifies only the format of the values carried by \(R_{\tau}\). Whether these components are cryptographically consistent with each other and with the corresponding key shares is determined by the message protocol.

The repository \(R_{\tau}\) provides write interfaces \(\mathrm{Write}_{A}^{\tau}\) and \(\mathrm{Write}_{B}^{\tau}\) to Alice and Bob, respectively, and read interfaces \(\mathrm{Read}_{B}^{\tau}\) and \(\mathrm{Read}_{C}^{\tau}\) to Bob and Charlie, respectively. It additionally provides adversarial write and read interfaces \(\mathrm{Write}_{E}^{\tau}\) and \(\mathrm{Read}_{E}^{\tau}\). Since the adversary has both read and write access, \(R_{\tau}\) is an insecure repository ~\cite{Badertscher2018Composable}.

The write interfaces allow an arbitrary value in \(\mathcal M_{\tau}\) to be written to the corresponding write buffer:
\begin{equation}
    \mathrm{Write}_{X}^{\tau}(sid,\tau'):
    W_{X}^{\tau}[sid]\leftarrow\tau',
    \qquad
    X\in\{A,B,E\}.
\end{equation}

The read interfaces return the value available in the corresponding read buffer:
\begin{equation}
    \mathrm{Read}_{Y}^{\tau}(sid)
    =
    R_{Y}^{\tau}[sid],
    \qquad
    Y\in\{B,C,E\}.
\end{equation}

The free transfer interfaces control the visibility of values between the corresponding write and read buffers. For an allowed transfer direction \(X\rightarrow Y\),
\begin{equation}
    \mathrm{Transfer}_{X\rightarrow Y}^{\tau}(sid):
    W_X^{\tau}[sid]=\tau'
    \Rightarrow
    R_Y^{\tau}[sid]\leftarrow\tau'.
\end{equation}
The allowed transfer directions are
\begin{equation}
    (X,Y)\in
    \{(A,B),(A,E),(E,B),(B,C)\}.
\end{equation}

For each session \(sid\), the invocation order of the interfaces of \(R_{\tau}\) is required to satisfy the following precedence conditions. For every allowed transfer pair \((X,Y)\),
\begin{equation}
    \mathrm{Write}_{X}^{\tau}
    \prec
    \mathrm{Transfer}_{X\rightarrow Y}^{\tau}
    \prec
    \mathrm{Read}_{Y}^{\tau}.
\end{equation}
In addition,
\begin{equation}
    \mathrm{Read}_{B}^{\tau}
    \prec
    \mathrm{Write}_{B}^{\tau}.
\end{equation}

\paragraph{Repository \(R_{\kappa}\).}
The key-share communication resource is decomposed as
\begin{equation}
    R_{\kappa}
    :=
    R_{\kappa_B}
    \parallel
    R_{\kappa_C},
\end{equation}
where \(R_{\kappa_B}\) carries Bob's key-share value \(\kappa_B\) from Bob to Charlie, whereas \(R_{\kappa_C}\) carries Charlie's key-share value \(\kappa_C\) from Charlie to Bob.

We first specify the message space of the key-share values carried by
\(R_{\kappa}\). Let
\begin{equation}
    \mathcal M_{\kappa}
    :=
    \{0,1\}^{n}
    \times
    \{0,1\}^{n}
    \times
    \{0,1\}^{n}.
\end{equation}
An element of \(\mathcal M_{\kappa}\) is written as \(\kappa=(X,Y,Z)\), where \(X\), \(Y\), and \(Z\) are \(n\)-bit strings. Membership in \(\mathcal M_{\kappa}\) specifies only the format of the key-share values. Their cryptographic relations with each other and with the signed object are determined by the message protocol.

For \(X\in\{B,C\}\), let \(Y\) denote the corresponding recipient, with
\begin{equation}
    (X,Y)\in\{(B,C),(C,B)\}.
\end{equation}
The authenticated repository \(R_{\kappa_X}\) provides the write interface \(\mathrm{Write}_{X}^{\kappa_X}\), the read interface \(\mathrm{Read}_{Y}^{\kappa_X}\), and the adversarial read interface \(\mathrm{Read}_{E}^{\kappa_X}\). No adversarial write interface is provided. Since no adversarial write interface is provided, \(R_{\kappa_X}\) is an authenticated repository~ \cite{Badertscher2018Composable}.

The write interface allows a value in \(\mathcal M_{\kappa}\) to be written to the corresponding write buffer:
\begin{equation}
    \mathrm{Write}_{X}^{\kappa_X}(sid,\kappa'):
    W_{X}^{\kappa_X}[sid]
    \leftarrow
    \kappa'.
\end{equation}

The read interfaces return the value available in the corresponding read buffer:
\begin{equation}
    \mathrm{Read}_{Z}^{\kappa_X}(sid)
    =
    R_{Z}^{\kappa_X}[sid],
    \qquad
    Z\in\{Y,E\}.
\end{equation}

The free transfer interfaces control the visibility of the value written by \(X\):
\begin{equation}
\begin{aligned}
     &\mathrm{Transfer}_{X\rightarrow Z}^{\kappa_X}(sid):\\
    &W_{X}^{\kappa_X}[sid]=\kappa'
    \Rightarrow
    R_{Z}^{\kappa_X}[sid]\leftarrow\kappa',
    \qquad
    Z\in\{Y,E\}.   
\end{aligned}
\end{equation}

For each session \(sid\), the invocation order of the interfaces of \(R_{\kappa_X}\) is required to satisfy
\begin{equation}
    \mathrm{Write}_{X}^{\kappa_X}
    \prec
    \mathrm{Transfer}_{X\rightarrow Z}^{\kappa_X}
    \prec
    \mathrm{Read}_{Z}^{\kappa_X},
    \qquad
    Z\in\{Y,E\}.
\end{equation}

\subsubsection{Interface specification of \(\mathcal R_{\mathrm{QDS\text{-}hyb}}\)}
Having specified the component resources, we now specify the interface structure of the composite hybrid resource \(\mathcal R_{\mathrm{QDS\text{-}hyb}}\). Following the interface formulation~\cite{Badertscher2018Composable}, we adopt a fine-grained partition in which distinct functional capabilities are represented by separate interfaces, even when several such interfaces are associated with the same protocol party.

We first associate the party-specific non-free interfaces of the message stage with Alice, Bob, and Charlie, respectively:
\begin{equation}
\begin{aligned}
    &\mathcal I_A:=\{A_{\mathrm w}\},\qquad
    \mathcal I_B:=\{B_{\mathrm w},B_{\mathrm r}\},\\
    &\mathcal I_C:=\{C_{\mathrm w},C_{\mathrm r}\}.    
\end{aligned}
\end{equation}
Hence, the set of party-associated non-free interfaces is
\begin{equation}
    \mathcal I_A
    \cup
    \mathcal I_B
    \cup
    \mathcal I_C
    =
    \{A_{\mathrm w},B_{\mathrm w},B_{\mathrm r},C_{\mathrm w},C_{\mathrm r}\}.
\end{equation}

The interfaces of the component resources are then grouped according to the corresponding protocol functionality. In particular, several component-resource capabilities may be combined at the same protocol-level interface when they are required by the corresponding functionality~\cite{Badertscher2018Composable}.  In addition to the party-associated interfaces, the composite resource provides a setup interface \(S\), which connects the batch output of \(\mathcal F_{\mathrm{KGP}}\) to the key material used by a signature instance.

We do not assign \(S\) to the adversarial side under the prescribed corruption patterns. Its role is limited to selecting the first unused block for which Alice's seed is nonzero and binding the corresponding block-read interfaces to \(sid\). If this preparation cannot be completed, no signature is generated; after the binding, all deviations by corrupted parties are covered through their party-associated interfaces. Thus, \(S\) remains a separate honest interface outside the party interface sets.

\paragraph{Setup interface \(S\).}
The interface \(S\) prepares the key material for a signature instance. It uses \(G\) and the KGP block-read interfaces to select an unused block for which Alice's seed is nonzero and bind the corresponding block-read interfaces to \(sid\). The component interfaces are grouped as
\begin{equation}
\begin{aligned}
    \left.
    \begin{array}{ll}
        \mathcal F_{\mathrm{KGP}}:
        &
        G,\\[1mm]
        &
        {\mathrm{ReadKey}_{A}^{(j)}}_{j\in\mathcal J},\\[1mm]
        &
        {\mathrm{ReadKey}_{B}^{(j)}}_{j\in\mathcal J},\\[1mm]
        &
        {\mathrm{ReadKey}_{C}^{(j)}}_{j\in\mathcal J}\\[1mm]
    \end{array}
    \right\}
    \quad S .
\end{aligned}
\end{equation}

For the selected index \(j\) and \(X\in\{A,B,C\}\), the binding defines
\begin{equation}
    \mathrm{ReadKey}_{X}(sid)
    :=
    \mathrm{ReadKey}_{X}^{(j)}(),
\end{equation}
Subsequent calls to \(\mathrm{ReadKey}_{X}(sid)\) access the key block bound to \(sid\). If no block has been bound to \(sid\),
\(\mathrm{ReadKey}_{X}(sid)\) returns \(\perp\).

\paragraph{Interface \(A_{\mathrm w}\).}
The interface \(A_{\mathrm w}\) represents Alice's writer-side functionality in the message stage. The corresponding component-resource capabilities are grouped as
\begin{equation}
    \left.
    \begin{array}{ll}
        \mathcal F_{\mathrm{KGP}}:
        & \mathrm{ReadKey}_{A},\\[1mm]
        R_{\tau}:
        & \mathrm{Write}_{A}^{\tau}
    \end{array}
    \right\}
    \quad A_{\mathrm w}.
\end{equation}

\paragraph{Interface \(B_{\mathrm w}\).}
The interface \(B_{\mathrm w}\) represents Bob's forwarding functionality in the message stage. The corresponding component-resource capabilities are grouped as
\begin{equation}
    \left.
    \begin{array}{ll}
        \mathcal F_{\mathrm{KGP}}:
        & \mathrm{ReadKey}_{B},\\[1mm]
        R_{\tau}:
        & \mathrm{Read}_{B}^{\tau},
          \ \mathrm{Write}_{B}^{\tau},\\[1mm]
        R_{\kappa_B}:
        & \mathrm{Write}_{B}^{\kappa_B}
    \end{array}
    \right\}
    \quad B_{\mathrm w}.
\end{equation}

\paragraph{Interface \(C_{\mathrm w}\).}
The interface \(C_{\mathrm w}\) represents Charlie's key-share response functionality. The corresponding component-resource capabilities are grouped as
\begin{equation}
    \left.
    \begin{array}{ll}
        \mathcal F_{\mathrm{KGP}}:
        & \mathrm{ReadKey}_{C},\\[1mm]
        R_{\kappa_C}:
        & \mathrm{Write}_{C}^{\kappa_C}
    \end{array}
    \right\}
    \quad C_{\mathrm w}.
\end{equation}

\paragraph{Interface \(B_{\mathrm r}\).}
The interface \(B_{\mathrm r}\) represents Bob's verification functionality. The corresponding component-resource capabilities are grouped as
\begin{equation}
    \left.
    \begin{array}{ll}
        \mathcal F_{\mathrm{KGP}}:
        & \mathrm{ReadKey}_{B},\\[1mm]
        R_{\tau}:
        & \mathrm{Read}_{B}^{\tau},\\[1mm]
        R_{\kappa_C}:
        & \mathrm{Read}_{B}^{\kappa_C}
    \end{array}
    \right\}
    \quad B_{\mathrm r}.
\end{equation}

\paragraph{Interface \(C_{\mathrm r}\).}
The interface \(C_{\mathrm r}\) represents Charlie's final verification functionality. The corresponding component-resource capabilities are grouped as
\begin{equation}
    \left.
    \begin{array}{ll}
        \mathcal F_{\mathrm{KGP}}:
        & \mathrm{ReadKey}_{C},\\[1mm]
        R_{\tau}:
        & \mathrm{Read}_{C}^{\tau},\\[1mm]
        R_{\kappa_B}:
        & \mathrm{Read}_{C}^{\kappa_B}
    \end{array}
    \right\}
    \quad C_{\mathrm r}.
\end{equation}

The remaining interfaces of the composite hybrid resource are the adversarial interface \(E\) and the free interface \(W\). Their capabilities are inherited directly from the component repositories.

\paragraph{Adversarial interface \(E\).}
The adversarial interface \(E\) provides access to the adversarial capabilities inherited from the component repositories:
\begin{equation}
    \left.
    \begin{array}{ll}
        R_{\tau}:&
        \mathrm{Read}_{E}^{\tau},
        \ \mathrm{Write}_{E}^{\tau},\\[1mm]
        R_{\kappa_B}:&
        \mathrm{Read}_{E}^{\kappa_B},\\[1mm]
        R_{\kappa_C}:&
        \mathrm{Read}_{E}^{\kappa_C}
    \end{array}
    \right\}
    \quad E .
\end{equation}
These operations retain the behavior specified by their respective component repositories.

\paragraph{Free interface \(W\).}
The free interface \(W\) models the influence of the environment on the transfer semantics of the composite resource. It provides access to the transfer operations inherited from the component repositories, which control the visibility of values written to the corresponding write buffers at the corresponding read buffers. Each transfer operation retains the behavior specified by its respective component repository.

At this stage, no honesty or corruption allocation is imposed on the party-associated interfaces. For each security statement, the interfaces associated with honest parties are equipped with the corresponding protocol converters, whereas the interfaces associated with a corrupted party are assigned to the adversarial side. The free interface \(W\) remains unchanged.

\subsubsection{Specification of \(\pi_{\mathrm{msg}}\)}

For each fresh session \(sid\), all session-local buffers of the component repositories are initialized to \(\perp\).

Let 
\[\mathcal I_{\mathrm{msg}}
 =
\{S,A_{\mathrm w},B_{\mathrm w},B_{\mathrm r},C_{\mathrm w},C_{\mathrm r}\}.
\]
The protocol \(\pi_{\mathrm{msg}}\) consists of six converters,
\begin{equation}
    \pi_{\mathrm{msg}}
    :=
    \left(
        \pi^{S},
        \pi^{A_{\mathrm w}},
        \pi^{B_{\mathrm w}},
        \pi^{C_{\mathrm w}},
        \pi^{B_{\mathrm r}},
        \pi^{C_{\mathrm r}}
    \right).
\end{equation}
Their behaviors are specified below.

\paragraph{Converter \(\pi^{S}\).}
The converter \(\pi^{S}\) selects and binds one KGP block for each \(\mathrm{Setup}(sid)\). The interface \(G()\) is invoked only if the KGP batch has not yet been generated. A block that has already been assigned or discarded is never considered again.

\par\addvspace{0.7\baselineskip}

\begin{protocolbox}
\[
\begin{array}{@{}l@{}}
\textbf{function }\mathrm{Setup}(sid)
\\
\quad
\textbf{if no batch exists then}
\\
\qquad
\textbf{call }G()
\textbf{ at }\mathcal F_{\mathrm{KGP}}
\\
\quad
\textbf{while an unused }j\in\mathcal J\textbf{ remains do}
\\
\qquad
j\leftarrow\text{first unused index in the fixed order}
\\
\qquad
\textbf{call }
K^A[j]\leftarrow\mathrm{ReadKey}_{A}^{(j)}()
\\
\qquad
(\cdot,Y_a,\cdot)\leftarrow K^A[j]
\\
\qquad
\textbf{if }Y_a=0^n\textbf{ then}
\\
\qquad\qquad
\textbf{mark }j\textbf{ as discarded}
\\
\qquad
\textbf{else}
\\
\qquad\qquad
\textbf{mark }j\textbf{ as assigned}
\\
\qquad\qquad
\textbf{bind }j\textbf{ to }sid
\\
\qquad\qquad
\textbf{return}
\\
\quad
\textbf{abort}.
\end{array}
\]
\end{protocolbox}

\par\addvspace{0.9\baselineskip}

\paragraph{Converter \(\pi^{A_{\mathrm w}}\).}
This converter performs Alice's signing operation and writes the
resulting signed object to \(R_{\tau}\).

\par\addvspace{0.7\baselineskip}

\begin{protocolbox}
\[
\begin{array}{l}
\textbf{function }\mathrm{Register}_{A}(sid,\mathrm{Doc})
\\
\quad
\textbf{call }
K^A\leftarrow\mathrm{ReadKey}_{A}(sid)
\\
\quad
\textbf{if }K^A\neq\perp\textbf{ then}
\\
\qquad
(X_a,Y_a,Z_a)\leftarrow K^A
\\
\qquad
p_a\xleftarrow{\$}\mathcal I_n
\\
\qquad
P\leftarrow p_a\oplus X_a
\\
\qquad
H_{nm}\leftarrow\mathrm{LFSR}(p_a,Y_a)
\\
\qquad
\mathrm{Dig}\leftarrow H_{nm}\cdot\mathrm{Doc}
\\
\qquad
\mathrm{Sig}\leftarrow\mathrm{Dig}\oplus Z_a
\\
\qquad
\tau\leftarrow(\mathrm{Sig},P,\mathrm{Doc})
\\
\qquad
\textbf{call }
\mathrm{Write}_{A}^{\tau}(sid,\tau)
\textbf{ at }R_{\tau}.
\end{array}
\]
\end{protocolbox}

\par\addvspace{0.9\baselineskip}
Here, \(\mathcal I_n\) denotes the set of monic irreducible polynomials of degree \(n\) over \(\mathbb F_2\), as defined in Appendix~\ref{RI Security Bound}.

\paragraph{Converter \(\pi^{B_{\mathrm w}}\).}
This converter writes the signed object and Bob's key-share value to the corresponding repositories for subsequent transfer to Charlie.

\par\addvspace{0.7\baselineskip}

\begin{protocolbox}
\[
\begin{array}{l}
\textbf{function }\mathrm{Write}_{B}(sid)
\\
\quad
\textbf{call }
\tau
\leftarrow
\mathrm{Read}_{B}^{\tau}(sid)
\\
\quad
\textbf{call }
K^B
\leftarrow
\mathrm{ReadKey}_{B}(sid)
\\
\quad
\textbf{if }
\tau\neq\perp
\textbf{ and }
K^B\neq\perp
\textbf{ then}
\\
\qquad
\kappa_B=(X_b,Y_b,Z_b)\leftarrow K^B
\\
\qquad
\textbf{call }
\mathrm{Write}_{B}^{\tau}(sid,\tau)
\textbf{ at }R_{\tau}
\\
\qquad
\textbf{call }
\mathrm{Write}_{B}^{\kappa_B}(sid,\kappa_B)
\textbf{ at }R_{\kappa_B}.
\end{array}
\]
\end{protocolbox}

\par\addvspace{0.9\baselineskip}

\paragraph{Converter \(\pi^{C_{\mathrm w}}\).}
This converter writes Charlie's key-share value to \(R_{\kappa_C}\) for subsequent transfer to Bob.

\par\addvspace{0.7\baselineskip}

\begin{protocolbox}
\[
\begin{array}{l}
\textbf{function }\mathrm{Write}_{C}(sid)
\\
\quad
\textbf{call }
K^C
\leftarrow
\mathrm{ReadKey}_{C}(sid)
\\
\quad
\textbf{if }K^C\neq\perp\textbf{ then}
\\
\qquad
\kappa_C=(X_c,Y_c,Z_c)\leftarrow K^C
\\
\qquad
\textbf{call }
\mathrm{Write}_{C}^{\kappa_C}(sid,\kappa_C)
\textbf{ at }R_{\kappa_C}.
\end{array}
\]
\end{protocolbox}

\par\addvspace{0.9\baselineskip}

\paragraph{Converter \(\pi^{B_{\mathrm r}}\).}
This converter performs Bob's verification using the received signed object and Charlie's key-share value.

\par\addvspace{0.7\baselineskip}

\begin{protocolbox}
\[
\begin{array}{l}
\textbf{function }\mathrm{Verify}_{B}(sid)
\\
\quad
\textbf{call }
\tau
\leftarrow
\mathrm{Read}_{B}^{\tau}(sid)
\\
\quad
\textbf{call }
K^B
\leftarrow
\mathrm{ReadKey}_{B}(sid)
\\
\quad
\textbf{call }
\kappa_C
\leftarrow
\mathrm{Read}_{B}^{\kappa_C}(sid)
\\[1mm]
\quad
\textbf{if }
\tau\neq\perp
\textbf{ and }
K^B\neq\perp
\textbf{ and }
\kappa_C\neq\perp
\textbf{ then}
\\
\qquad
(\mathrm{Sig},P,\mathrm{Doc})\leftarrow\tau
\\
\qquad
(X_b,Y_b,Z_b)\leftarrow K^B
\\
\qquad
(X_c,Y_c,Z_c)\leftarrow\kappa_C
\\[1mm]
\qquad
K_X^{B}\leftarrow X_b\oplus X_c
\\
\qquad
K_Y^{B}\leftarrow Y_b\oplus Y_c
\\
\qquad
K_Z^{B}\leftarrow Z_b\oplus Z_c
\\
\qquad
p_b\leftarrow P\oplus K_X^{B}
\\[1mm]
\qquad
\textbf{if }K_Y^{B}=0^n\textbf{ or } p_b \notin \mathcal I_n
\textbf{ then}
\\
\qquad\qquad
\textbf{return }0
\\
\qquad
\mathrm{Dig}_{B}^{\mathrm{exp}}
\leftarrow
\mathrm{Sig}\oplus K_Z^{B}
\\
\qquad
H_{nm}^{B}\leftarrow\mathrm{LFSR}(p_b,K_Y^{B})
\\
\qquad
\mathrm{Dig}_{B}^{\mathrm{act}}
\leftarrow
H_{nm}^{B}\cdot\mathrm{Doc}
\\[1mm]
\qquad
\textbf{if }
\mathrm{Dig}_{B}^{\mathrm{act}}
=
\mathrm{Dig}_{B}^{\mathrm{exp}}
\textbf{ then}
\\
\qquad\qquad
\textbf{return }1
\\
\quad
\textbf{return }0.
\end{array}
\]
\end{protocolbox}

\par\addvspace{0.9\baselineskip}

\paragraph{Converter \(\pi^{C_{\mathrm r}}\).}
This converter performs Charlie's final verification using the signed object and Bob's key-share value.

\par\addvspace{0.7\baselineskip}

\begin{protocolbox}
\[
\begin{array}{l}
\textbf{function }\mathrm{Verify}_{C}(sid)
\\
\quad
\textbf{call }
\tau
\leftarrow
\mathrm{Read}_{C}^{\tau}(sid)
\\
\quad
\textbf{call }
\kappa_B
\leftarrow
\mathrm{Read}_{C}^{\kappa_B}(sid)
\\
\quad
\textbf{call }
K^C
\leftarrow
\mathrm{ReadKey}_{C}(sid)
\\[1mm]
\quad
\textbf{if }
\tau\neq\perp
\textbf{ and }
\kappa_B\neq\perp
\textbf{ and }
K^C\neq\perp
\textbf{ then}
\\
\qquad
(\mathrm{Sig},P,\mathrm{Doc})\leftarrow\tau
\\
\qquad
(X_b,Y_b,Z_b)\leftarrow\kappa_B
\\
\qquad
(X_c,Y_c,Z_c)\leftarrow K^C
\\[1mm]
\qquad
K_X^{C}\leftarrow X_b\oplus X_c
\\
\qquad
K_Y^{C}\leftarrow Y_b\oplus Y_c
\\
\qquad
K_Z^{C}\leftarrow Z_b\oplus Z_c
\\
\qquad
p_c\leftarrow P\oplus K_X^{C}
\\
\qquad
\textbf{if }K_Y^{C}=0^n\textbf{ or } p_c \notin \mathcal I_n \textbf{ then}
\\
\qquad\qquad
\textbf{return }0
\\
\qquad
\mathrm{Dig}_{C}^{\mathrm{exp}}
\leftarrow
\mathrm{Sig}\oplus K_Z^{C}
\\
\qquad
H_{nm}^{C}\leftarrow\mathrm{LFSR}(p_c,K_Y^{C})
\\
\qquad
\mathrm{Dig}_{C}^{\mathrm{act}}
\leftarrow
H_{nm}^{C}\cdot\mathrm{Doc}
\\[1mm]
\qquad
\textbf{if }
\mathrm{Dig}_{C}^{\mathrm{act}}
=
\mathrm{Dig}_{C}^{\mathrm{exp}}
\textbf{ then}
\\
\qquad\qquad
\textbf{return }1
\\
\quad
\textbf{return }0.
\end{array}
\]
\end{protocolbox}

\par\addvspace{0.9\baselineskip}

\subsection{Construction under a corruption structure}
Definition~\ref{def:construction} formulates resource construction for a fixed allocation of honest and adversarial interfaces. In our protocol, however, the set of honest interfaces depends on which parties are corrupted. We therefore extend the construction statement to a prescribed corruption structure.
This extension does not introduce a different notion of resource construction. It only requires the construction condition in Definition~\ref{def:construction} to hold uniformly for all interface allocations induced by the corruption patterns in that structure.

Let \(\mathcal U\) denote the set of protocol parties. A corruption pattern is a subset \(\mathcal C\subseteq\mathcal U\), where \(X\in\mathcal C\) means that party \(X\) is corrupted. Let \(\mathfrak C\) denote the corruption structure specified by the security model. For each \(\mathcal C\in\mathfrak C\), the corresponding set of honest interfaces is defined as
\begin{equation}
    \mathcal P^{\mathcal C}
    :=
    \bigcup_{X\in\mathcal U\setminus\mathcal C}
    \mathcal I_X .
\end{equation}

The party-associated interfaces
\(
    \bigcup_{X\in\mathcal C}\mathcal I_X
\)
are assigned to the adversarial side and grouped with the baseline adversarial interface \(E\). We continue to denote the resulting adversarial interface by \(E\). The free interface \(W\) is not affected by the corruption allocation.

\begin{definition}[Construction under a corruption structure]
Let \(R\) and \(S\) be resources, and let \(\mathfrak C\) be the corruption structure defined above. A protocol \(\pi\) constructs \(S\) from \(R\) within \(\varepsilon\) under the corruption structure \(\mathfrak C\) if there exists a family of simulators
\begin{equation}
    \mathbf{sim}_{\mathfrak C}
    :=
    \left\{
        \mathbf{sim}^{\mathcal C}
    \right\}_{\mathcal C\in\mathfrak C}
\end{equation}
such that
\begin{equation}
    \forall \mathcal C\in\mathfrak C,\ \forall D:\quad
    \Delta_D\!\left(
        \pi^{\mathcal P^{\mathcal C}}R,
        \left(\mathbf{sim}^{\mathcal C}\right)^E S
    \right)
    \leq
    \varepsilon(D).
\end{equation}
\end{definition}

For a fixed simulator family \(\mathbf{sim}_{\mathfrak C}\), we define the distinguishing advantage over the corruption structure \(\mathfrak C\) by
\begin{equation}
\begin{aligned}
    \Delta_D^{\mathfrak C}
    \left(
        \pi R,
        \mathbf{sim}_{\mathfrak C} S
    \right)
    &:=
    \sup_{\mathcal C\in\mathfrak C}
    \Delta_D\!\left(
        \pi^{\mathcal P^{\mathcal C}}R,
        \left(\mathbf{sim}^{\mathcal C}\right)^E S
    \right).
\end{aligned}
\end{equation}
Thus, \(\Delta_D^{\mathfrak C}\) is the largest distinguishing advantage of \(D\) over all corruption patterns in \(\mathfrak C\).

\begin{lemma}
A protocol \(\pi\) constructs \(S\) from \(R\) within \(\varepsilon\) under the corruption structure \(\mathfrak C\) if and only if there exists a simulator family
\(\mathbf{sim}_{\mathfrak C}
=
\{\mathbf{sim}^{\mathcal C}\}_{\mathcal C\in\mathfrak C}\)
such that
\begin{equation}
    \forall D:\quad
    \Delta_D^{\mathfrak C}
    \left(
        \pi R,
        \mathbf{sim}_{\mathfrak C}S
    \right)
    \leq
    \varepsilon(D).
\end{equation}
Equivalently, for every fixed \(\mathcal C\in\mathfrak C\), the construction induced by \(\mathcal C\) satisfies Definition~\ref{def:construction} within \(\varepsilon\).
\end{lemma}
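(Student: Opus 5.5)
This is a direct unfolding of the definitions, so I will prove the two implications separately, keeping the simulator family \(\mathbf{sim}_{\mathfrak C}\) fixed throughout. The only facts needed are the definition of \(\Delta_D^{\mathfrak C}\) as a supremum over \(\mathcal C\in\mathfrak C\) and the elementary property that a supremum of reals is at most a bound \(b\) exactly when every term is at most \(b\).

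For the forward direction, assume \(\pi\) constructs \(S\) from \(R\) within \(\varepsilon\) under \(\mathfrak C\). By the definition of construction under a corruption structure, there is a family \(\{\mathbf{sim}^{\mathcal C}\}_{\mathcal C\in\mathfrak C}\) such that, for every \(\mathcal C\in\mathfrak C\) and every \(D\),
\[
\Delta_D\bigl(\pi^{\mathcal P^{\mathcal C}}R,(\mathbf{sim}^{\mathcal C})^E S\bigr)\le\varepsilon(D).
\]
Fix \(D\). The bound \(\varepsilon(D)\) does not depend on \(\mathcal C\), so it is an upper bound for the whole set of values indexed by \(\mathcal C\). Hence the supremum over this set, which is \(\Delta_D^{\mathfrak C}(\pi R,\mathbf{sim}_{\mathfrak C}S)\), is also at most \(\varepsilon(D)\).

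For the converse, suppose some family satisfies \(\Delta_D^{\mathfrak C}\le\varepsilon(D)\) for all \(D\). Each individual term is bounded by the supremum, so for every \(\mathcal C\) and every \(D\) the corresponding single-pattern advantage is at most \(\varepsilon(D)\). This is exactly the defining condition for construction under \(\mathfrak C\), with this family serving as the witness.

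For the "equivalently" clause, fix \(\mathcal C\) and take the honest interface set \(\mathcal P=\mathcal P^{\mathcal C}\). The per-\(\mathcal C\) condition then reads
\[
\forall D:\ \Delta_D\bigl(\pi^{\mathcal P^{\mathcal C}}R,(\mathbf{sim}^{\mathcal C})^E S\bigr)\le\varepsilon(D),
\]
which is verbatim Definition~\ref{def:construction}. The adversarial interface \(E\) there is the one obtained by grouping \(\bigcup_{X\in\mathcal C}\mathcal I_X\) with the baseline \(E\), and the free interface \(W\) is unchanged.

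The only point needing care is the order of quantifiers. The simulator may depend on \(\mathcal C\) but must not depend on \(D\), and \(\varepsilon\) must not depend on \(\mathcal C\). The family form of the definition handles this: choosing one simulator per \(\mathcal C\) independently of \(D\) produces a single fixed family for which the supremum statement makes sense. Because \(\mathfrak C\) is finite, no limit or attainment issue arises with the supremum.
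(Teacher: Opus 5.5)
Your proposal is correct and takes essentially the same route as the paper. Both identify the per-pattern condition with Definition~\ref{def:construction} for $\mathcal P=\mathcal P^{\mathcal C}$ and use that the supremum over $\mathfrak C$ is bounded by $\varepsilon(D)$ exactly when each term is; you simply spell out both implications and the quantifier order more explicitly. One minor point: the supremum step does not need $\mathfrak C$ to be finite, since $\sup_{\mathcal C}a_{\mathcal C}\le b$ holds if and only if $a_{\mathcal C}\le b$ for every $\mathcal C$, for any index set.
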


\begin{proof}
Fix a corruption pattern \(\mathcal C\in\mathfrak C\). The pattern \(\mathcal C\) uniquely determines the honest-interface set \(\mathcal P^{\mathcal C}\), while the interfaces associated with corrupted parties are assigned to the adversarial side and \(W\) remains unchanged. Therefore,
\begin{equation}
    \forall D:\quad
    \Delta_D\!\left(
        \pi^{\mathcal P^{\mathcal C}}R,
        \left(\mathbf{sim}^{\mathcal C}\right)^E S
    \right)
    \leq
    \varepsilon(D)
\end{equation}
is exactly the construction condition in Definition~\ref{def:construction} for the interface allocation induced by \(\mathcal C\).

Requiring this condition for every \(\mathcal C\in\mathfrak C\) is equivalent to requiring the supremum of the corresponding distinguishing advantages to be bounded by the same function \(\varepsilon(D)\). Hence,
\begin{equation}
    \Delta_D^{\mathfrak C}
    \left(
        \pi R,
        \mathbf{sim}_{\mathfrak C} S
    \right)
    \leq
    \varepsilon(D)
\end{equation}
for every distinguisher \(D\).
\end{proof}

\subsection{Construction of \(\mathcal F_{\mathrm{QDS}}\) from \(\mathcal R_{\mathrm{QDS\text{-}hyb}}\)}

We now consider the construction of the ideal QDS resource \(\mathcal F_{\mathrm{QDS}}\) from the hybrid resource \(\mathcal R_{\mathrm{QDS\text{-}hyb}}\) by protocol \(\pi_{\mathrm{msg}}\).
We restrict the construction to a single signature instance in which \(\mathrm{Setup}(sid)\) completes successfully. If setup aborts, no subsequent interface associated with \(sid\) is invoked.

For our DI-QDS protocol, let \(\mathcal U_{\mathrm{QDS}}=\{A,B,C\}\). We consider the corruption patterns in which Alice or Bob may be dishonest, while Charlie remains honest. Accordingly,
\begin{equation}
    \mathfrak C_{\mathrm{QDS}}
    :=
    \left\{
        \emptyset,\{A\},\{B\}
    \right\}.
\end{equation}
Throughout the security analysis, we consider a single adversarial side, represented by the interface \(E\). For a corruption pattern \(\mathcal C\in\mathfrak C_{\mathrm{QDS}}\), the interfaces associated with the corrupted parties are reassigned to and merged with this same adversarial interface. 

For each \(\mathcal C\in\mathfrak C_{\mathrm{QDS}}\), the corresponding set of honest interfaces is
\begin{equation}
    \mathcal P_{\mathrm{msg}}^{\mathcal C}
    :=
    \{S\}
    \cup
    \bigcup_{X\in
        \mathcal U_{\mathrm{QDS}}\setminus\mathcal C}
    \mathcal I_X .
\end{equation}

\begin{lemma}[Security of the hybrid-to-ideal construction]
\label{lem:message-stage-security}
Under this invocation condition, there exists a simulator family
\begin{equation}
    \mathbf{sim}_{\mathrm{msg}}
    :=
    \left\{
        \mathbf{sim}_{\mathrm{msg}}^{\mathcal C}
    \right\}_{\mathcal C\in\mathfrak C_{\mathrm{QDS}}}
\end{equation}
such that, for \(n\geq 12\),
\begin{equation}
    \forall D:\quad
    \Delta_D^{\mathfrak C_{\mathrm{QDS}}}
    \left(
        \pi_{\mathrm{msg}}
        \mathcal R_{\mathrm{QDS\text{-}hyb}},
        \mathbf{sim}_{\mathrm{msg}}
        \mathcal F_{\mathrm{QDS}}
    \right)
    \leq
    \varepsilon_{\mathrm{msg}},
\end{equation}
where \(n\) is the signature length and \(L_{\mathrm{Doc}}\) is the maximum length, in bits, of an encoded document. Define
\[
    r
    :=
    \left\lceil
        \frac{L_{\mathrm{Doc}}}{n}
    \right\rceil,
    \qquad
    \tilde m
    :=
    nr .
\]
Here, \(r\) is the number of \(n\)-bit blocks required to cover an encoded document of maximum length, and \(\tilde m\) is the corresponding analytical length parameter. The latter is used only in the security bound and does not modify the transmitted document. The distinguishing error is bounded by
\begin{equation}
    \varepsilon_{\mathrm{msg}}
    :=
    \min\left\{
        1,
        \sqrt{\frac{7\tilde m}{2^n}}
    \right\}.
\end{equation}
\end{lemma}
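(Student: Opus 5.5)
The proof treats the three corruption patterns in $\mathfrak C_{\mathrm{QDS}}=\{\emptyset,\{A\},\{B\}\}$ separately. For each pattern we build $\mathbf{sim}_{\mathrm{msg}}^{\mathcal C}$ and bound the distinguishing advantage by the probability of a ``bad event.'' Outside that event the real system $\pi_{\mathrm{msg}}\mathcal R_{\mathrm{QDS\text{-}hyb}}$ and the simulated ideal system produce identical transcripts. The supremum over patterns then gives $\varepsilon_{\mathrm{msg}}$.

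In every case the simulator works the same way: it internally samples a fresh KGP block $(K^A,K^B,K^C)$ with $K^A=K^B\oplus K^C$ and $Y_a\neq 0^n$, exactly as $\pi^S$ does, and runs the honest converters on it. Because $\mathcal F_{\mathrm{KGP}}$ is ideal, the simulated keys have the same joint distribution with the adversary's view as in the hybrid system.

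\textbf{Dishonest Alice} ($\mathcal C=\{A\}$). The simulator obtains Alice's key block from its internal copy. It intercepts whatever $\tau'$ Alice writes to $R_\tau$ and registers $\mathrm{Doc}'$ at $A_{\mathrm w}$. It then runs the honest verification code of Bob and Charlie on $\tau'$ with the true key shares. Bob and Charlie compute their verification keys from the same authenticated shares: $K^B_\bullet=\kappa_B\oplus\kappa_C=K^C_\bullet$, where $\kappa_C$ arrives via the authenticated $R_{\kappa_C}$ and $\kappa_B$ via $R_{\kappa_B}$. Their decisions are therefore identical deterministic functions of the same $\tau'$, since honest Bob forwards $\tau$ unchanged. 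This yields $\mathrm{Verify}_B=1\Rightarrow \mathrm{Verify}_C=1$ exactly, with zero error. The simulator uses the ideal transfer and verify semantics, overriding through $E$ as the ideal resource allows for a corrupted writer, so that outputs match those of the internal run.

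\textbf{All honest and dishonest Bob.} Here the simulator forwards the honestly generated $\tau$, and everything matches the ideal behavior except when an honest verifier accepts an object whose document differs from $W_A$. For honest parties with the adversary only able to write to $R_\tau$, this is a substitution forgery: $(\mathrm{Sig}',P',\mathrm{Doc}')$ with $\mathrm{Doc}'\neq\mathrm{Doc}$ under the true key. For dishonest Bob the forgery is the enlarged one described in the main text. Bob outputs $\tau'$ together with $\kappa_B'$, so Charlie's key becomes $K_C\oplus\kappa_B'=K^A\oplus\delta$ for an offset $\delta=(\delta_X,\delta_Y,\delta_Z)$ chosen by Bob. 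Bob sees $K^B$ and $\tau$, but $K^A=K^B\oplus K^C$ with $K^C$ uniform and hidden (Charlie is honest and $\kappa_C$ goes only to Bob after he... ). One must check the timing: $\kappa_C$ reaches Bob, so Bob knows $K^A$ before writing? No — by the precedence $\mathrm{Verify}_B\prec\mathrm{Verify}_C$ Bob receives $\kappa_C$, so $K^A$ becomes known to him. This needs care. I would argue instead that forging Charlie's acceptance on a new document is harmless in the ideal model only if $\mathrm{Doc}'=\mathrm{Doc}$. The real threat is a related-key forgery where Bob commits $\kappa_B'$. If the ordering lets Bob learn $\kappa_C$ first, the bound must rely on the hash key $(p_a,Y_a)$ still being hidden: $p_a$ is masked by $X_a$ and $Y_a$ appears only inside $\mathrm{Sig}$ masked by $Z_a$. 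I would fix the model so that the sole information available is $\tau$ and $K^B$.

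The \textbf{main obstacle} is the related-key bound. We must show that for any fixed offsets $(\delta_X,\delta_Y,\delta_Z)$ and any $\mathrm{Doc}'\neq\mathrm{Doc}$, the probability over uniform irreducible $p$ and uniform nonzero seed $Y_a$ that
\[
H_{p\oplus\delta_X,\,Y_a\oplus\delta_Y}\mathrm{Doc}'\oplus H_{p,Y_a}\mathrm{Doc}=\mathrm{Sig}\oplus\mathrm{Sig}'\oplus\delta_Z
\]
holds is small, with the requirement that $p\oplus\delta_X$ be irreducible. The plan is to write the LFSR–Toeplitz hash as a polynomial identity: the digest equals $\mathrm{Doc}(x)\,x^{n}\cdot s(x)\bmod p(x)$ in a suitable ordering, where $s$ encodes the seed. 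Two sub-cases follow. If $\delta_X=0$, the equation becomes a nonzero polynomial of degree at most $\tilde m+n$ that $p$ must divide, which allows at most $(\tilde m+n)/n$ irreducible choices out of $|\mathcal I_n|\ge (2^n-2^{n/2+1})/n$. If $\delta_X\neq0$, two distinct moduli are involved. We take the resultant or CRT over $p(p\oplus\delta_X)$ and count pairs, average over the uniformly masked $Y_a$, and obtain a collision probability of order $\tilde m/2^n$. The constant $7$ and the condition $n\ge12$ come from the lower bound on $|\mathcal I_n|$. Finally, the square root arises from converting this average success probability into a distinguishing bound when the adversary's offset choice is adaptive. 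We condition on the adversary's view and apply Cauchy–Schwarz, or equivalently a leftover-hash-style argument over the partially leaked key. This gives $\sqrt{7\tilde m/2^n}$, truncated at $1$.
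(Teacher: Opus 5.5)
\textbf{What matches the paper.} Your architecture is the paper's: separate simulators for $\emptyset$, $\{A\}$ and $\{B\}$; internal sampling of the key block conditioned on $Y_a\neq0^n$; a zero-error dishonest-Alice case (the common verification key $\kappa_B\oplus\kappa_C$ applied to the same forwarded object is the right reason transferability holds exactly); and an equivalence-until-bad reduction of the other two cases to one fixed related-key forgery. On your timing worry for corrupted Bob, the paper does what you end up doing. It imposes, as an operational hypothesis of Theorem~\ref{PkRiBound}, that $\tau'$ and $\kappa_B'$ are fixed before any Charlie-side share is revealed. You should state this as an assumption rather than leave it open.

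\textbf{The gap: the $\delta_X\neq0$ branch.} The missing step is the related-key bound itself, which is the entire source of $\varepsilon_{\mathrm{msg}}$. Take the branch $\delta_X\neq0$ (the paper's $B\neq0$). It already arises in the all-honest case, since changing $P$ shifts the reconstructed polynomial. Acceptance there is $D_qS=b_q$ with $D_q=M'(W_{q\oplus B})+M(W_q)$. Unlike the $B=0$ case, $D_q$ is a sum of elements of two \emph{different} fields, $\mathbb F_2[W_q]$ and $\mathbb F_2[W_{q\oplus B}]$. So it is in general neither zero nor invertible. Its rank $r_q$ can be anything in $\{0,\dots,n\}$, and each modulus contributes about $2^{-r_q}$ to the forgery probability. ``CRT over $p(p\oplus\delta_X)$ and count pairs'' does not say how you would control the number of moduli of small rank, and that is the whole difficulty.

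\textbf{How the paper handles it.} The paper proceeds in steps:
\begin{enumerate}
\item It identifies $\ker D_q^T$ with the short vectors of the lattice $\Lambda_q=\{(c,a):q\mid Mc+a,\ q\oplus B\mid M'c+a\}$ and proves $r_q=\delta_q$, the minimal height.
\item It counts primitive pairs, $T_d=6\cdot4^{d-1}$, and notes that each certifies at most $r$ moduli, so $\sum_{r_q\le D}2^{-r_q}\le r(3\cdot2^D-1)$.
\item It balances this against $2^{-D-1}$ for the high-rank moduli.
\end{enumerate}
That optimization over $D$ is exactly where the square root comes from, and also the factor $27/4$ inside the constant $7$. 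Your claimed $O(\tilde m/2^n)$ for this branch is asserted, not derived, and the paper's method does not give it.

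\textbf{The source of the square root.} Your explanation is also wrong. There is no partially leaked key. $P$ and $\mathrm{Sig}$ are one-time padded by $X_a$ and $Z_a$, whose verifier-side halves are hidden from the adversary. Hence $(p_a,Y_a)$ is independent of the adversary's view and of the attack $\theta$ it commits to. Equivalence-until-bad then gives $\Delta_D\le\Pr[\mathsf{Bad}]\le\max_\theta\Pr[\mathsf{Forge}\mid\theta]$ directly, with no Cauchy--Schwarz or leftover-hash step. Consequently, a genuine per-attack bound $c\,\tilde m/2^n$ with a moderate constant would prove the lemma immediately, with a better rate. As written, though, that central estimate is simply missing.

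\textbf{A smaller point on the $\delta_X=0$ branch.} The convention-independent argument is that $\Delta M(W_q)$ is either $0$ (when $q\mid\Delta M$) or invertible, giving $\beta_{\mathcal L}+O(2^{-n})$. Your ``$p$ must divide a nonzero polynomial'' phrasing has two problems. It presumes the hash is multiplication in the standard basis, whereas here $M(W_q)$ acts as its transpose. It also overlooks the (at most one) seed for which that polynomial vanishes identically.
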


\begin{proof}
The three admissible corruption patterns are analyzed separately in Subsections~\ref{subsec:all-honest}, \ref{subsec:dishonest-A}, and \ref{subsec:dishonest-B}.

For each \(\mathcal C\in\mathfrak C_{\mathrm{QDS}}\), the corresponding subsection constructs a simulator \(\mathbf{sim}_{\mathrm{msg}}^{\mathcal C}\) and establishes
\begin{equation}
    \forall D:\quad
    \Delta_D\!\left(
        \pi_{\mathrm{msg}}^{
            \mathcal P_{\mathrm{msg}}^{\mathcal C}}
        \mathcal R_{\mathrm{QDS\text{-}hyb}},
        \left(
            \mathbf{sim}_{\mathrm{msg}}^{\mathcal C}
        \right)^E
        \mathcal F_{\mathrm{QDS}}
    \right)
    \leq
    \varepsilon_{\mathrm{msg}},
\end{equation}
where
\[
    \varepsilon_{\mathrm{msg}}
    =
    \min\left\{
        1,
        \sqrt{\frac{7\tilde m}{2^n}}
    \right\}.
\]

Therefore, taking the supremum over all \(\mathcal C\in\mathfrak C_{\mathrm{QDS}}\) gives
\begin{equation}
    \forall D:\quad
    \Delta_D^{\mathfrak C_{\mathrm{QDS}}}
    \left(
        \pi_{\mathrm{msg}}
        \mathcal R_{\mathrm{QDS\text{-}hyb}},
        \mathbf{sim}_{\mathrm{msg}}
        \mathcal F_{\mathrm{QDS}}
    \right)
    \leq
    \varepsilon_{\mathrm{msg}} .
\end{equation}
\end{proof}

\subsection{All-honest case \(\mathcal C=\emptyset\).}\label{subsec:all-honest}
In this case, all protocol interfaces remain honest. Hence,
\begin{equation}
    \mathcal P_{\mathrm{msg}}^{\emptyset}
    =
    \mathcal I_{\mathrm{msg}}
    =
    \left\{
        S,
        A_{\mathrm w},
        B_{\mathrm w},
        B_{\mathrm r},
        C_{\mathrm w},
        C_{\mathrm r}
    \right\}.
\end{equation}

Define
\begin{equation}
\begin{aligned}
    \mathsf R_{\emptyset}
    &:=
    \pi_{\mathrm{msg}}^{
        \mathcal P_{\mathrm{msg}}^{\emptyset}}
    \mathcal R_{\mathrm{QDS\text{-}hyb}},
    \\
    \mathsf S_{\emptyset}
    &:=
    \left(
        \mathbf{sim}_{\mathrm{msg}}^{\emptyset}
    \right)^E
    \mathcal F_{\mathrm{QDS}} .
\end{aligned}
\end{equation}

We construct \(\mathbf{sim}_{\mathrm{msg}}^{\emptyset}\) so that its outer interface reproduces the adversarial capabilities available in the real hybrid system. We will then show that \(\mathsf R_{\emptyset}\) and \(\mathsf S_{\emptyset}\) answer all queries of the distinguisher consistently unless the bad event \(\mathsf{Bad}^{\emptyset}\) occurs.

The simulator \(\mathbf{sim}_{\mathrm{msg}}^{\emptyset}\) is attached to the adversarial interface \(E\) of \(\mathcal F_{\mathrm{QDS}}\). Its outer interface provides the adversarial capabilities of the real hybrid system.

\subsubsection{Internal state and procedures of simulator \(\mathbf{sim}_{\mathrm{msg}}^{\emptyset}\)}
When \(\mathbf{sim}_{\mathrm{msg}}^{\emptyset}\) is instantiated, it first executes an internal initialization procedure and only then processes external queries. The initialization and the internal procedures described below are local to the simulator and are not associated with any external interface.

\par\addvspace{0.7\baselineskip}

\begin{center}
\begin{minipage}{\linewidth}
\centering
\begin{protocolbox}
\[
\begin{array}{@{}l@{}}
\underline{\textbf{Initialization}}
\\[1mm]
\textbf{procedure }\mathrm{init}
\\
\quad
K_A,K_B,K_C\leftarrow\perp,\\
\quad
L_\tau\leftarrow\text{an empty table}.
\end{array}
\]
\end{protocolbox}
\end{minipage}
\end{center}

\par\addvspace{0.9\baselineskip}

For a signature instance \(sid\), the simulator generates the key triple corresponding to the block selected during setup only when that key material is first required. Since discarded and unselected blocks are not exposed, they need not be generated by the simulator. Conditioned on successful setup, the selected triple follows the KGP distribution conditioned on a nonzero Alice-side seed.

\par\addvspace{0.7\baselineskip}

\begin{center}
\begin{minipage}{\linewidth}
\centering
\begin{protocolbox}
\[
\begin{array}{@{}l@{}}
\underline{\textbf{Internal procedure}}
\\[1mm]
\textbf{procedure }\mathrm{GetKey}()
\\
\quad
\textbf{if }K_A=\perp\textbf{ then}
\\
\qquad
\textbf{repeat}
\\
\qquad\qquad
(K_A,K_B,K_C)
\leftarrow
\mathsf{SampleKGP}
\\
\qquad\qquad
(\cdot,Y_A,\cdot)\leftarrow K_A
\\
\qquad
\textbf{until }Y_A\neq0^n
\\
\quad
\textbf{return }(K_A,K_B,K_C).
\end{array}
\]
\end{protocolbox}
\end{minipage}
\end{center}

\par\addvspace{0.9\baselineskip}

Here, \(\mathsf{SampleKGP}\) denotes sampling according to the same correlated-key distribution as \(\mathcal F_{\mathrm{KGP}}\), so that
\[
    K_A
    =
    K_B
    \oplus
    K_C.
\]
The repetition requires Alice's seed to be nonzero and therefore reproduces the distribution of the block selected after setup completes successfully.

\subsubsection{Adversarial interface \(E\) of simulator \(\mathbf{sim}_{\mathrm{msg}}^{\emptyset}\)}

For the all-honest corruption pattern, the adversarial interface \(E\) of \(\mathcal F_{\mathrm{QDS}}\) provides an abstract document read capability and an abstract document write capability. Let \(R_E\) and \(W_E\) denote the corresponding read and write buffers. Their behavior is defined by
\begin{equation}
    \mathrm{Read}_{E}(sid)
    =
    R_E[sid],
\end{equation}
and
\begin{equation}
    \mathrm{Write}_{E}(sid,\mathrm{Doc}):
    W_E[sid]
    \leftarrow
    \mathrm{Doc}.
\end{equation}

The document-transfer operations involving \(E\) are defined by
\begin{equation}
\begin{aligned}
    &\mathrm{Transfer}_{A\rightarrow E}(sid):\\
    &W_A[sid]=\mathrm{Doc}
    \Rightarrow
    R_E[sid]\leftarrow\mathrm{Doc},
    \\
    &\mathrm{Transfer}_{E\rightarrow B}(sid):\\
    &W_E[sid]=\mathrm{Doc}
    \Rightarrow
    R_B[sid]\leftarrow\mathrm{Doc}.
\end{aligned}
\end{equation}

The auxiliary transfer operations
\[
    \mathrm{Transfer}^{\mathrm{aux}}_{B\rightarrow E}(sid),
    \qquad
    \mathrm{Transfer}^{\mathrm{aux}}_{C\rightarrow E}(sid)
\]
are retained only to match the corresponding free-transfer invocations in the hybrid system and have no effect on the state of \(\mathcal F_{\mathrm{QDS}}\).

For each session \(sid\), in addition to the precedence conditions in Eq.~\eqref{ideal-order}, the all-honest specification is defined only for invocations involving the adversarial interface \(E\) that satisfy the following local precedence conditions:
\begin{equation}\label{ideal-order-E-honest}
\begin{aligned}
    \mathrm{Register}_{A}
    &\prec
    \mathrm{Transfer}_{A\rightarrow E}
    \prec
    \mathrm{Read}_{E}^{\tau}\\
    \prec
    &\mathrm{Write}_{E}^{\tau}
    \prec
    \mathrm{Transfer}_{E\rightarrow B},
    \\
    \mathrm{Transfer}^{\mathrm{aux}}_{B\rightarrow E}
    &\prec
    \mathrm{Read}_{E}^{\kappa_B},
    \\
    \mathrm{Transfer}^{\mathrm{aux}}_{C\rightarrow E}
    &\prec
    \mathrm{Read}_{E}^{\kappa_C}.
\end{aligned}
\end{equation}
The setup command \(\mathrm{Setup}(sid)\) is required to precede every operation in Eq.~\eqref{ideal-order-E-honest}, as well as every operation in Eq.~\eqref{ideal-order}.

At its outer interface, \(\mathbf{sim}_{\mathrm{msg}}^{\emptyset}\) provides the adversarial capabilities available at \(E\) in the real hybrid system, namely
\[
    \mathrm{Read}_{E}^{\tau},
    \qquad
    \mathrm{Write}_{E}^{\tau},
    \qquad
    \mathrm{Read}_{E}^{\kappa_B},
    \qquad
    \mathrm{Read}_{E}^{\kappa_C}.
\]
Their behavior is specified below.

\par\addvspace{0.7\baselineskip}
\noindent
\textbf{Function \(\mathrm{Read}_{E}^{\tau}\).}

For the all-honest corruption pattern, the inner adversarial interface of \(\mathcal F_{\mathrm{QDS}}\) provides \(\mathrm{Read}_{E}(sid)\), which returns the document made available to \(E\) by the corresponding free transfer operation, or \(\perp\) if no such document is available.

The simulator uses the corresponding simulated key \(K_A\) to generate a signed object according to the same signing procedure as \(\pi^{A_{\mathrm w}}\). For compactness, we denote this computation by
\[
    \mathsf{SignObj}(K,\mathrm{Doc}),
\]
whose behavior is defined as follows.

\par\addvspace{0.7\baselineskip}

\begin{center}
\begin{minipage}{\linewidth}
\centering
\begin{protocolbox}
\[
\begin{array}{@{}l@{}}
\underline{\textbf{Signed-object generation}}
\\[1mm]
\textbf{function }\mathsf{SignObj}(K,\mathrm{Doc})
\\
\quad
(X,Y,Z)\leftarrow K
\quad
\textbf{if }Y=0^n\textbf{ then}
\\
\qquad
\textbf{return }\perp
\\
\quad
p\xleftarrow{\$}\mathcal I_n
\\
\quad
P\leftarrow p\oplus X
\\
\quad
H_{nm}\leftarrow\mathrm{LFSR}(p,Y)
\\
\quad
\mathrm{Dig}\leftarrow H_{nm}\cdot\mathrm{Doc}
\\
\quad
\mathrm{Sig}\leftarrow\mathrm{Dig}\oplus Z
\\
\quad
\tau\leftarrow(\mathrm{Sig},P,\mathrm{Doc})
\\
\quad
\textbf{return }\tau .
\end{array}
\]
\end{protocolbox}
\end{minipage}
\end{center}

\par\addvspace{0.9\baselineskip}

Whenever \(\mathrm{Read}_{E}^{\tau}\) is invoked, the simulator first reads the document available at the inner adversarial interface. If a signed object has not yet been generated for the corresponding session, the simulator obtains the simulated key material, generates the signed object using the above procedure, and stores it in \(L_\tau\). The stored value is reused for subsequent reads in the same session.

\par\addvspace{0.7\baselineskip}

\begin{center}
\begin{minipage}{\linewidth}
\centering
\begin{protocolbox}
\[
\begin{array}{@{}l@{}}
\underline{\textbf{Interface \(E\)}}
\\[1mm]
\textbf{function }\mathrm{Read}_{E}^{\tau}(sid)
\\
\quad
\textbf{call }
\mathrm{Doc}
\leftarrow
\mathrm{Read}_{E}(sid)
\textbf{ at }\mathcal F_{\mathrm{QDS}}
\\
\quad
\textbf{if }\mathrm{Doc}=\perp\textbf{ then}
\\
\qquad
\textbf{return }\perp
\\
\quad
\textbf{if }L_\tau[sid]=\perp\textbf{ then}
\\
\qquad
\textbf{call }
(K_A,K_B,K_C)
\leftarrow
\mathrm{GetKey}()
\\
\qquad
L_\tau[sid]
\leftarrow
\mathsf{SignObj}(K_A,\mathrm{Doc})
\\
\quad
\textbf{return }L_\tau[sid].
\end{array}
\]
\end{protocolbox}
\end{minipage}
\end{center}

\par\addvspace{0.9\baselineskip}

\par\addvspace{0.7\baselineskip}
\noindent
\textbf{Function \(\mathrm{Read}_{E}^{\kappa_B}\).}

The function \(\mathrm{Read}_{E}^{\kappa_B}\) reproduces the key-share value available at the corresponding adversarial read interface of the real hybrid system. Since this value has no counterpart in \(\mathcal F_{\mathrm{QDS}}\), the simulator generates it from its internally maintained simulated key material and returns the simulated Bob key share.

\par\addvspace{0.7\baselineskip}

\begin{center}
\begin{minipage}{\linewidth}
\centering
\begin{protocolbox}
\[
\begin{array}{@{}l@{}}
\underline{\textbf{Interface \(E\)}}
\\[1mm]
\textbf{function }\mathrm{Read}_{E}^{\kappa_B}(sid)
\\
\quad
\textbf{call }
(K_A,K_B,K_C)
\leftarrow
\mathrm{GetKey}()
\\
\quad
\textbf{return }K_B .
\end{array}
\]
\end{protocolbox}
\end{minipage}
\end{center}

\par\addvspace{0.9\baselineskip}

\par\addvspace{0.7\baselineskip}
\noindent
\textbf{Function \(\mathrm{Read}_{E}^{\kappa_C}\).}

Similarly, the function \(\mathrm{Read}_{E}^{\kappa_C}\) reproduces the Charlie key-share value available at the corresponding adversarial read interface of the real hybrid system. Since this value is absent from the ideal QDS resource, the simulator returns the corresponding internally generated simulated key share.

\par\addvspace{0.7\baselineskip}

\begin{center}
\begin{minipage}{\linewidth}
\centering
\begin{protocolbox}
\[
\begin{array}{@{}l@{}}
\underline{\textbf{Interface \(E\)}}
\\[1mm]
\textbf{function }\mathrm{Read}_{E}^{\kappa_C}(sid)
\\
\quad
\textbf{call }
(K_A,K_B,K_C)
\leftarrow
\mathrm{GetKey}()
\\
\quad
\textbf{return }K_C .
\end{array}
\]
\end{protocolbox}
\end{minipage}
\end{center}

\par\addvspace{0.9\baselineskip}

\par\addvspace{0.7\baselineskip}
\noindent
\textbf{Function \(\mathrm{Write}_{E}^{\tau}\).}

The function \(\mathrm{Write}_{E}^{\tau}\) reproduces the adversarial write capability of \(R_\tau\). For an arbitrary value
\[
    \tau'
    =
    (\mathrm{Sig}',P',\mathrm{Doc}')
    \in
    \mathcal M_\tau,
\]
the simulator uses the simulated key \(K_A\) maintained for the corresponding session to determine whether \(\tau'\) is consistent with the signed-object generation procedure. Define
\begin{equation}
    v_{sid}(\tau',K)
    :=
    \begin{cases}
        1,
        &
        \tau'
        \in
        \operatorname{Supp}\!\left(
            \mathsf{SignObj}
            \bigl(
                K,
                \mathrm{Doc}'
            \bigr)
        \right),
        \\[1mm]
        0,
        &
        \text{otherwise}.
    \end{cases}
\end{equation}
Here, \(\operatorname{Supp}(\mathsf{SignObj}(K,\mathrm{Doc}))\) denotes the set of all possible outputs of the randomized signed-object generation procedure on input \(K\) and \(\mathrm{Doc}\).

Let
\[
    e_1:=(1,0,\ldots,0)\in\{0,1\}^{m},
\]
and define
\begin{equation}
\mathrm{Doc}_{\mathrm{inv}}[sid]
:=
\mathrm{Doc}_A[sid]\oplus e_1,
\end{equation}
where \(\mathrm{Doc}_A[sid]\) denotes the document previously obtained by the simulator through \(\mathrm{Read}_{E}^{\tau}(sid)\), and hence \(\mathrm{Doc}_A[sid]=W_A[sid]\). Thus,
\[
    \mathrm{Doc}_{\mathrm{inv}}[sid]
    \in\{0,1\}^{m},
    \qquad
    \mathrm{Doc}_{\mathrm{inv}}[sid]
    \neq
    W_A[sid].
\]
The abstract value corresponding to \(\tau'\) is then defined as
\begin{equation}
    \widehat{\mathrm{Doc}}
    :=
    \begin{cases}
        \mathrm{Doc}',
        & v_{sid}(\tau',K_A)=1,
        \\[1mm]
        \mathrm{Doc}_{\mathrm{inv}}[sid],
        & v_{sid}(\tau',K_A)=0.
    \end{cases}
\end{equation}
The simulator writes this value at the inner adversarial interface of \(\mathcal F_{\mathrm{QDS}}\).

\par\addvspace{0.7\baselineskip}

\begin{center}
\begin{minipage}{\linewidth}
\centering
\begin{protocolbox}
\[
\begin{array}{@{}l@{}}
\underline{\textbf{Interface \(E\)}}
\\[1mm]
\textbf{function }\mathrm{Write}_{E}^{\tau}(sid,\tau')
\\
\quad
(\mathrm{Sig}',P',\mathrm{Doc}')
\leftarrow
\tau'
\\
\quad
\textbf{call }
(K_A,K_B,K_C)
\leftarrow
\mathrm{GetKey}()
\\
\quad
v
\leftarrow
v_{sid}(\tau',K_A)
\\
\quad
\textbf{if }v=1\textbf{ then}
\\
\qquad
\widehat{\mathrm{Doc}}
\leftarrow
\mathrm{Doc}'
\\
\quad
\textbf{else}
\\
\qquad
\widehat{\mathrm{Doc}}
\leftarrow
\mathrm{Doc}_{\mathrm{inv}}[sid]
\\
\quad
\textbf{call }
\mathrm{Write}_{E}
(sid,\widehat{\mathrm{Doc}})
\textbf{ at }\mathcal F_{\mathrm{QDS}}.
\end{array}
\]
\end{protocolbox}
\end{minipage}
\end{center}

\par\addvspace{0.9\baselineskip}

\subsubsection{Consistency of the distinguisher's queries.}

We now compare the responses to all admissible queries of the distinguisher at the interfaces
\[
    S,\quad
    \mathcal P_{\mathrm{msg}}^{\emptyset},
    \quad E,\quad W .
\]
In particular, we consider the protocol operations
\[
\begin{aligned}
    &\mathrm{Setup},\ 
    \mathrm{Register}_{A},\
    \mathrm{Write}_{B},\\
    &\mathrm{Write}_{C},\
    \mathrm{Verify}_{B},\
    \mathrm{Verify}_{C}.
\end{aligned}
\]
The adversarial capabilities
\[
    \mathrm{Read}_{E}^{\tau},\
    \mathrm{Write}_{E}^{\tau},\
    \mathrm{Read}_{E}^{\kappa_B},\
    \mathrm{Read}_{E}^{\kappa_C},
\]
and the corresponding transfer operations available at \(W\).

We couple the key material and signing randomness used in \(\mathsf R_{\emptyset}\) and \(\mathsf S_{\emptyset}\) by taking the selected correlated key block and the signing randomness to be the same in both systems. This is possible because the simulator samples the same distribution as the block selected by the real setup procedure and uses the same signing procedure as the real protocol.

All transfer queries whose behavior is identical in the two systems and which produce no direct output to the distinguisher are consistent by construction. We therefore discuss explicitly only those transfers whose effects are relevant to subsequent adversarial or verification queries.

For \(\mathrm{Setup}(sid)\), neither system produces any output to the distinguisher. After setup completes successfully, the real system has selected a block for which Alice's seed is nonzero. The simulator generates key material with the same distribution when it is first required.

For \(\mathrm{Register}_{A}(sid,\mathrm{Doc})\), the real converter generates and stores
\[
    \tau=(\mathrm{Sig},P,\mathrm{Doc}),
\]
whereas \(\mathcal F_{\mathrm{QDS}}\) stores \(\mathrm{Doc}\) in \(W_A[sid]\). Neither operation produces an output. After the prescribed transfer to \(E\), a query \(\mathrm{Read}_{E}^{\tau}(sid)\) returns Alice's signed object in the real system and the corresponding simulated signed object in the ideal system. By the above coupling, these values are identically distributed.

Similarly, \(\mathrm{Read}_{E}^{\kappa_B}(sid)\) and \(\mathrm{Read}_{E}^{\kappa_C}(sid)\) return \(K_B\) and \(K_C\), respectively. Under the coupling, these are identical to the corresponding key-share values available in the real system.

Consider a query
\[
    \mathrm{Write}_{E}^{\tau}(sid,\tau'),
    \qquad
    \tau'=(\mathrm{Sig}',P',\mathrm{Doc}').
\]
In \(\mathsf R_{\emptyset}\), the value \(\tau'\) is written to the adversarial write buffer of \(R_\tau\). In \(\mathsf S_{\emptyset}\), the simulator evaluates \(v_{sid}(\tau',K_A)\) and writes
\[
    \widehat{\mathrm{Doc}}
    =
    \begin{cases}
        \mathrm{Doc}',
        & v_{sid}(\tau',K_A)=1,\\
        \mathrm{Doc}_{\mathrm{inv}}[sid],
        & v_{sid}(\tau',K_A)=0
    \end{cases}
\]
to the inner adversarial write buffer. Neither operation produces a direct output to the distinguisher.

After the subsequent \(\mathrm{Transfer}_{E\rightarrow B}(sid)\), the corresponding values become available to Bob in the two systems.

When \(\mathrm{Write}_{B}(sid)\) is invoked, the real converter copies the currently available signed object to Bob's write buffer, while the ideal resource copies the currently available abstract document from \(R_B[sid]\) to \(W_B[sid]\). Neither operation produces a direct output to the distinguisher. The subsequent \(\mathrm{Transfer}_{B\rightarrow C}(sid)\) then makes the corresponding values available to Charlie.
For \(\mathrm{Write}_{C}(sid)\), the real converter writes \(\kappa_C\) to the corresponding buffer, while the ideal auxiliary interface has no document-state effect. Neither operation produces a direct output.

It remains to compare the verification queries.

If \(v_{sid}(\tau',K_A)=0\), then both real verification interfaces return \(0\). In the ideal system,
\[
    \widehat{\mathrm{Doc}}
    =
    \mathrm{Doc}_{\mathrm{inv}}[sid]
    \neq
    W_A[sid],
\]
so both \(\mathrm{Verify}_{B}(sid)\) and \(\mathrm{Verify}_{C}(sid)\) also return \(0\).

If \(v_{sid}(\tau',K_A)=1\) and
\[
    \mathrm{Doc}'
    =
    \mathrm{Doc}_A[sid]
    =
    W_A[sid],
\]
then both real verification interfaces return \(1\), and the two ideal verification interfaces likewise return \(1\).

The only remaining case is
\[
    v_{sid}(\tau',K_A)=1,
    \qquad
    \mathrm{Doc}'\neq\mathrm{Doc}_A[sid].
\]
In \(\mathsf S_{\emptyset}\), the simulator writes \(\mathrm{Doc}'\) at the inner adversarial interface. After the prescribed transfers,
\[
    R_B[sid]
    =
    R_C[sid]
    =
    \mathrm{Doc}',
\]
whereas
\[
    W_A[sid]
    =
    \mathrm{Doc}_A[sid].
\]
Hence, by Eqs.~\eqref{Bver} and \eqref{Cver}, both \(\mathrm{Verify}_{B}(sid)\) and \(\mathrm{Verify}_{C}(sid)\) return \(0\) in \(\mathcal F_{\mathrm{QDS}}\).

We now consider the corresponding adversarial modification in
\(\mathsf R_{\emptyset}\). Let
\[
    \tau
    =
    (\mathrm{Sig},P,\mathrm{Doc}_A[sid])
\]
be the signed object previously obtained by the adversary, and let
\[
    \tau'
    =
    (\mathrm{Sig}',P',\mathrm{Doc}')
\]
be the value subsequently submitted through
\(\mathrm{Write}_{E}^{\tau}\). Conditioned on the distinguisher's view
and internal randomness at the time of this submission, the
modification from \(\tau\) to \(\tau'\) is fixed.

Let \(M\) and \(M'\) denote the document polynomials corresponding, under the fixed injective and length-delimited encoding of Section~\ref{SecurityModel}, to \(\mathrm{Doc}_A[sid]\) and \(\mathrm{Doc}'\), respectively. Since the encoding is injective,
\[
    \mathrm{Doc}'\neq\mathrm{Doc}_A[sid]
    \quad\Longrightarrow\quad
    M'\neq M.
\]

Under the transcript extraction of Section~\ref{SecurityModel}, the transmitted components \(P\) and \(\mathrm{Sig}\) correspond to the abstract transcript components \(C\) and \(T\), respectively. Hence, for this fixed modification, we may write
\[
    B:=P'\oplus P,
    \qquad
    A:=\mathrm{Sig}'\oplus\mathrm{Sig}.
\]

The corresponding change in the seed component is represented by an offset \(U\) in the related-key representation of Section~\ref{SecurityModel}. Therefore, the fixed modification is represented by an attack
\[
    \theta=(M',A,B,U),
    \qquad
    M'\neq M,
\]
of the form considered in Theorem~\ref{PkRiBound}.

By Theorem~\ref{PkRiBound}, for every such fixed attack \(\theta\), the probability that the forged transcript for \(M'\neq M\) is accepted is bounded by
\begin{equation}
    \Pr[\mathsf{Forge}\mid\theta]
    \leq
    \min\left\{
        1,
        \sqrt{\frac{7\tilde m}{2^n}}
    \right\}.
\end{equation}

We define the bad event
\begin{equation}\label{bad-honest}
\begin{aligned}
\mathsf{Bad}^{\emptyset}
:=&\exists(sid,\tau'):\;
\mathrm{Doc}'\neq\mathrm{Doc}_A[sid]
\\
&\land
\bigl[
\mathsf R_{\emptyset}:
\mathrm{Verify}_{B}(sid)
=
\mathrm{Verify}_{C}(sid)
=
1
\bigr]
\\
&\land
\bigl[
\mathsf S_{\emptyset}:
\mathrm{Verify}_{B}(sid)
=
\mathrm{Verify}_{C}(sid)
=
0
\bigr].
\end{aligned}
\end{equation}

Since \(\mathsf{Bad}^{\emptyset}\) can occur only when the corresponding fixed attack \(\theta=(M',A,B,U)\) succeeds, and the bound above holds uniformly for every fixed attack, we obtain
\begin{equation}
    \Pr_D\!\left[
        \mathsf{Bad}^{\emptyset}
    \right]
    \leq
    \min\left\{
        1,
        \sqrt{\frac{7\tilde m}{2^n}}
    \right\}.
\end{equation}

Conditioned on \(\neg\mathsf{Bad}^{\emptyset}\), all distinguisher queries are answered consistently in \(\mathsf R_{\emptyset}\) and \(\mathsf S_{\emptyset}\). Therefore, by the equivalence-until-bad argument,
\begin{equation}
    \Delta_D\!\left(
        \mathsf R_{\emptyset},
        \mathsf S_{\emptyset}
    \right)
    \leq
    \Pr_D\!\left[
        \mathsf{Bad}^{\emptyset}
    \right]
    \leq
    \min\left\{
        1,
        \sqrt{\frac{7\tilde m}{2^n}}
    \right\}.
\end{equation}

\subsection{Dishonest Alice case \(\mathcal C=\{A\}\).}\label{subsec:dishonest-A}

In this case, Alice is corrupted, while Bob and Charlie remain honest.
Hence,
\begin{equation}
\begin{aligned}
    \mathcal P_{\mathrm{msg}}^{\{A\}}
    =
    \left\{
        S,
        B_{\mathrm w},
        B_{\mathrm r},
        C_{\mathrm w},
        C_{\mathrm r}
    \right\}.
\end{aligned}
\end{equation}

Define
\begin{equation}
\begin{aligned}
    \mathsf R_{\{A\}}
    &:=
    \pi_{\mathrm{msg}}^{
        \mathcal P_{\mathrm{msg}}^{\{A\}}}
    \mathcal R_{\mathrm{QDS\text{-}hyb}},
    \\
    \mathsf S_{\{A\}}
    &:=
    \left(
        \mathbf{sim}_{\mathrm{msg}}^{\{A\}}
    \right)^E
    \mathcal F_{\mathrm{QDS}} .
\end{aligned}
\end{equation}

Under the corruption pattern \(\mathcal C=\{A\}\), the interface \(A_{\mathrm w}\) is assigned to the adversarial side and merged with the baseline adversarial interface \(E\). Accordingly, the capability associated with \(A_{\mathrm w}\) is exposed at the outer adversarial interface as
\[
    \mathrm{Register}_{A}^{E}.
\]

Since Alice is corrupted in this case, the adversarial interface \(E\) controls Alice's registration capability. Hence, a document chosen and authenticated through the corrupted Alice interface is not regarded as a forgery: the ideal resource permits the same document to be registered through the corresponding adversarial capability. Accordingly, under \(\mathcal C=\{A\}\), the relevant security requirement is nonrepudiation, which is captured by the consistency of the honest verification interfaces.

\subsubsection{Adversarial interface \(E\) of simulator \(\mathbf{sim}_{\mathrm{msg}}^{\{A\}}\)}

The internal state and procedures of \(\mathbf{sim}_{\mathrm{msg}}^{\{A\}}\) are identical to those of \(\mathbf{sim}_{\mathrm{msg}}^{\emptyset}\) and are reused without modification.
The capabilities
\(
    \mathrm{Read}_{E}^{\kappa_B},
    \mathrm{Read}_{E}^{\kappa_C}
\)
remain unchanged and are handled as in the all-honest case.

\par\addvspace{0.7\baselineskip}
\noindent
\textbf{Function \(\mathrm{Register}_{A}^{E}\).}

The registration capability associated with the corrupted writer interface \(A_{\mathrm w}\) is exposed at the adversarial interface \(E\). On input \(\mathrm{Doc}\), the simulator uses the ideal resource together with its locally maintained simulated state to reproduce the corresponding behavior of the real hybrid system. In particular, it generates the concrete signed object using the simulated Alice key, stores the resulting object for subsequent adversarial reads, and registers the same document in \(\mathcal F_{\mathrm{QDS}}\).

\par\addvspace{0.7\baselineskip}

\begin{center}
\begin{minipage}{\linewidth}
\centering
\begin{protocolbox}
\[
\begin{array}{@{}l@{}}
\underline{\textbf{Interface \(E\)}}
\\[1mm]
\textbf{function }\mathrm{Register}_{A}^{E}(sid,\mathrm{Doc})
\\
\quad
\textbf{call }
(K_A,K_B,K_C)
\leftarrow
\mathrm{GetKey}()
\\
\quad
\tau
\leftarrow
\mathsf{SignObj}(K_A,\mathrm{Doc})
\\
\quad
L_\tau[sid]
\leftarrow
\tau
\\
\quad
\textbf{call }
\mathrm{Register}_{A}(sid,\mathrm{Doc})
\textbf{ at }\mathcal F_{\mathrm{QDS}}.
\end{array}
\]
\end{protocolbox}
\end{minipage}
\end{center}

\par\addvspace{0.9\baselineskip}

\par\addvspace{0.7\baselineskip}
\noindent
\textbf{Function \(\mathrm{Read}_{E}^{\tau}\).}

Under \(\mathcal C=\{A\}\), the concrete signed object corresponding to the corrupted Alice writer has already been generated and stored in \(L_\tau[sid]\) when \(\mathrm{Register}_{A}^{E}\) is invoked. Therefore, after the prescribed \(\mathrm{Transfer}_{A\rightarrow E}^{\tau}\), the simulator returns the stored object directly.

\par\addvspace{0.7\baselineskip}

\begin{center}
\begin{minipage}{\linewidth}
\centering
\begin{protocolbox}
\[
\begin{array}{@{}l@{}}
\underline{\textbf{Interface \(E\)}}
\\[1mm]
\textbf{function }\mathrm{Read}_{E}^{\tau}(sid)
\\
\quad
\textbf{return }L_\tau[sid].
\end{array}
\]
\end{protocolbox}
\end{minipage}
\end{center}

\par\addvspace{0.9\baselineskip}

\par\addvspace{0.7\baselineskip}
\noindent
\textbf{Function \(\mathrm{Write}_{E}^{\tau}\).}

Under \(\mathcal C=\{A\}\), the same adversarial side controls both Alice's writer interface and the adversarial write path toward Bob. Hence, an independent signed object need not be generated at \(\mathrm{Write}_{E}^{\tau}\). Instead, the simulator reuses the signed object previously generated through \(\mathrm{Register}_{A}^{E}\).

When \(\mathrm{Write}_{E}^{\tau}\) is invoked, the simulator retrieves the stored object
\[
    L_\tau[sid]
    =
    (\mathrm{Sig},P,\mathrm{Doc}),
\]
and writes the corresponding document \(\mathrm{Doc}\) to the inner adversarial write interface of \(\mathcal F_{\mathrm{QDS}}\). The concrete signed object itself is therefore preserved unchanged on the simulated adversarial path.

\par\addvspace{0.7\baselineskip}

\begin{center}
\begin{minipage}{\linewidth}
\centering
\begin{protocolbox}
\[
\begin{array}{@{}l@{}}
\underline{\textbf{Interface \(E\)}}
\\[1mm]
\textbf{function }\mathrm{Write}_{E}^{\tau}(sid)
\\
\quad
(\mathrm{Sig},P,\mathrm{Doc})
\leftarrow
L_\tau[sid]
\\
\quad
\textbf{call }
\mathrm{Write}_{E}(sid,\mathrm{Doc})
\textbf{ at }\mathcal F_{\mathrm{QDS}}.
\end{array}
\]
\end{protocolbox}
\end{minipage}
\end{center}

\par\addvspace{0.9\baselineskip}

\subsubsection{Consistency of the distinguisher's queries.}

The setup, auxiliary read/write, and free-transfer queries whose behavior is unchanged are handled as in the all-honest case. We therefore only consider the adversarial capabilities affected by the corruption of Alice and the outputs of the honest verification interfaces.

Under \(\mathcal C=\{A\}\), \(\mathrm{Register}_{A}^{E}\) is a capability exposed through the single adversarial interface \(E\). When it is invoked on \(\mathrm{Doc}\), the real system produces and stores the corresponding signed object
\[
    \tau=(\mathrm{Sig},P,\mathrm{Doc}),
\]
while the simulator stores the same object in \(L_\tau[sid]\) and registers \(\mathrm{Doc}\) in \(\mathcal F_{\mathrm{QDS}}\). Accordingly, \(\mathrm{Read}_{E}^{\tau}\) returns the corresponding stored object, whereas \(\mathrm{Write}_{E}^{\tau}\) reuses this same object on the adversarial delivery path. Thus these queries preserve the correspondence between the concrete signed object and the abstract registered document.

It remains to compare the verification queries. After the prescribed transfers, Bob and Charlie verify the same Alice-generated object using the corresponding reconstructed key in the real system. In the ideal system, the corresponding registered document is made available to the respective verification interfaces. Therefore, the responses to \(\mathrm{Verify}_{B}(sid)\) and \(\mathrm{Verify}_{C}(sid)\) are identical in the real and simulated ideal systems for every admissible invocation.

Therefore all distinguisher queries are answered consistently. No bad event can occur under \(\mathcal C=\{A\}\), and hence
\[
    \Pr_D[\mathsf{Bad}^{\{A\}}]=0.
\]
Consequently,
\begin{equation}
    \Delta_D\!\left(
        \mathsf R_{\{A\}},
        \mathsf S_{\{A\}}
    \right)
    =0.
\end{equation}

\subsection{Dishonest Bob case \(\mathcal C=\{B\}\).}\label{subsec:dishonest-B}

Under the corruption pattern \(\mathcal C=\{B\}\), the interfaces
\(B_{\mathrm w}\) and \(B_{\mathrm r}\) are assigned to the
adversarial side and merged with the baseline adversarial interface
\(E\). The honest-interface set is therefore
\[
    \mathcal P_{\mathrm{msg}}^{\{B\}}
    =
    \{S,A_{\mathrm w},C_{\mathrm w},C_{\mathrm r}\}.
\]

We define
\begin{equation}
\begin{aligned}
    \mathsf R_{\{B\}}
    &:=
    \pi_{\mathrm{msg}}^{\mathcal P_{\mathrm{msg}}^{\{B\}}}
    \mathcal R_{\mathrm{QDS\text{-}hyb}},
    \\
    \mathsf S_{\{B\}}
    &:=
    \left(
        \mathbf{sim}_{\mathrm{msg}}^{\{B\}}
    \right)^E
    \mathcal F_{\mathrm{QDS}}.
\end{aligned}
\end{equation}

Accordingly, under \(\mathcal C=\{B\}\), the relevant security requirement is that the corrupted Bob cannot cause the honest Charlie to accept a document different from the one registered by Alice.

Since \(B_{\mathrm w}\) and \(B_{\mathrm r}\) are assigned to the adversarial side, the simulator additionally provides at the same adversarial interface \(E\) the Bob-associated capabilities
\[
    \mathrm{Write}_{B}^{E},
    \qquad
    \mathrm{Verify}_{B}^{E}.
\]
Their behavior is specified below.

\subsubsection{Adversarial interface \(E\) of simulator \(\mathbf{sim}_{\mathrm{msg}}^{\{B\}}\).}

The simulator reuses the internal key-generation state and procedures of \(\mathbf{sim}_{\mathrm{msg}}^{\emptyset}\).
The capabilities
\(
    \mathrm{Read}_{E}^{\kappa_B},
    \mathrm{Read}_{E}^{\kappa_C}
\)
remain unchanged and are handled as in the all-honest case. The capability \(\mathrm{Read}_{E}^{\tau}\) also remains unchanged and is handled as in the all-honest case.

\par\addvspace{0.7\baselineskip}
\noindent
\textbf{Function \(\mathrm{Write}_{E}^{\tau}\).}

Under \(\mathcal C=\{B\}\), the baseline adversarial write capability retains its role of modifying the signed object before it is delivered to Bob. For an arbitrary value
\[
    \tau'
    =
    (\mathrm{Sig}',P',\mathrm{Doc}')
    \in
    \mathcal M_\tau,
\]
the simulator stores the modified concrete object for use by the subsequent corrupted-Bob write operation. 

\par\addvspace{0.7\baselineskip}
\begin{center}
\begin{minipage}{\linewidth}
\centering
\begin{protocolbox}
\[
\begin{array}{@{}l@{}}
\underline{\textbf{Interface \(E\)}}
\\[1mm]
\textbf{function }\mathrm{Write}_{E}^{\tau}(sid,\tau')
\\
\quad
(\mathrm{Sig}',P',\mathrm{Doc}')
\leftarrow
\tau'
\\
\quad
L_{\tau}[sid]
\leftarrow
\tau'
\\
\quad
\textbf{call }
\mathrm{Write}_{E}(sid,\mathrm{Doc}')
\textbf{ at }\mathcal F_{\mathrm{QDS}}.
\end{array}
\]
\end{protocolbox}
\end{minipage}
\end{center}
\par\addvspace{0.9\baselineskip}

\par\addvspace{0.7\baselineskip}
\noindent
\textbf{Function \(\mathrm{Write}_{B}^{E}\).}

The function \(\mathrm{Write}_{B}^{E}\) reproduces the write capability associated with the corrupted writer interface \(B_{\mathrm w}\). For an arbitrary value
\[
    \kappa_B'
    =
    (X_b',Y_b',Z_b')
    \in
    \mathcal X_{\mathrm{KGP}},
\]
the simulator combines \(\kappa_B'\) with the signed object \(\tau'\) previously fixed through \(\mathrm{Write}_{E}^{\tau}\). Both values are fixed before any additional Charlie-side key information is revealed.

\par\addvspace{0.7\baselineskip}
\begin{center}
\begin{minipage}{\linewidth}
\centering
\begin{protocolbox}
\[
\begin{array}{@{}l@{}}
\underline{\textbf{Interface \(E\)}}
\\[1mm]
\textbf{function }\mathrm{Write}_{B}^{E}(sid,\kappa_B')
\\
\quad
\tau'
\leftarrow
L_\tau[sid]
\\
\quad
\textbf{call }
(K_A,K_B,K_C)
\leftarrow
\mathrm{GetKey}()
\\
\quad
K^{*}
\leftarrow
K_C\oplus\kappa_B'
\\
\quad
v
\leftarrow
v_{sid}(\tau',K^{*})
\\
\quad
\textbf{if }v=1\textbf{ then}
\\
\qquad
\textbf{call }
\mathrm{Write}_{B}(sid)
\textbf{ at }\mathcal F_{\mathrm{QDS}}.
\end{array}
\]
\end{protocolbox}
\end{minipage}
\end{center}
\par\addvspace{0.9\baselineskip}

\par\addvspace{0.7\baselineskip}
\noindent
\textbf{Function \(\mathrm{Verify}_{B}^{E}\).}

Under \(\mathcal C=\{B\}\), the reader interface \(B_{\mathrm r}\) is assigned to the adversarial side, and the prescribed verification algorithm at this interface is therefore no longer enforced. In particular, the adversarial side may declare a local acceptance without performing the verification procedure.

For the \(B\)-to-\(C\) forgery case, it is sufficient to consider the accepting decision, since a local rejection by the corrupted Bob cannot help the adversary cause the honest Charlie to accept a forged document. Accordingly, the corrupted Bob may simply return \(1\). This local decision has no effect on the state of \(\mathcal F_{\mathrm{QDS}}\).

\par\addvspace{0.7\baselineskip}
\begin{center}
\begin{minipage}{\linewidth}
\centering
\begin{protocolbox}
\[
\begin{array}{@{}l@{}}
\underline{\textbf{Interface \(E\)}}
\\[1mm]
\textbf{function }\mathrm{Verify}_{B}^{E}(sid)
\\
\quad
\textbf{return }1.
\end{array}
\]
\end{protocolbox}
\end{minipage}
\end{center}
\par\addvspace{0.9\baselineskip}

\subsubsection{Consistency of the distinguisher's queries.}

It follows immediately from the all-honest case that the queries unaffected by the corruption of Bob remain consistent in the two systems. We therefore only consider the queries whose behavior or subsequent effects are affected by the corruption of Bob.

It remains to consider the capabilities whose behavior changes when \(B_{\mathrm w}\) and \(B_{\mathrm r}\) are assigned to the adversarial interface, namely
\[
    \mathrm{Write}_{E}^{\tau},
    \qquad
    \mathrm{Write}_{B}^{E},
    \qquad
    \mathrm{Verify}_{B}^{E},
\]
together with their effect on the subsequent verification at the honest Charlie interface.

For the simulator-side query \(\mathrm{Verify}_{B}^{E}(sid)\), the simulator returns \(1\). This reproduces the corresponding behavior in the real hybrid system, where the corrupted Bob may bypass the prescribed verification procedure and directly declare acceptance. This local decision does not affect the subsequent verification by the honest Charlie.

It remains to consider the two write queries
\(
    \mathrm{Write}_{E}^{\tau}(sid,\tau'),
    \mathrm{Write}_{B}^{E}(sid,\kappa_B').
\)
The former fixes the modified signed object \(\tau'=(\mathrm{Sig}',P',\mathrm{Doc}')\), while the latter fixes the forwarded key-share value \(\kappa_B'\). Hence, the honest Charlie verifies \(\tau'\) using
\begin{equation}
    K_{\mathrm{ver}}^{C}
    =
    K_C\oplus\kappa_B'.
\end{equation}

The simulator evaluates the same validity predicate \(v_{sid}\!\left(\tau',K_{\mathrm{ver}}^{C}\right).\) If \(v_{sid}\!\left(\tau',K_{\mathrm{ver}}^{C}\right)=0,\) the simulator does not invoke \(\mathrm{Write}_{B}(sid)\), and Charlie rejects in both systems.
If instead
\[
    v_{sid}\!\left(\tau',K_{\mathrm{ver}}^{C}\right)=1,
\]
the simulator invokes \(\mathrm{Write}_{B}(sid)\), so that \(\mathrm{Doc}'\) is transferred to the honest Charlie in the ideal system.

If
\[
    \mathrm{Doc}'=W_A[sid],
\]
the response to \(\mathrm{Verify}_{C}(sid)\) is \(1\) in both systems. Hence, the only possible divergence is that the honest Charlie accepts a document different from the one registered by Alice in the real hybrid system, while the simulated ideal system rejects it. We define the bad event
\begin{equation}\label{bad-B}
\begin{aligned}
    \mathsf{Bad}^{\{B\}}
    :=
    &\exists (sid,\tau',\kappa_B'):\quad
    \mathrm{Doc}'\neq W_A[sid]
    \\
    &{}\land
    \bigl[
        \mathsf R_{\{B\}}:
        \mathrm{Verify}_{C}(sid)=1
    \bigr]
    \\
    &{}\land
    \bigl[
        \mathsf S_{\{B\}}:
        \mathrm{Verify}_{C}(sid)=0
    \bigr].
\end{aligned}
\end{equation}
This bad event is precisely the Bob-to-Charlie forgery event, which we denote by
\[
    \mathsf{Forge}_{B\rightarrow C}
    :=
    \mathsf{Bad}_{\{B\}}.
\]

\begin{remark}[Extension of the forgery boundary]
In previous QDS security analyses, forgery is typically modeled as an adversarial modification of the authenticated message--signature pair
\(
    \bigl(
        \mathrm{Doc}',
        \mathrm{Sig}'
    \bigr),
\)
followed by bounding the probability that the modified pair passes the prescribed verification relation~\cite{Li2023One-time}. Our composable analysis extends this adversarial boundary by granting a corrupted Bob all protocol capabilities under his control that can influence Charlie's verification, subject only to the prescribed protocol order. In particular, Bob may jointly modify the forwarded signed transcript \(\tau'\) and the forwarded key-share value \(\kappa_B'\); the message--signature modification considered in previous analyses is therefore already contained within this enlarged attack space. Consequently, the previous forgery model is recovered as a restricted case of the present one, while \(\mathsf{Forge}_{B\rightarrow C}\) captures any admissible adversarial behavior that causes Charlie to accept a document not registered by Alice.
\end{remark}

Let \(M\) and \(M'\) denote the document polynomials corresponding, under the fixed injective and length-delimited encoding of Appendix~\ref{RI Security Bound}, to \(W_A[sid]\) and \(\mathrm{Doc}'\), respectively. Since the encoding is injective,
\[
    \mathrm{Doc}'\neq W_A[sid]
    \quad\Longrightarrow\quad
    M'\neq M.
\]

The values fixed by the corrupted Bob determine the corresponding related-key offsets \(A,B,U\). Hence the resulting modification is represented by
\[
    \theta=(M',A,B,U),
    \qquad
    M'\neq M,
\]
of the form considered in Theorem~\ref{PkRiBound}.

By the prescribed invocation order, the corrupted Bob fixes both \(\tau'\) and \(\kappa_B'\) before any Charlie-side verification share is revealed. Hence, once the subsequent Charlie-side information becomes available, the adversary can no longer adapt either the signed object or the forwarded key share for the same verification attempt. Therefore, the operational conditions of Theorem~\ref{PkRiBound} are satisfied.

Therefore, Theorem~\ref{PkRiBound} gives, for \(n\geq12\),
\begin{equation}
    \Pr\!\left[
        \mathsf{Bad}_{\{B\}}
    \right]
    \leq
    \min\left\{
        1,
        \sqrt{\frac{7\tilde m}{2^n}}
    \right\},
\end{equation}
where
\begin{equation}
    \tilde m
    =
    n\left\lceil
        \frac{L_{\mathrm{Doc}}}{n}
    \right\rceil .
\end{equation}

Consequently,
\begin{equation}
    \Delta_D\!\left(
        \mathsf R_{\{B\}},
        \mathsf S_{\{B\}}
    \right)
    \leq
    \Pr\!\left[
        \mathsf{Bad}_{\{B\}}
    \right]
    \leq
    \min\left\{
        1,
        \sqrt{\frac{7\tilde m}{2^n}}
    \right\}.
\end{equation}

\subsection{Overall composable security}

Combining the reduction from the real DI-QDS system to the hybrid
system with the hybrid-to-ideal construction above yields the overall
composable security of the DI-QDS protocol.

\begin{theorem}[Overall composable security of the DI-QDS protocol]
\label{thm:overall-qds-security}
For the corruption structure
\[
    \mathfrak C_{\mathrm{QDS}}
    =
    \left\{
        \emptyset,\{A\},\{B\}
    \right\},
\]
and for \(n\geq 12\),
\begin{equation}
    \forall D:\quad
    \Delta_D^{\mathfrak C_{\mathrm{QDS}}}
    \left(
        \mathsf{Real}_{\mathrm{QDS}},
        \mathbf{Sim}^{E}
        \mathcal F_{\mathrm{QDS}}
    \right)
    \leq
    \varepsilon_{\mathrm{QDS}},
\end{equation}
where
\begin{equation}
    \varepsilon_{\mathrm{QDS}}
    :=
    \varepsilon_{\mathrm{auth}}
    +
    \varepsilon_{\mathrm{KGP}}
    +
    \varepsilon_{\mathrm{msg}},
\end{equation}
with
\begin{equation}
    \varepsilon_{\mathrm{msg}}
    =
    \min\left\{
        1,
        \sqrt{\frac{7\tilde m}{2^n}}
    \right\},
    \qquad
    \tilde m
    =
    n\left\lceil
        \frac{L_{\mathrm{Doc}}}{n}
    \right\rceil .
\end{equation}
\end{theorem}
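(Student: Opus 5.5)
The plan is to chain the results already established with the triangle inequality, applied uniformly over the corruption patterns in \(\mathfrak C_{\mathrm{QDS}}\). For each fixed \(\mathcal C\in\mathfrak C_{\mathrm{QDS}}\) I take the honest interface set \(\mathcal P^{\mathcal C}\) and reassign the corrupted party's interfaces to \(E\). I then prove the three hybrid steps for that allocation:
\[
\mathsf{Real}_{\mathrm{QDS}}\approx_{\varepsilon_{\mathrm{auth}}}\mathsf{Hyb}_{\mathrm{auth}}\approx_{\varepsilon_{\mathrm{KGP}}}\mathsf{Hyb}_{\mathrm{QDS}}\approx_{\varepsilon_{\mathrm{msg}}}\mathbf{Sim}^{E}\mathcal F_{\mathrm{QDS}}.
\]
The first two relations are Lemma~\ref{ReductionAuthenticated} and Lemma~\ref{ReductionToHybrid}, combined in the overall reduction theorem. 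The third is Lemma~\ref{lem:message-stage-security}, whose bound \(\min\{1,\sqrt{7\tilde m/2^n}\}\) for \(n\geq12\) comes from the three case analyses and Theorem~\ref{PkRiBound}.

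The first step is to check that the two reduction lemmas hold for every \(\mathcal C\), not only for the all-honest allocation. Lemma~\ref{ReductionAuthenticated} and Lemma~\ref{SecurityKGP} replace resources that sit strictly below \(\pi_{\mathrm{msg}}\). Corrupting Alice or Bob only removes some message-stage converters from \(\pi_{\mathrm{msg}}^{\mathcal P^{\mathcal C}}\), and it exposes the KGP read interfaces of that party at \(E\). I would therefore invoke the non-expansiveness properties \(d(\alpha R,\alpha S)\le d(R,S)\) and \(d(R\parallel T,S\parallel T)\le d(R,S)\) with \(\alpha=\pi_{\mathrm{msg}}^{\mathcal P^{\mathcal C}_{\mathrm{msg}}}\). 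Together with the fact that \(\mathbf{sim}_{\mathrm{auth}}\) and \(\mathbf{sim}_{\mathrm{KGP}}\) act on disjoint interfaces and so commute with the protocol converters, this gives the same \(\varepsilon_{\mathrm{auth}}\) and \(\varepsilon_{\mathrm{KGP}}\) for each pattern.

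Next I define the overall simulator for each pattern by composition:
\[
\mathbf{Sim}^{\mathcal C}:=\mathbf{sim}_{\mathrm{hyb}}\circ\mathbf{sim}_{\mathrm{msg}}^{\mathcal C}.
\]
Non-expansiveness then lets me attach \(\mathbf{sim}_{\mathrm{hyb}}\) to both sides of the message-stage relation, which gives \(\mathsf{Hyb}_{\mathrm{QDS}}\approx_{\varepsilon_{\mathrm{msg}}}\mathbf{Sim}^{E}\mathcal F_{\mathrm{QDS}}\). The triangle inequality~\eqref{triangle inequality} yields \(\varepsilon_{\mathrm{auth}}+\varepsilon_{\mathrm{KGP}}+\varepsilon_{\mathrm{msg}}\) for each \(\mathcal C\). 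Since this bound does not depend on \(\mathcal C\), taking the supremum, via the corruption-structure lemma, gives the stated bound on \(\Delta_D^{\mathfrak C_{\mathrm{QDS}}}\).

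I expect the main obstacle to be the interface bookkeeping in the composition step rather than any numerical estimate. The simulator \(\mathbf{sim}_{\mathrm{hyb}}\) inherited from the lower reductions must attach to the same merged adversarial interface \(E\) on which \(\mathbf{sim}_{\mathrm{msg}}^{\mathcal C}\) operates. In particular, under \(\mathcal C=\{B\}\), Bob's KGP read capability becomes adversarial, and I must check that the ideal KGP outputs it exposes match the \(K_B\) sampled inside \(\mathsf{GetKey}\). I would resolve this by coupling the block sampled by \(\mathcal F_{\mathrm{KGP}}\) with the simulator's \(\mathsf{SampleKGP}\) call, conditioned on \(Y_a\neq0^n\), as already done in the all-honest case.
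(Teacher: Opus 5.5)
Your proposal is correct and follows the paper's proof. The chain is the same: obtain $\mathsf{Real}_{\mathrm{QDS}}\approx_{\varepsilon_{\mathrm{auth}}+\varepsilon_{\mathrm{KGP}}}\mathsf{Hyb}_{\mathrm{QDS}}$ from the two reduction lemmas, attach the same $\mathbf{sim}_{\mathrm{hyb}}$ to both sides of Lemma~\ref{lem:message-stage-security} to get the $\varepsilon_{\mathrm{msg}}$ step with $\mathbf{Sim}^{\mathcal C}=\mathbf{sim}_{\mathrm{hyb}}\circ\mathbf{sim}_{\mathrm{msg}}^{\mathcal C}$, then conclude by the triangle inequality and the supremum over $\mathcal C$; your explicit check that the lower-level reductions hold for each corruption pattern (KGP converters kept honest, only message-stage interfaces reassigned to $E$) is left implicit in the paper but does not change the argument.
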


\begin{proof}
By the overall reduction to the hybrid system,
\begin{equation}
    \mathsf{Real}_{\mathrm{QDS}}
    \approx_{
        \varepsilon_{\mathrm{auth}}
        +
        \varepsilon_{\mathrm{KGP}}
    }
    \mathsf{Hyb}_{\mathrm{QDS}} .
\end{equation}
Applying Lemma~\ref{lem:message-stage-security} and attaching the same lower-level simulator $\mathbf{sim}_{\mathrm{hyb}}$ to both systems yields the following bound, where $\mathbf{Sim}$ denotes the resulting overall simulator family.
\begin{equation}
    \forall D:\quad
    \Delta_D^{\mathfrak C_{\mathrm{QDS}}}
    \left(
        \mathsf{Hyb}_{\mathrm{QDS}},
        \mathbf{Sim}^{E}
        \mathcal F_{\mathrm{QDS}}
    \right)
    \leq
    \varepsilon_{\mathrm{msg}} .
\end{equation}
Therefore, by the triangle inequality in
Eq.~\eqref{triangle inequality},
\begin{equation}
    \forall D:\quad
    \Delta_D^{\mathfrak C_{\mathrm{QDS}}}
    \left(
        \mathsf{Real}_{\mathrm{QDS}},
        \mathbf{Sim}^{E}
        \mathcal F_{\mathrm{QDS}}
    \right)
    \leq
    \varepsilon_{\mathrm{QDS}},
\end{equation}
\end{proof}
where
\[
    \varepsilon_{\mathrm{QDS}}
    =
    \varepsilon_{\mathrm{auth}}
    +
    \varepsilon_{\mathrm{KGP}}
    +
    \varepsilon_{\mathrm{msg}},
\]

\section{\MakeUppercase{Related-key Security Bound of the LFSR-based Toeplitz Authentication Construction}}\label{RI Security Bound}
This appendix derives the related-key security bound of the LFSR-based Toeplitz authentication construction, which is used to bound the forgery event in the security analysis.

\subsection{Security model}\label{SecurityModel}
Let
\begin{equation}
\mathcal{I}_n=\{q\in\mathbb F_2[x]:
q \text{ is monic irreducible and } \deg(q)=n\}
\end{equation}
denote the set of monic irreducible binary polynomials of degree $n$ and \(\mathbb F_2\) denotes the binary finite field. Let $N_n=\lvert \mathcal{I}_n\rvert$ denote the cardinality of \(\mathcal{I}_n\).

Let
\begin{equation}
    V_n=\{a\in\mathbb F_2[x]:\deg(a)<n\}
\end{equation}
be the vector space of binary polynomials of degree less than \(n\). Using a fixed coefficient ordering, \(V_n\) is identified with \(\mathbb F_2^n\).

Both the honestly signed document and a candidate forged document are encoded using the same fixed, injective, and length-delimited encoding map. Let \(M\in\mathbb F_2[x]\) denote the resulting document polynomial. If the encoded length is bounded by \(\mathcal{L}\), define
\begin{equation}
    r=\left\lceil \frac{\mathcal{L}}{n}\right\rceil,\qquad \tilde m=nr .
\end{equation}
The value \(\tilde{m}\) is an analytical length parameter used in the security bound and does not modify the transmitted document.

For \(q\in \mathcal{I}_n\), let \(W_q\) denote the corresponding \(n\times n\) LFSR state-transition matrix under a fixed column-vector convention. The LFSR-based Toeplitz hash~\cite{Krawczyk1994LFSR} is defined as
\begin{equation}
    h_{q,s}(M)=M(W_q)s,
\end{equation}
where \(s\in V_n\setminus\{0\}\) is the initial vector and \(M(W_q)\) denotes the matrix obtained by evaluating the document polynomial \(M(x)\) at the LFSR transition matrix \(W_q\).

For $q(x)=x^n+\sum_{i=0}^{n-1}q_ix^i$, let $\bar q=(q_{n-1},\ldots,q_0)$ denote the lower-coefficient vector of \(q\). For \(B\in\mathbb F_2^n\), define \(q\oplus B\) as the monic degree-\(n\) polynomial whose lower-coefficient vector is $\bar q\oplus B.$

In one signing slot, Alice independently samples
\begin{equation}
Q\overset{\$}{\leftarrow}\mathcal I_n,\qquad
S\overset{\$}{\leftarrow}V_n\setminus\{0\},\qquad
P,R\overset{\$}{\leftarrow}V_n ,    
\end{equation}
where \(Q\) is the LFSR polynomial, \(S\) is the initial vector of the hash function, and \(P\) and \(R\) are independent random masks. Candidate blocks with \(S=0\) are discarded during setup. Therefore, conditioned on successful setup, the initial vector \(S\) is uniformly distributed over \(V_n\setminus\{0\}\).
After conditioning on the adversary's legitimate shares and all public information, \(P\), \(S\), and \(R\) denote the effective perfect values that remain hidden from the adversary due to the independent verifier-side shares.

For the security analysis, we extract the signature-related information from the full protocol transcript defined in the messaging stage. The extracted information is represented by the abstract signature transcript
\((C,T)\):
\begin{equation}
    C=\bar Q\oplus P,\qquad
    T=h_{Q,S}(M)\oplus R,
\end{equation}
Here, \(\bar Q\) denotes the vector of lower-order coefficients of \(Q\), while \(P\) and \(R\) are independent random masks for the polynomial description and the hash value, respectively. 

For the forgery event \(\mathsf{Forge}\), we consider an adversary who attempts to construct a transcript that is accepted by the verifier for a forged document. 
The adversary observes the available transcript and its legitimate shares, and then fixes a forged transcript before receiving any additional information from the honest parties during the verification procedure.

Conditioned on the transcript, its internal randomness, and its legitimate shares, an arbitrary attack is fixed and can be represented as
\begin{equation}
\theta=(M',A,B,U),\qquad M'\neq M ,
\end{equation}
where \(B,U,A\in V_n\) denote the offsets of the polynomial,
seed, and tag components, respectively.

The verifier first checks the validity of the modified hash parameters.
If
\[
Q\oplus B\notin\mathcal I_n
\]
or
\[
S+U=0,
\]
the verification fails. Otherwise, the forgery is accepted exactly when
\begin{equation}\label{AccCondition}
h_{Q\oplus B,S+U}(M')
=
h_{Q,S}(M)+A .
\end{equation}

The independent masks \(P\) and \(R\) make \((C,T)\) uniformly distributed for every fixed \((Q,S)\). Consequently, \((Q,S)\) remains independent of the adversary's conditioned view. 
Consequently, it is sufficient to bound the success probability for every fixed attack strategy \(\theta\). Averaging over the adversary's random choice of \(\theta\) then covers arbitrary randomized strategies and arbitrary dependence on the information available before the submission.

\subsection{Related-key Security Bound}
The related-key security bound relies on the following operational conditions: the setup procedure discards every candidate block whose honestly generated seed is zero; the reconstructed polynomial and seed must be validated; the adversarial submission must be fixed before obtaining any additional information from the honest parties during the verification procedure; and the authentication slot is consumed after its single verification request, regardless of whether the request is accepted or rejected. Repeated attempts with the same slot are not covered by the one-attempt bound, and the bound does not apply if any of these conditions is omitted.

\begin{theorem}[Perfect-key related-key bound]\label{PkRiBound}
Under the model described above, let \(\mathcal{L}\) denote the maximum length of an encoded document. For every \(n\geq 12\), every valid pair of distinct encoded documents of length at most \(\mathcal{L}\), and every fixed attack \(\theta\) defined above,
\begin{equation}
\Pr[\mathsf{Forge}]
\leq
\min\left\{
1,\sqrt{\frac{7\tilde m}{2^n}}
\right\}.
\end{equation}
\end{theorem}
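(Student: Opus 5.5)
The plan is to bound $\Pr[\mathsf{Forge}]$ for a fixed attack $\theta=(M',A,B,U)$, averaging only over the secrets $Q\overset{\$}{\leftarrow}\mathcal I_n$ and $S\overset{\$}{\leftarrow}V_n\setminus\{0\}$. This is justified by the independence argument at the end of Sec.~\ref{SecurityModel}. I will first rewrite the acceptance condition~\eqref{AccCondition} as a linear equation in $S$. Put $Q':=Q\oplus B$ and
\[
G_Q:=M'(W_{Q'})+M(W_Q),\qquad c_Q:=A+M'(W_{Q'})U .
\]
Acceptance requires $Q'\in\mathcal I_n$, $S+U\neq0$, and $G_Q S=c_Q$. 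For each fixed $Q$ the set of accepting $S$ is therefore either empty or an affine coset of $\ker G_Q$. Hence
\[
\Pr[\mathsf{Forge}\mid Q]\le \frac{|\ker G_Q|}{2^n-1}.
\]
The whole task reduces to controlling how often $G_Q$ has a large kernel as $Q$ ranges over $\mathcal I_n$.

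\textbf{Case $B=0$.} Identify $V_n$ with $\mathbb F_2[x]/(Q)$, using the standard fact that $W_Q$ acts as multiplication by $x$, so that $h_{Q,s}(M)=M(x)s(x)\bmod Q$. Then $G_Q$ is multiplication by $M'-M$ in the field $\mathbb F_{2^n}$. This map is invertible unless $Q\mid M'-M$. Since $\deg(M'-M)\le\tilde m$, at most $r=\tilde m/n$ irreducibles of degree $n$ divide it. This gives
\[
\Pr[\mathsf{Forge}]\le \frac{r}{N_n}+\frac{1}{2^n-1}.
\]

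\textbf{Case $B\neq0$.} Here the two LFSRs live in different fields, so I would not try to bound $|\ker G_Q|$ pointwise. I would use a second-moment (collision) argument instead, which is where the square root in the statement should come from. Let $S_1,S_2$ be two independent copies of the seed. By Jensen/Cauchy--Schwarz,
\[
\Pr[\mathsf{Forge}]^2\le \mathbb E_Q\bigl[\Pr[\mathsf{Forge}\mid Q]^2\bigr]\le \Pr_{Q,S_1,S_2}\bigl[G_Q S_1=c_Q,\ G_Q S_2=c_Q\bigr].
\]
The last event implies $G_Q D=0$ with $D=S_1\oplus S_2$. Conditioning on whether $D=0$ contributes at most $1/(2^n-1)$. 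When $D\neq0$, the event reads
\[
M'(W_{Q'})D=M(W_Q)D .
\]
I would then fix $D\neq0$ and count the $Q\in\mathcal I_n$ (with $Q\oplus B\in\mathcal I_n$) satisfying this identity. Both sides are the length-$\tilde m$ LFSR output combinations started from state $D$ under the feedback polynomials $Q$ and $Q\oplus B$. The first attempt would be to read the identity as a polynomial relation $M'(x)D_{Q'}(x)\equiv \ldots$ and clear the two moduli through a resultant or a product $QQ'$ of degree at most $2n$. Since $M'\neq M$ and $B\neq0$, this relation should be a nonzero polynomial constraint of degree $O(\tilde m)$. Its number of solutions among degree-$n$ irreducibles would then be bounded by a small multiple of $\tilde m/n$ times a factor coming from the choice of $D$, giving
\[
\Pr[G_QD=0]\lesssim c\,\tilde m/(nN_n).
\]
This counting step is the main obstacle. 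If the resultant route fails, my fallback is to average over $D$ as well and use that, for fixed $Q$, $\ker G_Q$ is a subspace. That reduces the problem to bounding $\mathbb E_Q|\ker G_Q|/2^n$ via the number of $Q$ for which $G_Q$ is singular.

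Finally, I would combine the two cases using the prime-counting estimate $N_n\ge (2^n-2\cdot2^{n/2})/n$. This gives $n/N_n\le \frac{1}{2^n}\cdot\frac{n^2}{1-2^{1-n/2}}$-type factors, and after absorbing the $1/(2^n-1)$ terms the bound becomes $\Pr[\mathsf{Forge}]^2\le 7\tilde m/2^n$. The hypothesis $n\ge12$ is used only here, to make the constants $(1-2^{1-n/2})^{-1}$, $2^n/(2^n-1)$ and the additive seed term fit under the numerical constant $7$. The case $B=0$ is also dominated by this bound, since $r/N_n+1/(2^n-1)\le\sqrt{7\tilde m/2^n}$ whenever the right-hand side is below $1$. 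Taking the minimum with the trivial bound $1$ yields the statement of Theorem~\ref{PkRiBound}.
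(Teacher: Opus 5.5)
Your linearization and the $B=0$ branch are correct and match the paper. The gap is in the branch $B\neq0$, in two places.

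First, the Cauchy--Schwarz step cannot be what produces the square root. Given $S_1\neq S_2$, the difference $S_1\oplus S_2$ is uniform on $V_n\setminus\{0\}$. So the quantity you propose to bound is exactly $\beta=\frac{1}{2^n-1}+\frac{2^n-2}{(2^n-1)^2}\,\mathbb E_Q\bigl[(|\ker G_Q|-1)\mathbf 1[Q\in\mathcal J_B]\bigr]$. Your own first display, however, gives $\Pr[\mathsf{Forge}]\le\mathbb E_Q\bigl[|\ker G_Q|\,\mathbf 1[Q\in\mathcal J_B]\bigr]/(2^n-1)\le\frac{2^n-1}{2^n-2}\,\beta$. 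Hence every bound on $\beta$ already bounds the forging probability to the first power, and passing through $\sqrt\beta$ only loses. To land on $\sqrt{7\tilde m/2^n}$ you would need $\beta\lesssim 7\tilde m/2^n$. That would give $\Pr[\mathsf{Forge}]=O(\tilde m/2^n)$ outright, a quadratically stronger statement than the theorem, and nothing in your plan proves it. If you feed in the rank-distribution estimate that can actually be proved (the paper's, which bounds $\beta$ by roughly $\sqrt{7\tilde m/2^n}$), your route gives only a fourth root.

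Second, the counting claim itself fails: it is not true that for fixed $D\neq0$ only $O(r)$ admissible $Q$ satisfy $G_QD=0$. Take $M=x^n$, $M'=x^n+B$ and $D=1$. In your model, $M(W_Q)1=\operatorname{rem}_Q(x^n)=Q+x^n$ and $M'(W_{Q'})1=\operatorname{rem}_{Q'}(x^n)+B=Q'+x^n+B=Q+x^n$. So $G_Q1=0$ for every $Q\in\mathcal J_B$. (In the paper's convention, where $W_q^T$ is multiplication by $x$, this says $1\in\ker D_q^T$.) Averaging over $B$ shows that some $B\neq0$ has $|\mathcal J_B|\ge N_n(N_n-1)/(2^n-1)$, which is of order $2^n/n^2$. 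The underlying reason is that the companion value $a=\operatorname{rem}_Q(MD)$ moves with $Q$, so fixing $D$ alone does not give you one polynomial whose degree-$n$ factors can be counted.

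The same example defeats your fallback, because $G_Q$ is singular for \emph{every} admissible $Q$. What must be controlled is, for each $d$, how many $Q$ have rank at most $d$, and this is the idea you are missing. The paper identifies $\ker D_q^T$ (your $G_Q$ is its $D_q$) with the short vectors of the lattice $\Lambda_q=\{(c,a):q\mid Mc+a,\ q'\mid M'c+a\}$, where $q'=q\oplus B$. It uses $qq'\mid ua_q+c_qb$ to show that all short vectors are multiples of one primitive minimal vector $(c_q,a_q)$, so $r_q=\delta_q$. It then uses the whole pair $(c_q,a_q)$, not $c_q$ alone, as a certificate shared by at most $r$ polynomials $q$.

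Counting the $T_d=6\cdot4^{d-1}$ primitive pairs of height $d$ gives $\sum_{q\in\mathcal J_B,\ r_q\le d_0}2^{-r_q}\le r(3\cdot2^{d_0}-1)$. The square root then comes from a first-moment threshold split, $\Pr[\mathsf{Forge}]\le c_n\bigl(2^{-d_0-1}+r(3\cdot2^{d_0}-1)/N_n\bigr)$, optimized at $2^{d_0}\approx(6r/N_n)^{-1/2}$. In the example above, the minimal vector is $(1,Q+x^n)$, of height $\deg(Q+x^n)$. So the kernel is nontrivial for every $Q$, but it has only two elements whenever the $x^{n-1}$-coefficient of $Q$ is $1$. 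That is exactly the information the rank-by-rank count uses.
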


\begin{proof}

\emph{1. Linear form and exact affine-fibre count.}
Fix an arbitrary attack $\theta=(M',A,B,U).$ For each \(q\in\mathcal I_n\) such that $q\oplus B\in\mathcal I_n,$ define
\begin{equation}
G_q=M'(W_{q\oplus B}),
\end{equation}
and
\begin{equation}\label{Dbq}
D_q
=
M'(W_{q\oplus B})+M(W_q),
\qquad
b_q=A+G_qU .
\end{equation}
All additions are over \(\mathbb F_2\).

The acceptance condition Eq.~(\ref{AccCondition}) is equivalent to
\begin{equation}
D_qS=b_q .
\end{equation}

Define
\begin{equation}
\mathcal J_B
=
\{q\in\mathcal I_n:q\oplus B\in\mathcal I_n\},
\end{equation}
and let
\begin{equation}
r_q=\operatorname{rank}(D_q).
\end{equation}

Define
\begin{equation}
\nu_q(\theta)
=
\#\{s\in V_n:
s\neq0,\ s\neq U,\ D_qs=b_q\}.
\end{equation}
The two excluded points correspond to the honest non-zero-seed condition and the verifier’s reconstructed non-zero-seed condition, respectively. Therefore, the exact conditional success probability is
\begin{equation}\label{ptheta}
p_{\mathrm{pub}}(\theta)
=
\frac{1}{N_n(2^n-1)}
\sum_{q\in\mathcal J_B}
\nu_q(\theta).
\end{equation}

An affine fibre of \(D_q\) is either empty or contains $2^{n-r_q}$ elements. Removing the forbidden points gives the exact identity
\begin{equation}\label{nu_q}
\begin{aligned}
\nu_q
=&
2^{n-r_q}
\mathbf 1[b_q\in\operatorname{Im}D_q]\\
-&
\mathbf 1[b_q=0]
-
\mathbf 1[U\neq0]\mathbf 1[D_qU=b_q].   
\end{aligned}
\end{equation}

In particular, with 
\begin{equation}
c_n=\frac{2^n}{2^n-1},
\end{equation}
we have
\begin{equation}
\frac{\nu_q}{2^n-1}
\leq
c_n2^{-r_q}.
\end{equation}

Although \(U\) does not affect the rank of \(D_q\), it changes \(b_q=A+G_qU\) in a \(q\)-dependent manner. Moreover, translation does not preserve the domain \(V_n\setminus\{0\}\). Therefore, \(U\) is retained throughout the analysis, since setting \(U=0\) without loss of generality is not justified.

\emph{2. The branch \(B=0\).}
Consider the branch \(B=0\). Define
\begin{equation}
\Delta M=M+M'.
\end{equation}
Since the forged document is distinct from the honestly signed document, $M'\neq M,$ and therefore $\Delta M\neq0 .$

When \(B=0\), the matrix \(D_q\) defined in Eq.~(\ref{Dbq}) becomes
\begin{equation}
D_q
=
M'(W_q)+M(W_q)
=
\Delta M(W_q),
\end{equation}
where all additions are performed over \(\mathbb F_2\).

Under the field representation $\mathbb F_2[x]/(q),$ the rank of \(D_q\) is determined by whether \(q\) divides
\(\Delta M\). Hence,
\begin{equation}
r_q=
\begin{cases}
0, & q\mid\Delta M,\\
n, & q\nmid\Delta M .
\end{cases}
\end{equation}

Since $\deg\Delta M\leq \mathcal{L}-1,$ the number of degree-\(n\) irreducible divisors of \(\Delta M\) is at most
\begin{equation}\label{lL}
\ell_\mathcal{L}
=
\min
\left\{
N_n,
\left\lfloor
\frac{\mathcal{L}-1}{n}
\right\rfloor
\right\}.
\end{equation}
Define
\begin{equation}\label{betaL}
\beta_\mathcal{L}=\frac{\ell_\mathcal{L}}{N_n}.
\end{equation}

For \(U=0\), according to Eqs.~(\ref{nu_q}) and (\ref{ptheta}), the corresponding contribution to the success probability is \(\beta_\Delta\) when \(A=0\), and \((1-\beta_\Delta)/(2^n-1)\) when \(A\neq0\), where \(\beta_\Delta\leq\beta_{\mathcal L}\) is the actual fraction of \(q\in\mathcal I_n\) dividing \(\Delta M\).
For \(U\neq0\), a rank-zero instance contributes at most \(2^n-2\) legal seeds and an invertible instance contributes at most one. Monotonicity in \(\beta_\Delta\) gives the uniform bound
\begin{equation}\label{alpha0}
\widehat{\alpha}_0(\mathcal{L})
=
\max\left\{
\beta_\mathcal{L},
\frac{1+\beta_\mathcal{L}(2^n-3)}{2^n-1}
\right\}.
\end{equation}

\emph{3. The branch \(B\neq0\).}
Fix \(B\neq0\) and write $q'=q\oplus B .$ For \(q\in\mathcal J_B\), \(q\) and \(q'\) are distinct coprime members of \(\mathcal I_n\). Define the polynomial lattice
\begin{equation}
\Lambda_q
=
\left\{
(c,a)\in\mathbb F_2[x]^2:
q\mid Mc+a, 
q'\mid M'c+a
\right\}.
\end{equation}
The companion-matrix identity
\begin{equation}
F(W_q^T)\operatorname{vec}(c)
=
\operatorname{vec}(\operatorname{rem}_q(Fc))
\end{equation}
induces the linear bijection
\begin{equation}
\ker D_q^T
\simeq
\Lambda_q\cap(V_n\times V_n).
\end{equation}

For a nonzero pair \(v=(c,a)\), define $h(v)=\max\{\deg c,\deg a\},$ where \(\deg0=-\infty\). Let
\begin{equation}
\begin{aligned}
\delta_q
=
\min
\Bigg(
&
\left\{
h(c,a):
(c,a)\in
\Lambda_q\cap(V_n\times V_n)
\right.\\
&
\left.
\setminus\{(0,0)\}
\right\}
\cup\{n\}
\Bigg).
\end{aligned}
\end{equation}

When \(\delta_q<n\), a minimum vector \((c_q,a_q)\) is primitive, since division by a nonconstant common factor would give a shorter lattice vector. Moreover, $c_q\neq0,$ since \(c_q=0\) would imply $q\mid a_q$ while \(\deg a_q<n\), and hence \(a_q=0\), contradicting the choice of a nonzero minimum vector.

If \((u,b)\) is any other vector of height below \(n\), the two lattice congruences imply
\begin{equation}
qq'\mid ua_q+c_qb .
\end{equation}

The polynomial on the right has degree below $2n=\deg(qq')$ and hence vanishes. Since $\gcd(c_q,a_q)=1,$ this implies $(u,b)=e(c_q,a_q)$ for a unique \(e\in\mathbb F_2[x]\). The height restriction is exactly $\deg e<n-\delta_q $. Hence every short lattice vector is a unique multiple of the primitive minimum vector. If \(\delta_q=n\), the short lattice contains only \((0,0)\). Thus,
\begin{equation}
\left|\ker D_q^T\right|
=
2^{n-\delta_q},
\qquad
r_q=\delta_q .
\end{equation}

Let \(T_d\) denote the number of ordered primitive pairs \((c,a)\) with \(c\neq0\) and height exactly \(d\). 
The cumulative number of primitive pairs with \(c\neq0\) and height at most \(d\) is \(2\) for \(d=0\) and \(2^{2d+1}\) for \(d\geq1\). Hence,
\begin{equation}
T_0=2,\qquad
T_d=6\cdot4^{d-1},
\quad d\geq1 .
\end{equation}

A primitive pair of height below \(n\) can be the minimum certificate for at most $r=\left\lceil\frac{\mathcal L}{n}\right\rceil$ different \(q\). If $Mc+a\neq0,$ each such \(q\) is a distinct degree-\(n\) divisor of \(Mc+a\), whose degree is below \((r+1)n\). If $Mc+a=0,$ the corresponding \(q'\) divides the nonzero polynomial $(M+M')c$, whose degree is also below \((r+1)n\). Therefore, there are at most \(r\) relevant degree-\(n\) factors.

Therefore, for every integer \(0\leq D<n\),
\begin{equation}\label{pbound}
\sum_{\substack{q\in\mathcal J_B\\ r_q\leq D}}
2^{-r_q}
\leq
r
\sum_{d=0}^{D}T_d2^{-d}
=
r(3\cdot2^D-1).
\end{equation}

For a fixed-\(B\) attack, the hidden-\(q\) instances are divided into the disjoint cases \(r_q\leq D\) and \(r_q>D\). The corresponding contributions are then summed. 
Applying the previous bound and Eq.~(\ref{nu_q}), define
\begin{equation}\label{BD}
B_D
=
2^{-D-1}
+
\frac{r(3\cdot2^D-1)}{N_n}.
\end{equation}
Then,
\begin{equation}\label{alpha1}
\widehat{\alpha}_1(\mathcal L)
=
\min
\left\{
1,\,
c_n
\min_{0\leq D<n}B_D
\right\}.
\end{equation}

\emph{4. Derivation of the final security bound.}
As derived above, the two cases \(B=0\) and \(B\neq0\) correspond to attacker-chosen branches. Randomization between these cases gives a convex combination. Therefore, the uniform explicit upper bound is
\begin{equation}\label{alphaPub}
\alpha_{\mathrm{pub}}^*(\mathcal L)
=
\min
\left\{
1,
\max
\left(
\widehat{\alpha}_0(\mathcal L),
\widehat{\alpha}_1(\mathcal L)
\right)
\right\}.
\end{equation}

For a fixed nonzero \(B\), the low-rank and high-rank cases
\(r_q\leq D\) and \(r_q>D\) are disjoint regions of the same
hidden-\(q\) space and their contributions are added.
By contrast, \(B=0\) and \(B\neq0\) are attacker-chosen branches,
and hence the final bound is determined by their maximum.

According to the standard Möbius formula~\cite{Lidl1996Finite},
\begin{equation}
N_n
=
\frac{1}{n}
\sum_{d\mid n}
\mu(d)2^{n/d},
\end{equation}
for every \(n\geq12\),
\begin{equation}\label{MFresult}
nN_n
\geq
\frac{49}{50}2^n .
\end{equation}

Define
$t=\frac{r}{N_n},
y=\frac{\tilde m}{2^n}.$
If \(y\geq1/7\), the trivial bound already gives the desired result.  Assume \(y<1/7\). According to Eq. (\ref{MFresult}) we have
\begin{equation}
t\leq\frac{50}{49}y<\frac16 .    
\end{equation}

For the optimization of \(B_D\), introduce the continuous variable $x=2^D .$ The continuous minimizer of \(B_D\) with respect to \(x\) is $x^\ast=(6t)^{-1/2}.$ Since \(t<1/6\), we have \(x^\ast>1\). Moreover, using \(N_n\leq2^n/n\) and \(r\geq1\) gives
\begin{equation}
x^\ast\leq\sqrt{\frac{2^n}{6n}}<2^{n-1}.    
\end{equation}

Therefore, an integer \(0\leq D<n\) can be chosen such that
\begin{equation}
2^{-1/2}
\leq
\frac{2^D}{x^\ast}
\leq
2^{1/2}.    
\end{equation}

Substitution into Eq. (\ref{BD}), with the negative term \(-t\) omitted, gives
\begin{equation}
B_D\leq\sqrt{\frac{27}{4}t}.    
\end{equation}

Using $c_n\leq\frac{4096}{4095}, n\geq12,$ together with $nN_n\geq\frac{49}{50}2^n,$ we have
\begin{equation}
c_n^2
\frac{27}{4}
\frac{2^n}{nN_n}
\leq
\left(\frac{4096}{4095}\right)^2
\frac{27}{4}
\frac{50}{49}
<7 .    
\end{equation}

Therefore,
\begin{equation}
\widehat{\alpha}_1(\mathcal L)
\leq
\sqrt{7y}.
\end{equation}

For the \(B=0\) branch, according to Eq.~(\ref{lL}) and Eq. (\ref{betaL}), we have 
\begin{equation}
\beta_{\mathcal L}\leq t
\leq
\frac{50}{49}y .    
\end{equation}

Moreover, since \(r\geq1\),
\begin{equation}
y=\frac{\tilde m}{2^n}\geq\frac{n}{2^n},    
\end{equation}
which gives
$\frac{1}{2^n-1}
\leq
\frac{y}{n(1-2^{-n})}.$

Hence, for \(n\geq12\),
\begin{equation}
\widehat{\alpha}_0(\mathcal L)
<
2y .    
\end{equation}

Since \(y<1/7\), $2y\leq\sqrt{7y},$ and therefore
\begin{equation}
\widehat{\alpha}_0(\mathcal L)
\leq
\sqrt{7y}.
\end{equation}

Substituting the above bounds into Eq.~(\ref{alphaPub}) gives
\begin{equation}
\alpha_{\mathrm{pub}}^*(\mathcal L)
\leq
\sqrt{7y}
=
\sqrt{\frac{7\tilde m}{2^n}} .
\end{equation}
This completes the derivation of the final security bound.
\end{proof}

\section{APPLICATIONS}\label{applications}
The composable security result above also clarifies how the proposed QDS resource can be used in higher-level applications. The real DI-QDS protocol realizes the ideal transferable authenticated repository \(\mathcal F_{\mathrm{QDS}}\). Therefore, any higher-level application that needs an authenticated and transferable classical repository as one of its building blocks can replace the real DI-QDS protocol by the ideal QDS resource, incurring at most an additional error \(\varepsilon_{\mathrm{QDS}}\).

\subsection{Assertion-based session authentication in SAML SSO profiles}
One natural application is assertion-based session authentication in SAML SSO profiles, specifically the web browser SSO Profile and the enhanced client or proxy (ECP) Profile~\cite{SAMLProfiles2005}. In the web browser SSO Profile, a user first accesses a protected resource at a service provider (SP) without an established security context. The SP then determines an appropriate identity provider (IdP) and issues an authentication request. After identifying the principal by some mechanism outside the SAML profile, the IdP produces a response containing an authentication assertion, which is delivered by the user agent to the SP. The SP consumes this assertion and grants or denies access by establishing, or refusing to establish, a security context for the user. 
The ECP Profile follows the same assertion-based pattern, but the user agent is replaced by an enhanced client or proxy. The ECP knows how to contact an appropriate IdP, receives a SAML response from the IdP targeted at the SP, and conveys this response to the SP. Thus, in both profiles, the relevant primitive for our purpose is the same: an IdP issues an authentication assertion, an intermediate client-side entity transfers it, and the SP verifies it before accepting the session.

Our QDS protocol can realize the transferable assertion-authentication primitive in a fixed-party setting. Alice plays the role of the IdP, Bob plays the role of the user agent, client, or ECP, and Charlie plays the role of the SP. Alice signs the assertion \(m\); Bob receives and transfers the authenticated assertion; and Charlie verifies the transferred assertion using the QDS verification procedure before accepting the session. This corresponds to the assertion-transfer step in these SAML SSO profiles: the IdP's response containing an authentication assertion is delivered to the SP through an intermediate client-side entity, namely the user agent in the web browser SSO profile or the ECP in the ECP profile, and the SP consumes the assertion before establishing a security context.

This should not be interpreted as realizing a complete SSO profile~\cite{SAMLProfiles2005}. QDS only provides a fixed-party transferable signature mechanism for the assertion itself; the remaining SSO-specific mechanisms, such as IdP selection, user authentication, transport protection, replay prevention and session binding are left to the higher-level protocol.

\subsection{Transferable authenticated-order primitive in three-party QBA}
Another natural application is three-party QBA~\cite{Jing2024Experimental}. In QBA, Alice acts as the commander, while Bob and Charlie act as lieutenants. The goal is to reach a consistent decision even if one party is dishonest: all honest lieutenants should output the same order, and if Alice is honest, every honest lieutenant should output Alice's order. At the resource level, this task requires a transferable authenticated-order primitive. Such a primitive allows the commander to register an order, allows one lieutenant to transfer the registered order, and allows both lieutenants to verify that the order is authentic. If the verification accepts, both lieutenants add the order to their local accepted-order lists.

A three-party QBA protocol can be described using two invocations of this primitive. In the first invocation, Alice registers an order \(m_1\), Bob forwards it, and Charlie verifies it. In the second invocation, Bob and Charlie exchange their roles: Alice registers an order \(m_2\), Charlie forwards it, and Bob verifies it. Bob and Charlie then apply the same majority rule to the two accepted orders: if the orders are identical, both output that order; otherwise, both output the same pre-agreed biased value \(\Delta\).

Our DI-QDS protocol realizes this fixed-party transferable authenticated-order primitive. In each execution, Alice registers an order \(m_i\), and Bob and Charlie verify the order through the QDS interfaces. The authentication guarantee prevents a dishonest lieutenant from making an honest lieutenant accept an order not signed by Alice, while transferability and nonrepudiation prevent Alice from making the two honest verification outcomes inconsistent. Consequently, honest lieutenants either accept the same order or apply the same pre-agreed majority rule, which preserves the consistency required by QBA.

Since the QBA application uses the transferable authenticated-order primitive twice, the distinguishing distance introduced by replacing the two ideal primitive invocations with real DI-QDS executions is bounded by
$\varepsilon_{\mathrm{QDS}}^{(1)}
+
\varepsilon_{\mathrm{QDS}}^{(2)}.$ 
If the two executions use the same security parameters, this bound becomes $2\varepsilon_{\mathrm{QDS}}.$ This statement only quantifies the security cost of using DI-QDS as the transferable authenticated-order primitive; it is not a complete composable construction of an ideal Byzantine-agreement functionality.

\section{\uppercase{Calculation of number of rounds}}\label{Appendix2}
Define
\begin{equation}
    p_a=\eta_\text{gen}\cdot\eta_c \cdot \eta_\text{fiber} \cdot p_\text{QND} \cdot \eta_\text{af},
\end{equation}
which represents the probability that the $t-$photon arrived at the BSA successfully and the $k-$photon was collected and measured successfully in a channel.

Then we have the probability that $n$ pairs of $t-$photons are projected to Bell states when there are $m$ channels:
\begin{equation}
    \begin{aligned}
        P_{n|m}=\sum_{l=n}^m B_{n|l}(\text{Pr}_\text{BM}) p_{l|m},
    \end{aligned}
\end{equation}
in which $B_{n|l}(p)=C_l^n p^n (1-p)^{(l-n)}$ is a binomial distribution and $p_{l|m}=2B_{l|m}(p_a)\sum_{k=l}^m B_{k|m}(p_a)-B_{l|m}^2(p_a)$ denotes the probability that $l$ pairs of $t-$photons arrived at the BSA when there are $m$ channels.

Since the probability that at least one photon pair could be projected to a Bell state is $p_\text{click}=1-P_{0|m}$, we could have the number of effective rounds:
\begin{equation}
    n_r=\frac{N}{m} \cdot p_\text{click}=\frac{N}{m}(1-P_{0|m}),
\end{equation}

Since
\begin{equation}
    l B_{l|m}(p)=mpB_{l-1|m-1} (p),
\end{equation}
\begin{equation}
    B_{k|m}(p)=(1-p)B_{k|m-1}(p)+pB_{k-1|m-1},
\end{equation}
and the average number of pairs of photons projected to Bell states with $m$ channels $\bar{N}=\sum_{n=0}^m n \cdot P_{n|m}$, when $p_a \ll 1$, $m \gg 1$, we have 
\begin{equation}
    \lim _{m \rightarrow +\infty} \bar{N}=m \cdot \text{Pr}_\text{BM}\cdot p_a.
\end{equation}
So, we have the asymptotic equation:
\begin{equation}
\begin{aligned}
    m\approx \frac{1}{\text{Pr}_\text{BM} \cdot p_a},
\end{aligned}
\end{equation}
when $p_a \ll 1$, $m \gg 1$, and the average number of pairs of photons projected to Bell states is $1$.

In this case we have
\begin{equation}
\begin{aligned}
n_r&=\frac{N}{m} \cdot (1-P_{0|m})\\
&=N\cdot\text{Pr}_\text{BM} \cdot p_a\cdot (1-P_{0|m}),
\end{aligned}
\end{equation}
where $m=\lceil \frac{1}{\text{Pr}_\text{BM} \cdot p_a} \rceil$. In this case, since 
\begin{equation}
\begin{aligned}
       \lim_{m \rightarrow +\infty }(1-p_a)^m&=\lim_{m \rightarrow +\infty }(1-\frac{1}{m\text{Pr}_\text{BM}})^{m\text{Pr}_\text{BM}\cdot \frac{1}{\text{Pr}_\text{BM}}}\\
       &=e^{-\frac{1}{\text{Pr}_\text{BM}}},
\end{aligned}
\end{equation}
when $l \ll m$, we have
\begin{equation}
    \lim_{m \rightarrow +\infty } B_{l|m}(p_a)=e^{-\frac{1}{\text{Pr}_\text{BM}}}\cdot\frac{1}{l!\cdot\text{Pr}_\text{BM}^l}
\end{equation}

Let $a=e^{-\frac{1}{\text{Pr}_\text{BM}}}$ and $b=\frac{1}{\text{Pr}_\text{BM}}$, then we define 
\begin{equation}
c(l)=\lim_{m \rightarrow +\infty } B_{l|m}(p_a)=\frac{a\cdot b^l}{l!}.
\end{equation}
Thus, we have
\begin{equation}
    \begin{aligned}
        \lim_{m \rightarrow +\infty } p_{l|m}=
        \begin{cases}
        &2a-a^2 \ \ \text{if} \ \ l=0; \\
        &2c(l)\cdot[1-\sum_{k=0}^{l-1} c(k)]-c^2(l) \ \ \text{if} \ \ l\ge1.
        \end{cases}
    \end{aligned}
\end{equation}
Define 
\begin{equation}
    d(l)=\lim_{m \rightarrow +\infty } p_{l|m}.
\end{equation}
Then, neglecting higher-order terms for $l>5$, we have
\begin{equation}\label{B17}
    \begin{aligned}
        \lim_{m \rightarrow +\infty }{P_{0|m}} &=\lim_{m \rightarrow +\infty }{\sum_{l=0}^m B_{0|l}(\text{Pr}_\text{BM}) p_{l|m}}\\
        &=\lim_{m \rightarrow +\infty }{\sum_{l=0}^m (1-\text{Pr}_\text{BM})^l d(l)}\\
        &\approx\sum_{l=0}^5 (1-\text{Pr}_\text{BM})^l d(l).
    \end{aligned}
\end{equation}
Denote it as $p_\text{click}(\text{Pr}_\text{BM})=1-\sum_{l=0}^5 (1-\text{Pr}_\text{BM})^l d(l).$ Thus we have 
\begin{equation}
    n_r=N\cdot\text{Pr}_\text{BM} \cdot p_a\cdot p_\text{click},
\end{equation}
when $m \gg 1$.

In a direct numerical optimization, evaluating $P_{0|m}$ for a given $m$ requires $O(m^2)$ operations because of the nested summations over $l$ and $k$. Searching over channel numbers up to $M=O(1/p_a)=O(10^{\alpha_f \cdot L/10})$ therefore has a time complexity of $O(M^3)=O(10^{3 \cdot \alpha_f \cdot L/10})$.

\section{\uppercase{details of modeling and analysis}}\label{Appendix3}
We define the Bell states of the kept and transmitted photon pairs as
\begin{equation}
\begin{aligned}
    \ket{\Phi_s^\pm}
    &=
    \frac{1}{\sqrt{2}}
    \left(
    \ket{H}_{s_a}\ket{H}_{s_b}
    \pm
    \ket{V}_{s_a}\ket{V}_{s_b}
    \right),\\
    \ket{\Psi_s^\pm}
    &=
    \frac{1}{\sqrt{2}}
    \left(
    \ket{H}_{s_a}\ket{V}_{s_b}
    \pm
    \ket{V}_{s_a}\ket{H}_{s_b}
    \right),
\end{aligned}
\qquad s\in\{k,t\}.
\label{eq:all_bell}
\end{equation}
Here, $s=k$ and $s=t$ denote the kept and transmitted photon pairs, respectively.

The corresponding Bell state projectors are defined as
\begin{equation}
    \Pi_{\Phi^\pm}^{s}
    =
    \ket{\Phi_s^\pm}\bra{\Phi_s^\pm},
    \qquad
    \Pi_{\Psi^\pm}^{s}
    =
    \ket{\Psi_s^\pm}\bra{\Psi_s^\pm},
    \qquad
    s\in\{k,t\}.
\end{equation}

For each subsystem $s\in\{k,t\}$, we denote the four Bell states by
\begin{equation}
\begin{aligned}
    \ket{\psi_1}_s &= \ket{\Phi_s^+}, &
    \ket{\psi_2}_s &= \ket{\Phi_s^-},\\
    \ket{\psi_3}_s &= \ket{\Psi_s^+}, &
    \ket{\psi_4}_s &= \ket{\Psi_s^-}.
\end{aligned}
\end{equation}
The Bell product basis states of the system of four photons are then defined as
\begin{equation}
    \ket{\mathrm{ES}_{4(i-1)+j}}
    =
    \ket{\psi_i}_k\ket{\psi_j}_t,
    \qquad
    i,j\in\{1,2,3,4\}.
    \label{eq:bell_product}
\end{equation}

To evaluate the probabilities associated with these Bell product basis
states, the density operator must be expressed in the same tensor product
ordering, namely $(k_a,k_b,t_a,t_b)$. The original density operator is
written in the ordering $(k_a,t_a,k_b,t_b)$. We therefore introduce the
permutation operator
\begin{equation}
    P=I_2\otimes\mathrm{SWAP}\otimes I_2,
\end{equation}
where $\mathrm{SWAP}$ exchanges the subsystems $t_a$ and $k_b$, and
$I_2$ is the $2\times2$ identity operator. The reordered density operator is
\begin{equation}
    \rho_{AB}^{k_a,k_b,t_a,t_b}
    =
    P
    \rho_{AB}^{k_a,t_a,k_b,t_b}
    P^\dagger.
\end{equation}
The probability of obtaining the Bell product basis state
$\ket{\mathrm{ES}_n}$ is
\begin{equation}
    \mathrm{Pr}_n
    =
    \operatorname{Tr}
    \left[
    \rho_{AB}^{k_a,k_b,t_a,t_b}
    \ket{\mathrm{ES}_n}
    \bra{\mathrm{ES}_n}
    \right],
    \qquad
    n\in\{1,\ldots,16\}.
\end{equation}

At the intermediate node, when $\ket{\Phi_t^\pm}$ is input to the BSA, the click pattern corresponding to $\ket{\Phi_t^\pm}$ is obtained with probability $\alpha_1$, whereas the click pattern corresponding to $\ket{\Phi_t^\mp}$ is obtained with probability $\alpha_2$, where
\begin{equation}
\begin{aligned}
    \alpha_1
    ={}&
    \left\{
    (1-p_d)
    \left[
    \eta_s(1-p_d)+p_d
    \right]
    \right\}^{2}\\
    &+
    \left[
    p_d(1-p_d)(1-\eta_s)
    \right]^{2},
\end{aligned}
\end{equation}
and
\begin{equation}
    \alpha_2
    =
    2
    \left[
    \eta_s(1-p_d)+p_d
    \right]
    (1-p_d)^2
    (1-\eta_s)
    p_d.
\end{equation}
Here, $p_d$ denotes the dark count probability of the single-photon detectors, and $\eta_s$ denotes the detection efficiency of single-photon detectors.

When $\ket{\Psi_t^\pm}$ is input to the analyzer, the click patterns corresponding to $\ket{\Phi_t^+}$ and $\ket{\Phi_t^-}$ are each obtained with probability $\alpha_3$, where
\begin{equation}
\begin{aligned}
    \alpha_3
    ={}&
    p_d(1-p_d)^2
    \left[
    1-(1-p_d)(1-\eta_s)^2
    \right]\\
    &+
    p_d^2(1-p_d)^2
    (1-\eta_s)^2.
\end{aligned}
\end{equation}

According to these click probabilities, the conclusive POVM elements associated with the click patterns corresponding to $\ket{\Phi_t^+}$ and $\ket{\Phi_t^-}$ are
\begin{equation}
\begin{aligned}
    F_+^{t_a,t_b}
    ={}&
    \alpha_1\Pi_{\Phi^+}^{t}
    +
    \alpha_2\Pi_{\Phi^-}^{t}\\
    &+
    \alpha_3\Pi_{\Psi^+}^{t}
    +
    \alpha_3\Pi_{\Psi^-}^{t},
\end{aligned}
\label{eq:F_plus}
\end{equation}
and
\begin{equation}
\begin{aligned}
    F_-^{t_a,t_b}
    ={}&
    \alpha_2\Pi_{\Phi^+}^{t}
    +
    \alpha_1\Pi_{\Phi^-}^{t}\\
    &+
    \alpha_3\Pi_{\Psi^+}^{t}
    +
    \alpha_3\Pi_{\Psi^-}^{t}.
\end{aligned}
\label{eq:F_minus}
\end{equation}
The inconclusive outcome is represented by
\begin{equation}
    F_{\varnothing}^{t_a,t_b}
    =
    I_{t_a,t_b}
    -
    F_+^{t_a,t_b}
    -
    F_-^{t_a,t_b}.
\end{equation}

Then we have the probability of Bell clicks as 
\begin{equation}
\begin{aligned}
        \text{Pr}_{\text{BM}}&=(\alpha_1+\alpha_2)\sum_{i=0}^{3}(\text{Pr}_{4i+1}+\text{Pr}_{4i+2})\\
        &+2\alpha_3 \sum_{i=0}^{3}(\text{Pr}_{4i+3}+\text{Pr}_{4i+4}).
\end{aligned}
\end{equation}
The probability of a correct click event is 
\begin{equation}
\begin{aligned}  \text{Pr}_{\text{correct}}&=\alpha_1(\text{Pr}_1+\text{Pr}_6)+\alpha_2(\text{Pr}_2+\text{Pr}_5)\\
&+\alpha_3(\text{Pr}_3+\text{Pr}_4+\text{Pr}_7+\text{Pr}_8).
\end{aligned}
\end{equation}
This expression gives the probability of correct click events, including those in which the local state is $\ket{\Phi_k^+}$ and the BSA registers a $\ket{\Phi_t^+}$ outcome, or the local state is $\ket{\Phi_k^-}$ and the BSA registers a $\ket{\Phi_t^-}$ outcome.

Therefore, the QBER is given by:
\begin{equation}\label{QBER}
    \begin{aligned}
        Q=\frac{\text{Pr}_{\text{BM}}-\text{Pr}_\text{correct}}{\text{Pr}_\text{BM}},
    \end{aligned}
\end{equation}
where $\text{Pr}_\text{BM}$ represents the efficiency of the entire Bell measurement process, and $\text{Pr}_\text{correct}$ denotes the efficiency of the correct events, as previously described.

With respect to the equivalent virtual state $\rho_{k_a,k_b}$ corresponding
to the retained Bell measurement outcome, the CHSH score is given by
\begin{equation}
\begin{aligned}
S
={}&
\operatorname{Tr}\Big[
\rho_{k_a,k_b|\mu}
\big(
\hat A_0\otimes\hat B_0
+\hat A_0\otimes\hat B_1 \\
&\qquad
+\hat A_1\otimes\hat B_0
-\hat A_1\otimes\hat B_1
\big)
\Big],
\end{aligned}
\end{equation}
where $\rho_{k_a,k_b|\mu}$ is the virtual state described in Sec. \ref{Sec3}.

\section{CALCULATION DETAILS WITH EAT}\label{Appendix1}
In this section, we provide details on the calculation of the finite-size analysis of the key generation protocol (KGP) that incorporates entropy accumulation and noisy preprocessing during the distribution stage.

Given a finite number $n$ of rounds in the KGP, a winning probability $\omega$ of the CHSH game, and the bit-flip probability $p$ of noisy preprocessing, we start by the CHSH-based entropy bound~\cite{Ho2020Noisy,Tan2022improveddiqkd,Pironio2009Deviceindependent}
\begin{equation}
    \begin{aligned}
        \eta(\omega)=
        \begin{cases}
            0 & \text{if} \ \ \frac{1}{4} \le \omega \le \frac{3}{4} \\
            \frac{1-\gamma}{2} r_p(\omega)+\gamma r_0(\omega) & \text{if} \ \ \omega \in \tilde{Q} \\
            \text{undefined} &  \text{otherwise},
        \end{cases}
    \end{aligned}
\end{equation}
where $\tilde{Q}=[\omega_{\text{min}}, \omega_{\text{max}}]$, with $\omega_{\text{min}}=\frac{1-1/\sqrt{2}}{2}$ and $\omega_{\text{max}}=\frac{1+1/\sqrt{2}}{2}$, and
\begin{equation}
    \begin{aligned}
        r_p(\omega)&=1-h(\frac{1+\sqrt{16\omega^2-16\omega+3}}{2})\\
        &+h(\frac{1+\sqrt{1+p(1-p)(64\omega^2-64\omega+8)}}{2}),
    \end{aligned}
\end{equation}
is an affine function, and 
\begin{equation}
    \begin{aligned}
        r_0(\omega)=1-h(\frac{1+\sqrt{16\omega^2-16\omega+3}}{2}),
    \end{aligned}
\end{equation}
is the $p=0$ case of $r_p$ and $h(x)=-x\log_2x-(1-x)\log_2(1-x)$ is the binary Shannon entropy function.

Then we have a family of lower bounds on the entropy:
\begin{equation}
    g_t (\omega)=\eta(t)+(\omega-t)\eta'(t).
\end{equation}

Given the following parameters:
\begin{equation}
    \begin{aligned}
        n\in \mathbb{N} \ \ &\ \text{number of rounds} \\
        \gamma\in (0,1) \ \ &\ \text{the testing probability} \\
        \omega_{\text{thr}} \ \ &\ \text{the threshold CHSH winning}\\
        &\ \text{probability} \\
        m \ \ &\ \text{the length of the error correction}\\
        &\ \text{syndrome} \\
        t\in \left(\frac{3}{4},\frac{1+1/\sqrt{2}}{2}\right] \ \ &\ \text{a CHSH winning probability} \\
        \varepsilon_\text{h}>0 \ \ &\ \text{the hashing collision probability} \\
        \varepsilon_{\text{PA}}>0 \ \ &\ \text{the privacy amplification}\\
        &\ \text{parameter} \\
        \varepsilon_{\text{EA}}>0 \ \ &\ \text{the entropy accumulation}\\
        &\ \text{parameter} \\
        \alpha\in (1,2) \ \ &\ \text{a Renyi parameter} \\
        \alpha'\in \left(1,1+\frac{2}{V^\prime}\right) \ \ &\ \text{another Renyi parameter} \\
        \beta \in [g_t(0),g_t(1)] \ \ &\ \text{Free parameter in constructing}\\
        &\ \text{min-tradeoff function}, 
    \end{aligned}
\end{equation}
where $V:=2\log 5$ and three smoothing parameters $\varepsilon_s, \varepsilon_s', \varepsilon_s'' >0$, which satisfy 
\begin{equation}
    \varepsilon_s'+2\varepsilon_s''<\varepsilon_s. 
\end{equation}

It can be shown that, for the number of rounds $n$, we have the $\varepsilon_\text{snd}-$sound key of length~\cite{Tan2022improveddiqkd}:
\begin{equation}\label{Eq.l_ecap}
    \begin{aligned}
        l=&n g_t (\omega_\text{thr})-n\frac{(\alpha-1) \ln2}{2}V^2-n(\alpha-1)^2 K_\alpha\\
        &-n\gamma -n(\frac{\alpha^\prime-1}{4})V^{\prime2}-\frac{\vartheta_{\varepsilon_s^\prime}}{\alpha-1}-\frac{\vartheta_{\varepsilon_s^{\prime\prime}}}{\alpha^\prime-1}\\
        &-(\frac{\alpha}{\alpha-1}+\frac{\alpha\prime}{\alpha\prime-1}-2)\log \frac{1}{\varepsilon_{\text{EA}}}-3\vartheta_{\varepsilon_s-\varepsilon_s^\prime-2\varepsilon_s^{\prime\prime}}\\
        &-\text{EC}_\text{max}-\lceil \log (\frac{1}{\varepsilon_\text{h}}) \rceil-2\log \frac{1}{\varepsilon_\text{PA}}+2,
    \end{aligned} 
\end{equation}
where
\begin{equation}
\varepsilon_\text{snd}=\max\{\varepsilon_{\text{EA}},\varepsilon_{\text{PA}}+2\varepsilon_s\}+2\varepsilon_\text{h}
\end{equation}
and values of $(\varepsilon_\text{EA}, \varepsilon_\text{PA}, \varepsilon_\text{h})$ are set to $(10^{-11},10^{-11}, 2^{-61})$. We fix the smoothing parameter \(\varepsilon_s=10^{-11}\), while the remaining free parameters are optimized to maximize \(l\).

We also need to define 
\begin{equation}
    \vartheta_\varepsilon=\text{log}\frac{1}{1-\sqrt{1-\varepsilon^2}} \le \text{log}\frac{2}{\varepsilon^2},
\end{equation}
\begin{equation}
    V:=\log 33 + \sqrt{2+\text{Var}_{\mathcal{Q}_f}(f_\text{min})}
\end{equation}
\begin{equation}
    \begin{aligned}
        K_{\alpha}&:=\frac{1}{6(2-\alpha)^3 \ln2} \\
        &\times 2^{(\alpha-1)(2\log 4+\text{Max}(f_\text{min})-\text{Min}_{\mathcal{Q}_f}(f_\text{min})} \\
        &\times \ln^3 (2^{2\log 4+\text{Max}(f_\text{min})-\text{Min}_{\mathcal{Q}_f}(f_\text{min})}+e^2),
    \end{aligned}
\end{equation}
with $f_\text{min}$ and $\mathcal{Q}_f$ being a function and a set satisfying
\begin{equation}
    \text{Max}(f_\text{min})=\frac{1}{\gamma}g_t(1)+(1-\frac{1}{\gamma})\beta,
\end{equation}
\begin{equation}
    \text{Min}_{\mathcal{Q}_f}(f_\text{min})=g_t (\frac{2-\sqrt{2}}{4}),
\end{equation}
and
\begin{equation}
\begin{aligned}
        \text{Var}_{\mathcal{Q}_f}(f_\text{min}) &\le \frac{2-\sqrt{2}}{4}\text{min}\{\Delta_0^2, \Delta_1^2\}\\
        &+\frac{2+\sqrt{2}}{4}\text{max}\{\Delta_0^2, \Delta_1^2\},
\end{aligned}
\end{equation}
where $\Delta_\omega:=\beta-g_t(\omega).$

$\alpha$ and $\alpha^\prime$ could be estimated through
\begin{equation}
    \alpha-1=\frac{1}{n}\sqrt{\frac{2}{V^2 \ln 2}\left(\vartheta_{\varepsilon^\prime_s}+2\log \frac{1}{\varepsilon_\text{EA}} \right)},
\end{equation}
and
\begin{equation}
    \alpha^\prime-1=\frac{1}{n}\sqrt{\frac{2}{V^{\prime2} }\left(\vartheta_{\varepsilon^{\prime\prime}_s}+2\log \frac{1}{\varepsilon_\text{EA}} \right)},
\end{equation}
when $n$ is large enough.

What's more,  
\begin{equation}
    \omega_\text{thr}=1-\frac{q_\text{thr}}{\gamma},
\end{equation}
\begin{equation}
    q_\text{thr}=q+k\sqrt{\frac{q(1-q)}{n}},
\end{equation}
in which, $k=3, q=\gamma(1-\omega), \omega=\frac{4+S}{8}$. $S$ represents the CHSH-score and $\omega$ represents the CHSH winning probability we expect to achieve. This guarantees that the protocol is complete with $\varepsilon_\text{com} \le 0.01$.

Furthermore, 
\begin{equation}
    \text{EC}_\text{max}=n[\frac{(1-\gamma)}{2}h(Q')+\gamma h(\frac{4-S}{8})]+50 \sqrt{n},
\end{equation}
where 
\begin{equation}
    Q'=p+(1-2p)Q,
\end{equation}
and $Q$ is the QBER as described in Appendix \ref{Appendix3}

\section{\uppercase{Composable security analysis of the KGP}}\label{AppendixQKDsecurity}
In this section, we analyze the composable security of the KGP used in the distribution stage. The purpose of composable security is to guarantee that the real key generated by the protocol can safely replace an ideal secret key resource in any subsequent cryptographic application~\cite{Portmann2014Cryptographic}. In other words, once the KGP is proven secure, the generated key can be used in later protocols, such as authentication, encryption, or digital signatures, with only an additive accumulation of the corresponding security errors.
For our KGP, composable soundness can be reduced to two simpler properties: secrecy and correctness~\cite{Portmann2014Cryptographic}. Secrecy requires that, conditioned on acceptance, the final key generated between legitimate participants, e.g., Alice and Bob in a bipartite KGP, is close to an ideal uniform key and independent of Eve’s side information. Correctness requires that the corresponding legitimate participants output mismatching keys only with negligible probability. Therefore, if the protocol is $\varepsilon_\text{sec}$-secret and $\varepsilon_\text{cor}$-correct, then it is $(\varepsilon_\text{sec}+\varepsilon_\text{cor})$-sound in the composable sense~\cite{Portmann2014Cryptographic}.

\subsection{Secrecy}
We first analyze the secrecy of one bipartite KGP link, for instance the link between Alice and Bob. The same argument applies independently to the Alice--Charlie link. Let $\Omega$ denote the event that the KGP accepts, and let $\textbf{K}_\text{A}$ be Alice's final key after privacy amplification. $\textbf{E}'$ denotes all side-information registers available to the adversary at the end of the protocol. The KGP is said to be $\varepsilon_{\rm sec}$-secret if
\begin{equation}
P(\Omega)\frac12 \left\| \sigma_{\textbf{K}_\text{A}\textbf{E}'}-U_{\textbf{K}_\text{A}}\otimes \sigma_{\textbf{E}'}\right\|_1\le\varepsilon_{\rm sec}, 
\end{equation}
where $\sigma$ denotes the normalized state conditioned on $\Omega$ and $U_{\textbf{K}_\text{A}}$ denotes the maximally mixed state~\cite{Tan2022improveddiqkd}. 

Here we define four events: $\Omega_\text{g}$ denotes successful error correction, i.e., $\mathbf{A}=\tilde{\mathbf{A}}$; $\Omega_\text{h}$ denotes that the error verification hash test passes; $\Omega_{\rm PE}$ denotes that the actual parameter estimation test passes; and $\Omega'_{\rm PE}$ denotes that the parameter estimation test evaluated with Alice’s raw string $\mathbf{A}$ rather than Bob’s reconstructed string $\tilde{\mathbf{A}}$ passes. We define $\textbf{E}' := \textbf{XYLE}H$~\cite{Tan2022improveddiqkd}, where $\textbf{E}$ denotes the adversary's quantum side information retained during the KGP, $H$ is the register storing the choice of hash function used for privacy  amplification, $\textbf{X}$, $\textbf{Y}$, and $\textbf{L}$ have been defined before. Denote the privacy amplification step in KGP as the map $\mathcal{M}_\text{PA}$. Considering a particular state $\rho$, we rewrite the secrecy condition as
\begin{equation}
\begin{aligned}
&\frac{1}{2}
\left\|
\mathcal{M}_{\rm PA}\!\left(
\rho_{\wedge \Omega_\text{h} \wedge \Omega_{\rm PE}}
\right)_{\mathbf{K}_\text{A}\mathbf{E}'}
\right. \\
&\left.
\quad
-
U_{\mathbf{K}_\text{A}}\otimes 
\mathcal{M}_{\rm PA}\!\left(
\rho_{\wedge \Omega_\text{h} \wedge \Omega_{\rm PE}}
\right)_{\mathbf{E}'}
\right\|_1 \\
&\le
\varepsilon_{\rm sec}.
\end{aligned}
\end{equation}

Since $\rho_{\wedge \Omega_\text{h} \wedge \Omega_{\rm PE}}=\rho_{\wedge \Omega_\text{g} \wedge \Omega_\text{h} \wedge \Omega_{\rm PE}}+\rho_{\wedge \Omega_\text{g}^c \wedge \Omega_\text{h} \wedge \Omega_{\rm PE}}$, where $\Omega_\text{g}^c$ denotes the complement of $\Omega_\text{g}$, we have 
\begin{equation}\label{F3}
\begin{aligned}
&\frac{1}{2}
\left\|
\mathcal{M}_{\rm PA}\!\left(
\rho_{\wedge \Omega_\text{h} \wedge \Omega_{\rm PE}}
\right)_{\mathbf{K}_\text{A}\mathbf{E}'}
\right. \\
&\left.
\quad
-
U_{\mathbf{K}_\text{A}}\otimes 
\mathcal{M}_{\rm PA}\!\left(
\rho_{\wedge \Omega_\text{h} \wedge \Omega_{\rm PE}}
\right)_{\mathbf{E}'}
\right\|_1 \\
\le &\frac{1}{2}
\left\|
\mathcal{M}_{\rm PA}\!\left(
\rho_{\wedge \Omega_\text{g} \wedge \Omega_\text{h} \wedge \Omega'_{\rm PE}}
\right)_{\mathbf{K}_\text{A}\mathbf{E}'}
\right. \\
&\left.
\quad
-
U_{\mathbf{K}_\text{A}}\otimes
\mathcal{M}_{\rm PA}\!\left(
\rho_{\wedge \Omega_\text{g} \wedge \Omega_\text{h} \wedge \Omega'_{\rm PE}}
\right)_{\mathbf{E}'}
\right\|_1
+
\varepsilon_\text{h},
\end{aligned}
\end{equation}
in which $\varepsilon_\text{h}$ denotes the hashing collision probability, corresponding to the events that the error correction failed but the error verification hash test passes.

We now bound the first term in Eq. (\ref{F3}), and denote it as $D$.

Consider three cases. In the first case,  $\Pr(\Omega_\text{g} \wedge \Omega_\text{h}|\Omega'_\text{PE}) \le \varepsilon_s^2$, which corresponds to a low success probability of error correction and verification when the virtual parameter estimation based on $\mathbf{A}$ passed. We have $D\le \Pr(\Omega_\text{g} \wedge \Omega_\text{h}|\Omega'_\text{PE}) \le\Pr(\Omega_\text{g} \wedge \Omega_\text{h} \wedge \Omega'_\text{PE}) \le \varepsilon_s^2$. In the second case, $\Pr(\Omega'_\text{PE}) \le \varepsilon_\text{EA}$, which corresponds to a low success probability of virtual parameter estimation. Then we have $D\le \Pr(\Omega'_\text{PE}) \le \varepsilon_\text{EA}$. The third case is our main focus, which corresponds to the case where neither of the above is true.

In this case, according to EAT~\cite{Tan2022improveddiqkd},  
for all parameter values as specified in Appendix~\ref{Appendix1}, we have
\begin{widetext}
\begin{equation}
\begin{aligned}
H_{\min}^{\varepsilon_s}(\mathbf{A}|\mathbf{X}\mathbf{Y}\mathbf{E})_{\rho|\Omega'_{\rm PE}}
&> n g(w_{\rm exp}-\delta_{\rm tol}) - n (\alpha - 1) \frac{\ln 2}{2} V^2 - n (\alpha - 1)^2 K_\alpha- n \gamma- n \frac{\alpha' - 1}{4} V'^2\\
&- \frac{\vartheta_{\varepsilon_s'}}{\alpha - 1}- \frac{\vartheta_{\varepsilon_s''}}{\alpha' - 1} - \left( \frac{\alpha}{\alpha - 1} + \frac{\alpha'}{\alpha' - 1} \right) \log \frac{1}{\Pr[\Omega'_{\rm PE}]} - 3 \vartheta_{\varepsilon_s - \varepsilon_s' - 2\varepsilon_s''},
\end{aligned}
\end{equation}
\begin{equation}
\begin{aligned}
H_{\min}^{\varepsilon_s}(\mathbf{A}|\mathbf{X}\mathbf{Y}\mathbf{L}\mathbf{E})_{(\rho|\Omega'_{\rm PE}) \wedge \Omega_g \wedge \Omega_h}
\ge H_{\min}^{\varepsilon_s}(\mathbf{A}|\mathbf{X}\mathbf{Y}\mathbf{L}\mathbf{E})_{\rho|\Omega'_{\rm PE}} 
\ge H_{\min}^{\varepsilon_s}(\mathbf{A}|\mathbf{X}\mathbf{Y}\mathbf{E})_{\rho|\Omega'_{\rm PE}} 
- \text{EC}_{\max} 
- \left\lceil \log \frac{1}{\varepsilon_h} \right\rceil,
\end{aligned}
\end{equation}
\end{widetext}
where the first smooth min-entropy quantifies the uncertainty an adversary has about Alice's raw string before accounting for public communication, while the second gives a lower bound on the number of secure key bits that can actually be extracted after error correction and privacy amplification. So, for a key of length $l$ satisfying Eq.~(\ref{Eq.l_ecap}), we have
\begin{widetext}
\begin{equation}
\begin{aligned}
D \le 2^{-\frac{1}{2} 
\Bigl(
H_{\min}^{\varepsilon_s}(\mathbf{A}|\mathbf{X}\mathbf{Y}\mathbf{L}\mathbf{E})_{(\rho|\Omega'_{\rm PE}) \wedge \Omega_g \wedge \Omega_h}
- l + 2
\Bigr)- \log \frac{1}{\Pr[\Omega'_{\rm PE}]}}+ 2 \varepsilon_s \Pr[\Omega'_{\rm PE}]\le \varepsilon_\text{PA}+2\varepsilon_s.
\end{aligned}
\end{equation}
\end{widetext}

Since these three cases are exhaustive, we can guarantee the secrecy of the KGP by choosing
\begin{equation}
    \begin{aligned}
        \varepsilon_\text{sec}&=\max{\{\varepsilon_s^2, \varepsilon_\text{EA}, \varepsilon_\text{PA}+2\varepsilon_s \}}+\varepsilon_\text{h}\\
        &=\max{\{\varepsilon_\text{EA}, \varepsilon_\text{PA}+2\varepsilon_s \}}+\varepsilon_\text{h}.
    \end{aligned}
\end{equation}

\subsection{Correctness}
Correctness requires that the legitimate participants output identical final keys whenever the KGP accepts, except with a small probability. In our protocol, the accept condition can be written as $\Omega=\Omega_\text{h} \wedge\Omega_{\rm PE}$. The KGP is said to be $\varepsilon_{\rm cor}$-correct if
\begin{equation}
\Pr[\mathbf{K}_{\rm A}\neq \mathbf{K}_{\rm B}\wedge \Omega]\le \varepsilon_{\rm cor}.
\end{equation}

Since Alice and Bob apply the same privacy-amplification map to $\mathbf{A}$ and $\tilde{\mathbf{A}}$, respectively, a mismatch between $\mathbf{K}_{\rm A}$ and $\mathbf{K}_{\rm B}$ can occur only if $\mathbf{A}\neq\tilde{\mathbf{A}}$. Therefore,
\begin{equation}
\begin{aligned}
\Pr[\mathbf{K}_{\rm A}\neq \mathbf{K}_{\rm B}
\wedge \Omega]
&\le
\Pr[\mathbf{K}_{\rm A}\neq \mathbf{K}_{\rm B}\wedge \Omega_{\rm h}] \\
&\le
\varepsilon_\text{h}.
\end{aligned}
\end{equation}

Hence, the KGP is $\varepsilon_{\rm cor}$-correct, when
\begin{equation}
\varepsilon_{\rm cor}=\varepsilon_\text{h} .
\end{equation}

\subsection*{}
Above all, the composable security of the KGP used could be guaranteed when
\begin{equation}
\varepsilon_\text{snd}=\max\{\varepsilon_{\text{EA}},\varepsilon_{\text{PA}}+2\varepsilon_s\}+2\varepsilon_\text{h}.
\end{equation}

\section{\uppercase{States produced by the SPDC sources}}\label{Appendix5}
According to~\cite{Liu2022Toward}, the prepared quantum state of photons produced by the SPDC sources could be expressed as:
\begin{equation}
    \ket{\phi}=\frac{1}{\sqrt{1+r^2}}(\ket{HV}+r\ket{VH}),
\end{equation}
in which $r\in[0,1]$ is an optimization parameter. Considering the imperfections in an experimental scenario, the prepared state can be expressed as a Werner state:
\begin{equation}
    \rho=V\ket{\phi}\bra{\phi}+\frac{1-V}{4}I,
\end{equation}
where V is the visibility related to fidelity $F$, and
\begin{equation}
    V=\frac{4F-1}{3}.
\end{equation}

The photon pair distributions produced by SPDC sources follow the Poisson distribution:
\begin{equation}
    P(n)=e^{-\mu}\frac{\mu^n}{n!},
\end{equation}
in which $\mu$ represents the mean photon number and $n$ is the number of photon pairs. According to~\cite{Liu2022Toward}, we set $\mu=0.04$ and $F=99.52\%$. In this case, we have $P(2)/P(1)=0.02$, and we can ignore the effect of multipairs.


\begin{thebibliography}{85}%
\makeatletter
\providecommand \@ifxundefined [1]{%
 \@ifx{#1\undefined}
}%
\providecommand \@ifnum [1]{%
 \ifnum #1\expandafter \@firstoftwo
 \else \expandafter \@secondoftwo
 \fi
}%
\providecommand \@ifx [1]{%
 \ifx #1\expandafter \@firstoftwo
 \else \expandafter \@secondoftwo
 \fi
}%
\providecommand \natexlab [1]{#1}%
\providecommand \enquote  [1]{``#1''}%
\providecommand \bibnamefont  [1]{#1}%
\providecommand \bibfnamefont [1]{#1}%
\providecommand \citenamefont [1]{#1}%
\providecommand \href@noop [0]{\@secondoftwo}%
\providecommand \href [0]{\begingroup \@sanitize@url \@href}%
\providecommand \@href[1]{\@@startlink{#1}\@@href}%
\providecommand \@@href[1]{\endgroup#1\@@endlink}%
\providecommand \@sanitize@url [0]{\catcode `\\12\catcode `\$12\catcode `\&12\catcode `\#12\catcode `\^12\catcode `\_12\catcode `\%12\relax}%
\providecommand \@@startlink[1]{}%
\providecommand \@@endlink[0]{}%
\providecommand \url  [0]{\begingroup\@sanitize@url \@url }%
\providecommand \@url [1]{\endgroup\@href {#1}{\urlprefix }}%
\providecommand \urlprefix  [0]{URL }%
\providecommand \Eprint [0]{\href }%
\providecommand \doibase [0]{https://doi.org/}%
\providecommand \selectlanguage [0]{\@gobble}%
\providecommand \bibinfo  [0]{\@secondoftwo}%
\providecommand \bibfield  [0]{\@secondoftwo}%
\providecommand \translation [1]{[#1]}%
\providecommand \BibitemOpen [0]{}%
\providecommand \bibitemStop [0]{}%
\providecommand \bibitemNoStop [0]{.\EOS\space}%
\providecommand \EOS [0]{\spacefactor3000\relax}%
\providecommand \BibitemShut  [1]{\csname bibitem#1\endcsname}%
\let\auto@bib@innerbib\@empty
\bibitem [{\citenamefont {Portmann}\ and\ \citenamefont {Renner}(2022)}]{Portmann2022Security}%
  \BibitemOpen
  \bibfield  {author} {\bibinfo {author} {\bibfnamefont {C.}~\bibnamefont {Portmann}}\ and\ \bibinfo {author} {\bibfnamefont {R.}~\bibnamefont {Renner}},\ }\bibfield  {title} {\bibinfo {title} {Security in quantum cryptography},\ }\href {https://doi.org/10.1103/RevModPhys.94.025008} {\bibfield  {journal} {\bibinfo  {journal} {Rev. Mod. Phys.}\ }\textbf {\bibinfo {volume} {94}},\ \bibinfo {pages} {025008} (\bibinfo {year} {2022})}\BibitemShut {NoStop}%
\bibitem [{\citenamefont {Masanes}\ \emph {et~al.}(2011)\citenamefont {Masanes}, \citenamefont {Pironio},\ and\ \citenamefont {Ac{\'\i}n}}]{Masanes2011secure}%
  \BibitemOpen
  \bibfield  {author} {\bibinfo {author} {\bibfnamefont {L.}~\bibnamefont {Masanes}}, \bibinfo {author} {\bibfnamefont {S.}~\bibnamefont {Pironio}},\ and\ \bibinfo {author} {\bibfnamefont {A.}~\bibnamefont {Ac{\'\i}n}},\ }\bibfield  {title} {\bibinfo {title} {Secure device-independent quantum key distribution with causally independent measurement devices},\ }\href {https://doi.org/10.1038/ncomms1244} {\bibfield  {journal} {\bibinfo  {journal} {Nat. Commun.}\ }\textbf {\bibinfo {volume} {2}},\ \bibinfo {pages} {238} (\bibinfo {year} {2011})}\BibitemShut {NoStop}%
\bibitem [{\citenamefont {Zhang}\ \emph {et~al.}(2022)\citenamefont {Zhang}, \citenamefont {van Leent}, \citenamefont {Redeker}, \citenamefont {Garthoff}, \citenamefont {Schwonnek}, \citenamefont {Fertig}, \citenamefont {Eppelt}, \citenamefont {Rosenfeld}, \citenamefont {Scarani}, \citenamefont {Lim},\ and\ \citenamefont {Weinfurter}}]{Zhang2022DeviceIndependent}%
  \BibitemOpen
  \bibfield  {author} {\bibinfo {author} {\bibfnamefont {W.}~\bibnamefont {Zhang}}, \bibinfo {author} {\bibfnamefont {T.}~\bibnamefont {van Leent}}, \bibinfo {author} {\bibfnamefont {K.}~\bibnamefont {Redeker}}, \bibinfo {author} {\bibfnamefont {R.}~\bibnamefont {Garthoff}}, \bibinfo {author} {\bibfnamefont {R.}~\bibnamefont {Schwonnek}}, \bibinfo {author} {\bibfnamefont {F.}~\bibnamefont {Fertig}}, \bibinfo {author} {\bibfnamefont {S.}~\bibnamefont {Eppelt}}, \bibinfo {author} {\bibfnamefont {W.}~\bibnamefont {Rosenfeld}}, \bibinfo {author} {\bibfnamefont {V.}~\bibnamefont {Scarani}}, \bibinfo {author} {\bibfnamefont {C.~C.-W.}\ \bibnamefont {Lim}},\ and\ \bibinfo {author} {\bibfnamefont {H.}~\bibnamefont {Weinfurter}},\ }\bibfield  {title} {\bibinfo {title} {A device-independent quantum key distribution system for distant users},\ }\href {https://doi.org/10.1038/s41586-022-04891-y} {\bibfield  {journal} {\bibinfo  {journal} {Nature}\ }\textbf {\bibinfo {volume} {607}},\ \bibinfo {pages} {687} (\bibinfo
  {year} {2022})}\BibitemShut {NoStop}%
\bibitem [{\citenamefont {Ac\'{\i}n}\ \emph {et~al.}(2007)\citenamefont {Ac\'{\i}n}, \citenamefont {Brunner}, \citenamefont {Gisin}, \citenamefont {Massar}, \citenamefont {Pironio},\ and\ \citenamefont {Scarani}}]{Acin2007Device-Independent}%
  \BibitemOpen
  \bibfield  {author} {\bibinfo {author} {\bibfnamefont {A.}~\bibnamefont {Ac\'{\i}n}}, \bibinfo {author} {\bibfnamefont {N.}~\bibnamefont {Brunner}}, \bibinfo {author} {\bibfnamefont {N.}~\bibnamefont {Gisin}}, \bibinfo {author} {\bibfnamefont {S.}~\bibnamefont {Massar}}, \bibinfo {author} {\bibfnamefont {S.}~\bibnamefont {Pironio}},\ and\ \bibinfo {author} {\bibfnamefont {V.}~\bibnamefont {Scarani}},\ }\bibfield  {title} {\bibinfo {title} {Device-independent security of quantum cryptography against collective attacks},\ }\href {https://doi.org/10.1103/PhysRevLett.98.230501} {\bibfield  {journal} {\bibinfo  {journal} {Phys. Rev. Lett.}\ }\textbf {\bibinfo {volume} {98}},\ \bibinfo {pages} {230501} (\bibinfo {year} {2007})}\BibitemShut {NoStop}%
\bibitem [{\citenamefont {Arnon-Friedman}\ \emph {et~al.}(2018)\citenamefont {Arnon-Friedman}, \citenamefont {Dupuis}, \citenamefont {Fawzi}, \citenamefont {Renner},\ and\ \citenamefont {Vidick}}]{Arnon2018practical}%
  \BibitemOpen
  \bibfield  {author} {\bibinfo {author} {\bibfnamefont {R.}~\bibnamefont {Arnon-Friedman}}, \bibinfo {author} {\bibfnamefont {F.}~\bibnamefont {Dupuis}}, \bibinfo {author} {\bibfnamefont {O.}~\bibnamefont {Fawzi}}, \bibinfo {author} {\bibfnamefont {R.}~\bibnamefont {Renner}},\ and\ \bibinfo {author} {\bibfnamefont {T.}~\bibnamefont {Vidick}},\ }\bibfield  {title} {\bibinfo {title} {Practical device-independent quantum cryptography via entropy accumulation},\ }\href {https://doi.org/10.1038/s41467-017-02307-4} {\bibfield  {journal} {\bibinfo  {journal} {Nat. Commun.}\ }\textbf {\bibinfo {volume} {9}},\ \bibinfo {pages} {459} (\bibinfo {year} {2018})}\BibitemShut {NoStop}%
\bibitem [{\citenamefont {Zapatero}\ \emph {et~al.}(2023)\citenamefont {Zapatero}, \citenamefont {van Leent}, \citenamefont {Arnon-Friedman}, \citenamefont {Liu}, \citenamefont {Zhang}, \citenamefont {Weinfurter},\ and\ \citenamefont {Curty}}]{Zapatero2023advances}%
  \BibitemOpen
  \bibfield  {author} {\bibinfo {author} {\bibfnamefont {V.}~\bibnamefont {Zapatero}}, \bibinfo {author} {\bibfnamefont {T.}~\bibnamefont {van Leent}}, \bibinfo {author} {\bibfnamefont {R.}~\bibnamefont {Arnon-Friedman}}, \bibinfo {author} {\bibfnamefont {W.-Z.}\ \bibnamefont {Liu}}, \bibinfo {author} {\bibfnamefont {Q.}~\bibnamefont {Zhang}}, \bibinfo {author} {\bibfnamefont {H.}~\bibnamefont {Weinfurter}},\ and\ \bibinfo {author} {\bibfnamefont {M.}~\bibnamefont {Curty}},\ }\bibfield  {title} {\bibinfo {title} {Advances in device-independent quantum key distribution},\ }\href {https://doi.org/10.1038/s41534-023-00684-x} {\bibfield  {journal} {\bibinfo  {journal} {npj Quantum Inf.}\ }\textbf {\bibinfo {volume} {9}},\ \bibinfo {pages} {10} (\bibinfo {year} {2023})}\BibitemShut {NoStop}%
\bibitem [{\citenamefont {Lee}\ and\ \citenamefont {Hoban}(2018)}]{Lee2018Towards}%
  \BibitemOpen
  \bibfield  {author} {\bibinfo {author} {\bibfnamefont {C.~M.}\ \bibnamefont {Lee}}\ and\ \bibinfo {author} {\bibfnamefont {M.~J.}\ \bibnamefont {Hoban}},\ }\bibfield  {title} {\bibinfo {title} {Towards device-independent information processing on general quantum networks},\ }\href {https://doi.org/10.1103/PhysRevLett.120.020504} {\bibfield  {journal} {\bibinfo  {journal} {Phys. Rev. Lett.}\ }\textbf {\bibinfo {volume} {120}},\ \bibinfo {pages} {020504} (\bibinfo {year} {2018})}\BibitemShut {NoStop}%
\bibitem [{\citenamefont {Grasselli}\ \emph {et~al.}(2021)\citenamefont {Grasselli}, \citenamefont {Murta}, \citenamefont {Kampermann},\ and\ \citenamefont {Bru{\ss}}}]{Grasselli2021Entropy}%
  \BibitemOpen
  \bibfield  {author} {\bibinfo {author} {\bibfnamefont {F.}~\bibnamefont {Grasselli}}, \bibinfo {author} {\bibfnamefont {G.}~\bibnamefont {Murta}}, \bibinfo {author} {\bibfnamefont {H.}~\bibnamefont {Kampermann}},\ and\ \bibinfo {author} {\bibfnamefont {D.}~\bibnamefont {Bru{\ss}}},\ }\bibfield  {title} {\bibinfo {title} {Entropy bounds for multiparty device-independent cryptography},\ }\href {https://doi.org/10.1103/PRXQuantum.2.010308} {\bibfield  {journal} {\bibinfo  {journal} {PRX Quantum}\ }\textbf {\bibinfo {volume} {2}},\ \bibinfo {pages} {010308} (\bibinfo {year} {2021})}\BibitemShut {NoStop}%
\bibitem [{\citenamefont {Roberts}\ \emph {et~al.}(2017)\citenamefont {Roberts}, \citenamefont {Lucamarini}, \citenamefont {Yuan}, \citenamefont {Dynes}, \citenamefont {Comandar}, \citenamefont {Sharpe}, \citenamefont {Shields}, \citenamefont {Curty}, \citenamefont {Puthoor},\ and\ \citenamefont {Andersson}}]{Roberts2017experimental}%
  \BibitemOpen
  \bibfield  {author} {\bibinfo {author} {\bibfnamefont {G.}~\bibnamefont {Roberts}}, \bibinfo {author} {\bibfnamefont {M.}~\bibnamefont {Lucamarini}}, \bibinfo {author} {\bibfnamefont {Z.}~\bibnamefont {Yuan}}, \bibinfo {author} {\bibfnamefont {J.}~\bibnamefont {Dynes}}, \bibinfo {author} {\bibfnamefont {L.}~\bibnamefont {Comandar}}, \bibinfo {author} {\bibfnamefont {A.}~\bibnamefont {Sharpe}}, \bibinfo {author} {\bibfnamefont {A.}~\bibnamefont {Shields}}, \bibinfo {author} {\bibfnamefont {M.}~\bibnamefont {Curty}}, \bibinfo {author} {\bibfnamefont {I.}~\bibnamefont {Puthoor}},\ and\ \bibinfo {author} {\bibfnamefont {E.}~\bibnamefont {Andersson}},\ }\bibfield  {title} {\bibinfo {title} {Experimental measurement-device-independent quantum digital signatures},\ }\href {https://doi.org/10.1038/s41467-017-01245-5} {\bibfield  {journal} {\bibinfo  {journal} {Nat. Commun.}\ }\textbf {\bibinfo {volume} {8}},\ \bibinfo {pages} {1098} (\bibinfo {year} {2017})}\BibitemShut {NoStop}%
\bibitem [{\citenamefont {Yin}\ \emph {et~al.}(2023)\citenamefont {Yin}, \citenamefont {Fu}, \citenamefont {Li}, \citenamefont {Weng}, \citenamefont {Li}, \citenamefont {Gu}, \citenamefont {Lu}, \citenamefont {Huang},\ and\ \citenamefont {Chen}}]{Yin2023Experimental}%
  \BibitemOpen
  \bibfield  {author} {\bibinfo {author} {\bibfnamefont {H.-L.}\ \bibnamefont {Yin}}, \bibinfo {author} {\bibfnamefont {Y.}~\bibnamefont {Fu}}, \bibinfo {author} {\bibfnamefont {C.-L.}\ \bibnamefont {Li}}, \bibinfo {author} {\bibfnamefont {C.-X.}\ \bibnamefont {Weng}}, \bibinfo {author} {\bibfnamefont {B.-H.}\ \bibnamefont {Li}}, \bibinfo {author} {\bibfnamefont {J.}~\bibnamefont {Gu}}, \bibinfo {author} {\bibfnamefont {Y.-S.}\ \bibnamefont {Lu}}, \bibinfo {author} {\bibfnamefont {S.}~\bibnamefont {Huang}},\ and\ \bibinfo {author} {\bibfnamefont {Z.-B.}\ \bibnamefont {Chen}},\ }\bibfield  {title} {\bibinfo {title} {Experimental quantum secure network with digital signatures and encryption},\ }\href {https://doi.org/10.1093/nsr/nwac228} {\bibfield  {journal} {\bibinfo  {journal} {Natl. Sci. Rev.}\ }\textbf {\bibinfo {volume} {10}},\ \bibinfo {pages} {nwac228} (\bibinfo {year} {2023})}\BibitemShut {NoStop}%
\bibitem [{\citenamefont {Richter}\ \emph {et~al.}(2021)\citenamefont {Richter}, \citenamefont {Thornton}, \citenamefont {Khan}, \citenamefont {Scott}, \citenamefont {Jaksch}, \citenamefont {Vogl}, \citenamefont {Stiller}, \citenamefont {Leuchs}, \citenamefont {Marquardt},\ and\ \citenamefont {Korolkova}}]{Richter2021Agile}%
  \BibitemOpen
  \bibfield  {author} {\bibinfo {author} {\bibfnamefont {S.}~\bibnamefont {Richter}}, \bibinfo {author} {\bibfnamefont {M.}~\bibnamefont {Thornton}}, \bibinfo {author} {\bibfnamefont {I.}~\bibnamefont {Khan}}, \bibinfo {author} {\bibfnamefont {H.}~\bibnamefont {Scott}}, \bibinfo {author} {\bibfnamefont {K.}~\bibnamefont {Jaksch}}, \bibinfo {author} {\bibfnamefont {U.}~\bibnamefont {Vogl}}, \bibinfo {author} {\bibfnamefont {B.}~\bibnamefont {Stiller}}, \bibinfo {author} {\bibfnamefont {G.}~\bibnamefont {Leuchs}}, \bibinfo {author} {\bibfnamefont {C.}~\bibnamefont {Marquardt}},\ and\ \bibinfo {author} {\bibfnamefont {N.}~\bibnamefont {Korolkova}},\ }\bibfield  {title} {\bibinfo {title} {Agile and versatile quantum communication: Signatures and secrets},\ }\href {https://doi.org/10.1103/PhysRevX.11.011038} {\bibfield  {journal} {\bibinfo  {journal} {Phys. Rev. X}\ }\textbf {\bibinfo {volume} {11}},\ \bibinfo {pages} {011038} (\bibinfo {year} {2021})}\BibitemShut {NoStop}%
\bibitem [{\citenamefont {Lu}\ \emph {et~al.}(2025)\citenamefont {Lu}, \citenamefont {Wang}, \citenamefont {Zhou}, \citenamefont {Fan}, \citenamefont {Wang}, \citenamefont {Yin}, \citenamefont {Li}, \citenamefont {He}, \citenamefont {Wang}, \citenamefont {Chen}, \citenamefont {Xue}, \citenamefont {Guo},\ and\ \citenamefont {Han}}]{He2025Fully}%
  \BibitemOpen
  \bibfield  {author} {\bibinfo {author} {\bibfnamefont {F.-Y.}\ \bibnamefont {Lu}}, \bibinfo {author} {\bibfnamefont {Z.-H.}\ \bibnamefont {Wang}}, \bibinfo {author} {\bibfnamefont {Y.}~\bibnamefont {Zhou}}, \bibinfo {author} {\bibfnamefont {Y.-X.}\ \bibnamefont {Fan}}, \bibinfo {author} {\bibfnamefont {S.}~\bibnamefont {Wang}}, \bibinfo {author} {\bibfnamefont {Z.-Q.}\ \bibnamefont {Yin}}, \bibinfo {author} {\bibfnamefont {J.}~\bibnamefont {Li}}, \bibinfo {author} {\bibfnamefont {D.-Y.}\ \bibnamefont {He}}, \bibinfo {author} {\bibfnamefont {F.-X.}\ \bibnamefont {Wang}}, \bibinfo {author} {\bibfnamefont {W.}~\bibnamefont {Chen}}, \bibinfo {author} {\bibfnamefont {K.}~\bibnamefont {Xue}}, \bibinfo {author} {\bibfnamefont {G.-C.}\ \bibnamefont {Guo}},\ and\ \bibinfo {author} {\bibfnamefont {Z.-F.}\ \bibnamefont {Han}},\ }\bibfield  {title} {\bibinfo {title} {Fully heterogeneous prepare-and-measure quantum network for the next stage of quantum internet},\ }\href {https://doi.org/10.1038/s41467-025-66333-3}
  {\bibfield  {journal} {\bibinfo  {journal} {Nat. Commun.}\ }\textbf {\bibinfo {volume} {16}},\ \bibinfo {pages} {11487} (\bibinfo {year} {2025})}\BibitemShut {NoStop}%
\bibitem [{\citenamefont {Du}\ \emph {et~al.}(2025)\citenamefont {Du}, \citenamefont {Li}, \citenamefont {Hua}, \citenamefont {Cao}, \citenamefont {Zhao}, \citenamefont {Xie}, \citenamefont {Zhang}, \citenamefont {Yin}, \citenamefont {Xiao},\ and\ \citenamefont {Wei}}]{Cao2025ChipIntegrated}%
  \BibitemOpen
  \bibfield  {author} {\bibinfo {author} {\bibfnamefont {Y.}~\bibnamefont {Du}}, \bibinfo {author} {\bibfnamefont {B.-H.}\ \bibnamefont {Li}}, \bibinfo {author} {\bibfnamefont {X.}~\bibnamefont {Hua}}, \bibinfo {author} {\bibfnamefont {X.-Y.}\ \bibnamefont {Cao}}, \bibinfo {author} {\bibfnamefont {Z.}~\bibnamefont {Zhao}}, \bibinfo {author} {\bibfnamefont {F.}~\bibnamefont {Xie}}, \bibinfo {author} {\bibfnamefont {Z.}~\bibnamefont {Zhang}}, \bibinfo {author} {\bibfnamefont {H.-L.}\ \bibnamefont {Yin}}, \bibinfo {author} {\bibfnamefont {X.}~\bibnamefont {Xiao}},\ and\ \bibinfo {author} {\bibfnamefont {K.}~\bibnamefont {Wei}},\ }\bibfield  {title} {\bibinfo {title} {Chip-integrated quantum signature network over 200 km},\ }\href {https://doi.org/10.1038/s41377-025-01775-4} {\bibfield  {journal} {\bibinfo  {journal} {Light Sci. Appl.}\ }\textbf {\bibinfo {volume} {14}},\ \bibinfo {pages} {108} (\bibinfo {year} {2025})}\BibitemShut {NoStop}%
\bibitem [{\citenamefont {Bozzio}\ \emph {et~al.}(2025)\citenamefont {Bozzio}, \citenamefont {Cr{\'e}peau}, \citenamefont {Wallden},\ and\ \citenamefont {Walther}}]{Bozzio2025Quantum}%
  \BibitemOpen
  \bibfield  {author} {\bibinfo {author} {\bibfnamefont {M.}~\bibnamefont {Bozzio}}, \bibinfo {author} {\bibfnamefont {C.}~\bibnamefont {Cr{\'e}peau}}, \bibinfo {author} {\bibfnamefont {P.}~\bibnamefont {Wallden}},\ and\ \bibinfo {author} {\bibfnamefont {P.}~\bibnamefont {Walther}},\ }\bibfield  {title} {\bibinfo {title} {Quantum cryptography beyond key distribution: Theory and experiment},\ }\href {https://doi.org/10.1103/p84v-1xqv} {\bibfield  {journal} {\bibinfo  {journal} {Rev. Mod. Phys.}\ }\textbf {\bibinfo {volume} {97}},\ \bibinfo {pages} {045006} (\bibinfo {year} {2025})}\BibitemShut {NoStop}%
\bibitem [{\citenamefont {Schwonnek}\ \emph {et~al.}(2021)\citenamefont {Schwonnek}, \citenamefont {Goh}, \citenamefont {Primaatmaja}, \citenamefont {Tan}, \citenamefont {Wolf}, \citenamefont {Scarani},\ and\ \citenamefont {Lim}}]{Schwonnek2021DeviceIndependent}%
  \BibitemOpen
  \bibfield  {author} {\bibinfo {author} {\bibfnamefont {R.}~\bibnamefont {Schwonnek}}, \bibinfo {author} {\bibfnamefont {K.~T.}\ \bibnamefont {Goh}}, \bibinfo {author} {\bibfnamefont {I.~W.}\ \bibnamefont {Primaatmaja}}, \bibinfo {author} {\bibfnamefont {E.~Y.-Z.}\ \bibnamefont {Tan}}, \bibinfo {author} {\bibfnamefont {R.}~\bibnamefont {Wolf}}, \bibinfo {author} {\bibfnamefont {V.}~\bibnamefont {Scarani}},\ and\ \bibinfo {author} {\bibfnamefont {C.~C.-W.}\ \bibnamefont {Lim}},\ }\bibfield  {title} {\bibinfo {title} {Device-independent quantum key distribution with random key basis},\ }\href {https://doi.org/10.1038/s41467-021-23147-3} {\bibfield  {journal} {\bibinfo  {journal} {Nat. Commun.}\ }\textbf {\bibinfo {volume} {12}},\ \bibinfo {pages} {2880} (\bibinfo {year} {2021})}\BibitemShut {NoStop}%
\bibitem [{\citenamefont {Liu}\ \emph {et~al.}(2018)\citenamefont {Liu}, \citenamefont {Zhao}, \citenamefont {Li}, \citenamefont {Guan}, \citenamefont {Zhang}, \citenamefont {Bai}, \citenamefont {Zhang}, \citenamefont {Liu}, \citenamefont {Wu}, \citenamefont {Yuan}, \citenamefont {Li}, \citenamefont {Munro}, \citenamefont {Wang}, \citenamefont {You}, \citenamefont {Zhang}, \citenamefont {Ma}, \citenamefont {Fan}, \citenamefont {Zhang},\ and\ \citenamefont {Pan}}]{Liu2018DeviceIndependent}%
  \BibitemOpen
  \bibfield  {author} {\bibinfo {author} {\bibfnamefont {Y.}~\bibnamefont {Liu}}, \bibinfo {author} {\bibfnamefont {Q.}~\bibnamefont {Zhao}}, \bibinfo {author} {\bibfnamefont {M.-H.}\ \bibnamefont {Li}}, \bibinfo {author} {\bibfnamefont {J.-Y.}\ \bibnamefont {Guan}}, \bibinfo {author} {\bibfnamefont {Y.}~\bibnamefont {Zhang}}, \bibinfo {author} {\bibfnamefont {B.}~\bibnamefont {Bai}}, \bibinfo {author} {\bibfnamefont {W.}~\bibnamefont {Zhang}}, \bibinfo {author} {\bibfnamefont {W.-Z.}\ \bibnamefont {Liu}}, \bibinfo {author} {\bibfnamefont {C.}~\bibnamefont {Wu}}, \bibinfo {author} {\bibfnamefont {X.}~\bibnamefont {Yuan}}, \bibinfo {author} {\bibfnamefont {H.}~\bibnamefont {Li}}, \bibinfo {author} {\bibfnamefont {W.~J.}\ \bibnamefont {Munro}}, \bibinfo {author} {\bibfnamefont {Z.}~\bibnamefont {Wang}}, \bibinfo {author} {\bibfnamefont {L.}~\bibnamefont {You}}, \bibinfo {author} {\bibfnamefont {J.}~\bibnamefont {Zhang}}, \bibinfo {author} {\bibfnamefont {X.}~\bibnamefont {Ma}}, \bibinfo {author} {\bibfnamefont
  {J.}~\bibnamefont {Fan}}, \bibinfo {author} {\bibfnamefont {Q.}~\bibnamefont {Zhang}},\ and\ \bibinfo {author} {\bibfnamefont {J.-W.}\ \bibnamefont {Pan}},\ }\bibfield  {title} {\bibinfo {title} {Device-independent quantum random-number generation},\ }\href {https://doi.org/10.1038/s41586-018-0559-3} {\bibfield  {journal} {\bibinfo  {journal} {Nature}\ }\textbf {\bibinfo {volume} {562}},\ \bibinfo {pages} {548} (\bibinfo {year} {2018})}\BibitemShut {NoStop}%
\bibitem [{\citenamefont {Liu}\ \emph {et~al.}(2021)\citenamefont {Liu}, \citenamefont {Li}, \citenamefont {Ragy}, \citenamefont {Zhao}, \citenamefont {Bai}, \citenamefont {Liu}, \citenamefont {Brown}, \citenamefont {Zhang}, \citenamefont {Colbeck}, \citenamefont {Fan}, \citenamefont {Zhang},\ and\ \citenamefont {Pan}}]{Liu2021DeviceIndependent}%
  \BibitemOpen
  \bibfield  {author} {\bibinfo {author} {\bibfnamefont {W.-Z.}\ \bibnamefont {Liu}}, \bibinfo {author} {\bibfnamefont {M.-H.}\ \bibnamefont {Li}}, \bibinfo {author} {\bibfnamefont {S.}~\bibnamefont {Ragy}}, \bibinfo {author} {\bibfnamefont {S.-R.}\ \bibnamefont {Zhao}}, \bibinfo {author} {\bibfnamefont {B.}~\bibnamefont {Bai}}, \bibinfo {author} {\bibfnamefont {Y.}~\bibnamefont {Liu}}, \bibinfo {author} {\bibfnamefont {P.~J.}\ \bibnamefont {Brown}}, \bibinfo {author} {\bibfnamefont {J.}~\bibnamefont {Zhang}}, \bibinfo {author} {\bibfnamefont {R.}~\bibnamefont {Colbeck}}, \bibinfo {author} {\bibfnamefont {J.}~\bibnamefont {Fan}}, \bibinfo {author} {\bibfnamefont {Q.}~\bibnamefont {Zhang}},\ and\ \bibinfo {author} {\bibfnamefont {J.-W.}\ \bibnamefont {Pan}},\ }\bibfield  {title} {\bibinfo {title} {Device-independent randomness expansion against quantum side information},\ }\href {https://doi.org/10.1038/s41567-020-01147-2} {\bibfield  {journal} {\bibinfo  {journal} {Nat. Phys.}\ }\textbf {\bibinfo {volume}
  {17}},\ \bibinfo {pages} {448} (\bibinfo {year} {2021})}\BibitemShut {NoStop}%
\bibitem [{\citenamefont {Shalm}\ \emph {et~al.}(2021)\citenamefont {Shalm}, \citenamefont {Zhang}, \citenamefont {Bienfang}, \citenamefont {Schlager}, \citenamefont {Stevens}, \citenamefont {Mazurek}, \citenamefont {Abell{\'a}n}, \citenamefont {Amaya}, \citenamefont {Mitchell}, \citenamefont {Alhejji}, \citenamefont {Fu}, \citenamefont {Ornstein}, \citenamefont {Mirin}, \citenamefont {Nam},\ and\ \citenamefont {Knill}}]{Shalm2021DeviceIndependent}%
  \BibitemOpen
  \bibfield  {author} {\bibinfo {author} {\bibfnamefont {L.~K.}\ \bibnamefont {Shalm}}, \bibinfo {author} {\bibfnamefont {Y.}~\bibnamefont {Zhang}}, \bibinfo {author} {\bibfnamefont {J.~C.}\ \bibnamefont {Bienfang}}, \bibinfo {author} {\bibfnamefont {C.}~\bibnamefont {Schlager}}, \bibinfo {author} {\bibfnamefont {M.~J.}\ \bibnamefont {Stevens}}, \bibinfo {author} {\bibfnamefont {M.~D.}\ \bibnamefont {Mazurek}}, \bibinfo {author} {\bibfnamefont {C.}~\bibnamefont {Abell{\'a}n}}, \bibinfo {author} {\bibfnamefont {W.}~\bibnamefont {Amaya}}, \bibinfo {author} {\bibfnamefont {M.~W.}\ \bibnamefont {Mitchell}}, \bibinfo {author} {\bibfnamefont {M.~A.}\ \bibnamefont {Alhejji}}, \bibinfo {author} {\bibfnamefont {H.}~\bibnamefont {Fu}}, \bibinfo {author} {\bibfnamefont {J.}~\bibnamefont {Ornstein}}, \bibinfo {author} {\bibfnamefont {R.~P.}\ \bibnamefont {Mirin}}, \bibinfo {author} {\bibfnamefont {S.~W.}\ \bibnamefont {Nam}},\ and\ \bibinfo {author} {\bibfnamefont {E.}~\bibnamefont {Knill}},\ }\bibfield  {title}
  {\bibinfo {title} {Device-independent randomness expansion with entangled photons},\ }\href {https://doi.org/10.1038/s41567-020-01153-4} {\bibfield  {journal} {\bibinfo  {journal} {Nat. Phys.}\ }\textbf {\bibinfo {volume} {17}},\ \bibinfo {pages} {452} (\bibinfo {year} {2021})}\BibitemShut {NoStop}%
\bibitem [{\citenamefont {Wooltorton}\ \emph {et~al.}(2025)\citenamefont {Wooltorton}, \citenamefont {Brown},\ and\ \citenamefont {Colbeck}}]{Wooltorton2025Genuine}%
  \BibitemOpen
  \bibfield  {author} {\bibinfo {author} {\bibfnamefont {L.}~\bibnamefont {Wooltorton}}, \bibinfo {author} {\bibfnamefont {P.}~\bibnamefont {Brown}},\ and\ \bibinfo {author} {\bibfnamefont {R.}~\bibnamefont {Colbeck}},\ }\bibfield  {title} {\bibinfo {title} {Genuine multipartite entanglement is not necessary for standard device-independent conference key agreement},\ }\href {https://doi.org/10.1103/v4s8-3zl5} {\bibfield  {journal} {\bibinfo  {journal} {Phys. Rev. Lett.}\ }\textbf {\bibinfo {volume} {135}},\ \bibinfo {pages} {220803} (\bibinfo {year} {2025})}\BibitemShut {NoStop}%
\bibitem [{\citenamefont {Bierhorst}\ \emph {et~al.}(2018)\citenamefont {Bierhorst}, \citenamefont {Knill}, \citenamefont {Glancy}, \citenamefont {Zhang}, \citenamefont {Mink}, \citenamefont {Jordan}, \citenamefont {Rommal}, \citenamefont {Liu}, \citenamefont {Christensen}, \citenamefont {Nam}, \citenamefont {Stevens},\ and\ \citenamefont {Shalm}}]{Bierhorst2018Experimentally}%
  \BibitemOpen
  \bibfield  {author} {\bibinfo {author} {\bibfnamefont {P.}~\bibnamefont {Bierhorst}}, \bibinfo {author} {\bibfnamefont {E.}~\bibnamefont {Knill}}, \bibinfo {author} {\bibfnamefont {S.}~\bibnamefont {Glancy}}, \bibinfo {author} {\bibfnamefont {Y.}~\bibnamefont {Zhang}}, \bibinfo {author} {\bibfnamefont {A.}~\bibnamefont {Mink}}, \bibinfo {author} {\bibfnamefont {S.}~\bibnamefont {Jordan}}, \bibinfo {author} {\bibfnamefont {A.}~\bibnamefont {Rommal}}, \bibinfo {author} {\bibfnamefont {Y.-K.}\ \bibnamefont {Liu}}, \bibinfo {author} {\bibfnamefont {B.}~\bibnamefont {Christensen}}, \bibinfo {author} {\bibfnamefont {S.~W.}\ \bibnamefont {Nam}}, \bibinfo {author} {\bibfnamefont {M.~J.}\ \bibnamefont {Stevens}},\ and\ \bibinfo {author} {\bibfnamefont {L.~K.}\ \bibnamefont {Shalm}},\ }\bibfield  {title} {\bibinfo {title} {Experimentally generated randomness certified by the impossibility of superluminal signals},\ }\href {https://doi.org/10.1038/s41586-018-0019-0} {\bibfield  {journal} {\bibinfo  {journal} {Nature}\
  }\textbf {\bibinfo {volume} {556}},\ \bibinfo {pages} {223} (\bibinfo {year} {2018})}\BibitemShut {NoStop}%
\bibitem [{\citenamefont {Cao}\ \emph {et~al.}(2024)\citenamefont {Cao}, \citenamefont {Li}, \citenamefont {Wang}, \citenamefont {Fu}, \citenamefont {Yin},\ and\ \citenamefont {Chen}}]{Cao2024Experimental}%
  \BibitemOpen
  \bibfield  {author} {\bibinfo {author} {\bibfnamefont {X.-Y.}\ \bibnamefont {Cao}}, \bibinfo {author} {\bibfnamefont {B.-H.}\ \bibnamefont {Li}}, \bibinfo {author} {\bibfnamefont {Y.}~\bibnamefont {Wang}}, \bibinfo {author} {\bibfnamefont {Y.}~\bibnamefont {Fu}}, \bibinfo {author} {\bibfnamefont {H.-L.}\ \bibnamefont {Yin}},\ and\ \bibinfo {author} {\bibfnamefont {Z.-B.}\ \bibnamefont {Chen}},\ }\bibfield  {title} {\bibinfo {title} {Experimental quantum e-commerce},\ }\href {https://doi.org/10.1126/sciadv.adk3258} {\bibfield  {journal} {\bibinfo  {journal} {Sci. Adv.}\ }\textbf {\bibinfo {volume} {10}},\ \bibinfo {pages} {eadk3258} (\bibinfo {year} {2024})}\BibitemShut {NoStop}%
\bibitem [{\citenamefont {Jing}\ \emph {et~al.}(2024)\citenamefont {Jing}, \citenamefont {Qian}, \citenamefont {Weng}, \citenamefont {Li}, \citenamefont {Chen}, \citenamefont {Wang}, \citenamefont {Tang}, \citenamefont {Gu}, \citenamefont {Kong}, \citenamefont {Chen}, \citenamefont {Yin}, \citenamefont {Jiang}, \citenamefont {Niu},\ and\ \citenamefont {Lu}}]{Jing2024Experimental}%
  \BibitemOpen
  \bibfield  {author} {\bibinfo {author} {\bibfnamefont {X.}~\bibnamefont {Jing}}, \bibinfo {author} {\bibfnamefont {C.}~\bibnamefont {Qian}}, \bibinfo {author} {\bibfnamefont {C.-X.}\ \bibnamefont {Weng}}, \bibinfo {author} {\bibfnamefont {B.-H.}\ \bibnamefont {Li}}, \bibinfo {author} {\bibfnamefont {Z.}~\bibnamefont {Chen}}, \bibinfo {author} {\bibfnamefont {C.-Q.}\ \bibnamefont {Wang}}, \bibinfo {author} {\bibfnamefont {J.}~\bibnamefont {Tang}}, \bibinfo {author} {\bibfnamefont {X.-W.}\ \bibnamefont {Gu}}, \bibinfo {author} {\bibfnamefont {Y.-C.}\ \bibnamefont {Kong}}, \bibinfo {author} {\bibfnamefont {T.-S.}\ \bibnamefont {Chen}}, \bibinfo {author} {\bibfnamefont {H.-L.}\ \bibnamefont {Yin}}, \bibinfo {author} {\bibfnamefont {D.}~\bibnamefont {Jiang}}, \bibinfo {author} {\bibfnamefont {B.}~\bibnamefont {Niu}},\ and\ \bibinfo {author} {\bibfnamefont {L.-L.}\ \bibnamefont {Lu}},\ }\bibfield  {title} {\bibinfo {title} {Experimental quantum byzantine agreement on a three-user quantum network with integrated
  photonics},\ }\href {https://doi.org/10.1126/sciadv.adp2877} {\bibfield  {journal} {\bibinfo  {journal} {Sci. Adv.}\ }\textbf {\bibinfo {volume} {10}},\ \bibinfo {pages} {eadp2877} (\bibinfo {year} {2024})},\ \Eprint {https://arxiv.org/abs/https://www.science.org/doi/pdf/10.1126/sciadv.adp2877} {https://www.science.org/doi/pdf/10.1126/sciadv.adp2877} \BibitemShut {NoStop}%
\bibitem [{\citenamefont {Collins}\ \emph {et~al.}(2014)\citenamefont {Collins}, \citenamefont {Donaldson}, \citenamefont {Dunjko}, \citenamefont {Wallden}, \citenamefont {Clarke}, \citenamefont {Andersson}, \citenamefont {Jeffers},\ and\ \citenamefont {Buller}}]{Collins2014Realization}%
  \BibitemOpen
  \bibfield  {author} {\bibinfo {author} {\bibfnamefont {R.~J.}\ \bibnamefont {Collins}}, \bibinfo {author} {\bibfnamefont {R.~J.}\ \bibnamefont {Donaldson}}, \bibinfo {author} {\bibfnamefont {V.}~\bibnamefont {Dunjko}}, \bibinfo {author} {\bibfnamefont {P.}~\bibnamefont {Wallden}}, \bibinfo {author} {\bibfnamefont {P.~J.}\ \bibnamefont {Clarke}}, \bibinfo {author} {\bibfnamefont {E.}~\bibnamefont {Andersson}}, \bibinfo {author} {\bibfnamefont {J.}~\bibnamefont {Jeffers}},\ and\ \bibinfo {author} {\bibfnamefont {G.~S.}\ \bibnamefont {Buller}},\ }\bibfield  {title} {\bibinfo {title} {Realization of quantum digital signatures without the requirement of quantum memory},\ }\href {https://doi.org/10.1103/PhysRevLett.113.040502} {\bibfield  {journal} {\bibinfo  {journal} {Phys. Rev. Lett.}\ }\textbf {\bibinfo {volume} {113}},\ \bibinfo {pages} {040502} (\bibinfo {year} {2014})}\BibitemShut {NoStop}%
\bibitem [{\citenamefont {Canetti}(2001)}]{Canetti2001Universally}%
  \BibitemOpen
  \bibfield  {author} {\bibinfo {author} {\bibfnamefont {R.}~\bibnamefont {Canetti}},\ }\bibfield  {title} {\bibinfo {title} {Universally composable security: A new paradigm for cryptographic protocols},\ }in\ \href {https://doi.org/10.1109/SFCS.2001.959888} {\emph {\bibinfo {booktitle} {Proceedings of the 42nd IEEE Symposium on Foundations of Computer Science (FOCS 2001)}}}\ (\bibinfo  {publisher} {IEEE Computer Society},\ \bibinfo {address} {Las Vegas, NV, USA},\ \bibinfo {year} {2001})\ pp.\ \bibinfo {pages} {136--145}\BibitemShut {NoStop}%
\bibitem [{\citenamefont {Ben-Or}\ \emph {et~al.}(2005)\citenamefont {Ben-Or}, \citenamefont {Horodecki}, \citenamefont {Leung}, \citenamefont {Mayers},\ and\ \citenamefont {Oppenheim}}]{BenOr2005Universal}%
  \BibitemOpen
  \bibfield  {author} {\bibinfo {author} {\bibfnamefont {M.}~\bibnamefont {Ben-Or}}, \bibinfo {author} {\bibfnamefont {M.}~\bibnamefont {Horodecki}}, \bibinfo {author} {\bibfnamefont {D.~W.}\ \bibnamefont {Leung}}, \bibinfo {author} {\bibfnamefont {D.}~\bibnamefont {Mayers}},\ and\ \bibinfo {author} {\bibfnamefont {J.}~\bibnamefont {Oppenheim}},\ }\bibfield  {title} {\bibinfo {title} {The universal composable security of quantum key distribution},\ }in\ \href {https://doi.org/10.1007/978-3-540-30576-7_21} {\emph {\bibinfo {booktitle} {Theory of Cryptography}}},\ \bibinfo {series} {Lecture Notes in Computer Science}, Vol.\ \bibinfo {volume} {3378}\ (\bibinfo  {publisher} {Springer},\ \bibinfo {year} {2005})\ pp.\ \bibinfo {pages} {386--406}\BibitemShut {NoStop}%
\bibitem [{\citenamefont {M{\"u}ller-Quade}\ and\ \citenamefont {Renner}(2009)}]{MullerQuade2009Composability}%
  \BibitemOpen
  \bibfield  {author} {\bibinfo {author} {\bibfnamefont {J.}~\bibnamefont {M{\"u}ller-Quade}}\ and\ \bibinfo {author} {\bibfnamefont {R.}~\bibnamefont {Renner}},\ }\bibfield  {title} {\bibinfo {title} {Composability in quantum cryptography},\ }\href {https://doi.org/10.1088/1367-2630/11/8/085006} {\bibfield  {journal} {\bibinfo  {journal} {New J. Phys.}\ }\textbf {\bibinfo {volume} {11}},\ \bibinfo {pages} {085006} (\bibinfo {year} {2009})}\BibitemShut {NoStop}%
\bibitem [{\citenamefont {Dunjko}\ \emph {et~al.}(2014)\citenamefont {Dunjko}, \citenamefont {Fitzsimons}, \citenamefont {Portmann},\ and\ \citenamefont {Renner}}]{Dunjko2014Composable}%
  \BibitemOpen
  \bibfield  {author} {\bibinfo {author} {\bibfnamefont {V.}~\bibnamefont {Dunjko}}, \bibinfo {author} {\bibfnamefont {J.~F.}\ \bibnamefont {Fitzsimons}}, \bibinfo {author} {\bibfnamefont {C.}~\bibnamefont {Portmann}},\ and\ \bibinfo {author} {\bibfnamefont {R.}~\bibnamefont {Renner}},\ }\bibfield  {title} {\bibinfo {title} {Composable security of delegated quantum computation},\ }in\ \href {https://doi.org/10.1007/978-3-662-45608-8_22} {\emph {\bibinfo {booktitle} {Advances in Cryptology -- ASIACRYPT 2014}}},\ \bibinfo {series} {Lecture Notes in Computer Science}, Vol.\ \bibinfo {volume} {8874}\ (\bibinfo  {publisher} {Springer},\ \bibinfo {year} {2014})\ pp.\ \bibinfo {pages} {406--425}\BibitemShut {NoStop}%
\bibitem [{\citenamefont {Portmann}\ and\ \citenamefont {Renner}(2014)}]{Portmann2014Cryptographic}%
  \BibitemOpen
  \bibfield  {author} {\bibinfo {author} {\bibfnamefont {C.}~\bibnamefont {Portmann}}\ and\ \bibinfo {author} {\bibfnamefont {R.}~\bibnamefont {Renner}},\ }\href {https://doi.org/10.48550/arXiv.1409.3525} {\bibinfo {title} {Cryptographic security of quantum key distribution}} (\bibinfo {year} {2014}),\ \Eprint {https://arxiv.org/abs/1409.3525} {arXiv:1409.3525 [quant-ph]} \BibitemShut {NoStop}%
\bibitem [{\citenamefont {Tomamichel}\ \emph {et~al.}(2012)\citenamefont {Tomamichel}, \citenamefont {Lim}, \citenamefont {Gisin},\ and\ \citenamefont {Renner}}]{Tomamichel2012Tight}%
  \BibitemOpen
  \bibfield  {author} {\bibinfo {author} {\bibfnamefont {M.}~\bibnamefont {Tomamichel}}, \bibinfo {author} {\bibfnamefont {C.~C.~W.}\ \bibnamefont {Lim}}, \bibinfo {author} {\bibfnamefont {N.}~\bibnamefont {Gisin}},\ and\ \bibinfo {author} {\bibfnamefont {R.}~\bibnamefont {Renner}},\ }\bibfield  {title} {\bibinfo {title} {Tight finite-key analysis for quantum cryptography},\ }\href {https://doi.org/10.1038/ncomms1631} {\bibfield  {journal} {\bibinfo  {journal} {Nat. Commun.}\ }\textbf {\bibinfo {volume} {3}},\ \bibinfo {pages} {634} (\bibinfo {year} {2012})}\BibitemShut {NoStop}%
\bibitem [{\citenamefont {Lim}\ \emph {et~al.}(2014)\citenamefont {Lim}, \citenamefont {Curty}, \citenamefont {Walenta}, \citenamefont {Xu},\ and\ \citenamefont {Zbinden}}]{Lim2014Concise}%
  \BibitemOpen
  \bibfield  {author} {\bibinfo {author} {\bibfnamefont {C.~C.~W.}\ \bibnamefont {Lim}}, \bibinfo {author} {\bibfnamefont {M.}~\bibnamefont {Curty}}, \bibinfo {author} {\bibfnamefont {N.}~\bibnamefont {Walenta}}, \bibinfo {author} {\bibfnamefont {F.}~\bibnamefont {Xu}},\ and\ \bibinfo {author} {\bibfnamefont {H.}~\bibnamefont {Zbinden}},\ }\bibfield  {title} {\bibinfo {title} {Concise security bounds for practical decoy-state quantum key distribution},\ }\href {https://doi.org/10.1103/PhysRevA.89.022307} {\bibfield  {journal} {\bibinfo  {journal} {Phys. Rev. A}\ }\textbf {\bibinfo {volume} {89}},\ \bibinfo {pages} {022307} (\bibinfo {year} {2014})}\BibitemShut {NoStop}%
\bibitem [{\citenamefont {Furrer}\ \emph {et~al.}(2012)\citenamefont {Furrer}, \citenamefont {Franz}, \citenamefont {Berta}, \citenamefont {Leverrier}, \citenamefont {Scholz}, \citenamefont {Tomamichel},\ and\ \citenamefont {Werner}}]{Furrer2012Continuous}%
  \BibitemOpen
  \bibfield  {author} {\bibinfo {author} {\bibfnamefont {F.}~\bibnamefont {Furrer}}, \bibinfo {author} {\bibfnamefont {T.}~\bibnamefont {Franz}}, \bibinfo {author} {\bibfnamefont {M.}~\bibnamefont {Berta}}, \bibinfo {author} {\bibfnamefont {A.}~\bibnamefont {Leverrier}}, \bibinfo {author} {\bibfnamefont {V.~B.}\ \bibnamefont {Scholz}}, \bibinfo {author} {\bibfnamefont {M.}~\bibnamefont {Tomamichel}},\ and\ \bibinfo {author} {\bibfnamefont {R.~F.}\ \bibnamefont {Werner}},\ }\bibfield  {title} {\bibinfo {title} {Continuous variable quantum key distribution: Finite-key analysis of composable security against coherent attacks},\ }\href {https://doi.org/10.1103/PhysRevLett.109.100502} {\bibfield  {journal} {\bibinfo  {journal} {Phys. Rev. Lett.}\ }\textbf {\bibinfo {volume} {109}},\ \bibinfo {pages} {100502} (\bibinfo {year} {2012})}\BibitemShut {NoStop}%
\bibitem [{\citenamefont {Leverrier}(2015)}]{Leverrier2015Composable}%
  \BibitemOpen
  \bibfield  {author} {\bibinfo {author} {\bibfnamefont {A.}~\bibnamefont {Leverrier}},\ }\bibfield  {title} {\bibinfo {title} {Composable security proof for continuous-variable quantum key distribution with coherent states},\ }\href {https://doi.org/10.1103/PhysRevLett.114.070501} {\bibfield  {journal} {\bibinfo  {journal} {Phys. Rev. Lett.}\ }\textbf {\bibinfo {volume} {114}},\ \bibinfo {pages} {070501} (\bibinfo {year} {2015})}\BibitemShut {NoStop}%
\bibitem [{\citenamefont {Gehring}\ \emph {et~al.}(2015)\citenamefont {Gehring}, \citenamefont {H{\"a}ndchen}, \citenamefont {Duhme}, \citenamefont {Furrer}, \citenamefont {Franz}, \citenamefont {Pacher}, \citenamefont {Werner},\ and\ \citenamefont {Schnabel}}]{Gehring2015Implementation}%
  \BibitemOpen
  \bibfield  {author} {\bibinfo {author} {\bibfnamefont {T.}~\bibnamefont {Gehring}}, \bibinfo {author} {\bibfnamefont {V.}~\bibnamefont {H{\"a}ndchen}}, \bibinfo {author} {\bibfnamefont {J.}~\bibnamefont {Duhme}}, \bibinfo {author} {\bibfnamefont {F.}~\bibnamefont {Furrer}}, \bibinfo {author} {\bibfnamefont {T.}~\bibnamefont {Franz}}, \bibinfo {author} {\bibfnamefont {C.}~\bibnamefont {Pacher}}, \bibinfo {author} {\bibfnamefont {R.~F.}\ \bibnamefont {Werner}},\ and\ \bibinfo {author} {\bibfnamefont {R.}~\bibnamefont {Schnabel}},\ }\bibfield  {title} {\bibinfo {title} {Implementation of continuous-variable quantum key distribution with composable and one-sided-device-independent security against coherent attacks},\ }\href {https://doi.org/10.1038/ncomms9795} {\bibfield  {journal} {\bibinfo  {journal} {Nat. Commun.}\ }\textbf {\bibinfo {volume} {6}},\ \bibinfo {pages} {8795} (\bibinfo {year} {2015})}\BibitemShut {NoStop}%
\bibitem [{\citenamefont {Leverrier}(2017)}]{Leverrier2017Security}%
  \BibitemOpen
  \bibfield  {author} {\bibinfo {author} {\bibfnamefont {A.}~\bibnamefont {Leverrier}},\ }\bibfield  {title} {\bibinfo {title} {Security of continuous-variable quantum key distribution via a gaussian de finetti reduction},\ }\href {https://doi.org/10.1103/PhysRevLett.118.200501} {\bibfield  {journal} {\bibinfo  {journal} {Phys. Rev. Lett.}\ }\textbf {\bibinfo {volume} {118}},\ \bibinfo {pages} {200501} (\bibinfo {year} {2017})}\BibitemShut {NoStop}%
\bibitem [{\citenamefont {Lupo}\ and\ \citenamefont {Ouyang}(2022)}]{Lupo2022Composable}%
  \BibitemOpen
  \bibfield  {author} {\bibinfo {author} {\bibfnamefont {C.}~\bibnamefont {Lupo}}\ and\ \bibinfo {author} {\bibfnamefont {Y.}~\bibnamefont {Ouyang}},\ }\bibfield  {title} {\bibinfo {title} {Quantum key distribution with nonideal heterodyne detection: Composable security of discrete-modulation continuous-variable protocols},\ }\href {https://doi.org/10.1103/PRXQuantum.3.010341} {\bibfield  {journal} {\bibinfo  {journal} {PRX Quantum}\ }\textbf {\bibinfo {volume} {3}},\ \bibinfo {pages} {010341} (\bibinfo {year} {2022})}\BibitemShut {NoStop}%
\bibitem [{\citenamefont {Jain}\ \emph {et~al.}(2022)\citenamefont {Jain}, \citenamefont {Chin}, \citenamefont {Mani}, \citenamefont {Lupo}, \citenamefont {Nikolic}, \citenamefont {Kordts}, \citenamefont {Pirandola}, \citenamefont {Pedersen}, \citenamefont {Kolb}, \citenamefont {{\"O}mer}, \citenamefont {Pacher}, \citenamefont {Gehring},\ and\ \citenamefont {Andersen}}]{Jain2022Practical}%
  \BibitemOpen
  \bibfield  {author} {\bibinfo {author} {\bibfnamefont {N.}~\bibnamefont {Jain}}, \bibinfo {author} {\bibfnamefont {H.-M.}\ \bibnamefont {Chin}}, \bibinfo {author} {\bibfnamefont {H.}~\bibnamefont {Mani}}, \bibinfo {author} {\bibfnamefont {C.}~\bibnamefont {Lupo}}, \bibinfo {author} {\bibfnamefont {D.~S.}\ \bibnamefont {Nikolic}}, \bibinfo {author} {\bibfnamefont {A.}~\bibnamefont {Kordts}}, \bibinfo {author} {\bibfnamefont {S.}~\bibnamefont {Pirandola}}, \bibinfo {author} {\bibfnamefont {T.~B.}\ \bibnamefont {Pedersen}}, \bibinfo {author} {\bibfnamefont {M.}~\bibnamefont {Kolb}}, \bibinfo {author} {\bibfnamefont {B.}~\bibnamefont {{\"O}mer}}, \bibinfo {author} {\bibfnamefont {C.}~\bibnamefont {Pacher}}, \bibinfo {author} {\bibfnamefont {T.}~\bibnamefont {Gehring}},\ and\ \bibinfo {author} {\bibfnamefont {U.~L.}\ \bibnamefont {Andersen}},\ }\bibfield  {title} {\bibinfo {title} {Practical continuous-variable quantum key distribution with composable security},\ }\href
  {https://doi.org/10.1038/s41467-022-32161-y} {\bibfield  {journal} {\bibinfo  {journal} {Nat. Commun.}\ }\textbf {\bibinfo {volume} {13}},\ \bibinfo {pages} {4740} (\bibinfo {year} {2022})}\BibitemShut {NoStop}%
\bibitem [{\citenamefont {Kanitschar}\ \emph {et~al.}(2023)\citenamefont {Kanitschar}, \citenamefont {George}, \citenamefont {Lin}, \citenamefont {Upadhyaya},\ and\ \citenamefont {L{\"u}tkenhaus}}]{Kanitschar2023Finite}%
  \BibitemOpen
  \bibfield  {author} {\bibinfo {author} {\bibfnamefont {F.}~\bibnamefont {Kanitschar}}, \bibinfo {author} {\bibfnamefont {I.}~\bibnamefont {George}}, \bibinfo {author} {\bibfnamefont {J.}~\bibnamefont {Lin}}, \bibinfo {author} {\bibfnamefont {T.}~\bibnamefont {Upadhyaya}},\ and\ \bibinfo {author} {\bibfnamefont {N.}~\bibnamefont {L{\"u}tkenhaus}},\ }\bibfield  {title} {\bibinfo {title} {Finite-size security for discrete-modulated continuous-variable quantum key distribution protocols},\ }\href {https://doi.org/10.1103/PRXQuantum.4.040306} {\bibfield  {journal} {\bibinfo  {journal} {PRX Quantum}\ }\textbf {\bibinfo {volume} {4}},\ \bibinfo {pages} {040306} (\bibinfo {year} {2023})}\BibitemShut {NoStop}%
\bibitem [{\citenamefont {Hajomer}\ \emph {et~al.}(2025)\citenamefont {Hajomer}, \citenamefont {Kanitschar}, \citenamefont {Jain}, \citenamefont {Hentschel}, \citenamefont {Zhang}, \citenamefont {L{\"u}tkenhaus}, \citenamefont {Andersen}, \citenamefont {Pacher},\ and\ \citenamefont {Gehring}}]{Hajomer2025Experimental}%
  \BibitemOpen
  \bibfield  {author} {\bibinfo {author} {\bibfnamefont {A.~A.~E.}\ \bibnamefont {Hajomer}}, \bibinfo {author} {\bibfnamefont {F.}~\bibnamefont {Kanitschar}}, \bibinfo {author} {\bibfnamefont {N.}~\bibnamefont {Jain}}, \bibinfo {author} {\bibfnamefont {M.}~\bibnamefont {Hentschel}}, \bibinfo {author} {\bibfnamefont {R.}~\bibnamefont {Zhang}}, \bibinfo {author} {\bibfnamefont {N.}~\bibnamefont {L{\"u}tkenhaus}}, \bibinfo {author} {\bibfnamefont {U.~L.}\ \bibnamefont {Andersen}}, \bibinfo {author} {\bibfnamefont {C.}~\bibnamefont {Pacher}},\ and\ \bibinfo {author} {\bibfnamefont {T.}~\bibnamefont {Gehring}},\ }\bibfield  {title} {\bibinfo {title} {Experimental composable key distribution using discrete-modulated continuous variable quantum cryptography},\ }\href {https://doi.org/10.1038/s41377-025-01924-9} {\bibfield  {journal} {\bibinfo  {journal} {Light Sci. Appl.}\ }\textbf {\bibinfo {volume} {14}},\ \bibinfo {pages} {255} (\bibinfo {year} {2025})}\BibitemShut {NoStop}%
\bibitem [{\citenamefont {Wu}\ \emph {et~al.}(2026)\citenamefont {Wu}, \citenamefont {Pan}, \citenamefont {Li}, \citenamefont {Wang}, \citenamefont {Fan}, \citenamefont {Shao}, \citenamefont {Li}, \citenamefont {Huang}, \citenamefont {Yu}, \citenamefont {Xu},\ and\ \citenamefont {Zhang}}]{Wu2026HighRate}%
  \BibitemOpen
  \bibfield  {author} {\bibinfo {author} {\bibfnamefont {M.}~\bibnamefont {Wu}}, \bibinfo {author} {\bibfnamefont {Y.}~\bibnamefont {Pan}}, \bibinfo {author} {\bibfnamefont {J.}~\bibnamefont {Li}}, \bibinfo {author} {\bibfnamefont {H.}~\bibnamefont {Wang}}, \bibinfo {author} {\bibfnamefont {L.}~\bibnamefont {Fan}}, \bibinfo {author} {\bibfnamefont {Y.}~\bibnamefont {Shao}}, \bibinfo {author} {\bibfnamefont {Y.}~\bibnamefont {Li}}, \bibinfo {author} {\bibfnamefont {W.}~\bibnamefont {Huang}}, \bibinfo {author} {\bibfnamefont {S.}~\bibnamefont {Yu}}, \bibinfo {author} {\bibfnamefont {B.}~\bibnamefont {Xu}},\ and\ \bibinfo {author} {\bibfnamefont {Y.}~\bibnamefont {Zhang}},\ }\bibfield  {title} {\bibinfo {title} {High-rate discrete-modulated continuous-variable quantum key distribution with composable security},\ }\href {https://doi.org/10.1103/882y-w4zy} {\bibfield  {journal} {\bibinfo  {journal} {Phys. Rev. X}\ }\textbf {\bibinfo {volume} {16}},\ \bibinfo {pages} {021039} (\bibinfo {year} {2026})}\BibitemShut
  {NoStop}%
\bibitem [{\citenamefont {Curty}\ \emph {et~al.}(2014)\citenamefont {Curty}, \citenamefont {Xu}, \citenamefont {Cui}, \citenamefont {Lim}, \citenamefont {Tamaki},\ and\ \citenamefont {Lo}}]{Curty2014Finite}%
  \BibitemOpen
  \bibfield  {author} {\bibinfo {author} {\bibfnamefont {M.}~\bibnamefont {Curty}}, \bibinfo {author} {\bibfnamefont {F.}~\bibnamefont {Xu}}, \bibinfo {author} {\bibfnamefont {W.}~\bibnamefont {Cui}}, \bibinfo {author} {\bibfnamefont {C.~C.~W.}\ \bibnamefont {Lim}}, \bibinfo {author} {\bibfnamefont {K.}~\bibnamefont {Tamaki}},\ and\ \bibinfo {author} {\bibfnamefont {H.-K.}\ \bibnamefont {Lo}},\ }\bibfield  {title} {\bibinfo {title} {Finite-key analysis for measurement-device-independent quantum key distribution},\ }\href {https://doi.org/10.1038/ncomms4732} {\bibfield  {journal} {\bibinfo  {journal} {Nat. Commun.}\ }\textbf {\bibinfo {volume} {5}},\ \bibinfo {pages} {3732} (\bibinfo {year} {2014})}\BibitemShut {NoStop}%
\bibitem [{\citenamefont {Yin}\ and\ \citenamefont {Chen}(2019)}]{Yin2019Finite}%
  \BibitemOpen
  \bibfield  {author} {\bibinfo {author} {\bibfnamefont {H.-L.}\ \bibnamefont {Yin}}\ and\ \bibinfo {author} {\bibfnamefont {Z.-B.}\ \bibnamefont {Chen}},\ }\bibfield  {title} {\bibinfo {title} {Finite-key analysis for twin-field quantum key distribution with composable security},\ }\href {https://doi.org/10.1038/s41598-019-53435-4} {\bibfield  {journal} {\bibinfo  {journal} {Sci. Rep.}\ }\textbf {\bibinfo {volume} {9}},\ \bibinfo {pages} {17113} (\bibinfo {year} {2019})}\BibitemShut {NoStop}%
\bibitem [{\citenamefont {Cao}\ \emph {et~al.}(2016)\citenamefont {Cao}, \citenamefont {Zhou}, \citenamefont {Yuan},\ and\ \citenamefont {Ma}}]{Cao2016SourceIndependent}%
  \BibitemOpen
  \bibfield  {author} {\bibinfo {author} {\bibfnamefont {Z.}~\bibnamefont {Cao}}, \bibinfo {author} {\bibfnamefont {H.}~\bibnamefont {Zhou}}, \bibinfo {author} {\bibfnamefont {X.}~\bibnamefont {Yuan}},\ and\ \bibinfo {author} {\bibfnamefont {X.}~\bibnamefont {Ma}},\ }\bibfield  {title} {\bibinfo {title} {Source-independent quantum random number generation},\ }\href {https://doi.org/10.1103/PhysRevX.6.011020} {\bibfield  {journal} {\bibinfo  {journal} {Phys. Rev. X}\ }\textbf {\bibinfo {volume} {6}},\ \bibinfo {pages} {011020} (\bibinfo {year} {2016})}\BibitemShut {NoStop}%
\bibitem [{\citenamefont {Drahi}\ \emph {et~al.}(2020)\citenamefont {Drahi}, \citenamefont {Walk}, \citenamefont {Hoban}, \citenamefont {Fedorov}, \citenamefont {Shakhovoy}, \citenamefont {Feimov}, \citenamefont {Kurochkin}, \citenamefont {Kolthammer}, \citenamefont {Nunn}, \citenamefont {Barrett},\ and\ \citenamefont {Walmsley}}]{Drahi2020Certified}%
  \BibitemOpen
  \bibfield  {author} {\bibinfo {author} {\bibfnamefont {D.}~\bibnamefont {Drahi}}, \bibinfo {author} {\bibfnamefont {N.}~\bibnamefont {Walk}}, \bibinfo {author} {\bibfnamefont {M.~J.}\ \bibnamefont {Hoban}}, \bibinfo {author} {\bibfnamefont {A.~K.}\ \bibnamefont {Fedorov}}, \bibinfo {author} {\bibfnamefont {R.}~\bibnamefont {Shakhovoy}}, \bibinfo {author} {\bibfnamefont {A.}~\bibnamefont {Feimov}}, \bibinfo {author} {\bibfnamefont {Y.}~\bibnamefont {Kurochkin}}, \bibinfo {author} {\bibfnamefont {W.~S.}\ \bibnamefont {Kolthammer}}, \bibinfo {author} {\bibfnamefont {J.}~\bibnamefont {Nunn}}, \bibinfo {author} {\bibfnamefont {J.}~\bibnamefont {Barrett}},\ and\ \bibinfo {author} {\bibfnamefont {I.~A.}\ \bibnamefont {Walmsley}},\ }\bibfield  {title} {\bibinfo {title} {Certified quantum random numbers from untrusted light},\ }\href {https://doi.org/10.1103/PhysRevX.10.041048} {\bibfield  {journal} {\bibinfo  {journal} {Phys. Rev. X}\ }\textbf {\bibinfo {volume} {10}},\ \bibinfo {pages} {041048} (\bibinfo {year}
  {2020})}\BibitemShut {NoStop}%
\bibitem [{\citenamefont {Walk}\ and\ \citenamefont {Eisert}(2021)}]{Walk2021Sharing}%
  \BibitemOpen
  \bibfield  {author} {\bibinfo {author} {\bibfnamefont {N.}~\bibnamefont {Walk}}\ and\ \bibinfo {author} {\bibfnamefont {J.}~\bibnamefont {Eisert}},\ }\bibfield  {title} {\bibinfo {title} {Sharing classical secrets with continuous-variable entanglement: Composable security and network coding advantage},\ }\href {https://doi.org/10.1103/PRXQuantum.2.040339} {\bibfield  {journal} {\bibinfo  {journal} {PRX Quantum}\ }\textbf {\bibinfo {volume} {2}},\ \bibinfo {pages} {040339} (\bibinfo {year} {2021})}\BibitemShut {NoStop}%
\bibitem [{\citenamefont {Masanes}(2009)}]{Masanes2009Universally}%
  \BibitemOpen
  \bibfield  {author} {\bibinfo {author} {\bibfnamefont {L.}~\bibnamefont {Masanes}},\ }\bibfield  {title} {\bibinfo {title} {Universally composable privacy amplification from causality constraints},\ }\href {https://doi.org/10.1103/PhysRevLett.102.140501} {\bibfield  {journal} {\bibinfo  {journal} {Phys. Rev. Lett.}\ }\textbf {\bibinfo {volume} {102}},\ \bibinfo {pages} {140501} (\bibinfo {year} {2009})}\BibitemShut {NoStop}%
\bibitem [{\citenamefont {Masanes}\ \emph {et~al.}(2014)\citenamefont {Masanes}, \citenamefont {Renner}, \citenamefont {Christandl}, \citenamefont {Winter},\ and\ \citenamefont {Barrett}}]{Masanes2014full}%
  \BibitemOpen
  \bibfield  {author} {\bibinfo {author} {\bibfnamefont {L.}~\bibnamefont {Masanes}}, \bibinfo {author} {\bibfnamefont {R.}~\bibnamefont {Renner}}, \bibinfo {author} {\bibfnamefont {M.}~\bibnamefont {Christandl}}, \bibinfo {author} {\bibfnamefont {A.}~\bibnamefont {Winter}},\ and\ \bibinfo {author} {\bibfnamefont {J.}~\bibnamefont {Barrett}},\ }\bibfield  {title} {\bibinfo {title} {Full security of quantum key distribution from no-signaling constraints},\ }\href {https://doi.org/10.1109/TIT.2014.2329417} {\bibfield  {journal} {\bibinfo  {journal} {IEEE Trans. Inf. Theory}\ }\textbf {\bibinfo {volume} {60}},\ \bibinfo {pages} {4973} (\bibinfo {year} {2014})}\BibitemShut {NoStop}%
\bibitem [{\citenamefont {Tan}\ \emph {et~al.}(2022)\citenamefont {Tan}, \citenamefont {Sekatski}, \citenamefont {Bancal}, \citenamefont {Schwonnek}, \citenamefont {Renner}, \citenamefont {Sangouard},\ and\ \citenamefont {Lim}}]{Tan2022improveddiqkd}%
  \BibitemOpen
  \bibfield  {author} {\bibinfo {author} {\bibfnamefont {E.~Y.-Z.}\ \bibnamefont {Tan}}, \bibinfo {author} {\bibfnamefont {P.}~\bibnamefont {Sekatski}}, \bibinfo {author} {\bibfnamefont {J.-D.}\ \bibnamefont {Bancal}}, \bibinfo {author} {\bibfnamefont {R.}~\bibnamefont {Schwonnek}}, \bibinfo {author} {\bibfnamefont {R.}~\bibnamefont {Renner}}, \bibinfo {author} {\bibfnamefont {N.}~\bibnamefont {Sangouard}},\ and\ \bibinfo {author} {\bibfnamefont {C.~C.-W.}\ \bibnamefont {Lim}},\ }\bibfield  {title} {\bibinfo {title} {Improved {DIQKD} protocols with finite-size analysis},\ }\href {https://doi.org/10.22331/q-2022-12-22-880} {\bibfield  {journal} {\bibinfo  {journal} {{Quantum}}\ }\textbf {\bibinfo {volume} {6}},\ \bibinfo {pages} {880} (\bibinfo {year} {2022})}\BibitemShut {NoStop}%
\bibitem [{\citenamefont {Miller}\ and\ \citenamefont {Shi}(2017)}]{Miller2017Universal}%
  \BibitemOpen
  \bibfield  {author} {\bibinfo {author} {\bibfnamefont {C.~A.}\ \bibnamefont {Miller}}\ and\ \bibinfo {author} {\bibfnamefont {Y.}~\bibnamefont {Shi}},\ }\bibfield  {title} {\bibinfo {title} {Universal security for randomness expansion from the spot-checking protocol},\ }\href {https://doi.org/10.1137/15M1044333} {\bibfield  {journal} {\bibinfo  {journal} {SIAM J. Comput.}\ }\textbf {\bibinfo {volume} {46}},\ \bibinfo {pages} {1304} (\bibinfo {year} {2017})}\BibitemShut {NoStop}%
\bibitem [{\citenamefont {Brown}\ \emph {et~al.}(2020)\citenamefont {Brown}, \citenamefont {Ragy},\ and\ \citenamefont {Colbeck}}]{Brown2020Framework}%
  \BibitemOpen
  \bibfield  {author} {\bibinfo {author} {\bibfnamefont {P.~J.}\ \bibnamefont {Brown}}, \bibinfo {author} {\bibfnamefont {S.}~\bibnamefont {Ragy}},\ and\ \bibinfo {author} {\bibfnamefont {R.}~\bibnamefont {Colbeck}},\ }\bibfield  {title} {\bibinfo {title} {A framework for quantum-secure device-independent randomness expansion},\ }\href {https://doi.org/10.1109/TIT.2019.2960252} {\bibfield  {journal} {\bibinfo  {journal} {IEEE Trans. Inf. Theory}\ }\textbf {\bibinfo {volume} {66}},\ \bibinfo {pages} {2964} (\bibinfo {year} {2020})}\BibitemShut {NoStop}%
\bibitem [{\citenamefont {Brand{\~a}o}\ \emph {et~al.}(2016)\citenamefont {Brand{\~a}o}, \citenamefont {Ramanathan}, \citenamefont {Grudka}, \citenamefont {Horodecki}, \citenamefont {Horodecki}, \citenamefont {Horodecki}, \citenamefont {Szarek},\ and\ \citenamefont {Wojew{\'o}dka}}]{Brandao2016Realistic}%
  \BibitemOpen
  \bibfield  {author} {\bibinfo {author} {\bibfnamefont {F.~G. S.~L.}\ \bibnamefont {Brand{\~a}o}}, \bibinfo {author} {\bibfnamefont {R.}~\bibnamefont {Ramanathan}}, \bibinfo {author} {\bibfnamefont {A.}~\bibnamefont {Grudka}}, \bibinfo {author} {\bibfnamefont {K.}~\bibnamefont {Horodecki}}, \bibinfo {author} {\bibfnamefont {M.}~\bibnamefont {Horodecki}}, \bibinfo {author} {\bibfnamefont {P.}~\bibnamefont {Horodecki}}, \bibinfo {author} {\bibfnamefont {T.}~\bibnamefont {Szarek}},\ and\ \bibinfo {author} {\bibfnamefont {H.}~\bibnamefont {Wojew{\'o}dka}},\ }\bibfield  {title} {\bibinfo {title} {Realistic noise-tolerant randomness amplification using finite number of devices},\ }\href {https://doi.org/10.1038/ncomms11345} {\bibfield  {journal} {\bibinfo  {journal} {Nat. Commun.}\ }\textbf {\bibinfo {volume} {7}},\ \bibinfo {pages} {11345} (\bibinfo {year} {2016})}\BibitemShut {NoStop}%
\bibitem [{\citenamefont {Ramanathan}\ \emph {et~al.}(2016)\citenamefont {Ramanathan}, \citenamefont {Brand{\~a}o}, \citenamefont {Horodecki}, \citenamefont {Horodecki}, \citenamefont {Horodecki},\ and\ \citenamefont {Wojew{\'o}dka}}]{Ramanathan2016Randomness}%
  \BibitemOpen
  \bibfield  {author} {\bibinfo {author} {\bibfnamefont {R.}~\bibnamefont {Ramanathan}}, \bibinfo {author} {\bibfnamefont {F.~G. S.~L.}\ \bibnamefont {Brand{\~a}o}}, \bibinfo {author} {\bibfnamefont {K.}~\bibnamefont {Horodecki}}, \bibinfo {author} {\bibfnamefont {M.}~\bibnamefont {Horodecki}}, \bibinfo {author} {\bibfnamefont {P.}~\bibnamefont {Horodecki}},\ and\ \bibinfo {author} {\bibfnamefont {H.}~\bibnamefont {Wojew{\'o}dka}},\ }\bibfield  {title} {\bibinfo {title} {Randomness amplification under minimal fundamental assumptions on the devices},\ }\href {https://doi.org/10.1103/PhysRevLett.117.230501} {\bibfield  {journal} {\bibinfo  {journal} {Phys. Rev. Lett.}\ }\textbf {\bibinfo {volume} {117}},\ \bibinfo {pages} {230501} (\bibinfo {year} {2016})}\BibitemShut {NoStop}%
\bibitem [{\citenamefont {Kundu}\ and\ \citenamefont {Tan}(2023)}]{Kundu2023Composable}%
  \BibitemOpen
  \bibfield  {author} {\bibinfo {author} {\bibfnamefont {S.}~\bibnamefont {Kundu}}\ and\ \bibinfo {author} {\bibfnamefont {E.~Y.-Z.}\ \bibnamefont {Tan}},\ }\bibfield  {title} {\bibinfo {title} {Composably secure device-independent encryption with certified deletion},\ }\href {https://doi.org/10.22331/q-2023-07-06-1047} {\bibfield  {journal} {\bibinfo  {journal} {Quantum}\ }\textbf {\bibinfo {volume} {7}},\ \bibinfo {pages} {1047} (\bibinfo {year} {2023})}\BibitemShut {NoStop}%
\bibitem [{\citenamefont {Bhavsar}\ \emph {et~al.}(2025)\citenamefont {Bhavsar}, \citenamefont {Wooltorton},\ and\ \citenamefont {Bae}}]{Bhavsar2025Composable}%
  \BibitemOpen
  \bibfield  {author} {\bibinfo {author} {\bibfnamefont {R.}~\bibnamefont {Bhavsar}}, \bibinfo {author} {\bibfnamefont {L.}~\bibnamefont {Wooltorton}},\ and\ \bibinfo {author} {\bibfnamefont {J.}~\bibnamefont {Bae}},\ }\bibfield  {title} {\bibinfo {title} {Composable framework for device-independent state certification},\ }\href {https://doi.org/10.1103/1J5F-QJHX} {\bibfield  {journal} {\bibinfo  {journal} {PRX Quantum}\ }\textbf {\bibinfo {volume} {6}},\ \bibinfo {pages} {040356} (\bibinfo {year} {2025})}\BibitemShut {NoStop}%
\bibitem [{\citenamefont {Solomons}\ \emph {et~al.}(2022)\citenamefont {Solomons}, \citenamefont {Fletcher}, \citenamefont {Aktas}, \citenamefont {Venkatachalam}, \citenamefont {Wengerowsky}, \citenamefont {Lon{\v{c}}ari{\'c}}, \citenamefont {Neumann}, \citenamefont {Liu}, \citenamefont {Samec}, \citenamefont {Stip{\v{c}}evi{\'c}}, \citenamefont {Ursin}, \citenamefont {Pirandola}, \citenamefont {Rarity},\ and\ \citenamefont {Joshi}}]{Solomons2022ScalableAuthentication}%
  \BibitemOpen
  \bibfield  {author} {\bibinfo {author} {\bibfnamefont {N.~R.}\ \bibnamefont {Solomons}}, \bibinfo {author} {\bibfnamefont {A.~I.}\ \bibnamefont {Fletcher}}, \bibinfo {author} {\bibfnamefont {D.}~\bibnamefont {Aktas}}, \bibinfo {author} {\bibfnamefont {N.}~\bibnamefont {Venkatachalam}}, \bibinfo {author} {\bibfnamefont {S.}~\bibnamefont {Wengerowsky}}, \bibinfo {author} {\bibfnamefont {M.}~\bibnamefont {Lon{\v{c}}ari{\'c}}}, \bibinfo {author} {\bibfnamefont {S.~P.}\ \bibnamefont {Neumann}}, \bibinfo {author} {\bibfnamefont {B.}~\bibnamefont {Liu}}, \bibinfo {author} {\bibfnamefont {{\v{Z}}.}~\bibnamefont {Samec}}, \bibinfo {author} {\bibfnamefont {M.}~\bibnamefont {Stip{\v{c}}evi{\'c}}}, \bibinfo {author} {\bibfnamefont {R.}~\bibnamefont {Ursin}}, \bibinfo {author} {\bibfnamefont {S.}~\bibnamefont {Pirandola}}, \bibinfo {author} {\bibfnamefont {J.~G.}\ \bibnamefont {Rarity}},\ and\ \bibinfo {author} {\bibfnamefont {S.~K.}\ \bibnamefont {Joshi}},\ }\bibfield  {title} {\bibinfo {title} {Scalable authentication
  and optimal flooding in a quantum network},\ }\href {https://doi.org/10.1103/PRXQuantum.3.020311} {\bibfield  {journal} {\bibinfo  {journal} {PRX Quantum}\ }\textbf {\bibinfo {volume} {3}},\ \bibinfo {pages} {020311} (\bibinfo {year} {2022})}\BibitemShut {NoStop}%
\bibitem [{\citenamefont {Krawczyk}(1994)}]{Krawczyk1994LFSR}%
  \BibitemOpen
  \bibfield  {author} {\bibinfo {author} {\bibfnamefont {H.}~\bibnamefont {Krawczyk}},\ }\bibfield  {title} {\bibinfo {title} {{LFSR}-based hashing and authentication},\ }in\ \href {https://doi.org/10.1007/3-540-48658-5_15} {\emph {\bibinfo {booktitle} {Advances in Cryptology --- CRYPTO '94}}},\ \bibinfo {series} {Lecture Notes in Computer Science}, Vol.\ \bibinfo {volume} {839},\ \bibinfo {editor} {edited by\ \bibinfo {editor} {\bibfnamefont {Y.~G.}\ \bibnamefont {Desmedt}}}\ (\bibinfo  {publisher} {Springer},\ \bibinfo {address} {Berlin, Heidelberg},\ \bibinfo {year} {1994})\ pp.\ \bibinfo {pages} {129--139}\BibitemShut {NoStop}%
\bibitem [{\citenamefont {Li}\ \emph {et~al.}(2023{\natexlab{a}})\citenamefont {Li}, \citenamefont {Xie}, \citenamefont {Cao}, \citenamefont {Li}, \citenamefont {Fu}, \citenamefont {Yin},\ and\ \citenamefont {Chen}}]{Li2023One-time}%
  \BibitemOpen
  \bibfield  {author} {\bibinfo {author} {\bibfnamefont {B.-H.}\ \bibnamefont {Li}}, \bibinfo {author} {\bibfnamefont {Y.-M.}\ \bibnamefont {Xie}}, \bibinfo {author} {\bibfnamefont {X.-Y.}\ \bibnamefont {Cao}}, \bibinfo {author} {\bibfnamefont {C.-L.}\ \bibnamefont {Li}}, \bibinfo {author} {\bibfnamefont {Y.}~\bibnamefont {Fu}}, \bibinfo {author} {\bibfnamefont {H.-L.}\ \bibnamefont {Yin}},\ and\ \bibinfo {author} {\bibfnamefont {Z.-B.}\ \bibnamefont {Chen}},\ }\bibfield  {title} {\bibinfo {title} {One-time universal hashing quantum digital signatures without perfect keys},\ }\href {https://doi.org/10.1103/PhysRevApplied.20.044011} {\bibfield  {journal} {\bibinfo  {journal} {Phys. Rev. Appl.}\ }\textbf {\bibinfo {volume} {20}},\ \bibinfo {pages} {044011} (\bibinfo {year} {2023}{\natexlab{a}})}\BibitemShut {NoStop}%
\bibitem [{\citenamefont {Maurer}(2012)}]{Maurer2012Constructive}%
  \BibitemOpen
  \bibfield  {author} {\bibinfo {author} {\bibfnamefont {U.}~\bibnamefont {Maurer}},\ }\bibfield  {title} {\bibinfo {title} {Constructive cryptography -- a new paradigm for security definitions and proofs},\ }in\ \href {https://doi.org/10.1007/978-3-642-27375-9_3} {\emph {\bibinfo {booktitle} {Theory of Security and Applications}}},\ \bibinfo {series} {Lecture Notes in Computer Science}, Vol.\ \bibinfo {volume} {6993},\ \bibinfo {editor} {edited by\ \bibinfo {editor} {\bibfnamefont {S.}~\bibnamefont {M{\"o}dersheim}}\ and\ \bibinfo {editor} {\bibfnamefont {C.}~\bibnamefont {Palamidessi}}}\ (\bibinfo  {publisher} {Springer},\ \bibinfo {address} {Berlin, Heidelberg},\ \bibinfo {year} {2012})\ pp.\ \bibinfo {pages} {33--56}\BibitemShut {NoStop}%
\bibitem [{\citenamefont {Badertscher}\ \emph {et~al.}(2018)\citenamefont {Badertscher}, \citenamefont {Maurer},\ and\ \citenamefont {Tackmann}}]{Badertscher2018Composable}%
  \BibitemOpen
  \bibfield  {author} {\bibinfo {author} {\bibfnamefont {C.}~\bibnamefont {Badertscher}}, \bibinfo {author} {\bibfnamefont {U.}~\bibnamefont {Maurer}},\ and\ \bibinfo {author} {\bibfnamefont {B.}~\bibnamefont {Tackmann}},\ }\bibfield  {title} {\bibinfo {title} {On composable security for digital signatures},\ }in\ \href {https://doi.org/10.1007/978-3-319-76578-5_17} {\emph {\bibinfo {booktitle} {Public-Key Cryptography -- PKC 2018}}},\ \bibinfo {series} {Lecture Notes in Computer Science}, Vol.\ \bibinfo {volume} {10769},\ \bibinfo {editor} {edited by\ \bibinfo {editor} {\bibfnamefont {M.}~\bibnamefont {Abdalla}}\ and\ \bibinfo {editor} {\bibfnamefont {R.}~\bibnamefont {Dahab}}}\ (\bibinfo  {publisher} {Springer},\ \bibinfo {address} {Cham},\ \bibinfo {year} {2018})\ pp.\ \bibinfo {pages} {494--523}\BibitemShut {NoStop}%
\bibitem [{\citenamefont {Pironio}\ \emph {et~al.}(2009)\citenamefont {Pironio}, \citenamefont {Acín}, \citenamefont {Brunner}, \citenamefont {Gisin}, \citenamefont {Massar},\ and\ \citenamefont {Scarani}}]{Pironio2009Deviceindependent}%
  \BibitemOpen
  \bibfield  {author} {\bibinfo {author} {\bibfnamefont {S.}~\bibnamefont {Pironio}}, \bibinfo {author} {\bibfnamefont {A.}~\bibnamefont {Acín}}, \bibinfo {author} {\bibfnamefont {N.}~\bibnamefont {Brunner}}, \bibinfo {author} {\bibfnamefont {N.}~\bibnamefont {Gisin}}, \bibinfo {author} {\bibfnamefont {S.}~\bibnamefont {Massar}},\ and\ \bibinfo {author} {\bibfnamefont {V.}~\bibnamefont {Scarani}},\ }\bibfield  {title} {\bibinfo {title} {Device-independent quantum key distribution secure against collective attacks},\ }\href {https://doi.org/10.1088/1367-2630/11/4/045021} {\bibfield  {journal} {\bibinfo  {journal} {New J. Phys.}\ }\textbf {\bibinfo {volume} {11}},\ \bibinfo {pages} {045021} (\bibinfo {year} {2009})}\BibitemShut {NoStop}%
\bibitem [{\citenamefont {Dupuis}\ \emph {et~al.}(2020)\citenamefont {Dupuis}, \citenamefont {Fawzi},\ and\ \citenamefont {Renner}}]{Dupuis2020Entropy}%
  \BibitemOpen
  \bibfield  {author} {\bibinfo {author} {\bibfnamefont {F.}~\bibnamefont {Dupuis}}, \bibinfo {author} {\bibfnamefont {O.}~\bibnamefont {Fawzi}},\ and\ \bibinfo {author} {\bibfnamefont {R.}~\bibnamefont {Renner}},\ }\bibfield  {title} {\bibinfo {title} {Entropy accumulation},\ }\href {https://doi.org/10.1007/s00220-020-03839-5} {\bibfield  {journal} {\bibinfo  {journal} {Commun. Math. Phys.}\ }\textbf {\bibinfo {volume} {379}},\ \bibinfo {pages} {867} (\bibinfo {year} {2020})}\BibitemShut {NoStop}%
\bibitem [{\citenamefont {Tan}\ \emph {et~al.}(2020)\citenamefont {Tan}, \citenamefont {Lim},\ and\ \citenamefont {Renner}}]{Tan2020Advantage}%
  \BibitemOpen
  \bibfield  {author} {\bibinfo {author} {\bibfnamefont {E.~Y.-Z.}\ \bibnamefont {Tan}}, \bibinfo {author} {\bibfnamefont {C.~C.-W.}\ \bibnamefont {Lim}},\ and\ \bibinfo {author} {\bibfnamefont {R.}~\bibnamefont {Renner}},\ }\bibfield  {title} {\bibinfo {title} {Advantage distillation for device-independent quantum key distribution},\ }\href {https://doi.org/10.1103/PhysRevLett.124.020502} {\bibfield  {journal} {\bibinfo  {journal} {Phys. Rev. Lett.}\ }\textbf {\bibinfo {volume} {124}},\ \bibinfo {pages} {020502} (\bibinfo {year} {2020})}\BibitemShut {NoStop}%
\bibitem [{\citenamefont {Dupuis}\ and\ \citenamefont {Fawzi}(2019)}]{Dupuis2019Entropy}%
  \BibitemOpen
  \bibfield  {author} {\bibinfo {author} {\bibfnamefont {F.}~\bibnamefont {Dupuis}}\ and\ \bibinfo {author} {\bibfnamefont {O.}~\bibnamefont {Fawzi}},\ }\bibfield  {title} {\bibinfo {title} {Entropy accumulation with improved second-order term},\ }\href {https://doi.org/10.1109/TIT.2019.2929564} {\bibfield  {journal} {\bibinfo  {journal} {IEEE Trans. Inf. Theory}\ }\textbf {\bibinfo {volume} {65}},\ \bibinfo {pages} {7596} (\bibinfo {year} {2019})}\BibitemShut {NoStop}%
\bibitem [{\citenamefont {Nadlinger}\ \emph {et~al.}(2022)\citenamefont {Nadlinger}, \citenamefont {Drmota}, \citenamefont {Nichol}, \citenamefont {Araneda}, \citenamefont {Main}, \citenamefont {Srinivas}, \citenamefont {Lucas}, \citenamefont {Ballance}, \citenamefont {Ivanov}, \citenamefont {Tan} \emph {et~al.}}]{nadlinger2022experimental}%
  \BibitemOpen
  \bibfield  {author} {\bibinfo {author} {\bibfnamefont {D.~P.}\ \bibnamefont {Nadlinger}}, \bibinfo {author} {\bibfnamefont {P.}~\bibnamefont {Drmota}}, \bibinfo {author} {\bibfnamefont {B.~C.}\ \bibnamefont {Nichol}}, \bibinfo {author} {\bibfnamefont {G.}~\bibnamefont {Araneda}}, \bibinfo {author} {\bibfnamefont {D.}~\bibnamefont {Main}}, \bibinfo {author} {\bibfnamefont {R.}~\bibnamefont {Srinivas}}, \bibinfo {author} {\bibfnamefont {D.~M.}\ \bibnamefont {Lucas}}, \bibinfo {author} {\bibfnamefont {C.~J.}\ \bibnamefont {Ballance}}, \bibinfo {author} {\bibfnamefont {K.}~\bibnamefont {Ivanov}}, \bibinfo {author} {\bibfnamefont {E.-Z.}\ \bibnamefont {Tan}}, \emph {et~al.},\ }\bibfield  {title} {\bibinfo {title} {Experimental quantum key distribution certified by bell's theorem},\ }\href {https://www.nature.com/articles/s41586-022-04941-5} {\bibfield  {journal} {\bibinfo  {journal} {Nature}\ }\textbf {\bibinfo {volume} {607}},\ \bibinfo {pages} {682} (\bibinfo {year} {2022})}\BibitemShut {NoStop}%
\bibitem [{\citenamefont {Azuma}\ \emph {et~al.}(2015{\natexlab{a}})\citenamefont {Azuma}, \citenamefont {Tamaki},\ and\ \citenamefont {Lo}}]{azuma2015all}%
  \BibitemOpen
  \bibfield  {author} {\bibinfo {author} {\bibfnamefont {K.}~\bibnamefont {Azuma}}, \bibinfo {author} {\bibfnamefont {K.}~\bibnamefont {Tamaki}},\ and\ \bibinfo {author} {\bibfnamefont {H.-K.}\ \bibnamefont {Lo}},\ }\bibfield  {title} {\bibinfo {title} {All-photonic quantum repeaters},\ }\href {https://www.nature.com/articles/ncomms7787} {\bibfield  {journal} {\bibinfo  {journal} {Nat. Commun.}\ }\textbf {\bibinfo {volume} {6}},\ \bibinfo {pages} {6787} (\bibinfo {year} {2015}{\natexlab{a}})}\BibitemShut {NoStop}%
\bibitem [{\citenamefont {Azuma}\ \emph {et~al.}(2015{\natexlab{b}})\citenamefont {Azuma}, \citenamefont {Tamaki},\ and\ \citenamefont {Munro}}]{azuma2015allqkd}%
  \BibitemOpen
  \bibfield  {author} {\bibinfo {author} {\bibfnamefont {K.}~\bibnamefont {Azuma}}, \bibinfo {author} {\bibfnamefont {K.}~\bibnamefont {Tamaki}},\ and\ \bibinfo {author} {\bibfnamefont {W.~J.}\ \bibnamefont {Munro}},\ }\bibfield  {title} {\bibinfo {title} {All-photonic intercity quantum key distribution},\ }\href {https://www.nature.com/articles/ncomms10171} {\bibfield  {journal} {\bibinfo  {journal} {Nat. Commun.}\ }\textbf {\bibinfo {volume} {6}},\ \bibinfo {pages} {10171} (\bibinfo {year} {2015}{\natexlab{b}})}\BibitemShut {NoStop}%
\bibitem [{\citenamefont {Barrett}\ \emph {et~al.}(2013)\citenamefont {Barrett}, \citenamefont {Colbeck},\ and\ \citenamefont {Kent}}]{Barrett2013Memory}%
  \BibitemOpen
  \bibfield  {author} {\bibinfo {author} {\bibfnamefont {J.}~\bibnamefont {Barrett}}, \bibinfo {author} {\bibfnamefont {R.}~\bibnamefont {Colbeck}},\ and\ \bibinfo {author} {\bibfnamefont {A.}~\bibnamefont {Kent}},\ }\bibfield  {title} {\bibinfo {title} {Memory attacks on device-independent quantum cryptography},\ }\href {https://doi.org/10.1103/PhysRevLett.110.010503} {\bibfield  {journal} {\bibinfo  {journal} {Phys. Rev. Lett.}\ }\textbf {\bibinfo {volume} {110}},\ \bibinfo {pages} {010503} (\bibinfo {year} {2013})}\BibitemShut {NoStop}%
\bibitem [{\citenamefont {Kok}\ \emph {et~al.}(2002)\citenamefont {Kok}, \citenamefont {Lee},\ and\ \citenamefont {Dowling}}]{Kok2002Single-photon}%
  \BibitemOpen
  \bibfield  {author} {\bibinfo {author} {\bibfnamefont {P.}~\bibnamefont {Kok}}, \bibinfo {author} {\bibfnamefont {H.}~\bibnamefont {Lee}},\ and\ \bibinfo {author} {\bibfnamefont {J.~P.}\ \bibnamefont {Dowling}},\ }\bibfield  {title} {\bibinfo {title} {Single-photon quantum-nondemolition detectors constructed with linear optics and projective measurements},\ }\href {https://doi.org/10.1103/PhysRevA.66.063814} {\bibfield  {journal} {\bibinfo  {journal} {Phys. Rev. A}\ }\textbf {\bibinfo {volume} {66}},\ \bibinfo {pages} {063814} (\bibinfo {year} {2002})}\BibitemShut {NoStop}%
\bibitem [{\citenamefont {Bennett}\ \emph {et~al.}(1993)\citenamefont {Bennett}, \citenamefont {Brassard}, \citenamefont {Cr\'epeau}, \citenamefont {Jozsa}, \citenamefont {Peres},\ and\ \citenamefont {Wootters}}]{Bennett1993Teleporting}%
  \BibitemOpen
  \bibfield  {author} {\bibinfo {author} {\bibfnamefont {C.~H.}\ \bibnamefont {Bennett}}, \bibinfo {author} {\bibfnamefont {G.}~\bibnamefont {Brassard}}, \bibinfo {author} {\bibfnamefont {C.}~\bibnamefont {Cr\'epeau}}, \bibinfo {author} {\bibfnamefont {R.}~\bibnamefont {Jozsa}}, \bibinfo {author} {\bibfnamefont {A.}~\bibnamefont {Peres}},\ and\ \bibinfo {author} {\bibfnamefont {W.~K.}\ \bibnamefont {Wootters}},\ }\bibfield  {title} {\bibinfo {title} {Teleporting an unknown quantum state via dual classical and einstein-podolsky-rosen channels},\ }\href {https://doi.org/10.1103/PhysRevLett.70.1895} {\bibfield  {journal} {\bibinfo  {journal} {Phys. Rev. Lett.}\ }\textbf {\bibinfo {volume} {70}},\ \bibinfo {pages} {1895} (\bibinfo {year} {1993})}\BibitemShut {NoStop}%
\bibitem [{\citenamefont {Schimpf}\ \emph {et~al.}(2021)\citenamefont {Schimpf}, \citenamefont {Reindl}, \citenamefont {Huber}, \citenamefont {Lehner}, \citenamefont {Silva}, \citenamefont {Manna}, \citenamefont {Vyvlecka}, \citenamefont {Walther},\ and\ \citenamefont {Rastelli}}]{Schimpf2021Quantum}%
  \BibitemOpen
  \bibfield  {author} {\bibinfo {author} {\bibfnamefont {C.}~\bibnamefont {Schimpf}}, \bibinfo {author} {\bibfnamefont {M.}~\bibnamefont {Reindl}}, \bibinfo {author} {\bibfnamefont {D.}~\bibnamefont {Huber}}, \bibinfo {author} {\bibfnamefont {B.}~\bibnamefont {Lehner}}, \bibinfo {author} {\bibfnamefont {S.~F. C.~D.}\ \bibnamefont {Silva}}, \bibinfo {author} {\bibfnamefont {S.}~\bibnamefont {Manna}}, \bibinfo {author} {\bibfnamefont {M.}~\bibnamefont {Vyvlecka}}, \bibinfo {author} {\bibfnamefont {P.}~\bibnamefont {Walther}},\ and\ \bibinfo {author} {\bibfnamefont {A.}~\bibnamefont {Rastelli}},\ }\bibfield  {title} {\bibinfo {title} {Quantum cryptography with highly entangled photons from semiconductor quantum dots},\ }\href {https://doi.org/10.1126/sciadv.abe8905} {\bibfield  {journal} {\bibinfo  {journal} {Sci. Adv.}\ }\textbf {\bibinfo {volume} {7}},\ \bibinfo {pages} {eabe8905} (\bibinfo {year} {2021})}\BibitemShut {NoStop}%
\bibitem [{\citenamefont {Wang}\ \emph {et~al.}(2019)\citenamefont {Wang}, \citenamefont {Hu}, \citenamefont {Chung}, \citenamefont {Qin}, \citenamefont {Yang}, \citenamefont {Li}, \citenamefont {Liu}, \citenamefont {Zhong}, \citenamefont {He}, \citenamefont {Ding}, \citenamefont {Deng}, \citenamefont {Dai}, \citenamefont {Huo}, \citenamefont {H\"ofling}, \citenamefont {Lu},\ and\ \citenamefont {Pan}}]{Wang2019On-Demand}%
  \BibitemOpen
  \bibfield  {author} {\bibinfo {author} {\bibfnamefont {H.}~\bibnamefont {Wang}}, \bibinfo {author} {\bibfnamefont {H.}~\bibnamefont {Hu}}, \bibinfo {author} {\bibfnamefont {T.-H.}\ \bibnamefont {Chung}}, \bibinfo {author} {\bibfnamefont {J.}~\bibnamefont {Qin}}, \bibinfo {author} {\bibfnamefont {X.}~\bibnamefont {Yang}}, \bibinfo {author} {\bibfnamefont {J.-P.}\ \bibnamefont {Li}}, \bibinfo {author} {\bibfnamefont {R.-Z.}\ \bibnamefont {Liu}}, \bibinfo {author} {\bibfnamefont {H.-S.}\ \bibnamefont {Zhong}}, \bibinfo {author} {\bibfnamefont {Y.-M.}\ \bibnamefont {He}}, \bibinfo {author} {\bibfnamefont {X.}~\bibnamefont {Ding}}, \bibinfo {author} {\bibfnamefont {Y.-H.}\ \bibnamefont {Deng}}, \bibinfo {author} {\bibfnamefont {Q.}~\bibnamefont {Dai}}, \bibinfo {author} {\bibfnamefont {Y.-H.}\ \bibnamefont {Huo}}, \bibinfo {author} {\bibfnamefont {S.}~\bibnamefont {H\"ofling}}, \bibinfo {author} {\bibfnamefont {C.-Y.}\ \bibnamefont {Lu}},\ and\ \bibinfo {author} {\bibfnamefont {J.-W.}\ \bibnamefont {Pan}},\
  }\bibfield  {title} {\bibinfo {title} {On-demand semiconductor source of entangled photons which simultaneously has high fidelity, efficiency, and indistinguishability},\ }\href {https://doi.org/10.1103/PhysRevLett.122.113602} {\bibfield  {journal} {\bibinfo  {journal} {Phys. Rev. Lett.}\ }\textbf {\bibinfo {volume} {122}},\ \bibinfo {pages} {113602} (\bibinfo {year} {2019})}\BibitemShut {NoStop}%
\bibitem [{\citenamefont {Liu}\ \emph {et~al.}(2019)\citenamefont {Liu}, \citenamefont {Su}, \citenamefont {Wei}, \citenamefont {Yao}, \citenamefont {Silva}, \citenamefont {Yu}, \citenamefont {Iles-Smith}, \citenamefont {Srinivasan}, \citenamefont {Rastelli}, \citenamefont {Li} \emph {et~al.}}]{Liu2019solid}%
  \BibitemOpen
  \bibfield  {author} {\bibinfo {author} {\bibfnamefont {J.}~\bibnamefont {Liu}}, \bibinfo {author} {\bibfnamefont {R.}~\bibnamefont {Su}}, \bibinfo {author} {\bibfnamefont {Y.}~\bibnamefont {Wei}}, \bibinfo {author} {\bibfnamefont {B.}~\bibnamefont {Yao}}, \bibinfo {author} {\bibfnamefont {S.~F. C.~d.}\ \bibnamefont {Silva}}, \bibinfo {author} {\bibfnamefont {Y.}~\bibnamefont {Yu}}, \bibinfo {author} {\bibfnamefont {J.}~\bibnamefont {Iles-Smith}}, \bibinfo {author} {\bibfnamefont {K.}~\bibnamefont {Srinivasan}}, \bibinfo {author} {\bibfnamefont {A.}~\bibnamefont {Rastelli}}, \bibinfo {author} {\bibfnamefont {J.}~\bibnamefont {Li}}, \emph {et~al.},\ }\bibfield  {title} {\bibinfo {title} {A solid-state source of strongly entangled photon pairs with high brightness and indistinguishability},\ }\href {https://doi.org/10.1038/s41565-019-0435-9} {\bibfield  {journal} {\bibinfo  {journal} {Nat. Nanotechnol.}\ }\textbf {\bibinfo {volume} {14}},\ \bibinfo {pages} {586} (\bibinfo {year} {2019})}\BibitemShut {NoStop}%
\bibitem [{\citenamefont {Bayer}\ \emph {et~al.}(2002)\citenamefont {Bayer}, \citenamefont {Ortner}, \citenamefont {Stern}, \citenamefont {Kuther}, \citenamefont {Gorbunov}, \citenamefont {Forchel}, \citenamefont {Hawrylak}, \citenamefont {Fafard}, \citenamefont {Hinzer}, \citenamefont {Reinecke}, \citenamefont {Walck}, \citenamefont {Reithmaier}, \citenamefont {Klopf},\ and\ \citenamefont {Sch\"afer}}]{Bayer2002Fine}%
  \BibitemOpen
  \bibfield  {author} {\bibinfo {author} {\bibfnamefont {M.}~\bibnamefont {Bayer}}, \bibinfo {author} {\bibfnamefont {G.}~\bibnamefont {Ortner}}, \bibinfo {author} {\bibfnamefont {O.}~\bibnamefont {Stern}}, \bibinfo {author} {\bibfnamefont {A.}~\bibnamefont {Kuther}}, \bibinfo {author} {\bibfnamefont {A.~A.}\ \bibnamefont {Gorbunov}}, \bibinfo {author} {\bibfnamefont {A.}~\bibnamefont {Forchel}}, \bibinfo {author} {\bibfnamefont {P.}~\bibnamefont {Hawrylak}}, \bibinfo {author} {\bibfnamefont {S.}~\bibnamefont {Fafard}}, \bibinfo {author} {\bibfnamefont {K.}~\bibnamefont {Hinzer}}, \bibinfo {author} {\bibfnamefont {T.~L.}\ \bibnamefont {Reinecke}}, \bibinfo {author} {\bibfnamefont {S.~N.}\ \bibnamefont {Walck}}, \bibinfo {author} {\bibfnamefont {J.~P.}\ \bibnamefont {Reithmaier}}, \bibinfo {author} {\bibfnamefont {F.}~\bibnamefont {Klopf}},\ and\ \bibinfo {author} {\bibfnamefont {F.}~\bibnamefont {Sch\"afer}},\ }\bibfield  {title} {\bibinfo {title} {Fine structure of neutral and charged excitons in
  self-assembled in(ga)as/(al)gaas quantum dots},\ }\href {https://doi.org/10.1103/PhysRevB.65.195315} {\bibfield  {journal} {\bibinfo  {journal} {Phys. Rev. B}\ }\textbf {\bibinfo {volume} {65}},\ \bibinfo {pages} {195315} (\bibinfo {year} {2002})}\BibitemShut {NoStop}%
\bibitem [{\citenamefont {Liu}\ \emph {et~al.}(2023)\citenamefont {Liu}, \citenamefont {Zhang}, \citenamefont {Jiang}, \citenamefont {Chen}, \citenamefont {Zhang}, \citenamefont {Pan}, \citenamefont {Ma}, \citenamefont {Dong}, \citenamefont {Xiong}, \citenamefont {Zhang}, \citenamefont {Li}, \citenamefont {Wang}, \citenamefont {Wu}, \citenamefont {Chen}, \citenamefont {You}, \citenamefont {Wang}, \citenamefont {Zhang},\ and\ \citenamefont {Pan}}]{Liu2023Experimental}%
  \BibitemOpen
  \bibfield  {author} {\bibinfo {author} {\bibfnamefont {Y.}~\bibnamefont {Liu}}, \bibinfo {author} {\bibfnamefont {W.-J.}\ \bibnamefont {Zhang}}, \bibinfo {author} {\bibfnamefont {C.}~\bibnamefont {Jiang}}, \bibinfo {author} {\bibfnamefont {J.-P.}\ \bibnamefont {Chen}}, \bibinfo {author} {\bibfnamefont {C.}~\bibnamefont {Zhang}}, \bibinfo {author} {\bibfnamefont {W.-X.}\ \bibnamefont {Pan}}, \bibinfo {author} {\bibfnamefont {D.}~\bibnamefont {Ma}}, \bibinfo {author} {\bibfnamefont {H.}~\bibnamefont {Dong}}, \bibinfo {author} {\bibfnamefont {J.-M.}\ \bibnamefont {Xiong}}, \bibinfo {author} {\bibfnamefont {C.-J.}\ \bibnamefont {Zhang}}, \bibinfo {author} {\bibfnamefont {H.}~\bibnamefont {Li}}, \bibinfo {author} {\bibfnamefont {R.-C.}\ \bibnamefont {Wang}}, \bibinfo {author} {\bibfnamefont {J.}~\bibnamefont {Wu}}, \bibinfo {author} {\bibfnamefont {T.-Y.}\ \bibnamefont {Chen}}, \bibinfo {author} {\bibfnamefont {L.}~\bibnamefont {You}}, \bibinfo {author} {\bibfnamefont {X.-B.}\ \bibnamefont {Wang}}, \bibinfo
  {author} {\bibfnamefont {Q.}~\bibnamefont {Zhang}},\ and\ \bibinfo {author} {\bibfnamefont {J.-W.}\ \bibnamefont {Pan}},\ }\bibfield  {title} {\bibinfo {title} {Experimental twin-field quantum key distribution over 1000 km fiber distance},\ }\href {https://doi.org/10.1103/PhysRevLett.130.210801} {\bibfield  {journal} {\bibinfo  {journal} {Phys. Rev. Lett.}\ }\textbf {\bibinfo {volume} {130}},\ \bibinfo {pages} {210801} (\bibinfo {year} {2023})}\BibitemShut {NoStop}%
\bibitem [{\citenamefont {Kok}\ \emph {et~al.}(2007)\citenamefont {Kok}, \citenamefont {Munro}, \citenamefont {Nemoto}, \citenamefont {Ralph}, \citenamefont {Dowling},\ and\ \citenamefont {Milburn}}]{Kok2007Linear}%
  \BibitemOpen
  \bibfield  {author} {\bibinfo {author} {\bibfnamefont {P.}~\bibnamefont {Kok}}, \bibinfo {author} {\bibfnamefont {W.~J.}\ \bibnamefont {Munro}}, \bibinfo {author} {\bibfnamefont {K.}~\bibnamefont {Nemoto}}, \bibinfo {author} {\bibfnamefont {T.~C.}\ \bibnamefont {Ralph}}, \bibinfo {author} {\bibfnamefont {J.~P.}\ \bibnamefont {Dowling}},\ and\ \bibinfo {author} {\bibfnamefont {G.~J.}\ \bibnamefont {Milburn}},\ }\bibfield  {title} {\bibinfo {title} {Linear optical quantum computing with photonic qubits},\ }\href {https://doi.org/10.1103/RevModPhys.79.135} {\bibfield  {journal} {\bibinfo  {journal} {Rev. Mod. Phys.}\ }\textbf {\bibinfo {volume} {79}},\ \bibinfo {pages} {135} (\bibinfo {year} {2007})}\BibitemShut {NoStop}%
\bibitem [{\citenamefont {Li}\ \emph {et~al.}(2023{\natexlab{b}})\citenamefont {Li}, \citenamefont {Fu}, \citenamefont {Liu}, \citenamefont {Xie}, \citenamefont {Li}, \citenamefont {Zhou}, \citenamefont {Yin},\ and\ \citenamefont {Chen}}]{Li2023Breaking}%
  \BibitemOpen
  \bibfield  {author} {\bibinfo {author} {\bibfnamefont {C.-L.}\ \bibnamefont {Li}}, \bibinfo {author} {\bibfnamefont {Y.}~\bibnamefont {Fu}}, \bibinfo {author} {\bibfnamefont {W.-B.}\ \bibnamefont {Liu}}, \bibinfo {author} {\bibfnamefont {Y.-M.}\ \bibnamefont {Xie}}, \bibinfo {author} {\bibfnamefont {B.-H.}\ \bibnamefont {Li}}, \bibinfo {author} {\bibfnamefont {M.-G.}\ \bibnamefont {Zhou}}, \bibinfo {author} {\bibfnamefont {H.-L.}\ \bibnamefont {Yin}},\ and\ \bibinfo {author} {\bibfnamefont {Z.-B.}\ \bibnamefont {Chen}},\ }\bibfield  {title} {\bibinfo {title} {Breaking the rate-distance limitation of measurement-device-independent quantum secret sharing},\ }\href {https://doi.org/10.1103/PhysRevResearch.5.033077} {\bibfield  {journal} {\bibinfo  {journal} {Phys. Rev. Res.}\ }\textbf {\bibinfo {volume} {5}},\ \bibinfo {pages} {033077} (\bibinfo {year} {2023}{\natexlab{b}})}\BibitemShut {NoStop}%
\bibitem [{\citenamefont {Ma}\ \emph {et~al.}(2011)\citenamefont {Ma}, \citenamefont {Zotter}, \citenamefont {Kofler}, \citenamefont {Jennewein},\ and\ \citenamefont {Zeilinger}}]{Ma2011Experimental}%
  \BibitemOpen
  \bibfield  {author} {\bibinfo {author} {\bibfnamefont {X.-s.}\ \bibnamefont {Ma}}, \bibinfo {author} {\bibfnamefont {S.}~\bibnamefont {Zotter}}, \bibinfo {author} {\bibfnamefont {J.}~\bibnamefont {Kofler}}, \bibinfo {author} {\bibfnamefont {T.}~\bibnamefont {Jennewein}},\ and\ \bibinfo {author} {\bibfnamefont {A.}~\bibnamefont {Zeilinger}},\ }\bibfield  {title} {\bibinfo {title} {Experimental generation of single photons via active multiplexing},\ }\href {https://doi.org/10.1103/PhysRevA.83.043814} {\bibfield  {journal} {\bibinfo  {journal} {Phys. Rev. A}\ }\textbf {\bibinfo {volume} {83}},\ \bibinfo {pages} {043814} (\bibinfo {year} {2011})}\BibitemShut {NoStop}%
\bibitem [{\citenamefont {Li}\ \emph {et~al.}(2026)\citenamefont {Li}, \citenamefont {Li}, \citenamefont {Bian}, \citenamefont {Huang}, \citenamefont {Yin},\ and\ \citenamefont {Chen}}]{Li2026Multiplexing}%
  \BibitemOpen
  \bibfield  {author} {\bibinfo {author} {\bibfnamefont {S.-G.}\ \bibnamefont {Li}}, \bibinfo {author} {\bibfnamefont {C.-L.}\ \bibnamefont {Li}}, \bibinfo {author} {\bibfnamefont {J.}~\bibnamefont {Bian}}, \bibinfo {author} {\bibfnamefont {Z.}~\bibnamefont {Huang}}, \bibinfo {author} {\bibfnamefont {H.}~\bibnamefont {Yin}},\ and\ \bibinfo {author} {\bibfnamefont {Z.-B.}\ \bibnamefont {Chen}},\ }\bibfield  {title} {\bibinfo {title} {Multiplexing-enhanced device-independent quantum key distribution},\ }\href {https://api.semanticscholar.org/CorpusID:286408064} {\bibfield  {journal} {\bibinfo  {journal} {Sci. China Phys. Mech. Astron.}\ }\textbf {\bibinfo {volume} {69}} (\bibinfo {year} {2026})}\BibitemShut {NoStop}%
\bibitem [{\citenamefont {Minder}\ \emph {et~al.}(2019)\citenamefont {Minder}, \citenamefont {Pittaluga}, \citenamefont {Roberts}, \citenamefont {Lucamarini}, \citenamefont {Dynes}, \citenamefont {Yuan},\ and\ \citenamefont {Shields}}]{Minder2019experimental}%
  \BibitemOpen
  \bibfield  {author} {\bibinfo {author} {\bibfnamefont {M.}~\bibnamefont {Minder}}, \bibinfo {author} {\bibfnamefont {M.}~\bibnamefont {Pittaluga}}, \bibinfo {author} {\bibfnamefont {G.~L.}\ \bibnamefont {Roberts}}, \bibinfo {author} {\bibfnamefont {M.}~\bibnamefont {Lucamarini}}, \bibinfo {author} {\bibfnamefont {J.~F.}\ \bibnamefont {Dynes}}, \bibinfo {author} {\bibfnamefont {Z.}~\bibnamefont {Yuan}},\ and\ \bibinfo {author} {\bibfnamefont {A.~J.}\ \bibnamefont {Shields}},\ }\bibfield  {title} {\bibinfo {title} {Experimental quantum key distribution beyond the repeaterless secret key capacity},\ }\href {https://doi.org/10.1038/s41566-019-0377-7} {\bibfield  {journal} {\bibinfo  {journal} {Nat. Photonics}\ }\textbf {\bibinfo {volume} {13}},\ \bibinfo {pages} {334} (\bibinfo {year} {2019})}\BibitemShut {NoStop}%
\bibitem [{\citenamefont {Yin}\ \emph {et~al.}(2017{\natexlab{a}})\citenamefont {Yin}, \citenamefont {Wang}, \citenamefont {Tang}, \citenamefont {Zhao}, \citenamefont {Liu}, \citenamefont {Sun}, \citenamefont {Zhang}, \citenamefont {Li}, \citenamefont {Puthoor}, \citenamefont {You}, \citenamefont {Andersson}, \citenamefont {Wang}, \citenamefont {Liu}, \citenamefont {Jiang}, \citenamefont {Ma}, \citenamefont {Zhang}, \citenamefont {Curty}, \citenamefont {Chen},\ and\ \citenamefont {Pan}}]{Yin2017Experimental}%
  \BibitemOpen
  \bibfield  {author} {\bibinfo {author} {\bibfnamefont {H.-L.}\ \bibnamefont {Yin}}, \bibinfo {author} {\bibfnamefont {W.-L.}\ \bibnamefont {Wang}}, \bibinfo {author} {\bibfnamefont {Y.-L.}\ \bibnamefont {Tang}}, \bibinfo {author} {\bibfnamefont {Q.}~\bibnamefont {Zhao}}, \bibinfo {author} {\bibfnamefont {H.}~\bibnamefont {Liu}}, \bibinfo {author} {\bibfnamefont {X.-X.}\ \bibnamefont {Sun}}, \bibinfo {author} {\bibfnamefont {W.-J.}\ \bibnamefont {Zhang}}, \bibinfo {author} {\bibfnamefont {H.}~\bibnamefont {Li}}, \bibinfo {author} {\bibfnamefont {I.~V.}\ \bibnamefont {Puthoor}}, \bibinfo {author} {\bibfnamefont {L.-X.}\ \bibnamefont {You}}, \bibinfo {author} {\bibfnamefont {E.}~\bibnamefont {Andersson}}, \bibinfo {author} {\bibfnamefont {Z.}~\bibnamefont {Wang}}, \bibinfo {author} {\bibfnamefont {Y.}~\bibnamefont {Liu}}, \bibinfo {author} {\bibfnamefont {X.}~\bibnamefont {Jiang}}, \bibinfo {author} {\bibfnamefont {X.}~\bibnamefont {Ma}}, \bibinfo {author} {\bibfnamefont {Q.}~\bibnamefont {Zhang}}, \bibinfo
  {author} {\bibfnamefont {M.}~\bibnamefont {Curty}}, \bibinfo {author} {\bibfnamefont {T.-Y.}\ \bibnamefont {Chen}},\ and\ \bibinfo {author} {\bibfnamefont {J.-W.}\ \bibnamefont {Pan}},\ }\bibfield  {title} {\bibinfo {title} {Experimental measurement-device-independent quantum digital signatures over a metropolitan network},\ }\href {https://doi.org/10.1103/PhysRevA.95.042338} {\bibfield  {journal} {\bibinfo  {journal} {Phys. Rev. A}\ }\textbf {\bibinfo {volume} {95}},\ \bibinfo {pages} {042338} (\bibinfo {year} {2017}{\natexlab{a}})}\BibitemShut {NoStop}%
\bibitem [{\citenamefont {{OASIS Security Services Technical Committee}}(2005)}]{SAMLProfiles2005}%
  \BibitemOpen
  \bibfield  {author} {\bibinfo {author} {\bibnamefont {{OASIS Security Services Technical Committee}}},\ }\href {http://docs.oasis-open.org/security/saml/v2.0/saml-profiles-2.0-os.pdf} {\emph {\bibinfo {title} {Profiles for the {OASIS} Security Assertion Markup Language ({SAML}) {V2.0}}}},\ \bibinfo {type} {OASIS Standard}\ \bibinfo {number} {saml-profiles-2.0-os}\ (\bibinfo  {institution} {OASIS},\ \bibinfo {year} {2005})\BibitemShut {NoStop}%
\bibitem [{\citenamefont {Liu}\ \emph {et~al.}(2022)\citenamefont {Liu}, \citenamefont {Zhang}, \citenamefont {Zhen}, \citenamefont {Li}, \citenamefont {Liu}, \citenamefont {Fan}, \citenamefont {Xu}, \citenamefont {Zhang},\ and\ \citenamefont {Pan}}]{Liu2022Toward}%
  \BibitemOpen
  \bibfield  {author} {\bibinfo {author} {\bibfnamefont {W.-Z.}\ \bibnamefont {Liu}}, \bibinfo {author} {\bibfnamefont {Y.-Z.}\ \bibnamefont {Zhang}}, \bibinfo {author} {\bibfnamefont {Y.-Z.}\ \bibnamefont {Zhen}}, \bibinfo {author} {\bibfnamefont {M.-H.}\ \bibnamefont {Li}}, \bibinfo {author} {\bibfnamefont {Y.}~\bibnamefont {Liu}}, \bibinfo {author} {\bibfnamefont {J.}~\bibnamefont {Fan}}, \bibinfo {author} {\bibfnamefont {F.}~\bibnamefont {Xu}}, \bibinfo {author} {\bibfnamefont {Q.}~\bibnamefont {Zhang}},\ and\ \bibinfo {author} {\bibfnamefont {J.-W.}\ \bibnamefont {Pan}},\ }\bibfield  {title} {\bibinfo {title} {Toward a photonic demonstration of device-independent quantum key distribution},\ }\href {https://doi.org/10.1103/PhysRevLett.129.050502} {\bibfield  {journal} {\bibinfo  {journal} {Phys. Rev. Lett.}\ }\textbf {\bibinfo {volume} {129}},\ \bibinfo {pages} {050502} (\bibinfo {year} {2022})}\BibitemShut {NoStop}%
\bibitem [{\citenamefont {R{\"u}edlinger}(2011)}]{Ruedlinger2011Restricted}%
  \BibitemOpen
  \bibfield  {author} {\bibinfo {author} {\bibfnamefont {A.}~\bibnamefont {R{\"u}edlinger}},\ }\href {https://www.research-collection.ethz.ch/handle/20.500.11850/154594} {\bibinfo {title} {Restricted types of malleability}},\ \bibinfo {howpublished} {ETH Zurich thesis} (\bibinfo {year} {2011})\BibitemShut {NoStop}%
\bibitem [{\citenamefont {Yin}\ \emph {et~al.}(2017{\natexlab{b}})\citenamefont {Yin}, \citenamefont {Fu}, \citenamefont {Liu}, \citenamefont {Tang}, \citenamefont {Wang}, \citenamefont {You}, \citenamefont {Zhang}, \citenamefont {Chen}, \citenamefont {Wang}, \citenamefont {Zhang}, \citenamefont {Chen}, \citenamefont {Chen},\ and\ \citenamefont {Pan}}]{Yin2017Experimental102}%
  \BibitemOpen
  \bibfield  {author} {\bibinfo {author} {\bibfnamefont {H.-L.}\ \bibnamefont {Yin}}, \bibinfo {author} {\bibfnamefont {Y.}~\bibnamefont {Fu}}, \bibinfo {author} {\bibfnamefont {H.}~\bibnamefont {Liu}}, \bibinfo {author} {\bibfnamefont {Q.-J.}\ \bibnamefont {Tang}}, \bibinfo {author} {\bibfnamefont {J.}~\bibnamefont {Wang}}, \bibinfo {author} {\bibfnamefont {L.-X.}\ \bibnamefont {You}}, \bibinfo {author} {\bibfnamefont {W.-J.}\ \bibnamefont {Zhang}}, \bibinfo {author} {\bibfnamefont {S.-J.}\ \bibnamefont {Chen}}, \bibinfo {author} {\bibfnamefont {Z.}~\bibnamefont {Wang}}, \bibinfo {author} {\bibfnamefont {Q.}~\bibnamefont {Zhang}}, \bibinfo {author} {\bibfnamefont {T.-Y.}\ \bibnamefont {Chen}}, \bibinfo {author} {\bibfnamefont {Z.-B.}\ \bibnamefont {Chen}},\ and\ \bibinfo {author} {\bibfnamefont {J.-W.}\ \bibnamefont {Pan}},\ }\bibfield  {title} {\bibinfo {title} {Experimental quantum digital signature over 102 km},\ }\href {https://link.aps.org/doi/10.1103/PhysRevA.95.032334} {\bibfield  {journal} {\bibinfo
  {journal} {Phys. Rev. A}\ }\textbf {\bibinfo {volume} {95}},\ \bibinfo {pages} {032334} (\bibinfo {year} {2017}{\natexlab{b}})}\BibitemShut {NoStop}%
\bibitem [{\citenamefont {Lidl}\ and\ \citenamefont {Niederreiter}(1996)}]{Lidl1996Finite}%
  \BibitemOpen
  \bibfield  {author} {\bibinfo {author} {\bibfnamefont {R.}~\bibnamefont {Lidl}}\ and\ \bibinfo {author} {\bibfnamefont {H.}~\bibnamefont {Niederreiter}},\ }\href {https://doi.org/10.1017/CBO9780511525926} {\emph {\bibinfo {title} {Finite Fields}}},\ \bibinfo {edition} {2nd}\ ed.\ (\bibinfo  {publisher} {Cambridge University Press},\ \bibinfo {address} {Cambridge},\ \bibinfo {year} {1996})\BibitemShut {NoStop}%
\bibitem [{\citenamefont {Ho}\ \emph {et~al.}(2020)\citenamefont {Ho}, \citenamefont {Sekatski}, \citenamefont {Tan}, \citenamefont {Renner}, \citenamefont {Bancal},\ and\ \citenamefont {Sangouard}}]{Ho2020Noisy}%
  \BibitemOpen
  \bibfield  {author} {\bibinfo {author} {\bibfnamefont {M.}~\bibnamefont {Ho}}, \bibinfo {author} {\bibfnamefont {P.}~\bibnamefont {Sekatski}}, \bibinfo {author} {\bibfnamefont {E.~Y.-Z.}\ \bibnamefont {Tan}}, \bibinfo {author} {\bibfnamefont {R.}~\bibnamefont {Renner}}, \bibinfo {author} {\bibfnamefont {J.-D.}\ \bibnamefont {Bancal}},\ and\ \bibinfo {author} {\bibfnamefont {N.}~\bibnamefont {Sangouard}},\ }\bibfield  {title} {\bibinfo {title} {Noisy preprocessing facilitates a photonic realization of device-independent quantum key distribution},\ }\href {https://doi.org/10.1103/PhysRevLett.124.230502} {\bibfield  {journal} {\bibinfo  {journal} {Phys. Rev. Lett.}\ }\textbf {\bibinfo {volume} {124}},\ \bibinfo {pages} {230502} (\bibinfo {year} {2020})}\BibitemShut {NoStop}%
\end{thebibliography}

%

\end{document}